\title{Copula Entropy: Theory and Applications}
\author{Jian MA\thanks{Email: majian03@gmail.com}}
\newtheorem{definition}{Definition}
\newtheorem{theorem}{Theorem}
\newtheorem{corollary}[theorem]{Corollary}
\newtheorem{property}{Property}
\newlength{\drop}
\begin{document}


\begin{titlepage}
	\drop=0.1\textheight
	\centering
	\vspace*{\baselineskip}
	\rule{\textwidth}{1.6pt}\vspace*{-\baselineskip}\vspace*{2pt}
	\rule{\textwidth}{0.4pt}\\[\baselineskip]
	{\Huge  \textbf{COPULA ENTROPY} \\[0.3\baselineskip] {\LARGE \textbf{Theory and Applications}}}\\[\baselineskip]
	\rule{\textwidth}{0.4pt}\vspace*{-\baselineskip}\vspace{3.2pt}
	\rule{\textwidth}{1.6pt}\\[\baselineskip]
	\scshape
	\vspace*{2\baselineskip}
	{\Large Jian Ma\par} 
	\vfill
	{\small\scshape \today}\par
	\vspace*{\drop}	
\end{titlepage}

\frontmatter

\chapter{Preface}
Copula theory is of fundamental importance in probability and become mature after decades of development. As the core of it, Sklar's theorem presents the universal representation of dependence, i.e., copula, which can be used for dependence inference and measurements. Many types of copulas have been proposed and copula based bivariate dependence measures, such as Spearman's $\rho$ and Kendall's $\tau$, have also been given. Copula based multivariate dependence measure remains a problem in the field.

This monograph originated from the author's PhD study at Tsinghua University, during which the author defined the concept of copula entropy (CE), proved its equivalence to mutual information in information theory, proposed a nonparametric CE estimator, and applied CE to structure learning and copula component analysis. Since then, the author continued his PhD research and applied CE to eleven fundamental problems in statistics. Particularly, he proved the CE representation of transfer entropy / conditional mutual information and therefore built a theoretical framework that unified the concepts of correlation/dependence and causality/conditional independence based on CE. Additionally, he proposed five hypothesis testing methodologies based on CE. Meanwhile, researchers have introduced many mathematical generalizations of CE and applied CE to real problems in every branch of science. CE has become a mature theory and an integrated area of copula theory. This monograph aims to present the theoretical framework of CE and its latest developments.

CE theory presents a multivariate dependence measure based on copula and has many theoretical meanings, including
\begin{itemize}
	\item developed copula theory by introducing a perfect copula based multivariate dependence measure;
	\item built a bridge between copula theory and information theory with the proof of the equivalence of CE and mutual information and the CE representation of conditional mutual information, and hence made probability and information theory a much more integrated mathematical field;
	\item built a unified framework of correlation and causality based on CE, proposed the system of statistical methodologies including independence test, conditional independence test, multivariate normality test, copula hypothesis test, two-sample test, change point detection, and symmetry test and made mathematical statistics much more mature.
\end{itemize}

This monograph is not finished yet, and any comments and suggestions are welcome.

\begin{flushright}
	Jian Ma\\
	Haidian, Beijing\\
	December 2025
\end{flushright}

\tableofcontents


\mainmatter

\chapter*{Introduction}
\addcontentsline{toc}{chapter}{Introduction}
\label{chp:introduction}
Statistical independence is a core concept in mathematical statistics and how to represent and measure it is a basic problem. In the early days of statistics, Pearson\cite{pearson1896mathematical} proposed his correlation coefficient $r$ as a measure of correlation. Many scholars also proposed other measures of correlation, such as Spearman's $\rho$\cite{Spearman1987} and Kendall's $\tau$\cite{Kendall1938}.

In the last century, copula theory was proposed as a tool for universal representation of dependence between random variables \cite{nelsen2007introduction,joe2014dependence}. According to Sklar's theorem\cite{sklar1959fonctions}, the core of copula theory, any dependence relations between random variables correspond to a function for dependence representation, called copula, which is irrelevant to margins. Based on copula, many traditional correlation measures have their copula version, such as Spearman's $\rho$ and Kendall's $\tau$. However, these measures are all for bivariate relations, copula based multivariate dependence measure remains a problem yet to be solved.

Information theory is a theory about information measuring and processing \cite{Shannon1948}, and entropy and mutual information (MI) is its two core concepts \cite{cover1999elements}. Entropy measures information of random variables and MI measures information shared by two random variables. As a nonlinear measure of dependence, MI is considered to contain all the information of dependence between two random variables.

In 2008, Ma and Sun defined the concept of copula entropy (CE) \cite{ma2011mutual,Ma2009phd}. CE is defined as a type of Shannon entropy with copula. They also proved that negative CE is equivalent to MI. A nonparametric CE estimator was also proposed by them. In fact, the definition of CE is inspired by this thinking that since copula represents the dependence relation between random variables and MI measures all the information of dependence, there must be certain relationship between copula and MI. CE is the fruit of study on this relationship. By defining CE, we build a bridge between copula theory and information theory.

CE is the theory on measures of multivariate dependence while copulas is that of representation of dependence relations. CE measures all the information in copula representation. CE is an ideal measure of statistical independence and has many good properties, such as continuity, symmetry, additivity, nonpositivity, invariance to monotonic transformation, and equivalence to correlation coefficients in Gaussian assumptions.

As a basic statistical tool, CE can be applied to many statistical problems. In 2008, we have applied it to structure learning \cite{ma2008dependence}. Recently, we have applied it to more problems successfully, including association discovery \cite{jian2019discovering}, variable selection \cite{jian2019variable}, causal discovery \cite{jian2019estimating}, domain adaptation \cite{ma2022causal}, multivariate normality test \cite{Ma2022}, copula hypothesis testing \cite{Ma2025c}, time lag estimation \cite{Ma2023}, system identification \cite{Ma2023a}, two-sample test \cite{Ma2023b}, change point detection \cite{Ma2024a}, and symmetry test \cite{Ma2025a}. These work has become a system of methodologies based on CE.

As an ideal independence measure, CE can also be used to measure another important related concept -- conditional independence (CI). Transfer Entropy (TE) \cite{schreiber2000measuring} is a model-free measure of causality, and is essentially conditional mutual information (CMI), a measure of CI in information theory. We proved that TE / CMI can be represented with only CE and based on this, proposed a CE based nonparametric estimator of TE / CMI\cite{jian2019estimating}. Therefore, we derived a CE based theoretical framework on measuring both independence and CI and unified the concepts of correlation and causality together. This framework laid the foundation for CE based system of methodologies.

We derived a CE based system of methodologies, including independence test, CI test, multivariate normality test, two-sample test, change point detection, and symmetry test. There are many existing similar methods for these hypothesis testing problems while CE has theoretical advantages over them due to its rigorous definition and nonparametric estimator. To evaluate CE based methodologies, we surveyed the existing methods on the above six problems, and conducted comparative experiments based on simulations \cite{Ma2022b}, to verify the real advantages of CE based methodologies over their counterparts.

As a basic probabilistic concept, CE has been generalized by researchers to define new concepts, including Tsallis CE\cite{Mortezanejad2019}, survival CE\cite{Sunoj2023}, cumulative CE\cite{Arshad2024}, copula extropy\cite{Saha2023}, cumulative copula Tsallis entropy\cite{Zachariah2025}, copula R\'enyi entropy\cite{Saha2025}, copula R\'enyi divergence and copula Tsallis divergence\cite{Mohammadi2023a}, copula Jeffreys divergence and copula Hellinger divergence\cite{Mohammadi2021}, copula fractal inaccuracy\cite{Pandey2025}, and copula information measures\cite{Borzadaran2010}. These new concepts were defined by generalizing traditional information theoretical concepts with copulas, and therefore became a CE based system of concepts as an integrated part of generalized entropies \cite{Amigo2018,Ebrahimi2010,Arndt2001}.

As a statistical tool, CE has been applied to every branches of sciences, including theoretical physics\cite{ma2021thermodynamic}, astrophysics\cite{Ma2023c}, space science\cite{Akerele2024}, geology\cite{Ma2025}, geophysics\cite{Shan2024,Shan2025master,Mack2025}, fluid mechanics\cite{Chen2025}, thermology\cite{Li2025sst,Zhang2024master2,Ma2025b}, theoretical chemistry\cite{Cuendet2016}, cheminformatics\cite{wieser2020inverse}, material science\cite{Tian2023b,Liu2025,Zhang2024a,Jin2025}, hydrology\cite{chen2013measure,chen2014copula,Chen2021,Li2022a,Mo2023,Chen2023b,Wang2018master,wen2019,Wen2021master,Huang2019,huang2021phd,Niu2023,ni2020vine,Kanthavel2022,Mohammadi2021,Xu2025master,Liu2024master,Xing2024,xu2017a,Xu2018phd,Wang2019,Wang2022book,li2020developing,li2021developing,Xu2022,Xu2022a,Yang2019master,chen2019copulas,Chen2013book,Wang2023a,Qian2022,Wang2024a,liu2022water,porto2021a,Porto2023,huang2021,Jiang2023}, climatology\cite{Hao2015,Condino2009}, meteorology\cite{jian2019estimating,Wang2022,Wu2022b,Chen2023a,Guo2023,chen2019}, environmental science\cite{Wu2022d,Jin2023,Yang2024master,Qiao2024master,Xue2025}, ecology\cite{Hodel2021,Li2025a,Li2025b}, animal morphology\cite{Escolano2017,Purkayastha2022}, botany\cite{Wang2023master2,Li2025d}, agronomy\cite{Zhang2023,Zhang2023c,Zhang2024}, immunology\cite{Abe2025}, neurology\cite{Redondo2025,Li2025,Ciezobka2025}, cognitive neuroscience\cite{ince2016the,ince2017a,kayser2015irregular,Combrisson2022,Wang2023,DeClercq2022,Pospelov2023,Belloli2025,Walden2024,Walden2024a,Kaufmann2017}, motor neurology\cite{wu2021she,Wu2022,Wang2021master,Zhu2022,Jin2022,Reilly2022,OReilly2024}, computational neuroscience\cite{leugering2018a,Leugering2021phd,pakman2021estimating,Coroian2024}, psychology\cite{Ravijts2019}, system biology\cite{charzyńska2015improvement,farhangmehr2013an}, bioinformatics\cite{wieczorek2016causal,Wu2022a,Wu2022c,Shang2024,Shang2024master,Yan2025,Pan2023b,Zhong2024master,Zhu2023,Li2024phd,Lacalmita2025}, clinical diagnosis\cite{jian2019variable,Mesiar2021,Ma2022a,Fu2023b,Luo2023,Luo2023master,Sunoj2023,Pan2023,Tang2023}, geriatrics\cite{jian2019predicting,Li2023,ma2020predicting,ma2020associations}, psychiatrics\cite{Zhang2022,zhang2022master,Han2025}, forensics\cite{Wang2024phd}, pharmacy\cite{Gao2024}, public health\cite{Mesiar2021,Purkayastha2022}, economics\cite{Shan2020,luo2022,Zhang2023b,Bossemeyer2021,wei2021,Han2022,Ardakani2024}, management\cite{An2023,Flores2025master,Tian2023,Wang2022b,Wang2025b}, sociology\cite{ma2022causal}, pedagogy\cite{liu2018master}, computational linguistics\cite{Chen2023patent,Chen2025b}, media science\cite{Zhang2022a}, law\cite{Wieser2020}, political science\cite{Card2011}, military science\cite{Zhang2022patent}, informatics\cite{Xu2023b}, and energy\cite{fu2017uncertainty,Zhu2022a,Yang2025b,Wang2024p2,Cui2022,Ma2023,Yan2024,Kan2023,Hu2022master,Wu2024p,Wang2024master,Tang2024master,Dong2022,Liu2022a,Liu2023m,Feng2022,Feng2023,Lu2024,Sun2023,Wang2024p,Yang2024,Hu2022,Qin2023,Xiong2022,He2023,Zhao2025master,Ma2024}, textile engineering\cite{Mortezanejad2025}, food engineering\cite{Lasserre2021a,Lasserre2022}, civil engineering\cite{Li2022,Ma2024b,Chi2024master,Chang2025,Liu2025a,Liu2024phd,Lin2023master}, transportation\cite{Huang2021a,Xu2023,Ji2022phd,Wang2022a,Chang2024,Chang2024master,Zhou2024master}, manufacturing\cite{Sun2021,Cai2023,wang2015phd,Dong2023,Liu2023,Wang2024,Hu2023,Li2023master2}, reliability\cite{sun2019a,Cheng2023,Cheng2025,Meng2025,Geng2024master}, petroleum engineering\cite{Luo2024,Yuan2025}, mining engineering\cite{Gao2024phd,Jian2024ccc}, metallurgy\cite{Tian2024,Lin2024}, chemical engineering\cite{Yin2022,Wei2022,Wei2023,Pan2023a,Pan2024master,Bi2023,Wu2023,Shi2025master}, medical engineering\cite{Tang2024,Tang2025a}, aeronautics and astronautics\cite{Krishnankutty2020,Liu2022,Liu2023master,Zeng2022,Jia2023,Sun2024master,Wu2020,Tang2025}, arms\cite{Chen2024p}, automobile\cite{Gao2023,Xu2025}control engineering\cite{Li2025c}, electronics engineering\cite{Liu2022b}, communication\cite{wang2016physical,Wang2016,Fu2023}, high performance computing\cite{Gocht-Zech2022}, information security\cite{Liu2024,Wang2024p3,Wang2025,Baawi2025,Wang2025master}, geomatics\cite{Zeng2009,Cao2023,Zhang2022master2}, ocean engineering\cite{Zhao2022phd}, and finance\cite{mlfinlab,arbitragelab,Wang2015,Liao2023,Zhu2022b,Zhu2024,Xu2024,Calsaverini2009,Calsaverini2013,Alanazi2021,Wang2023b,Xiong2020,Ding2024master,Chen2024,Ardakani2024a,Kong2021,Peng2022m,Wang2023master,Zhang2023a,Gurgul2024,Gurgul2024a,Li2023master,Uddin2025,Mahmutovic2025}. In these applications, CE is used to analysis and measure correlation or causality in different fields for better understanding and modeling. CE provides theoretical support or practical tool and also bring computational efficiency and reliability.

In these real applications, researchers also proposed new methodologies by combining CE theory and other theory and methods, listed as following. The new methodologies in classical information theory include
\begin{itemize}
	\item CE based MI estimation, such as GCMI\cite{ince2017a}, Vector Copula based MI estimation\cite{Chen2025b}, R-Vine copula based MI and CMI estimation\cite{Wang2021master}, semi-parametric MI estimation\cite{Mohammadi2023}, asymmetric MI estimation\cite{Purkayastha2022}, $CE^2$\cite{Liu2024}, Bayesian dependence measure\cite{Marrelec2025}, and tail dependence measure\cite{Ardakani2024a};
	\item CE based information decomposition\cite{wieczorek2016causal,Kaufmann2017};
	\item CE based maximal entropy\cite{Bubak2025,Mortezanejad2019,Mortezanejad2025} and maximal Tsallis CE\cite{Mortezanejad2019}.
\end{itemize}
The methodology that combines CE with graph theory includes
\begin{itemize}
	\item MI based graph similarity\cite{Escolano2017}.
\end{itemize}
The methodology that combine CE with information bottleneck\cite{Tishby2000} includes:
\begin{itemize}
	\item CE based information bottleneck computation\cite{wieser2020inverse}.
\end{itemize}
Those that combine CE with partial information decomposition\cite{Williams2010} include:
\begin{itemize}
	\item unique information estimation\cite{pakman2021estimating}, synergy estimation\cite{Coroian2024}, and$\Omega$ information estimation\cite{Belloli2024}.
\end{itemize} 
The new methodologies in copula theory include:
\begin{itemize}
	\item copula parameter estimation\cite{Qian2022} and Vine copula model selection \cite{Alanazi2021,Calsaverini2009,Wang2023b,Liu2025a,Kovacs2011}.
\end{itemize}
CE based causal analysis include:
\begin{itemize}
	\item causal compression\cite{wieczorek2016causal}, causal structure learning\cite{Bossemeyer2021,Lasserre2021a,Lasserre2022}, LiNGAM-MMI\cite{Suzuki2024a}, and time series causal network construction\cite{He2023master,Yang2025}.
\end{itemize}
CE based generalizations of machine learning methods include:
\begin{itemize}
	\item clustering\cite{liu2022water,Condino2009}, nonlinear PCA\cite{Wei2022}, decision trees\cite{Shan2020,luo2022}.
\end{itemize}
Those that combine CE with neural networks include:
\begin{itemize}
	\item graph neural networks\cite{Chen2023b,Tang2023}, IEGAIN\cite{Wu2023}, neural network pruning\cite{Jiang2024}.
\end{itemize}
Those that combine CE with controllers include:
\begin{itemize}
	\item model prediction controller\cite{Li2025c} and sliding mode controller\cite{Yang2024}.
\end{itemize}
CE based optimization include:
\begin{itemize}
	\item CE based evolutionary algorithm\cite{Card2011}, CE based estimation of vine copula distribution algorithm\cite{Wang2013phd}, and CE based grey wolf optimization\cite{Pan2023b}.
\end{itemize}
CE based function approximation includes:
\begin{itemize}
	\item B-spline approximation\cite{Zhang2024master}.
\end{itemize}
As a universal theoretical tool that deals with correlation and causality, CE can be combined with other methods to make more methodologies possible.

Cover and Thomas \cite{cover1999elements} is a classical textbook in information theory, a main reference of this monograph, including a chapter discussing the relationship between information theory and statistics. Kullback \cite{Kullback1968} first introduced information theory to statistics and proposed to use Kullback-Leibler (KL) divergence for hypothesis testing. Csisz\'ar and Shields\cite{Csiszar2004} introduce a series of problems that apply information theory to statistics, including hypothesis testing. Pardo\cite{Pardo2018} introduces how to apply entropy and divergence to statistical inference, such as Goodness of Fit test, independence test, and symmetry test. Arndt\cite{Arndt2001} introduces information measures, including Shannon entropy and its generalizations. \cite{Aczel1975} introduces information measures and their characterizations. \cite{Gell-Mann2004} introduces Tsallis entropy and its interdisciplinary applications.

This monograph will apply CE to hypothesis testing problems, some of which can be referred to \cite{Rice2007}. \cite{Thode2002} is a book about normality tests. Bonnini et al. \cite{Bonnini2014} presents nonparametric hypothesis testing based on rank and permutation, including one-sample test, two-sample test, and independence test, etc. Lehmann and Romano\cite{Lehmann2022} is about hypothesis testing, including symmetry test and Goodness of fit test. Tartakovsky et al.\cite{Tartakovsky2014} introduced sequential analysis, including hypothesis testing and change point detection. \cite{Horvath2024} is a book focusing on change point detection. \cite{Kallenberg2005} is a monograph discussing probabilistic symmetry that provides a special perspective of all hypothesis testing. This monograph will define copula likelihood and present its relation with CE. Pawitan\cite{Pawitan2001} is a comprehensive book on likelihood. The book \cite{Parzen1998} is the selected papers of Hirotugu Akaike, including Akaike's paper on AIC that connects likelihood with information theory.

This monograph will investigate problems on causality and dynamical systems. Pearl\cite{Pearl2009} is an important book on causality. \cite{Atmanspacher1991} includes the papers on the workshop on ``information dynamics" held at Munich in 1990, some of which are about causality and information flow in dynamical systems. Spirtes et al.\cite{Spirtes2000} introduces the theory and algorithms of causal structure discovery. Bossomaier et al.\cite{Bossomaier2016} is a book focusing on TE and its applications, in which entropy based TE estimations are presented. \cite{Emmert-Streib2009} includes the papers that solve statistical problems with information theory, one of which is about causality detection with CMI and TE and introduced parametric and nonparametric entropy estimators for TE estimation. Peters et al.\cite{Peters2017} introduces foundations and algorithms on causal inference.

There are several books on copula theory, including those by Nelsen\cite{nelsen2007introduction} and by Joe\cite{joe2014dependence}, two main references of this monograph, which introduce the definition and properties of copulas, copula families and their construction, and copula based dependence measures (Kendall's $\tau$ and Spearman's $\rho$). Mari and Kotz\cite{Mari2001} focus on correlation and dependence, introduce both copulas and MI but did not connect them together. \cite{Jaworski2010a} includes the papers of the workshop on ``Copula theory and its applications" held at Warsaw in 2009, one of which is on copula based dependence measures\cite{Schmid2010}. \cite{Gorecki2024} is on hierarchical Archimedean copulas and their applications in finance. \cite{Kurowicka2010} is a paper collection on Vine copula. \cite{Czado2019} is a recent monograph on Vine copula.  Mai and Scherer\cite{Mai2017} introduce how to simulate copulas and multivariate distributions. \cite{Sun2020} and \cite{Cherubini2011} introduce copula based Markov processes and their applications in finance. Singh\cite{Singh2013} focuses on applications of entropies to hydrology, including an introduction of CE. Chen and Guo \cite{chen2019copulas,Chen2013book} focus on copulas in hydrology and water resource management, which applied CE to flood forecasting and river correlation measurement. Wang et al.\cite{Wang2022book} focus on applications of entropy and copulas to hydrological station network design and optimization, including the applications of CE to this problem.

This monograph will present the theory and applications of CE, which is organized as follows:

Chapter \ref{chap:ce} first introduces background, and then the theory of CE, including definitions, theorems and corollary, properties of CE. The parametric and nonparametric methods for CE estimation and CMI estimation based on CE are also presented.

Chapter \ref{chap:theoapp} introduces the theoretical applications of CE to twelve statistical problems, including variable selection, causal discovery, etc. Particularly, the applications of CE to seven main hypothesis testing problems, including independence test, CI test, multivariate normality test, copula hypothesis testing, two-sample test, change point detection, and symmetry test, are presented.

Chapter \ref{chap:discussion} first discusses the relationships between theoretical applications of CE, and then discusses their connections to correlation and causality, followed by the comparison between the framework of CE and those of the other two important independence measures, i.e., kernel method and distance correlation.

Chapter \ref{chap:bench} evaluates CE based hypothesis testing methods with simulation experiments and compares them to their counterparts with their \textsf{R} implementations.

Chapter \ref{chap:math} summarizes the systematic generalizations of CE, including CE based generalizations of R\'enyi entropy and Tsallis entropy, generalizations of information distances, divergences, and inaccuracy.

Chapter \ref{chap:realapp} briefly introduces the applications of CE to every branch of science and engineering.

Appendix \ref{chap:impl} lists the official and third-party implementations of CE based methods in \textsf{R}, \textsf{Python}, \textsf{Julia}, \textsf{Matlab}, and \textsf{C++} for reader's practical uses.

\chapter{Copula Entropy}
\label{chap:ce}
\section{Background}
\subsection{Copula theory}
Copula theory is about representation of dependence relations between random variables\cite{nelsen2007introduction,joe2014dependence}. It defines a type of probability function for such representation, called copula, as follows:
\begin{definition}[Copula]
	Given $n$ random variables $\mathbf{X}=(X_1,\ldots,X_n)$. Let $\mathbf{u}$ be margins ${u_i=F_i(x_i),i=1,\dots,n}$ of $\mathbf{X}$, then $n$ dimensional copula $C:I^n \rightarrow I, I=[0,1]$ of $\mathbf{X}$ should satisfy the following properties:
	\begin{enumerate}
		\item $C$ is non-decreasing on any subset of $I^n$;
		\item $0 \leq C(\mathbf{u}) \leq 1$;
		\item $C(1,\dots,1,u_i,1,\ldots,1)=u_i$.
	\end{enumerate}	
\end{definition}

Intuitively, copula is a distribution function on $I^n$ with margins being uniform distribution. Copula density function can be derived accordingly from copula as $c(\mathbf{u})=\frac{\partial C(\mathbf{u})}{\partial u_1 \cdots \partial u_n}$.

Sklar's theorem is the core of copula theory, which states that
\begin{theorem}[Sklar's theorem]\cite{sklar1959fonctions}
\label{thm:sklar}
	Given any $n$ dimensional random variables $\mathbf{X}$ with joint distribution $F(\mathbf{x})$ and margins $F_i(x_i)$. Then there exists a copula $C(\mathbf{u})$ such that 
	\begin{equation}
		F(\mathbf{x})=C(F_1(x_1),\ldots,F_n(x_n)).
		\label{eq:sklar}
	\end{equation}
	If $F_i$ are continuous, then $C$ is unique. Conversely, if $C$ is a copula and $F_i$ are distribution functions, then the function $F$ defined by \eqref{eq:sklar} is a joint distribution with margins $F_i$.
\end{theorem}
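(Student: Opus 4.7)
The plan is to split the statement into three pieces: existence of $C$, uniqueness under continuity of the margins, and the converse. I would tackle existence first by reducing the problem to the probability integral transform. If each $F_i$ is continuous, define $C$ on the range of $(F_1,\ldots,F_n)$ by
\begin{equation}
C(u_1,\ldots,u_n)=F\bigl(F_1^{-1}(u_1),\ldots,F_n^{-1}(u_n)\bigr),
\end{equation}
where $F_i^{-1}(u)=\inf\{x:F_i(x)\geq u\}$ is the (left-continuous) quantile function. The identity $F_i(F_i^{-1}(u))=u$ combined with substitution then gives the representation $F(\mathbf{x})=C(F_1(x_1),\ldots,F_n(x_n))$ directly, and the fact that $F_i(X_i)\sim\mathrm{Uniform}(0,1)$ yields $C(1,\ldots,1,u_i,1,\ldots,1)=u_i$. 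Monotonicity and the range condition $0\leq C\leq 1$ are inherited from the corresponding properties of $F$ restricted to the image of the margin maps.

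Next I would handle uniqueness. In the continuous case the image of $F_i$ is all of $(0,1)$ (up to a set of measure zero on the boundary), so the defining formula pins $C$ down on $[0,1]^n$, and two copulas agreeing on a dense subset of $[0,1]^n$ must coincide by right-continuity. The converse direction is the easiest: given a copula $C$ and distribution functions $F_i$, the composition $F(\mathbf{x})=C(F_1(x_1),\ldots,F_n(x_n))$ inherits monotonicity and right-continuity from $C$ and the $F_i$, has the proper limits at $\pm\infty$ because $C(1,\ldots,1)=1$ and $C(\mathbf{u})=0$ whenever some $u_i=0$, and the margin condition $C(1,\ldots,1,u_i,1,\ldots,1)=u_i$ gives $\lim_{x_j\to\infty,j\neq i}F(\mathbf{x})=F_i(x_i)$. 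A short check of the $n$-increasing property (nonnegativity of $F$-volumes on rectangles) transfers from $C$ to $F$.

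The main obstacle is the general case in which some $F_i$ has jumps, because then $F_i(X_i)$ is no longer uniform and the image of $(F_1,\ldots,F_n)$ need not be dense in $[0,1]^n$. The standard remedy is to first define $C$ only on $\mathrm{Range}(F_1)\times\cdots\times\mathrm{Range}(F_n)$ by the formula above, and then to extend it to all of $[0,1]^n$ in a way that preserves the copula axioms; the cleanest route is Rüschendorf's distributional transform, which replaces each jump at $x$ by randomization on $[F_i(x^-),F_i(x)]$ so that the transformed variables are exactly uniform, and one then reads off $C$ as their joint distribution. Multilinear (checkerboard) interpolation on the leftover rectangles is an alternative that preserves $n$-increasingness. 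Once this extension is justified, the existence part carries through verbatim, and one also sees that uniqueness may fail precisely on the gaps $(F_i(x^-),F_i(x))$, which is why the continuity hypothesis is necessary for the uniqueness claim.
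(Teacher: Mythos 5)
The paper does not prove this statement: Sklar's theorem is quoted as a classical background result with a citation to Sklar (1959), so there is no in-paper argument to compare yours against. Judged on its own terms, your outline is the standard proof and is essentially correct: existence for continuous margins via $C(\mathbf{u})=F(F_1^{-1}(u_1),\ldots,F_n^{-1}(u_n))$, uniqueness from density of the range of the margins, direct verification of the converse, and extension to discontinuous margins by the distributional transform or checkerboard interpolation. Two small soft spots are worth tightening. First, the substitution step needs more than $F_i(F_i^{-1}(u))=u$: in general $F_i^{-1}(F_i(x))\leq x$ with possible strict inequality on flat stretches of $F_i$, and you must add the observation that $F$ is constant in the $i$-th coordinate over any interval on which $F_i$ is constant (since that interval carries no marginal mass), so that $F(F_1^{-1}(F_1(x_1)),\ldots,F_n^{-1}(F_n(x_n)))=F(\mathbf{x})$ nonetheless. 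Second, ``agreement on a dense set plus right-continuity'' is not the right justification for uniqueness; the correct tool is the Lipschitz bound $|C(\mathbf{u})-C(\mathbf{v})|\leq\sum_i|u_i-v_i|$, which every copula satisfies as a consequence of the $n$-increasing and margin conditions, and which upgrades agreement on the dense set $\mathrm{Range}(F_1)\times\cdots\times\mathrm{Range}(F_n)\supseteq(0,1)^n$ to agreement everywhere. With those two repairs the argument is complete and matches the proof one finds in Nelsen or Joe, both of which the paper cites for copula background.
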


Copula separates joint distribution from margins and represents dependence relations as a function. So, dependence relation is irrelevant to properties of individual variables and copula contains all the information of dependence between random variables. By differentiating \eqref{eq:sklar}, one can derive the pdf version of Sklar's theorem:
\begin{equation}
	p(\mathbf{x})=c(\mathbf{u})\prod_{i}{p(x_i)},
\end{equation}
where $p(\cdot)$ are joint or marginal density functions.

\subsection{Information theory}
Information theory is a theory about measuring and processing information\cite{Shannon1948}. Entropy is one of its fundamental concepts for measuring information content of random variables, defined as follows:
\begin{definition}[Shannon entropy]\cite{cover1999elements}
\label{def:shannonentropy}
	Given random variables $X\in R^n$ and their probability density function $p(\mathbf{x})$, Shannon entropy is defined as
	\begin{equation}
		H(\mathbf{x})=-\int_{\mathbf{x}}p(\mathbf{x})\log p(\mathbf{x})d\mathbf{x}.
		\label{eq:entropy}
	\end{equation}	
\end{definition}

MI is another fundamental concept in information theory for measuring information content shared by two random variables, which can also be understood as the information of a random variable contained in another random variable, defined as follows:
\begin{definition}[Mutual Information]\cite{cover1999elements}
\label{def:mi}
	Given a pair of random variables $(X,Y)$ with joint density distribution $p(x,y)$ and marginal density distribution $p(x),p(y)$, MI of $X$ and $Y$ is defined as
	\begin{equation}
		I(x;y)=\int_{x}\int_{y}p(x,y)\log \frac{p(x,y)}{p(x)p(y)}dxdy.
	\end{equation}
\end{definition}

The following theorem can be derived from the definition of MI:
\begin{theorem}
	\label{thm:midecomposition}
	MI equals the difference between joint entropy and marginal entropy, i.e.,
	\begin{equation}
		I(x;y)=H(x)+H(y)-H(x,y).
	\end{equation}
\end{theorem}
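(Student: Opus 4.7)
The plan is to start from the defining integral of MI in Definition \ref{def:mi} and use the algebraic identity $\log\frac{p(x,y)}{p(x)p(y)} = \log p(x,y) - \log p(x) - \log p(y)$ to split the single integrand into three pieces. By linearity of integration (assuming the usual integrability assumptions so Fubini applies), this gives three double integrals that I will treat separately.

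The first piece is $\int\!\!\int p(x,y)\log p(x,y)\,dx\,dy$, which is exactly $-H(x,y)$ by Definition \ref{def:shannonentropy} applied to the bivariate random variable $(X,Y)$. For the second piece, $-\int\!\!\int p(x,y)\log p(x)\,dx\,dy$, the integrand depends on $y$ only through $p(x,y)$; so I would pull $\log p(x)$ outside the inner integral over $y$ and invoke the marginalization identity $\int p(x,y)\,dy = p(x)$, reducing this term to $-\int p(x)\log p(x)\,dx = H(x)$. Symmetrically, the third piece collapses to $H(y)$. Summing the three contributions yields $I(x;y) = -H(x,y) + H(x) + H(y)$, which after rearrangement is the claimed identity.

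The computation is essentially routine, so there is no serious mathematical obstacle. The only subtle point is justifying the interchange of integrals and the splitting of the logarithm: one needs $p(x,y)\log p(x,y)$, $p(x,y)\log p(x)$, and $p(x,y)\log p(y)$ to each be absolutely integrable, which is the standard regularity condition already implicit in the existence of $H(x)$, $H(y)$, $H(x,y)$, and $I(x;y)$. Under these assumptions, Fubini's theorem legitimizes the rearrangements, and the proof reduces to the marginalization step, which I expect to occupy essentially the entire argument.
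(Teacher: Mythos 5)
Your proof is correct and is precisely the standard derivation the paper alludes to when it says the theorem ``can be derived from the definition of MI''; the paper itself supplies no explicit proof, and your argument (splitting the logarithm, then collapsing the two marginal terms via $\int p(x,y)\,dy = p(x)$) fills that gap in exactly the expected way. Your remarks on absolute integrability and Fubini are a reasonable extra precaution, consistent with the regularity the paper implicitly assumes whenever it writes $H(x)$, $H(y)$, $H(x,y)$, and $I(x;y)$ as finite quantities.
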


MI is proved to be non-negative, stated as the following theorem:
\begin{theorem}\cite{cover1999elements}
	 \label{thm:mipositive}
	 Given a pair of random variables $(X,Y)$, then
	 \begin{equation}
	 	I(x;y)\geq 0
	 \end{equation}
	 with equality if and only if $X,Y$ are independent.
\end{theorem}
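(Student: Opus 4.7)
The plan is to prove the inequality by applying Jensen's inequality to the strictly concave logarithm, cast in the familiar Kullback--Leibler form. First I would rewrite the integrand of $I(x;y)$ as $-p(x,y)\log\bigl(p(x)p(y)/p(x,y)\bigr)$, thereby expressing $-I(x;y)$ as the expectation of $\log\bigl(p(x)p(y)/p(x,y)\bigr)$ taken with respect to the joint density $p(x,y)$. Next I would invoke Jensen's inequality for the concave $\log$, which bounds this expectation above by $\log$ of the expectation of the ratio. That latter expectation is precisely $\int\!\!\int p(x)p(y)\,dx\,dy=1$, so the bound collapses to $\log 1=0$, which gives $I(x;y)\ge 0$.

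For the equality clause I would appeal to the strict concavity of $\log$: Jensen's inequality is tight exactly when its argument is almost surely constant under the measure with respect to which the expectation is formed. Here this forces $p(x)p(y)/p(x,y)$ to equal some constant $c$ almost everywhere on the support of $p(x,y)$, and integrating forces $c=1$; this is precisely the definition of independence of $X$ and $Y$. The converse direction, that independence implies $I(x;y)=0$, follows by direct substitution of $p(x,y)=p(x)p(y)$ into the defining integral, which makes the logarithm vanish identically, so no inequality machinery is needed for that direction.

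The main obstacle I anticipate is the measure-theoretic bookkeeping near the boundary where $p(x,y)$ vanishes: the ratio inside the logarithm is undefined there, and one must argue either that the set $\{p(x)p(y)>0,\,p(x,y)=0\}$ does not affect the joint-measure expectation, or adopt the convention $0\log 0=0$ standard in information theory so that the definition of $I(x;y)$ is unambiguous. Either route is routine, and once this tidying is in place the remainder is a one-line Jensen argument; I do not expect any other serious difficulty, since both the non-negativity and the characterization of equality are immediate consequences of the strict concavity of $\log$.
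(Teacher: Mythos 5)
Your proof is correct and is essentially the standard argument from the cited reference \cite{cover1999elements}: the paper itself gives no proof of this theorem, deferring entirely to Cover and Thomas, whose proof is exactly the Jensen-inequality bound on the Kullback--Leibler form $D\bigl(p(x,y)\,\|\,p(x)p(y)\bigr)$ together with strict concavity of $\log$ for the equality case. Your handling of the boundary set where $p(x,y)$ vanishes via the $0\log 0=0$ convention is the usual and adequate resolution, so there is nothing to add.
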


By replacing density function with conditional density function, one can define CMI:
\begin{definition}[Conditional Mutual Information]
	Given random variables $(X,Y,Z)$, CMI between $(X,Y)$ conditional on $Z$ is defined as
	\begin{equation}
		I(x;y|z)=\int_{x}\int_{y}\int_{z}p(x,y,z)\log \frac{p(x,y|z)}{p(x|z)p(y|z)}dxdydz.
	\end{equation}
\end{definition}

\section{Theory}
\label{sec:theory}
By replacing probability density function in Shannon entropy with copula density function, one can define copula entropy:
\begin{definition}[Copula Entropy]\cite{ma2011mutual,Ma2009phd}
	\label{def:ce}
	Given random variables $\mathbf{X}$ with margins $\mathbf{u}$ and copula density function $c(\mathbf{u})$, copula entropy is defined as
	\begin{equation}
		H_c(\mathbf{x})=-\int_{\mathbf{u}}{c(\mathbf{u})\log c(\mathbf{u})d\mathbf{u}}.
		\label{eq:ce}
	\end{equation}
\end{definition}
CE is a special type of Shannon entropy that measures information content in copula, and therefore provides a tool for measuring multivariate dependence.

In information theory, entropy and MI are two different concepts\cite{cover1999elements}. In \cite{ma2011mutual}, we proved that they are essentially same, i.e., MI is equivalent to negative CE and hence also a type of Shannon entropy, which is stated in the following theorem:
\begin{theorem}
	\label{thm:hc}
	MI is equivalent to negative CE:
	\begin{equation}
		I(\mathbf{x})=-H_c(\mathbf{x}).
		\label{eq:theorem}
	\end{equation}
\end{theorem}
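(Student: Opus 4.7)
The plan is to reduce the theorem to a one-line change of variables by invoking the density form of Sklar's theorem. First I would interpret $I(\mathbf{x})$ as the multivariate (total correlation) extension of Definition \ref{def:mi}, namely
\begin{equation*}
I(\mathbf{x}) = \int_{\mathbf{x}} p(\mathbf{x}) \log \frac{p(\mathbf{x})}{\prod_i p(x_i)}\, d\mathbf{x},
\end{equation*}
since the bivariate formula in Definition \ref{def:mi} is the $n=2$ specialization of this expression.

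Next I would apply the differentiated form of Sklar's theorem, $p(\mathbf{x}) = c(\mathbf{u}) \prod_i p(x_i)$, which gives immediately
\begin{equation*}
\log \frac{p(\mathbf{x})}{\prod_i p(x_i)} = \log c(\mathbf{u}),
\end{equation*}
where $u_i = F_i(x_i)$. Substituting this into the integrand turns the MI into $\int_{\mathbf{x}} p(\mathbf{x}) \log c(\mathbf{u})\, d\mathbf{x}$.

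The key step is the change of variables $x_i \mapsto u_i = F_i(x_i)$. Because each $F_i$ is a marginal distribution function with $dF_i(x_i) = p(x_i)\, dx_i$, the Jacobian of the transformation is exactly $\prod_i p(x_i)$, so $p(\mathbf{x})\, d\mathbf{x} = c(\mathbf{u}) \prod_i p(x_i) \cdot \prod_i p(x_i)^{-1}\, d\mathbf{u} = c(\mathbf{u})\, d\mathbf{u}$. Hence
\begin{equation*}
I(\mathbf{x}) = \int_{\mathbf{u}} c(\mathbf{u}) \log c(\mathbf{u})\, d\mathbf{u} = -H_c(\mathbf{x}),
\end{equation*}
which is the claim.

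The argument has no real obstacle; the only subtlety is justifying the probability integral transform, which requires the $F_i$ to be continuous so that $u_i = F_i(x_i)$ is a valid and invertible substitution (on the support of $\mathbf{X}$). This is exactly the hypothesis under which Sklar's theorem (Theorem \ref{thm:sklar}) guarantees a unique copula, so the continuity assumption is already implicit, and the proof goes through in one pass.
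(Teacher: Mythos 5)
Your proposal is correct and follows essentially the same route as the paper's own proof: substitute the density form of Sklar's theorem into the total-correlation integral and change variables $u_i=F_i(x_i)$ so that $p(\mathbf{x})\,d\mathbf{x}=c(\mathbf{u})\,d\mathbf{u}$. The only difference is that you make the Jacobian computation and the continuity assumption on the $F_i$ explicit, which the paper leaves implicit.
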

\begin{proof}
\begin{align}
	I(\mathbf{x})&=\int_{\mathbf{x}}p(\mathbf{x})\log \frac{p(\mathbf{x})}{\prod_{i}p(x_i)}d\mathbf{x}\\
	&=\int_{\mathbf{x}}c(\mathbf{u})\prod_{i}p(x_i)\log c(\mathbf{u})d\mathbf{x}\\
	&=\int_{\mathbf{u}}c(\mathbf{u})\log c(\mathbf{u})d\mathbf{u}\\
	&=-H_c(\mathbf{x})
\end{align}
\end{proof}
From Theorem \ref{thm:midecomposition} and Theorem \ref{thm:hc}, one can instantly derive a corollary about the relationship between joint entropy, marginal entropies and CE:
\begin{corollary}
	\label{coro1}
	Joint entropy is the sum of marginal entropies and CE:
	\begin{equation}
		H(\mathbf{x})=\sum_{i}{H(x_i)}+H_c(\mathbf{x}).
		\label{eq:corollary}
	\end{equation}
\end{corollary}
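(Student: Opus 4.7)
The plan is to derive the corollary as an immediate algebraic consequence of the two results invoked in its statement, with the only genuine content being the extension of Theorem \ref{thm:midecomposition} from the bivariate to the multivariate setting. I would first write down the multivariate analogue of the entropy decomposition of mutual information, namely $I(\mathbf{x}) = \sum_i H(x_i) - H(\mathbf{x})$, and then substitute Theorem \ref{thm:hc} to replace $I(\mathbf{x})$ by $-H_c(\mathbf{x})$. Rearranging the resulting identity yields precisely \eqref{eq:corollary}.

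More concretely, the key steps in order are as follows. First, recall that by the definition of multivariate mutual information (total correlation) with density $p(\mathbf{x}) = c(\mathbf{u}) \prod_i p(x_i)$, one has
\begin{equation*}
I(\mathbf{x}) = \int p(\mathbf{x}) \log \frac{p(\mathbf{x})}{\prod_i p(x_i)} \, d\mathbf{x} = \sum_i H(x_i) - H(\mathbf{x}),
\end{equation*}
which is the direct multivariate lift of Theorem \ref{thm:midecomposition}. Second, apply Theorem \ref{thm:hc} to rewrite the left-hand side as $-H_c(\mathbf{x})$. Third, solve for $H(\mathbf{x})$ to obtain the claimed decomposition.

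The only mild obstacle I anticipate is notational: Theorem \ref{thm:midecomposition} is stated in the bivariate form $I(x;y) = H(x) + H(y) - H(x,y)$, whereas the corollary is phrased for an arbitrary number of variables $\mathbf{x} = (x_1, \ldots, x_n)$. I would either justify the $n$-variable version in a single line by expanding $\log[p(\mathbf{x})/\prod_i p(x_i)] = \log p(\mathbf{x}) - \sum_i \log p(x_i)$ inside the integral and using linearity, or simply remark that the same argument as in the bivariate proof applies verbatim. No deeper ideas are needed, since Theorem \ref{thm:hc} has already done the work of tying copula entropy to mutual information.
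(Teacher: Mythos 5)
Your proposal is correct and follows exactly the route the paper intends: combine Theorem \ref{thm:midecomposition} with Theorem \ref{thm:hc} and rearrange, which the paper presents as an immediate consequence without writing out the algebra. Your extra care in justifying the multivariate version of the entropy decomposition of MI (by expanding $\log[p(\mathbf{x})/\prod_i p(x_i)]$ inside the integral) fills in a step the paper silently assumes, and is exactly the right way to do it.
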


CMI is proved to be represented with only CE\cite{jian2019estimating}:
\begin{theorem}\cite{jian2019estimating}
	\label{thm:cmi}
	Given random variables $(X,Y,Z)$, CMI can be represented with only CE:
	\begin{equation}
		I(x;y|z)=H_c(x,z)+H_c(y,z)-H_c(x,y,z).
		\label{eq:cmithm}
	\end{equation}
\end{theorem}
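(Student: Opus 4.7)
The plan is to reduce $I(x;y|z)$ to a combination of joint Shannon entropies by the standard chain-rule identity, and then substitute each joint entropy using Corollary \ref{coro1} so that every Shannon marginal entropy cancels, leaving precisely the three CE terms on the right-hand side.

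First I would establish the well-known identity $I(x;y|z)=H(x,z)+H(y,z)-H(x,y,z)-H(z)$. This follows directly from the definition of CMI: writing $p(x,y|z)$, $p(x|z)$, $p(y|z)$ as ratios of joint densities with $p(z)$ and splitting the logarithm gives the familiar decomposition $I(x;y|z)=H(x|z)+H(y|z)-H(x,y|z)$, after which the conditional identity $H(\,\cdot\,|z)=H(\,\cdot\,,z)-H(z)$ yields the displayed form. This is textbook material and I would dispatch it in one short line.

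The main step is then to apply Corollary \ref{coro1} three times, to the pairs $(x,z)$ and $(y,z)$ and to the triple $(x,y,z)$, giving $H(x,z)=H(x)+H(z)+H_c(x,z)$, $H(y,z)=H(y)+H(z)+H_c(y,z)$, and $H(x,y,z)=H(x)+H(y)+H(z)+H_c(x,y,z)$. Substituting these into the identity above, the terms $H(x)$ and $H(y)$ each appear once with a $+$ and once with a $-$, and the four copies of $H(z)$ balance (two positive contributions from the pairs, one negative from $-H(x,y,z)$, and one negative from $-H(z)$), so everything collapses to $H_c(x,z)+H_c(y,z)-H_c(x,y,z)$, which is the claim.

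I do not anticipate any real obstacle; the only care required is the sign bookkeeping on $H(z)$, which is easy to verify. An alternative direct route would mimic the proof of Theorem \ref{thm:hc}: substitute Sklar's density representation for $p(x,y,z)$, $p(x,z)$, and $p(y,z)$ into the CMI integrand, observe that all marginal densities cancel inside the logarithm to leave $\log\bigl(c_{xyz}/(c_{xz}c_{yz})\bigr)$, and then change variables to the uniform margins $\mathbf{u}$ so that each of the three resulting integrals becomes the corresponding CE. I would nevertheless prefer the route through Corollary \ref{coro1}, since it piggybacks on a result already proved in the paper and requires essentially no new integration.
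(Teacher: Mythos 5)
Your proof is correct, and it takes a slightly different route from the one in the paper. The paper manipulates the integrand directly: it rewrites the CMI log-ratio as $\log\frac{p(x,y,z)p(z)}{p(x,z)p(y,z)}$, splits this into the multivariate (total-correlation) term $I(x,y,z)$ minus the two pairwise terms $I(x,z)$ and $I(y,z)$, and then applies Theorem \ref{thm:hc} to each MI term. You instead expand CMI into the four joint Shannon entropies $H(x,z)+H(y,z)-H(x,y,z)-H(z)$ and invoke Corollary \ref{coro1} three times, letting the marginal entropies cancel; your sign bookkeeping on $H(z)$ checks out (two positive copies from the pairs against one from $-H(x,y,z)$ and one from $-H(z)$). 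The two arguments are close cousins — both reduce the claim to an identity already established for CE — but they lean on different lemmas: the paper's route never mentions marginal entropies and only needs the equivalence $I=-H_c$, while yours is more mechanical, using only the textbook chain rule plus the corollary, at the cost of introducing and then cancelling marginal terms. The alternative you sketch at the end (substituting Sklar's density representation into the integrand so that the margins cancel inside the logarithm) is essentially the paper's actual computation. Either version is acceptable.
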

\begin{proof}
\begin{align}
	I(x;y|z)&= \int_{x}\int_{y}\int_{z}p(x,y,z)\log{\frac{p(x,y|z)}{p(x|z)p(y|z)}}dxdydz\\
	&=\int_{x}\int_{y}\int_{z}p(x,y,z)\log{\frac{p(x,y,z)p(z)}{p(x,z)p(y,z)}}dxdydx\\
	&=I(x,y,z)-I(x,z)-I(y,z)\\
	&=H_c(x,z)+H_c(y,z)-H_c(x,y,z).
\end{align}
\end{proof}

With Definition \ref{def:ce} of CE, Theorem \ref{thm:hc} and its Corollary \ref{coro1}, and Theorem \ref{thm:cmi}, we build a theoretical framework on the concepts of information theory, which deepens our understanding on them and builds a bridge between copula theory and information theory.

\section{Properties}
\label{s:properties}
\paragraph{Multivariate}
MI is defined for bivariate cases while CE is not confined to this, but for any multivariate cases. According to Sklar's theorem, any multivariate random variables have their corresponding copula, and therefore their CE.
\begin{property}[Multivariate]
	For any $n$ random variables $\mathbf{X}\in R^n, n>1$, there exists $H_c(\mathbf{x})$.
\end{property}

\paragraph{Properties of entropy}
Since copulas are actually probability density functions on $I^n$, CE defined on them is a special type of Shannon entropy and then shares the same properties of entropy, such as continuity, symmetry, and additivity, etc.\cite{Khinchin1957,Csiszar2008}.

\begin{property}[Continuity]
Given random variables $\mathbf{X}$, if $\mathbf{x_n}\rightarrow \mathbf{x}$, then
\begin{equation}
	\lim_{\mathbf{x_n}\rightarrow \mathbf{x}}H_c(\mathbf{x_n})=H_c(\mathbf{x}).
\end{equation}
\end{property}
 
\begin{property}[Symmetry]
Given random variables $X$ and $Y$, then
\begin{equation}
	H_c(x,y)=H_c(y,x).
\end{equation}
\end{property}

\begin{property}[Additivity]
Given independent random variables $\mathbf{X}_1,\ldots,\mathbf{X}_n$, then
\begin{equation}
	H_c(\mathbf{x}_1,\ldots,\mathbf{x}_n)=\sum_{i=1}^{n}H_c(\mathbf{x}_i).
\end{equation}
\end{property}

\paragraph{Nonpositivity}
Theorem \ref{thm:mipositive} shows that MI is nonnegative. Hence, based on Theorem \ref{thm:hc}, CE is then non-positive, with 0 if and only if random variables are independent. 
\begin{property}[Nonpositivity]
Given random variables $\mathbf{X}$, then
\begin{equation}
	H_c(\mathbf{x})\leq 0,
	\label{eq:nonpositive}
\end{equation}
with equality if and only if $\mathbf{X}$ are independent.
\end{property}
This property implies that dependence between random variables makes random variables containing others information and hence reduces total information, i.e., joint entropy is no more than the sum of marginal entropies. Generally speaking, entropy that measures uncertainty of random variables is nonnegative, while CE that reduces uncertainty due to dependence is non-positive.

\paragraph{Invariance to monotonic transformation}
Copula is invariant to monotonic transformation\cite{nelsen2007introduction}, so CE inherits this property from copulas.
\begin{property}[Invariance to monotonic transformation]
Given random variables $(X,Y)$, and monotonic increasing functions $f$ and $g$ defined on them, then
\begin{equation}
	H_c(x,y)=H_c(f(x),g(y)).
\end{equation}
\end{property}

\paragraph{Marginal Free}
As Sklar's theorem states, joint distribution is separated into margins and copula. Accordingly, joint entropy is decomposed into marginal entropies and CE. Due to equivalence between MI and CE, MI is actually margin free, so the original definition of MI, Definition \ref{def:mi}, is not complete mathematically.
\begin{property}[Marginal free]
	$H_c(\mathbf{x})$ of random variables $\mathbf{X}\in R^n$ is margin free.
\end{property}

\paragraph{All orders}
Copula is considered to contain all order dependence relations of random variables, so CE measures all the information of all order dependence. As functional of random variables, joint entropy measures all order information of random variables which is decomposed into marginal entropies for all order information of individual variables and CE for all order information of dependence according to Corollary \ref{coro1}.
\begin{property}[All order]
	\label{p:allorder}
	$H_c(\mathbf{x})$ measures information of all order dependence between random variables.
\end{property}

\paragraph{Equivalence to correlation coefficient under Gaussian assumption}
Correlation coefficient measures linear correlation between random variables. Gaussian distributions contain only second order correlation and these correlation is uniquely determined by variance/covariance. We prove that under Gaussian assumption, CE is equivalent to correlation coefficient matrix. This is a special case of Property \ref{p:allorder}.
\begin{property}[Equivalence to correlation coefficient matrix under Gaussian assumption]
\label{p:cov}
Given random variables $\mathbf{X}\sim N(\mathbf{\mu},\Sigma)$ governed  by Gaussian distribution with correlation coefficient matrix $\Sigma_\rho$, then
\begin{equation}
	H_c(\mathbf{x})=\frac{1}{2}\log |\Sigma_\rho|.
	\label{eq:cevx}
\end{equation}
\end{property}
\begin{proof}
Given $n$ random variables $\mathbf{X}\sim N(\mu,\Sigma)$, then
\begin{align}
	H_c(\mathbf{x})&=H(\mathbf{x})-\sum_{i}H(x_i)\\
				   &=\frac{1}{2}\log (2\pi e)^n |\Sigma|-\sum_{i=1}^n \frac{1}{2}\log 2\pi e \delta_i^2\\
				   &=\frac{1}{2}\log|\Delta\Sigma\Delta|\\
				   &=\frac{1}{2}\log |\Sigma_\rho|.
\end{align}
$\delta_i^2$ and $\Delta$ denote variance of individual variables and matrix with diagonal elements as inverse standard variances $\delta_i^{-1}$.
\end{proof}

\section{Estimation}
As a fundamental concept in information theory, MI enjoys wide applications in different fields. However, it is widely considered that estimating MI is notoriously difficult. Based on Definition \ref{def:ce} and Theorem \ref{thm:hc}, we can estimate MI/CE parametrically or non-parametrically. Furthermore, based on Theorem \ref{thm:cmi}, we can estimate CMI with CE estimators.

\paragraph{Parametric} If copula density function $c(\mathbf{u};\alpha)$is known with $\alpha$ as its parameters, then we can compute CE according to its original definition, Definition \ref{def:ce}, as follows:
\begin{equation}
	H_c(\mathbf{x})=-E(\log c(\mathbf{u})).
	\label{eq:paraest}
\end{equation}

When using \eqref{eq:paraest}, one needs to derive $\mathbf{u}$ first in two situations: if margins are known,  then $\mathbf{u}$ can be computed directly; if margins are unknown, then empirical copula density function $\hat{\mathbf{u}}$ can be estimated instead.

If $\alpha$ is unknown, then one can estimate it with the likelihood methods \cite{joe2014dependence} first.

\paragraph{Nonparametric} Based on Theorem \ref{thm:hc}, Ma and Sum\cite{ma2011mutual} proposed an easy and elegant nonparametric CE estimation method \footnote{The method has been implemented in the \texttt{copent} package\cite{ma2021copent} in  \textsf{R} and \textsf{Python}, and shared on \href{https://cran.r-project.org/package=copent}{CRAN} and \href{https://pypi.org/project/copent/}{PyPI}.}. The proposed method composes of two steps:
\begin{enumerate}
	\item estimate empirical copula density functions;
	\item estimate CE from empirical copula density functions with entropy estimators.
\end{enumerate}

Given random variables $\mathbf{X}\in R^n$ and a sample from them $\{\mathbf{x}_1,\ldots,\mathbf{x}_T \}$, then Step 1 can be achieved with rank statistic:
\begin{equation}
	F_i(x_i)=\frac{1}{T}{ \sum_{t=1}^{T}{\mathbf{1}(x_t^i \leq x_i)} },
\end{equation}
where $i=1,\ldots,n$, $\mathbf{1}(\cdot)$ is indication function.

After empirical copula density function is estimated in Step 1, Step 2 is actually entropy estimation problem for which many methods exist. We proposed to use kNN entropy estimation by Kraskov et al.\cite{kraskov2004estimating}.

Since two steps are based on nonparametric methods, we then present a nonparametric CE estimator. It is easy to understand and implement and with low computation burden. This CE estimator is based on rank statistic, essentially to estimate entropy of normalized rank statistic.

\paragraph{CMI estimation} 
Based on Theorem \ref{thm:cmi}, CMI can be represented as \eqref{eq:cmithm}. So, one can easily estimate CMI with CE estimators accordingly.

\chapter{Theoretical Applications}
\label{chap:theoapp}
\section{Structure learning}
Dependence structure between random variables is important in data analysis because it can help us understand the underlying mechanism behind data. In statistics and machine learning, graphical models\cite{Jordan1999,Koller2009} are powerful tools to use graphs to represent dependence relationships between random variables with nodes in graph for random variables and edges for dependence relationships. Graphical models is a kind of approximation to the underlying probability density distribution. Learning dependence structure from data is a common problem in this area, also call structure learning\cite{Heckerman1995,Kitson2023,Drton2017}.

Structure learning is essentially a problem of model selection. Akaike\cite{akaike1974a,Akaike1998} proposed AIC for model selection that connects likelihood principle with KL divergence in information theory. Gr\o{}nneberg and Hjort\cite{Groenneberg2014} applied AIC to copula model selection and proposed copula information criteria (CIC) through improving AIC.

Given random variables $\mathbf{X}=(X_1,\ldots,X_m)$ and their corresponding graph structure $G$, then their joint density function can be represented as the following product form:
\begin{equation}
	p(\mathbf{x}|G)=\prod_{i=1}^{m}p(x_i|x_{\pi_i}),
	\label{eq:chowliu1}
\end{equation}
where $\pi_i$ is the predecessor nodes of $x_i$ in $G$.

Graph representation has its corresponding copula, since graph represents dependence relationships and so does copula. In this sense, \eqref{eq:chowliu1} can be transformed into the following copula form:
\begin{equation}
	p(\mathbf{x}|G)=\prod_{i=1}^{m}c_i(x_i,x_{\pi_i})\prod_{i=1}^{m}p(x_i),
	\label{eq:chowliu2}
\end{equation}
where the first term approximates copula of $X$ and the second term is about margins which is irrelevant to graph structure. Based on this connection between graph structure and copula, we can learn graph structure with copula.

Ma\cite{ma2008dependence,Ma2009phd} proposed a copula based framework for structure learning: first estimate empirical copula density functions, and then learn graph structure based on copula likelihood principle or CE. Here, he defined copula likelihood from traditional likelihood concept\cite{Myung2003}. Given random variables $\mathbf{X}\sim P(x)$ and their samples $\mathbf{X}_T$, likelihood function is defined as
\begin{equation}
	\mathcal{L}(\theta;\mathbf{X})=\sum_{t=1}^T\log p(\mathbf{x}_t;\theta).
	\label{eq:likelihood}
\end{equation}
According to Sklar's theorem, \eqref{eq:likelihood} can be decomposed into two terms for copula density function and margins respectively, i.e.,
\begin{equation}
	\mathcal{L}(\theta;\mathbf{X})=\sum_{t=1}^T\log c(\mathbf{u}_t;\theta_c)+\sum_{t=1}^{T}\sum_{i=1}^m\log p_i(\mathbf{x}_t;\theta_i).
	\label{eq:likelihood-decomposition}
\end{equation}
The first term in \eqref{eq:likelihood-decomposition} is for copula density function and corresponds to graph to be learned, which is irrelevant to the second term for individual variables.
\begin{definition}[Copula Likelihood] 
	\label{def:copulalikelihood}
	Given random variables$\mathbf{X}\sim P(x)$ and their samples $\mathbf{X}_T$, let copula density function be $c(\mathbf{u};\theta_c)$, copula likelihood is defined as
	\begin{equation}
		\mathcal{L}_c(\theta_c;\mathbf{X})=\sum_{t=1}^T\log c(\mathbf{u}_t;\theta_c).
		\label{eq:copulalikelihood}
	\end{equation}
\end{definition}
Copula likelihood is related to CE, i.e., negative copula likelihood is equivalent to empirical CE so maximum copula likelihood means minimal empirical CE.
\begin{theorem}
	Expectation of negative copula likelihood is equivalent to empirical CE: 
	\begin{equation}
		H_c(\mathbf{X}_T)=-\frac{1}{T}\mathcal{L}_c(\theta_c;\mathbf{X}_T).
	\end{equation}
\end{theorem}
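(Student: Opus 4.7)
The plan is to recognize that the claimed identity is nothing more than the plug-in (empirical-expectation) version of the integral definition of copula entropy, rewritten in the notation of copula likelihood. So the proof will consist of (a) expressing CE as an expectation, (b) replacing that expectation by the sample mean to obtain the empirical CE, and (c) matching the resulting sum to $\mathcal{L}_c$ term by term.

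First, I would start from Definition \ref{def:ce}, which gives
$$H_c(\mathbf{x}) = -\int_{\mathbf{u}} c(\mathbf{u};\theta_c) \log c(\mathbf{u};\theta_c)\, d\mathbf{u} = -E[\log c(\mathbf{u};\theta_c)],$$
as already recorded in \eqref{eq:paraest}. This reframes CE as the expectation of a scalar functional of $\mathbf{u}$ under the copula law.

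Next, on a sample $\mathbf{X}_T = \{\mathbf{x}_1,\ldots,\mathbf{x}_T\}$ with corresponding margin-transformed points $\mathbf{u}_t$, I would invoke the standard plug-in principle to define the empirical CE as the Monte Carlo estimator
$$H_c(\mathbf{X}_T) = -\frac{1}{T}\sum_{t=1}^{T}\log c(\mathbf{u}_t;\theta_c),$$
which converges to $-E[\log c(\mathbf{u};\theta_c)]$ by the strong law of large numbers. Substituting Definition \ref{def:copulalikelihood} of $\mathcal{L}_c$ into the inner sum immediately yields $H_c(\mathbf{X}_T) = -\frac{1}{T}\mathcal{L}_c(\theta_c;\mathbf{X}_T)$, which is exactly the statement of the theorem.

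The main obstacle here is conceptual rather than analytical: the theorem hinges entirely on what one means by empirical CE. Once it is agreed that empirical CE is the plug-in sample-mean estimator of the integral in Definition \ref{def:ce} --- the only reading consistent with \eqref{eq:paraest} --- the identity is immediate. No convergence arguments, inequalities, or change-of-variable manipulations are needed; the proof is essentially a rewriting of definitions, and I would present it in two lines.
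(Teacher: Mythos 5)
Your proof is correct and follows the same route as the paper, which simply states that the identity follows directly from Definition \ref{def:ce} and Definition \ref{def:copulalikelihood}; you have merely spelled out the intermediate step of reading the empirical CE as the plug-in sample-mean version of $-E[\log c(\mathbf{u};\theta_c)]$ and matching it to the sum defining $\mathcal{L}_c$.
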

\begin{proof}
	This can be derived directly from Definition \ref{def:ce} and Definition \ref{def:copulalikelihood}.
\end{proof}

With maximum copula likelihood or minimal CE as principle, we proposed a copula based framework for structure learning which can get copula approximation to graph in \eqref{eq:chowliu2}.

There are many traditional structure learning algorithms, and the Chow-Liu method\cite{chow1968approximating} is a typical one among them. It generates optimal tree structure to approximate density function by maximizing MI sum of tree structure with minimal-spanning-tree (MST) algorithm. The following sum of MI is used in Chow-Liu algorithm:
\begin{equation}
	\max \sum_{i=1}^{m}I(x_i,x_{\pi_i}).
	\label{eq:chowliusum}
\end{equation}

Through the equivalence between MI and CE, we can transform the maximal sum of MI \eqref{eq:chowliusum} into the minimal sum of CE:
\begin{equation}
	\min \sum_{i=1}^{m}H_c(x_i,x_{\pi_i}).
\end{equation}

With this optimizing objective, Ma proposed the CE version of Chow-Liu method \cite{ma2008dependence}, which composed of two steps:
\begin{enumerate}
\item get negative CE matrix from sample with CE estimator;
\item generate tree structure from CE matrix with MST algorithm.
\end{enumerate}

This method is with copula based framework for structure learning: CE estimator needs to estimate empirical copula density function first and then derives negative CE matrix. Since the nonparametric CE estimator is efficient and robust, the proposed method is more advantageous than those based on MI estimation.

Vine Copula is another copula based graph structure learning method \cite{Czado2019,Kurowicka2010}. Given random variable $\mathbf{X}=(X_1,\ldots,X_m)$ and their density function $p(\cdot)$, also a vine structure $V=(T_1,\cdots,T_{m-1})$, $T_k=(\mathbf{V}_k,\mathbf{E}_k),k=1,\ldots,m-1$, then the Vine copula based approximation to joint density function is\cite{Bedford2001}:
\begin{equation}
	p(\mathbf{x}|V)=\prod_{k=1}^{m-1}\prod_{e\in E_k}{c_{u_e,v_e|\pi_e}(u_e,v_e|\mathbf{x_{\pi_e}})}\prod_{i=1}^{m}p(x_i),
	\label{eq:vinecopula}
\end{equation}
where $\pi_e$ and $u_e,v_e$ represent conditional variables of $e$ and margins of two nodes.

It can be learned from \eqref{eq:vinecopula} that in the Vine copula approximation, each edge corresponds to a bivariate conditional copula. Vine copula is usually estimated with partial correlation \cite{Bedford2002}. Simplified assumptions is required for such estimation \cite{HobaekHaff2010} and therefore limit the real applications of Vine copula \cite{Nagler2025}.

Ma\cite{ma2008dependence} verified the proposed method with simulation experiments and applied it to two real datasets: abalone data and Boston housing dataset\footnote{The code is available at \url{https://github.com/majianthu/dse}}. Experimental results (see Figure \ref{fig:slresult}) show that the proposed method can give meaningful dependence structure which can deepen our understanding of the target datasets.

\begin{figure}
	\centering
	\subfigure[Abalone data]{\includegraphics[width=0.45\textwidth]{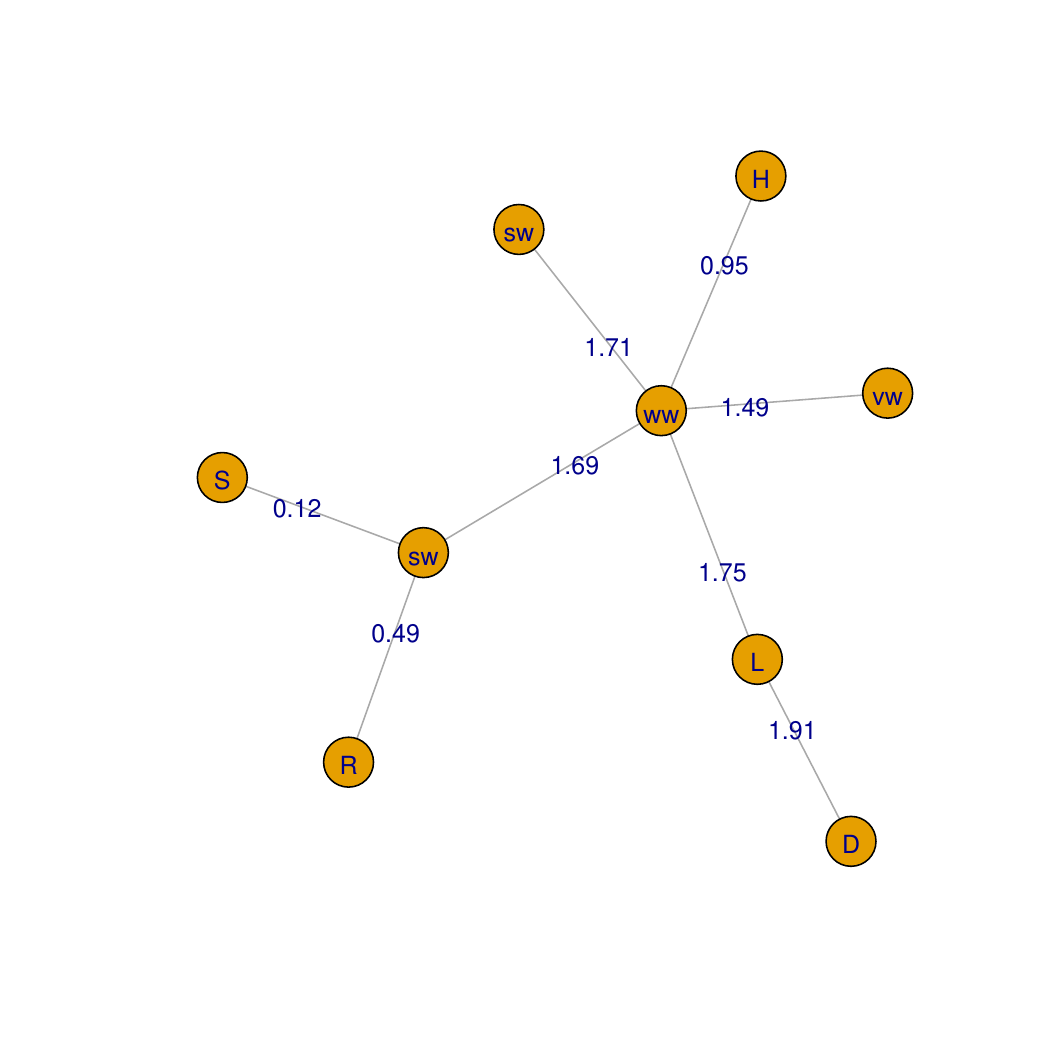}}
	\subfigure[Boston housing data]{\includegraphics[width=0.45\textwidth]{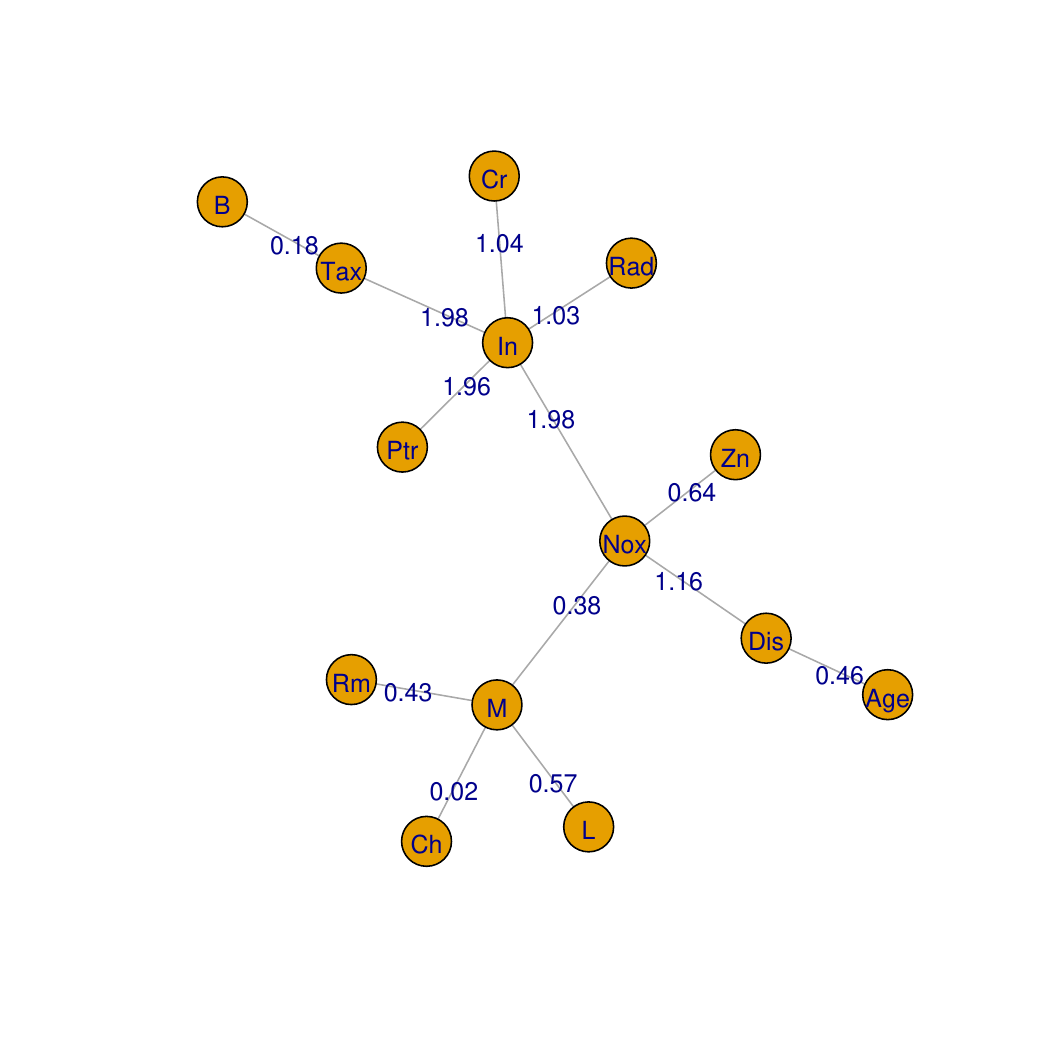}}
	\caption{Results of structure learning experiments on real datasets.}
	\label{fig:slresult}
\end{figure}

\section{Association discovery}
\label{s:ad}
Empirical science is about drawing scientific conclusions by data analysis. Association between random variables is an important relationship in statistics. Association discovery is a key problem in many sciences \cite{Liebetrau1983}.

Pearson correlation coefficient (PCC)\cite{pearson1896mathematical} is a commonly used association measure and has its history date back to the early days of statistics. However, it has many well-known limitations: only applicable to linear correlation and with implicit Gaussian assumption. It is only for bivariate cases.

Compared to PCC, CE has many advantages (see Table \ref{tb:ccce}). It is universally applicable to any case without any assumption. It measures nonlinear dependence, not only linear correlation. It also has properties, such as invariance to monotonic transformation, equivalence to correlation coefficient under Gaussian assumption.

\begin{table}
	\centering
	\caption{Comparison of PCC and CE.}
	\begin{tabular}{l|c|c}
		\toprule
		&PCC&CE\\
		\midrule
		variables&bivariate&multivariate\\
		linearity&linear&nonlinear\\
		order&second&all\\
		assumption&Gaussanity&none\\
		type&correlation&dependence\\
		\bottomrule
	\end{tabular}
	\label{tb:ccce}
\end{table}

Besides PCC, there are two common nonparametric correlation coefficients: Spearman's $\rho$\cite{Spearman1987} and Kendall's $\tau$\cite{Kendall1938}. Similar to CE, these two measures can also be represented with copula.
\begin{theorem}\cite{nelsen2007introduction}
	Given random variable $(X,Y)$ and their copula $C(u,v)$, then Spearman's $\rho$ can be represented with copula:
	\begin{equation}
		\rho_{X,Y}=12\int_{u}\int_{v}C(u,v)dudv - 3.
		\label{eq:spearman}
	\end{equation}	
\end{theorem}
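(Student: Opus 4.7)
The plan is to derive the formula from the definition of Spearman's $\rho$ as the Pearson correlation coefficient of the probability integral transforms $U=F_X(X)$ and $V=F_Y(Y)$. By Sklar's theorem (Theorem \ref{thm:sklar}), when $F_X,F_Y$ are continuous the joint distribution function of $(U,V)$ is exactly the copula $C(u,v)$, and each of $U,V$ is uniform on $[0,1]$. The whole proof is a bookkeeping exercise that rewrites the Pearson correlation of $(U,V)$ using only $C$.

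First I would pin down the normalizing constants. Since $U\sim\mathrm{Uniform}[0,1]$, we have $E[U]=E[V]=1/2$ and $\mathrm{Var}(U)=\mathrm{Var}(V)=1/12$, so
\begin{equation*}
\rho_{X,Y}=\frac{E[UV]-E[U]E[V]}{\sqrt{\mathrm{Var}(U)\mathrm{Var}(V)}}=12\,E[UV]-3.
\end{equation*}
At this point it only remains to express $E[UV]$ in terms of the copula $C$ itself (rather than the density $c$).

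The key step is the survival-function identity. Writing $UV=\int_0^U\!\!\int_0^V 1\,ds\,dt$ and applying Fubini, one gets
\begin{equation*}
E[UV]=\int_0^1\!\!\int_0^1 P(U>s,V>t)\,ds\,dt.
\end{equation*}
By inclusion-exclusion together with the uniformity of the margins, the integrand equals $1-s-t+C(s,t)$, and integrating the first three summands over $[0,1]^2$ contributes $1-\tfrac12-\tfrac12=0$. Thus $E[UV]=\int_0^1\!\!\int_0^1 C(s,t)\,ds\,dt$, and substituting back into the display above yields \eqref{eq:spearman}.

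The main obstacle, such as it is, is the survival-function identity, which requires a clean appeal to Fubini's theorem; this is a routine measure-theoretic fact rather than a real difficulty. The more conceptual point to flag is the role of continuity of $F_X,F_Y$: without it the probability integral transform is not exactly uniform and the copula is not unique, so the argument lives inside the continuous regime singled out by Sklar's theorem. Everything else is algebra.
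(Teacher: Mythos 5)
Your derivation is correct: the normalization $\rho=12\,E[UV]-3$ from the uniform moments, the survival-function/Fubini identity for $E[UV]$, and the inclusion-exclusion step $P(U>s,V>t)=1-s-t+C(s,t)$ all check out, and together they give exactly \eqref{eq:spearman}. The paper itself offers no proof --- it states the result with a citation to Nelsen --- and your argument is essentially the standard one found there, so there is nothing to compare beyond noting that you have correctly supplied the omitted details, including the remark that continuity of the margins is what makes the probability integral transforms uniform and the copula unique.
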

\begin{theorem}\cite{nelsen2007introduction}
	Given random variables $(X,Y)$ and their copula $C(u,v)$, then Kendall's $\tau$ can be represented with copula:
	\begin{equation}
		\tau_{X,Y}=4\int_{u}\int_{v}C(u,v)dC(u,v) - 1.
		\label{eq:kendall}
	\end{equation}	
\end{theorem}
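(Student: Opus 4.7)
The plan is to start from the probabilistic definition of Kendall's $\tau$ as the concordance-minus-discordance probability, then reduce the computation to an integral of the joint distribution against itself, and finally push the resulting measure forward onto the unit square via Sklar's theorem.

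First I would let $(X_1,Y_1)$ and $(X_2,Y_2)$ be independent copies of $(X,Y)$, and recall that by definition
\begin{equation*}
\tau_{X,Y} = P[(X_1-X_2)(Y_1-Y_2)>0] - P[(X_1-X_2)(Y_1-Y_2)<0].
\end{equation*}
Since Sklar's theorem is being used in the continuous-margin regime, ties occur with probability zero, so the two terms sum to one and the identity reduces to $\tau_{X,Y} = 2P[(X_1-X_2)(Y_1-Y_2)>0] - 1$. Splitting the concordance event and using exchangeability of the two copies then yields $P[(X_1-X_2)(Y_1-Y_2)>0] = 2P[X_1<X_2,\ Y_1<Y_2]$.

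Next I would evaluate this joint probability by conditioning on $(X_2,Y_2)=(x,y)$: because $(X_1,Y_1)$ is independent of $(X_2,Y_2)$ with the same law $F$, this gives
\begin{equation*}
P[X_1<X_2,\ Y_1<Y_2] = \int\!\!\int F(x,y)\,dF(x,y).
\end{equation*}
Now I would invoke Sklar's theorem (Theorem~\ref{thm:sklar}) to substitute $F(x,y)=C(F_X(x),F_Y(y))$, and perform the change of variables $u=F_X(x)$, $v=F_Y(y)$. Because the margins are continuous, the probability integral transform pushes the joint law $dF$ forward to the copula measure $dC$ on $I^2$, so the integral becomes $\int_u\int_v C(u,v)\,dC(u,v)$. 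Chaining the equalities produces $\tau_{X,Y} = 4\int_u\int_v C(u,v)\,dC(u,v)-1$, as claimed.

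The main obstacle is the measure-theoretic change of variables in the last step: one must verify that the push-forward of $dF$ under $(F_X,F_Y)$ is exactly the copula distribution on $I^2$, which is precisely where continuity of the margins is essential (so that $F_X(X)$ and $F_Y(Y)$ are uniform and their joint law is $C$). Everything else, including the reduction to $2P[\text{concordant}]-1$ and the symmetry between the two independent copies, is routine once the continuity assumption of Sklar's theorem is in force.
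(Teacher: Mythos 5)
Your proposal is correct: it is the standard concordance-probability argument (reduce $\tau$ to $2P[\text{concordant}]-1$ using continuity to rule out ties, use exchangeability of the two independent copies, write the concordance probability as $\int F\,dF$, and push forward to the copula scale via the probability integral transform), which is exactly the proof in the cited reference. The paper itself states this theorem with a citation to Nelsen and provides no proof of its own, so there is nothing to contrast with; your argument fills that gap faithfully and identifies the right place where continuity of the margins is needed.
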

These two measures are also bivariate but can measure nonlinear correlation and without Gaussian assumption.
Barbe et al.\cite{Barbe1996} and Joe\cite{Joe1990} proposed multivariate versions of Kendall's $\tau$, and Spearman's $\rho$ also has two multivariate versions\cite{Wolff1980,Joe1997,nelsen2007introduction}. Given random variables $\mathbf{X}=(X_1,\ldots,X_n)$ and their copula $C(\mathbf{u})$, Joe's multivariate version of Kendall's $\tau$ is defined as\cite{Joe1990}
\begin{equation}
	\tau_J = \frac{1}{2^{n-1}-1}\left\{2^n\int_{\mathbf{u}}C(\mathbf{u})dC(\mathbf{u})-1\right\}.
\end{equation}
The two multivariate versions of Spearman's $\rho$ is based on copula: Given random variables $\mathbf{X}=(X_1,\ldots,X_n)$ and their copula $C(\mathbf{u})$, Wolff's version is defined as\cite{Wolff1980}
\begin{equation}
	\rho_{W} = \frac{n+1}{2^n-(n+1)}\left\{2^n \int_{\mathbf{u}}C(\mathbf{u})d\mathbf{u}-1\right\},
\end{equation}
and Joe's \cite{Joe1997} and Nelson's \cite{nelsen2007introduction} are defined as
\begin{equation}
	\rho_{JN} = \frac{n+1}{2^n-(n+1)}\left\{2^n \int_{\mathbf{u}}u_1\cdots u_n dC(\mathbf{u})-1\right\}.
\end{equation}

Another important correlation coefficient is Gini's $\gamma$\cite{Gini1914} which is defined with rank statistic and has its copula representation\cite{nelsen2007introduction}.
\begin{theorem}\cite{nelsen2007introduction}
	Given random variables $(X,Y)$ and their copula $C(u,v)$, then Gini's $\gamma$ can be represented with copula as
	\begin{equation}
		\gamma_{X,Y}=2\int_{u}\int_{v}(|u+v-1|-|u-v|)dC(u,v).
		\label{eq:gini}
	\end{equation}	
\end{theorem}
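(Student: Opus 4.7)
The plan is to reduce the rank-based definition of Gini's $\gamma$ to an expectation of a simple functional of the marginal CDFs, and then invoke the probability integral transform so the expectation becomes an integral against the copula measure $dC(u,v)$.

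First I would state the population version of Gini's $\gamma$ that underlies the usual sample definition. Recall that the sample version is built from normalized ranks, $\gamma_n \propto \sum_i \bigl(|R_i + S_i - n - 1| - |R_i - S_i|\bigr)$. In the continuous population setting this is the functional
\begin{equation}
\gamma_{X,Y} = 2\,E\bigl[\,|F(X)+G(Y)-1| - |F(X)-G(Y)|\,\bigr],
\end{equation}
since the normalized ranks $R_i/n$ and $S_i/n$ converge to $F(X)$ and $G(Y)$ respectively, and the normalizing constant $\lfloor n^2/2\rfloor$ is chosen so that $\gamma_{X,Y}\in[-1,1]$ with the right limits at the Fr\'echet--Hoeffding bounds. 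I would state this as the starting point, citing the equivalence with the classical definition.

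Next I would apply the probability integral transform. Setting $U=F(X)$ and $V=G(Y)$, continuity of $F$ and $G$ makes $U$ and $V$ uniform on $[0,1]$, and by Sklar's theorem (Theorem \ref{thm:sklar}) their joint distribution function is exactly the copula $C(u,v)$. Therefore
\begin{equation}
\gamma_{X,Y} = 2\,E\bigl[\,|U+V-1| - |U-V|\,\bigr] = 2\int_{0}^{1}\!\!\int_{0}^{1}\bigl(|u+v-1|-|u-v|\bigr)\,dC(u,v),
\end{equation}
which is precisely the desired expression. The exchange of the expectation for the double Lebesgue--Stieltjes integral is immediate because the integrand is bounded and $C$ is a genuine distribution function on $I^2$.

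The only real obstacle is justifying the first step, i.e.\ that the population functional above truly is Gini's $\gamma$ rather than just a related quantity. The easiest way around this is to treat the displayed expectation as the \emph{definition} in the continuous case (as is done in Nelsen), in which case the result is essentially a change-of-variables identity. If one instead insists on deriving it from the rank-statistic formula, the work sits entirely in a Glivenko--Cantelli style argument showing $(R_i/n, S_i/n) \Rightarrow (F(X), G(Y))$ uniformly, together with bounded convergence; this is routine but is where the bookkeeping lives. Beyond that, the copula representation follows with no further calculation.
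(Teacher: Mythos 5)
Your argument is correct. The paper itself offers no proof of this statement---it is quoted from Nelsen with a citation---so there is nothing to compare against except the standard derivation, which is exactly what you reconstruct: take the population functional $\gamma_{X,Y}=2\,E\bigl[\,|F(X)+G(Y)-1|-|F(X)-G(Y)|\,\bigr]$ as the continuous-case definition (obtained from the rank statistic by the heuristic $p_i/n\to F(X)$, $q_i/n\to G(Y)$), then apply the probability integral transform and Sklar's theorem so that the expectation becomes $2\iint(|u+v-1|-|u-v|)\,dC(u,v)$. You are right that all the genuine content sits in the first step, and your sanity check is easily completed: at the upper Fr\'echet bound $V=U$ the expression gives $2E|2U-1|=1$, at the lower bound $V=1-U$ it gives $-1$, and under independence the two terms cancel by the symmetry $(U,V)\sim(U,1-V)$, so the normalization by $\lfloor n^2/2\rfloor$ indeed yields a measure in $[-1,1]$ vanishing at independence. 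No gap.
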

Behboodian et al.\cite{Behboodian2007} presented a multivariate version of Gini's $\gamma$: given random variables $\mathbf{X}=(X_1,\ldots,X_n)$ and their copula $C(\mathbf{u})$, the definition is
\begin{equation}
	\gamma_{B} = \frac{1}{b(n)-a(n)} \left\{\int_{\mathbf{u}}(A(\mathbf{u})+\bar{A}(\mathbf{u}))dC(\mathbf{u})-a(n)\right\},
\end{equation}
where $A(\mathbf{u})=\{M(\mathbf{u})+W(\mathbf{u})\}/2$, $\bar{A}$ is survival function of $A$, $M,W$ are Fr\'echet upper and lower bound functions, $b(n),a(n)$ are normalized terms.

Schweizer and Wolff\cite{Schweizer1981} defined a copula based dependence measure $\sigma$ which is defined as follows:
\begin{definition}[Schweizer and Wolff's $\sigma$]
	\label{def:sigma}
	Given random variables $(X,Y)$ and their copula $C(u,v)$, Schweizer and Wolff's $\sigma$ is defined as
	\begin{equation}
		\sigma_{X,Y}=12\int_u \int_v |C(u,v)-uv| dudv.
	\end{equation}
\end{definition}
This measure can be generalized to multivariate cases: given random variables $\mathbf{X}=(X_1,\ldots,X_n)$ and their copula  $C(\mathbf{u})$, multivariate version of Schweizer and Wolff's $\sigma$ \cite{Wolff1980} is defined as
\begin{equation}
	\sigma_W = \frac{2^n(n+1)}{2^n-(n+1)}\int_{\mathbf{u}}|C(\mathbf{u})-\prod_{i=1}^{n}{u_i}|d\mathbf{u}.
\end{equation}

R\'enyi\cite{Renyi1959} proposed the famous axioms for an ideal dependence measure $\delta$, including
\begin{enumerate}
	\item $\delta$ is defined on a pair of random variables $(X,Y)$;
	\item $\delta_{X,Y}=\delta_{Y,X}$;
	\item $0\leq \delta_{X,Y} \leq 1$;
	\item if $X,Y$ are independent, then $\delta_{X,Y}=0$;
	\item if $X,Y$ has monotonic functional relationship, then $\delta_{X,Y}=1$;
	\item if $f,g$ are monotonic functions on $X,Y$, then $\delta_{X,Y}=\delta_{f(X),g(Y)}$;
	\item if $(X,Y)$ are governed by Gaussian distribution, then $\delta_{X,Y}=|\rho_{X,Y}|$, where $\rho_{X,Y}$ is correlation coefficient.
\end{enumerate}
Schweizer and Wolff\cite{Schweizer1981} revised R\'enyi's axioms and proved that their $\sigma$ satisfies the revised axioms. Schmid et al.\cite{Schmid2010} discussed several common dependence measures' axiomatic properties.

By comparing the properties in Section \ref{s:properties} with R\'enyi's axioms, one can learn that CE satisfies 2, 4, 6 out of R\'enyi's seven axioms, and the differences include 1) CE is multivariate; 2) CE does not satisfy 3, 5 but is non-positive with equality if and only if independent; 3) CE does not satisfy 7 but is equivalent to correlation coefficient matrix as \eqref{eq:cevx}. Additionally, CE has the properties of Shannon entropy, such as additivity. In a word, CE is an ideal association measure with good axiomatic properties. 

Association between random vectors is a common problem in real applications. There has been many work on this, one can refer to \cite{Josse2016} for more. This problem can be solved easily with CE as stated in the following theorem:
\begin{theorem}
	Given a group of random vectors $\mathbf{X}=\{\mathbf{X}_1,\ldots,\mathbf{X}_n\}$, $\mathbf{X}_1\in R^{d_1},\ldots,\mathbf{X}_n\in R^{d_n}$, $d_1,\ldots,d_n \geq 1$, then CE based association measure for random vectors is
	\begin{equation}
		H_c(\mathbf{x}_1;\ldots;\mathbf{x}_n)=H_c(\mathbf{x}) - \sum_{i=1}^{n}{H_c(\mathbf{x}_i)}.
	\end{equation}
\end{theorem}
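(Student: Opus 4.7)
The plan is to reduce the vector-valued claim to the scalar decomposition provided by Corollary \ref{coro1}. The natural CE-based association measure for a tuple of random vectors is $H_c(\mathbf{x}_1;\ldots;\mathbf{x}_n):=-I(\mathbf{x}_1;\ldots;\mathbf{x}_n)$, where $I(\mathbf{x}_1;\ldots;\mathbf{x}_n)=\sum_{i=1}^n H(\mathbf{x}_i)-H(\mathbf{x})$ is the total-correlation generalization of mutual information (which collapses to Definition \ref{def:mi} in the $n=2$, $d_1=d_2=1$ case). So the theorem is really the statement that, once MI between random vectors is written in this form, the scalar marginal entropies cancel and one is left with a purely copula-entropic expression.

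First I would apply Corollary \ref{coro1} to the full joint vector $\mathbf{x}$, which has $d:=d_1+\cdots+d_n$ scalar components $\{x_k\}_{k=1}^d$, obtaining
\begin{equation*}
H(\mathbf{x})=\sum_{k=1}^{d}H(x_k)+H_c(\mathbf{x}).
\end{equation*}
Next I would apply the same corollary to each sub-vector $\mathbf{x}_i$, whose scalar components form a partition block $S_i\subset\{1,\ldots,d\}$ of size $d_i$, giving $H(\mathbf{x}_i)=\sum_{k\in S_i}H(x_k)+H_c(\mathbf{x}_i)$, and summing over $i$ yields $\sum_i H(\mathbf{x}_i)=\sum_{k=1}^d H(x_k)+\sum_i H_c(\mathbf{x}_i)$, because $\{S_i\}$ partitions $\{1,\ldots,d\}$.

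Subtracting the two displays eliminates the scalar marginal entropies:
\begin{equation*}
\sum_{i=1}^n H(\mathbf{x}_i)-H(\mathbf{x})=\sum_{i=1}^n H_c(\mathbf{x}_i)-H_c(\mathbf{x}).
\end{equation*}
The left-hand side is $I(\mathbf{x}_1;\ldots;\mathbf{x}_n)$, so negating both sides and using the defining relation $H_c(\mathbf{x}_1;\ldots;\mathbf{x}_n)=-I(\mathbf{x}_1;\ldots;\mathbf{x}_n)$ delivers $H_c(\mathbf{x}_1;\ldots;\mathbf{x}_n)=H_c(\mathbf{x})-\sum_i H_c(\mathbf{x}_i)$, as required.

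Nothing in this argument is computationally hard: the only real obstacle is conceptual, namely agreeing that the appropriate multivariate extension of MI between random vectors is the total-correlation form $\sum H(\mathbf{x}_i)-H(\mathbf{x})$, and that Theorem \ref{thm:hc} (proved for arbitrary finite multivariate $\mathbf{x}$) continues to license the identification of negative CE with this measure. Once that is granted, the proof is a one-line cancellation via Corollary \ref{coro1}.
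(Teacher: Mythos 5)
Your proof is correct and follows essentially the same route as the paper's: both identify the association measure with the negative total correlation of the blocks, $-I(\mathbf{x}_1;\ldots;\mathbf{x}_n)$, and reduce it to $H_c(\mathbf{x})-\sum_{i=1}^{n}H_c(\mathbf{x}_i)$ by a telescoping cancellation. The only cosmetic difference is that you cancel the scalar marginal entropies via Corollary \ref{coro1}, whereas the paper telescopes the log-density ratio inside the integral to get $-I(\mathbf{x})+\sum_i I(\mathbf{x}_i)$ and then applies Theorem \ref{thm:hc}; since Corollary \ref{coro1} is itself an immediate consequence of that theorem, the two arguments are interchangeable.
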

\begin{proof}
	\begin{align}
		H_c(\mathbf{x}_1;\ldots;\mathbf{x}_n)&=-I(\mathbf{x}_1;\ldots;\mathbf{x}_n)\\
		&=-\int_{\mathbf{x}}p(\mathbf{x})\log\frac{p(\mathbf{x})}{p(\mathbf{x}_1)\cdots p(\mathbf{x}_n)}d\mathbf{x}\\
		&=-I(\mathbf{x})+\sum_{i=1}^{n}{I(\mathbf{x}_i)}\\
		&=H_c(\mathbf{x}) - \sum_{i=1}^{n}{H_c(\mathbf{x}_i)}.
	\end{align}
Here if $d_i=1$, then $H_c(\mathbf{x}_i)=0$.
\end{proof}
This theorem can be intuitively understood as that association between random vectors equals the difference between all the information and information in each vector. It is not only easy to understand, but also easy to implement with CE estimators.

In summary, CE is a perfect association measure for any cases, not only for random variables but for random vectors. It is natural and unique. It enjoys theoretical advantages by satisfying axiomatic properties and can be easily estimated for practical uses.

Due to CE's advantages, Ma \cite{jian2019discovering} proposed to use it for association discovery. We verified its effectiveness and compared it with other measures on the famous NHANES data \footnote{The code is available at \url{https://github.com/majianthu/nhanes}}. Experimental results (see Figure \ref{fig:nhanes}) show that CE can discover meaningful nonlinear relationships within data and present better results than the others. For more comparisons between independence measures, please see Section \ref{s:indbench}.

\begin{figure}
	\centering
	\subfigure[Pearson's $r$]{\includegraphics[width=0.45\textwidth]{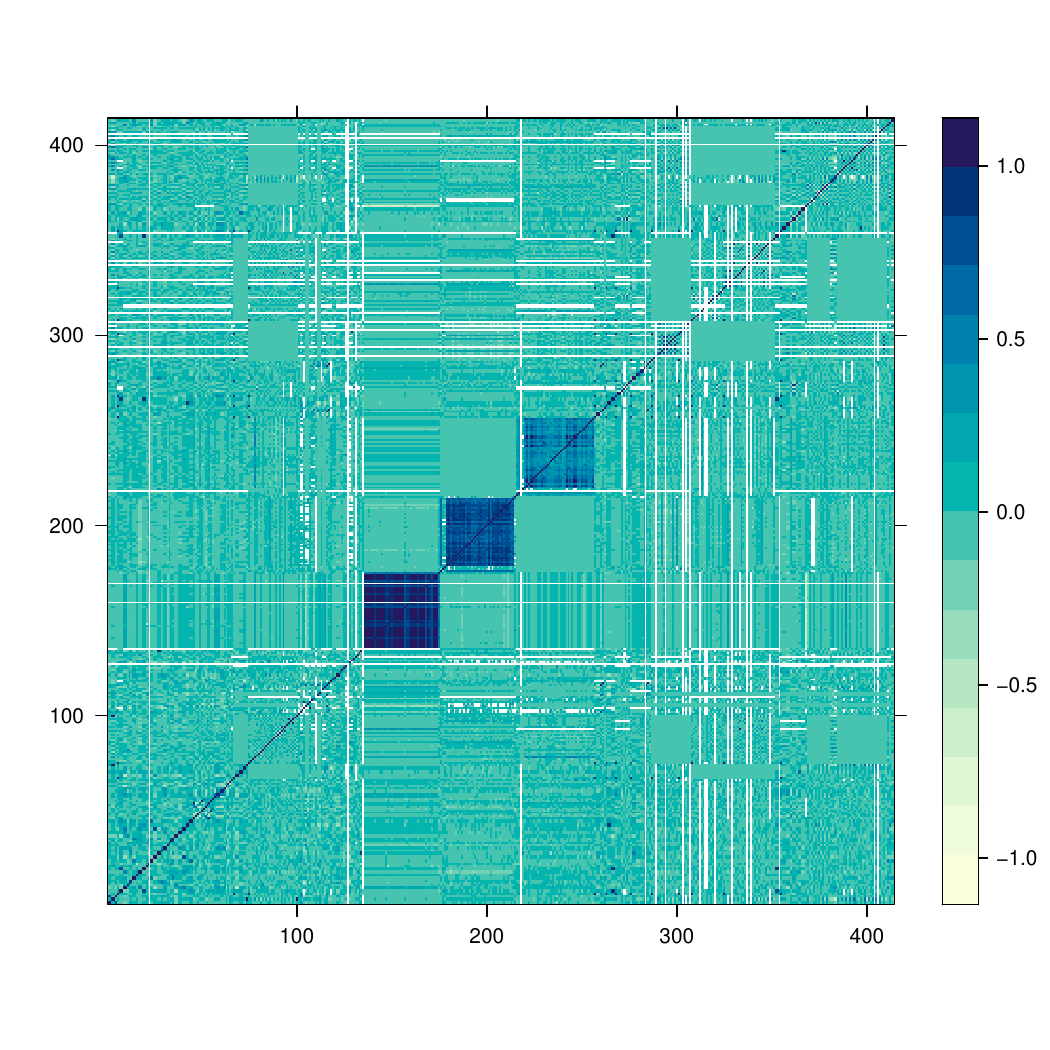}}
	\subfigure[Spearman's $\rho$]{\includegraphics[width=0.45\textwidth]{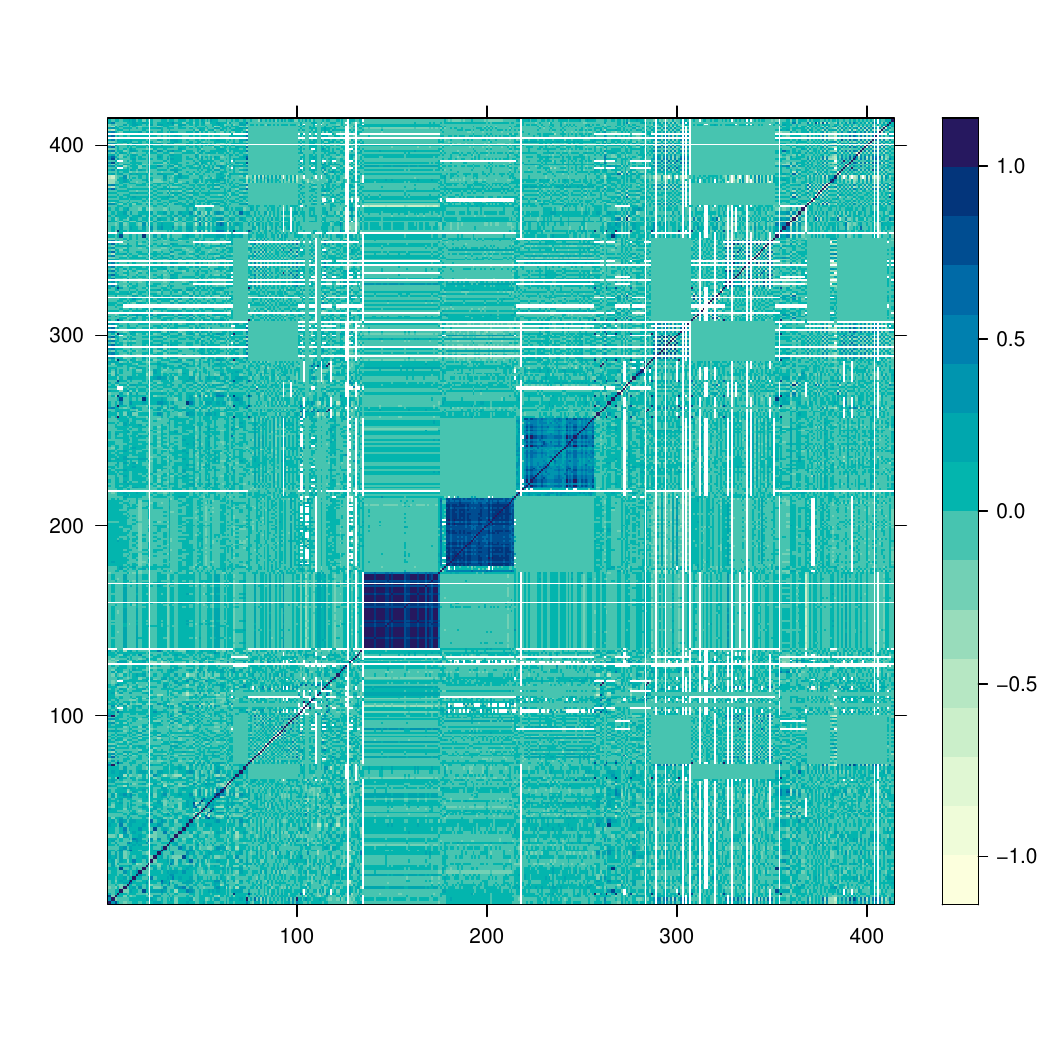}}
	\subfigure[Kendall's $\tau$]{\includegraphics[width=0.45\textwidth]{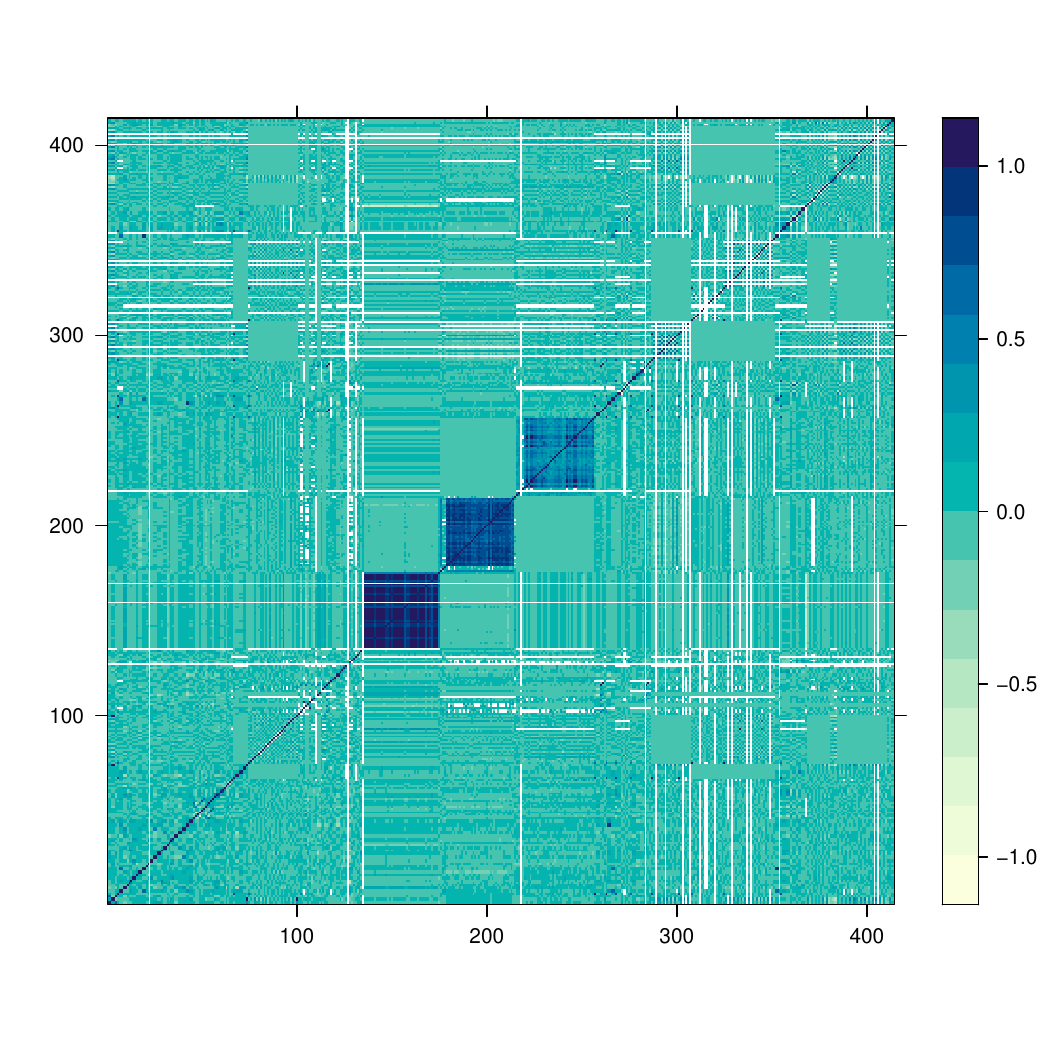}}
	\subfigure[Gini's $\gamma$]{\includegraphics[width=0.45\textwidth]{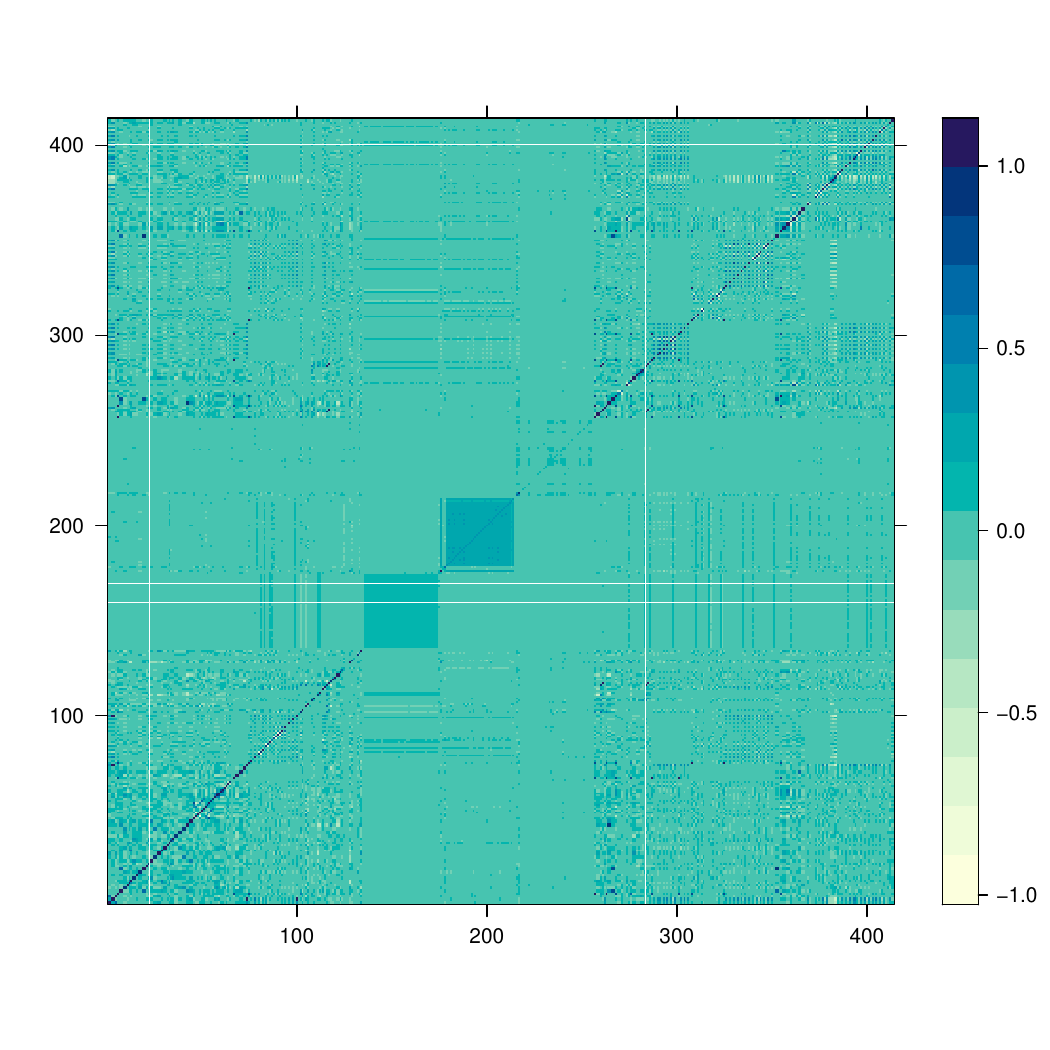}}
	\subfigure[Schweizer \& Wolff's $\sigma$]{\includegraphics[width=0.45\textwidth]{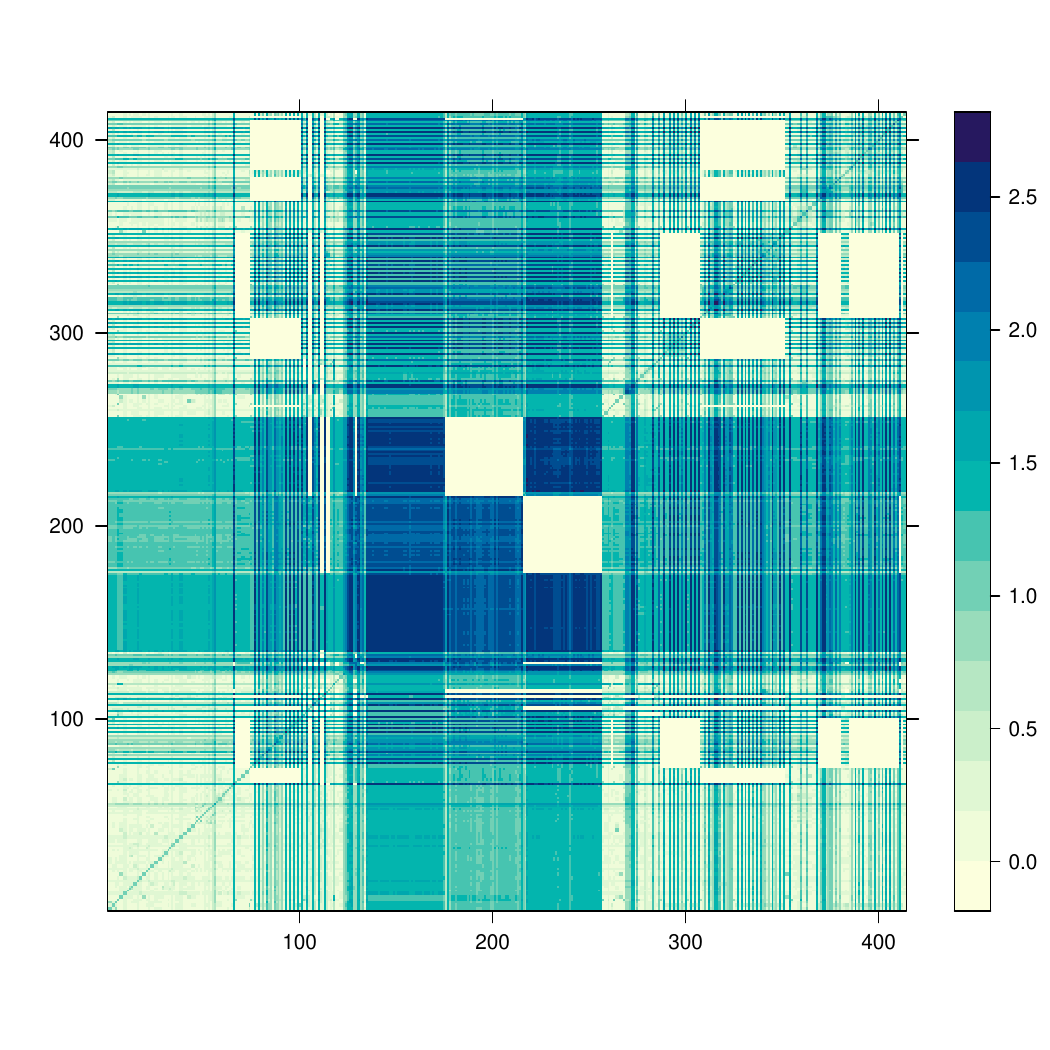}}
	\subfigure[CE]{\includegraphics[width=0.45\textwidth]{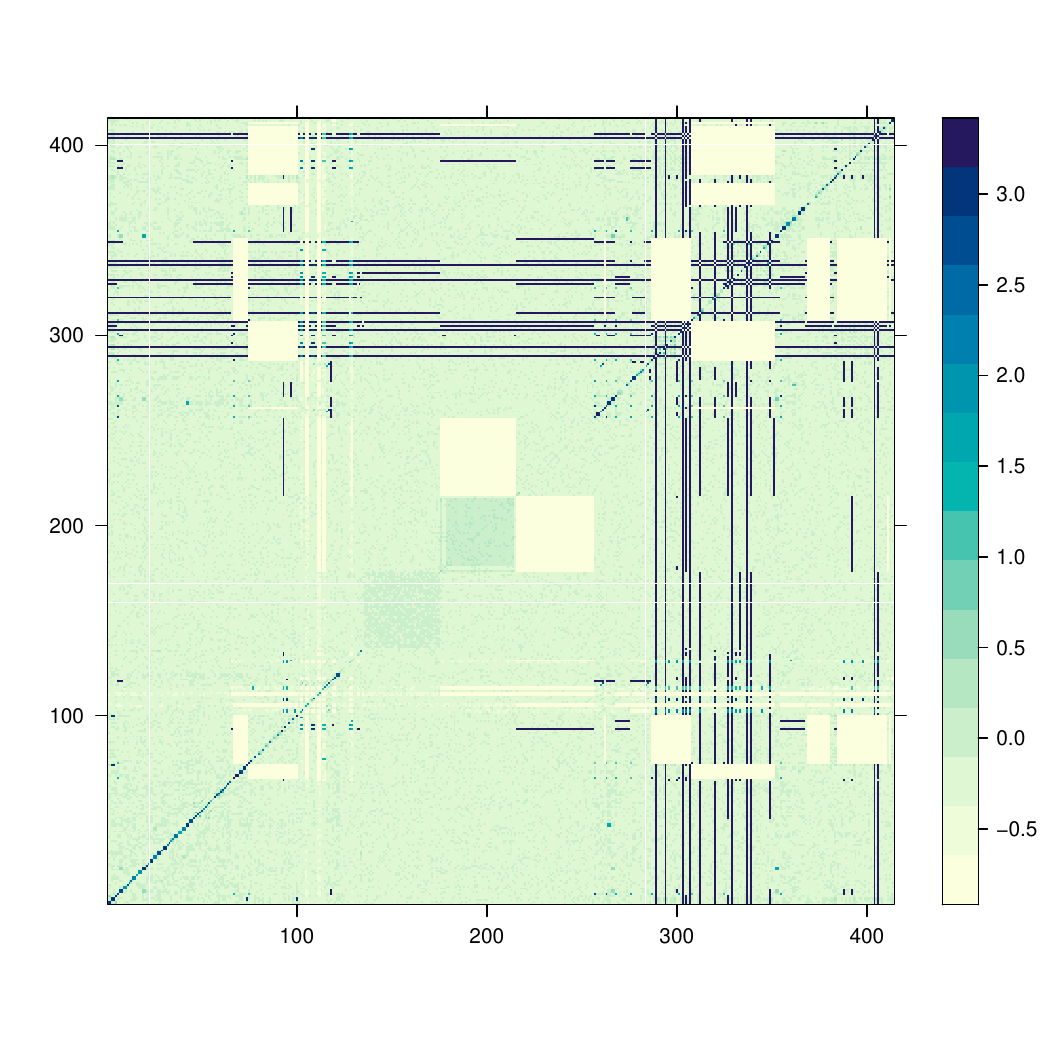}}
	\caption{Experimental results on the NHANES data with the six association measures.}
	\label{fig:nhanes}
\end{figure}

\section{Variable selection}
\label{s:vs}

Variable selection, also called feature selection, is an important problem in statistics and machine learning\cite{george2000the,guyon2003an}. When building functional relations between variables and target variables, one tries to select those variables related to target variables as inputs of function to make the functional model more sound or interpretable and to lower model complexity. This problem is called variable selection.

There are many traditional variable selection methods, including criteria based, regularization based, and measure based ones. The criteria based ones include AIC\cite{akaike1974a,Akaike1998}, BIC\cite{schwarz1978estimating} and  their variants which are derived by adding likelihood with penalty on model complexity. The regularization based ones are mainly based generalized linear models and derived by adding likelihood with L1 or L2 norms penalties, including Lasso\cite{tibshirani1996regression}, ridge regression\cite{hoerl2000ridge}, and elastic net\cite{zou2005regularization}. These two types of methods can based on penalized likelihoods and are based on models. The measure based ones select variables based on the association measures between variables and target variables and therefore are model-free, including linear measures, such as PCC, and nonlinear measures, such as Hilbert-Schmidt independence criteria (HSIC)\cite{gretton2007a,pfister2018kernel} and distance correlation (dCor)\cite{Szekely2007,Szekely2009}, etc.

Ma\cite{jian2019variable} proposed to use CE for variable selection by selecting variables according to association measures between variables and target variables. The larger absolute CE, the better the variable. It has been demonstrated better to the following variable selection methods:
\begin{itemize}
\item LASSO / Ridge Regression / Elastic Net \cite{tibshirani1996regression,hoerl2000ridge,zou2005regularization},
\item AIC / BIC \cite{akaike1974a,Akaike1998,schwarz1978estimating},
\item Adaptive LASSO \cite{zou2006the},
\item Hilbert-Schimdt Independence Criterion (HSIC) \cite{gretton2007a,pfister2018kernel},
\item dCor \cite{Szekely2007,Szekely2009},
\item Heller-Heller-Gorfine tests of independence \cite{heller2016consistent},
\item Hoeffding's D test \cite{hoeffding1948a},
\item Bergsma-Dassios T* sign covariance \cite{bergsma2014a},
\item Ball correlation \cite{pan2020ball}.
\end{itemize}

Experiments\footnote{The code is available at \url{https://github.com/majianthu/aps2020}} with the proposed method was conducted on the UCI heart disease dataset\cite{asuncion2007uci}. The dataset contains clinical biomedical measurement and diagnosis results on the patients from four sites worldwide for heart disease research. Some of the biomedical measurements have been identified by clinical experts as relevant to heart disease, which can be considered as gold standard for benchmarking variable selection methods. Experimental results show that, CE based variable selection method selects more biomedical variables identified by experts than the other comparative methods and presents better prediction performance and much interpretable results than others. Some of the results are shown in Figure \ref{fig:varsel}.

\begin{figure}
	\centering
	\subfigure[CE]{\includegraphics[width=0.8\textwidth]{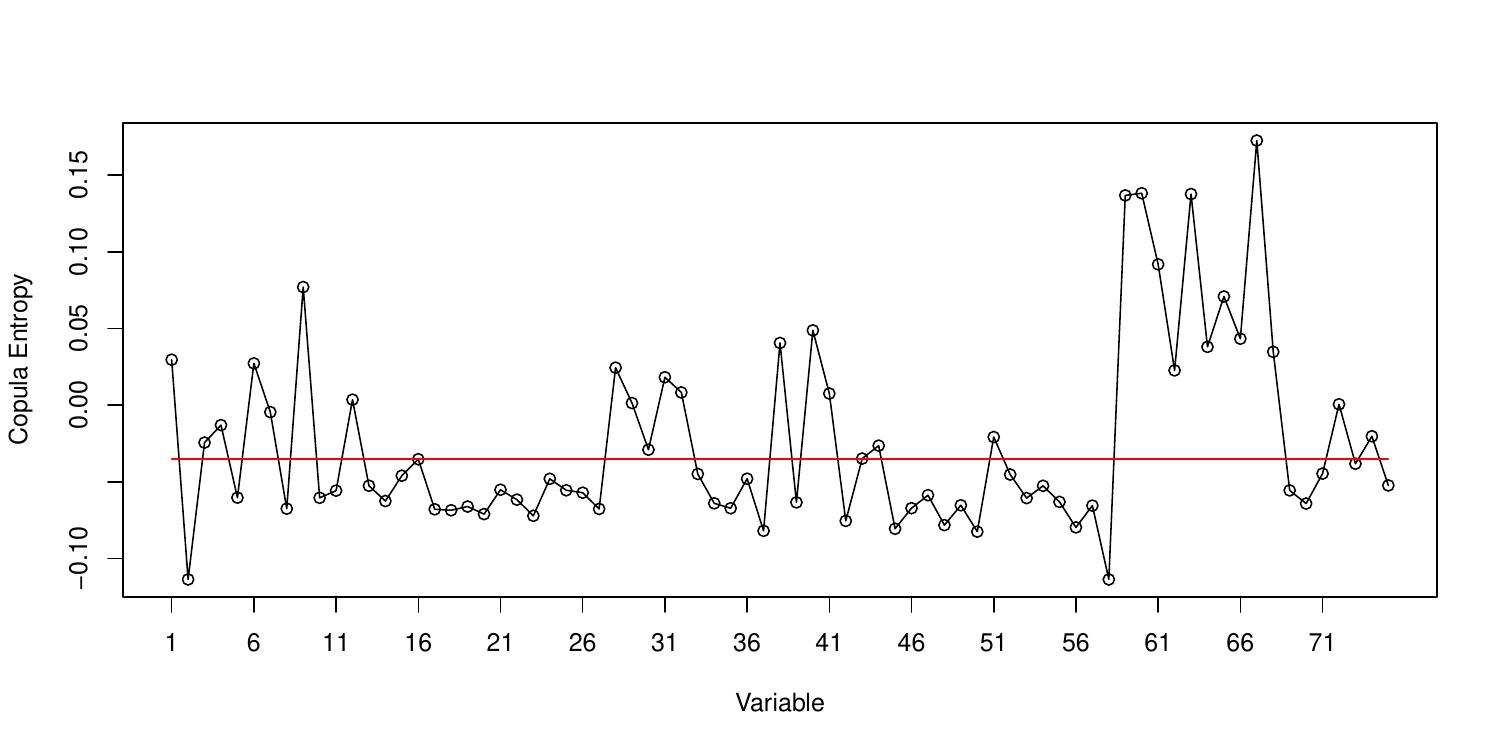}}	
	\subfigure[dCor]{\includegraphics[width=0.8\textwidth]{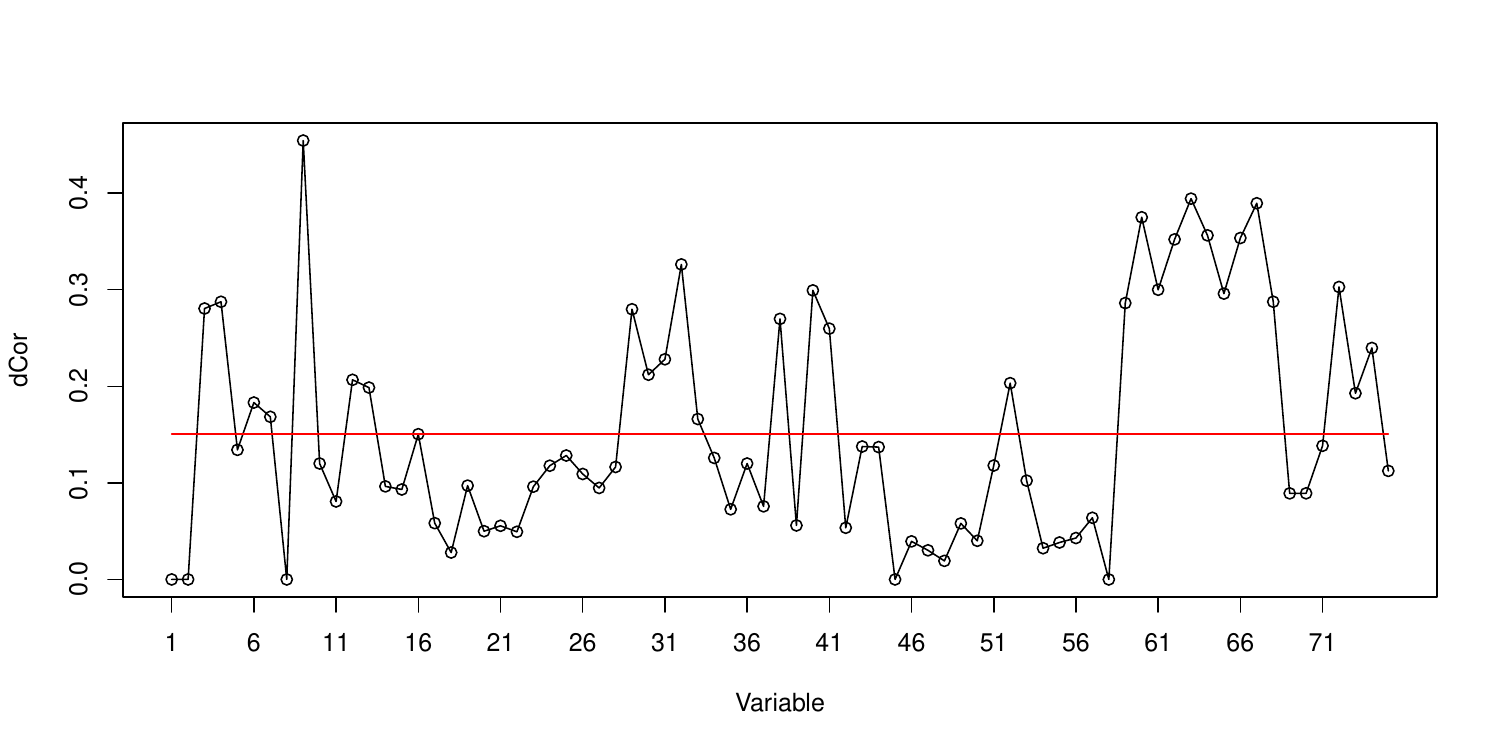}}	
	\subfigure[dHSIC]{\includegraphics[width=0.8\textwidth]{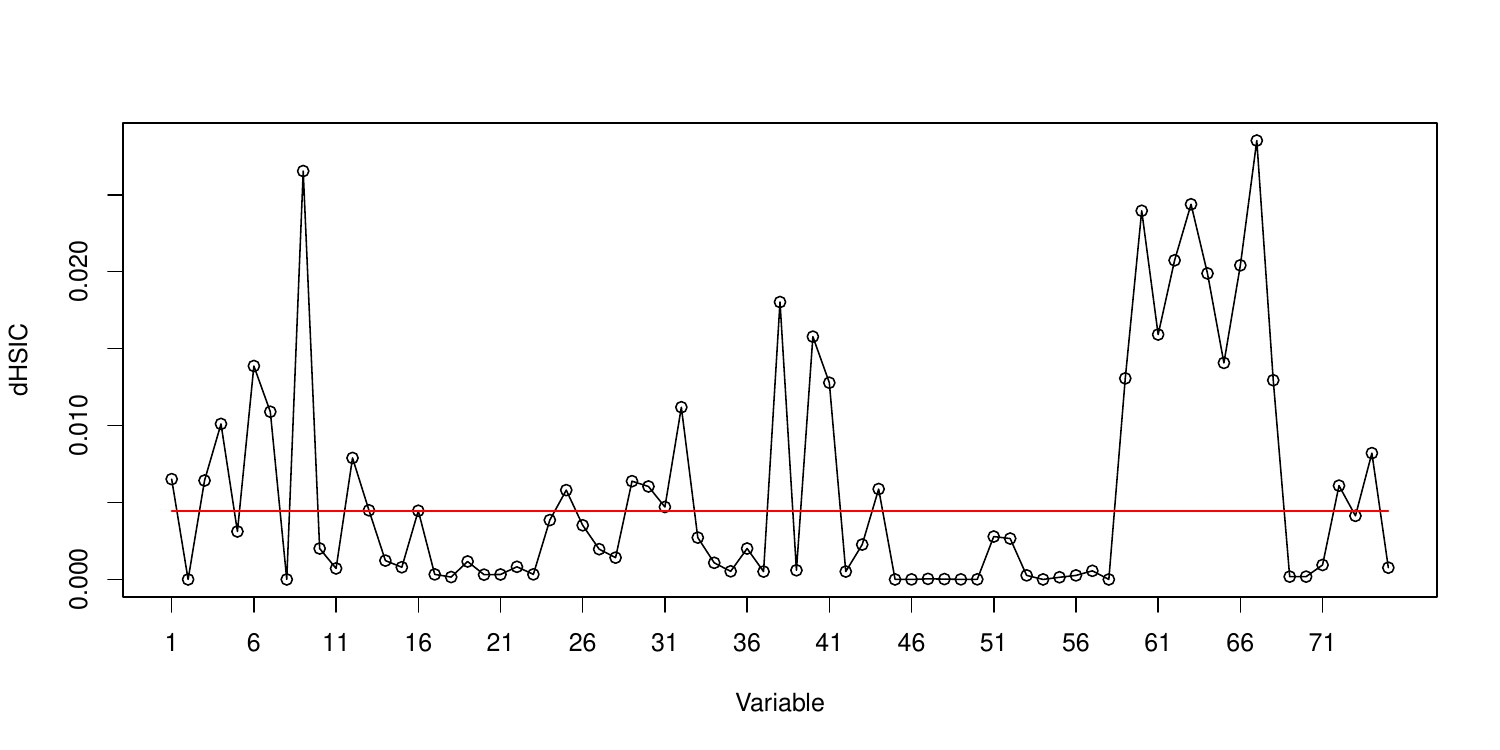}}	
	\caption{Variables selected with three main dependence measures.}
	\label{fig:varsel}
\end{figure}

CE provides a unified framework for variable selection with the following advantages:
\begin{itemize}
\item model-free;
\item mathematically rigorous and sound;
\item physically interpretable;
\item with nonparametric estimator;
\item tuning free.
\end{itemize}

In contrast to the likelihood based ones, the CE based method is model-free, and therefore universally applicable. Compared with other measures, CE has a rigorous definition and many good properties and therefore enjoys theoretical advantages. Meanwhile, entropy is a concept with physical meanings, CE measures the information content of dependence between variables and target variables so is interpretable. CE also has a good nonparametric estimator which makes the proposed method applicable to real problems without assumption. The CE estimator is almost tuning-free, which makes it more advantageous than the methods need tuning, such as Lasso.

Survival analysis \cite{Kleinbaum2012} is a special type of regression problem aiming to predict time-to-event, i.e., time needed for an event happens in the future. This problem is special also due to the censorship mechanism that is used when events does not happen in the observation time window. Survival analysis has many applications in medicine, reliability, and social sciences, etc. Variable selection is also needed for building survival models. Ma\cite{Ma2022a} proposed to use CE for variable selection in survival models. He applied the proposed method to two open cancer datasets, and compared it with random survival forest and Lasso-Cox. Experimental results show that the proposed method presents better prediction performance and much interpretable models\footnote{The code is available at \url{https://github.com/majianthu/survival}}.

\section{Causal discovery}
\label{s:ci}
Causality is ubiquitous in natural and social world and causal discovery is a core philosophical problem with a long history\cite{Beebee2009,Mumford2013,Illari2014} and also a fundamental problem in every branch of sciences\cite{Pearl2009}. Causal discovery is to find causal relationships between random variables from time series data, and is a main problem in time series analysis\cite{Glymour2019,Nogueira2022,Vowels2022,Zanga2022}. The methods for finding causal relationships are usually inferring causal structure with causality principles\cite{Spirtes2000}. 

How to judge causal relations is of fundamental importance in causal discovery and measures of causality is the core components of causal learning algorithms. Wiener proposed a philosophical principle for judging causal relations which states that causal variable should be able to improve the prediction of effects\cite{wiener1956theory}. Further on this, Granger\cite{granger1969investigating,granger1980testing} proposed his principle of causality, called Granger Causality (GC), which is define as follows:
\begin{definition}[Granger Causality]
	\label{def:gc}
	Given time series variables $X_t,Y_t$, when predicting $Y_{t+1}$ with function $f_p$, if adding $X_t$ into inputs of $f_p$ can make the variance of prediction become smaller, i.e.,
	\begin{equation}
		var[y_{t+1}-f_p(y_{t+1}|y_t)] > var[y_{t+1}-f_p(y_{t+1}|y_t,x_t)],
		\label{eq:gc}
	\end{equation}
	then $X$ and $Y$ has causal relationship in Granger's sense, and $X$ is the cause of the effect $Y$.
\end{definition}
GC test is a classical tool for causality. However, it is easy to learn from Definition \ref{def:gc} that GC is only fitful for linear cases with Gaussian assumptions due to variance in it \cite{Shojaie2022}, and therefore limit its wide applications.

Schreiber\cite{schreiber2000measuring} defined transfer entropy (TE) for discovering causality in stationary time series data, as follows:
\begin{definition}[Transfer Entropy]
	Given time series variables $X_t,Y_t,t=1,\ldots,T$, transfer entropy from $X$ to $Y$ is defined as
	\begin{equation}
		TE_{X\rightarrow Y}=\sum_t{ p(y_{t+1},y_t,x_t) \log \frac{p(y_{t+1}|y_t,x_t)}{p(y_{t+1}|y_t)}}.
		\label{eq:tedef}
	\end{equation}
\end{definition}
As a measure of causality, TE is essentially CMI in information theory and can be used to test and measure CI, since \eqref{eq:tedef} can be transformed into CMI:
\begin{equation}
	TE_{X\rightarrow Y}=I(y_{t+1};x_t|y_t).
	\label{eq:tecmi}
\end{equation}

TE is considered as the nonlinear generalization of GC and is proved to be equivalent to GC in Gaussian cases\cite{Barnett2009}. TE measures the uncertainty reduction of effect due to adding causal variables and is model-free and applicable to any cases of causal discovery. Amblard and Michel\cite{Amblard2013} discussed the relationships between GC and directed information and presented the MI representation of directed information.

A key issue when using TE for causal discovery is how to estimate TE from time series data. Traditional TE estimations are based on entropy based expansion of TE and done with parametric or nonparametric entropy estimation methods \cite{Hlavackova-Schindler2007}. Rozo et al.\cite{Rozo2021} evaluated those entropy based TE estimation methods. Barnett and Bossomaier \cite{Barnett2012} proposed to estimate TE with log-likelihood ratio statistic.

CE is a kind of independence measure while TE is a measure of CI. Ma \cite{jian2019estimating} proved that TE can be represented with only CE which is stated as the following theorem:

\begin{theorem}
	\label{thm:tece}
	TE can be represented with only CE, i.e.,
	\begin{equation}
		TE_{X\rightarrow Y}=H_c(y_{t+1},y_t)+H_c(y_t,x_t)-H_c(y_{t+1},y_t,x_t).
		\label{eq:te}
	\end{equation}
\end{theorem}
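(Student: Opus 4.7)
The plan is to chain together two facts that are already in hand: equation \eqref{eq:tecmi}, which rewrites transfer entropy as the conditional mutual information $I(y_{t+1}; x_t \mid y_t)$, and Theorem \ref{thm:cmi}, which expresses any CMI of the form $I(x;y\mid z)$ as a signed sum of three copula entropies. Once these are combined the identity is essentially a matter of relabeling variables, so there is no genuine analytic content beyond what has already been proved.

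Concretely, I would first invoke \eqref{eq:tecmi} to replace $TE_{X\to Y}$ by $I(y_{t+1}; x_t \mid y_t)$. Then I would apply Theorem \ref{thm:cmi} with the substitution $x \mapsto y_{t+1}$, $y \mapsto x_t$, $z \mapsto y_t$, yielding
\begin{equation*}
I(y_{t+1}; x_t \mid y_t) = H_c(y_{t+1}, y_t) + H_c(x_t, y_t) - H_c(y_{t+1}, x_t, y_t).
\end{equation*}
Finally I would invoke the symmetry property of copula entropy (so that $H_c(x_t, y_t) = H_c(y_t, x_t)$ and $H_c(y_{t+1}, x_t, y_t) = H_c(y_{t+1}, y_t, x_t)$) to put the expression in the exact form stated in \eqref{eq:te}.

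The only place where care is required is the symmetry step: Theorem \ref{thm:cmi} was stated for a particular ordering of arguments, and the target formula \eqref{eq:te} uses a different ordering, so one must explicitly note that $H_c$ is invariant under permutation of its arguments (which follows from the symmetry of the underlying copula density under permutation of coordinates). No obstacle beyond this bookkeeping is anticipated, since the substantive analytical work—namely the decomposition of a CMI into three copula entropies—has already been carried out in the proof of Theorem \ref{thm:cmi}.
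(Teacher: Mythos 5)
Your proposal is correct, and it is a legitimately different (and leaner) route than the one the paper takes. The paper proves Theorem \ref{thm:tece} by direct computation: it starts from the defining expression \eqref{eq:tedef}, rewrites the log ratio as $\log\frac{p(y_{t+1},y_t,x_t)p(y_t)}{p(y_{t+1},y_t)p(y_t,x_t)}$, splits it into the three mutual information terms $I(y_{t+1},y_t,x_t)-I(y_{t+1},y_t)-I(y_t,x_t)$, and then converts each to a copula entropy via Theorem \ref{thm:hc} --- in effect repeating, with relabeled variables, the same manipulation already carried out in the proof of Theorem \ref{thm:cmi}. You instead reduce the claim to previously established results: the identification $TE_{X\rightarrow Y}=I(y_{t+1};x_t|y_t)$ in \eqref{eq:tecmi}, the CE decomposition of CMI in Theorem \ref{thm:cmi}, and the permutation invariance of $H_c$ from the Symmetry property. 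Your explicit attention to the argument-ordering step is appropriate bookkeeping and closes the only possible gap. What your approach buys is economy and a cleaner logical dependency structure (no analytic work is duplicated); what the paper's direct derivation buys is a self-contained proof that displays the density-level mechanism without requiring the reader to chase back to Theorem \ref{thm:cmi}. The paper itself concedes the point immediately after the proof, remarking that Theorem \ref{thm:tece} ``is same as'' Theorem \ref{thm:cmi} --- which is precisely the observation your proof turns into the argument.
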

\begin{proof}
\begin{align}
	TE_{X\rightarrow Y} &= \sum_t{p(y_{t+1},y_t,x_t)\log{\frac{p(y_{t+1}|y_t,x_t)}{p(y_{t+1}|y_t)}}}\\
	&=\sum_t{p(y_{t+1},y_t,x_t)\log{\frac{p(y_{t+1},y_t,x_t)p(y_t)}{p(y_{t+1},y_t)p(y_t,x_t)}}}\\
	&=I(y_{t+1},y_t,x_t) - I(y_{t+1},y_t) - I(y_t,x_t)\\
	&=-H_c(y_{t+1},y_t,x_t) + H_c(y_{t+1},y_t) + H_c(y_t,x_t). \label{eq:three}
\end{align}
\end{proof}
Since TE is essentially a measure of CI relation $Y_{t+1}\perp\!\!\!\perp X_t|Y_t$, Theorem \ref{thm:tece} is same as Theorem \ref{thm:cmi}.

Based on CE representation of TE, Ma proposed a nonparametric TE estimation method based on the CE estimators. This makes causal discovery without assumption possible. The proposed TE estimator composed of two steps\footnote{This method has been implemented in the \texttt{copent} packages\cite{ma2021copent} in \textsf{R} and \textsf{Python}.}:
\begin{enumerate}
\item estimate three CE terms in \ref{eq:te} with nonparametric CE estimator;
\item compute TE from the estimated CEs.
\end{enumerate}

To verify the effectiveness of the nonparametric TE estimator, we applied it to the Beijing air pollution problem. We try to estimate causal relations between meteorological factors and PM2.5 based on the UCI Beijing PM2.5 air dataset \footnote{The code is available at \url{https://github.com/majianthu/transferentropy}}. The dataset used contains hourly meteorological observation and PM2.5 observations in Beijing from 2010 to 2014. We select a period of data without missing values and infer causal strength from meteorological factors to PM2.5 with 1-24 hours time lags. With such experiment design, we implicitly assume the stationarity of observation data and also Markovianity. Experimental analysis results show that causal effects from meteorological factors, such as temperature and pressure, to PM2.5 is increasing sharply at first and slow gradually. 

On the same data, we compared our TE method with other two CI measures, i.e., kernel base CI (KCI)\cite{zhang2011kernel} and conditional distance correlation (CDC) \cite{wang2015conditional}. The comparison results are shown in Figure \ref{fig:causality} which demonstrate the advantage of the TE estimator over the other two methods. For more benchmarking on CI measures, please refer to Section \ref{s:cibench}.

\begin{figure}
	\centering
	\subfigure[TE]{\includegraphics[width=0.8\textwidth]{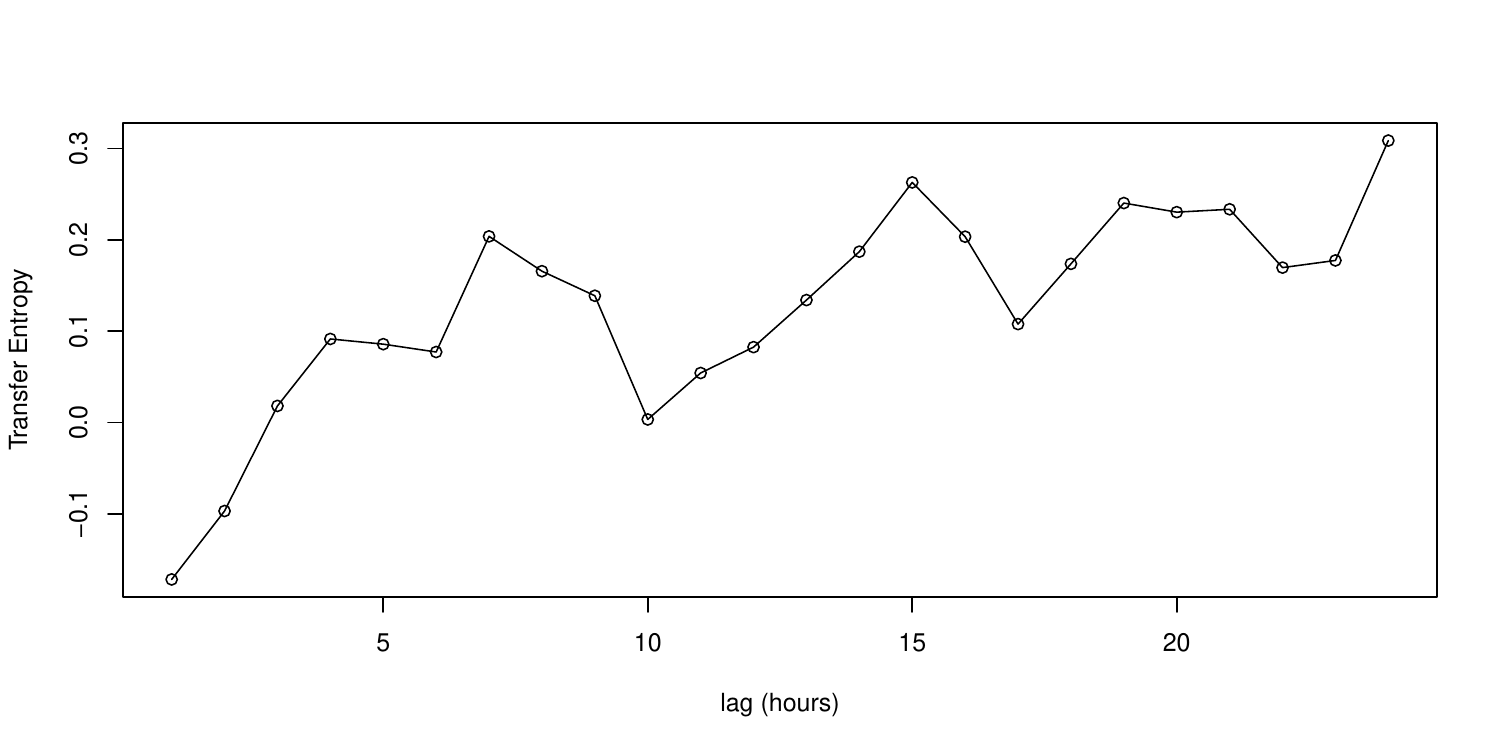}}	
	\subfigure[KCI]{\includegraphics[width=0.8\textwidth]{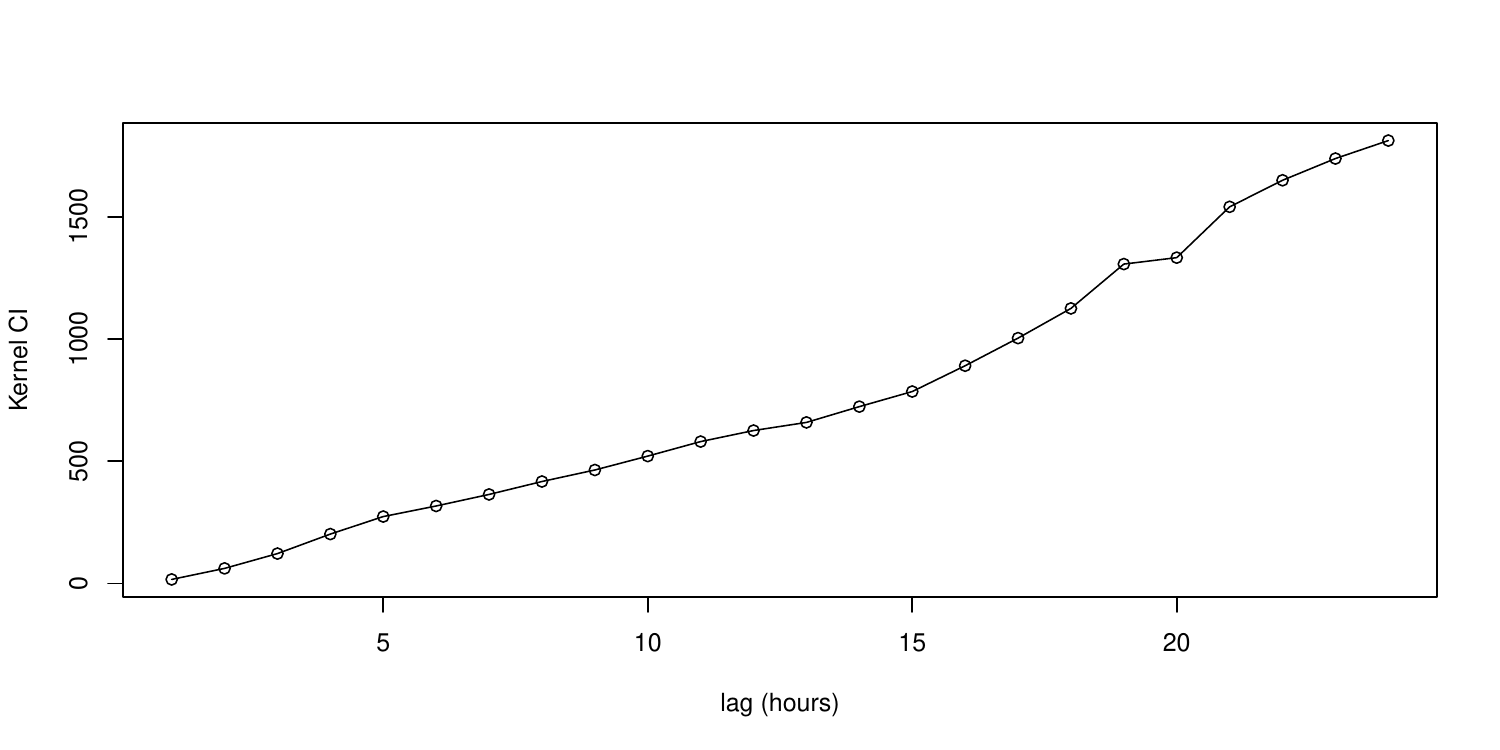}}	
	\subfigure[CDC]{\includegraphics[width=0.8\textwidth]{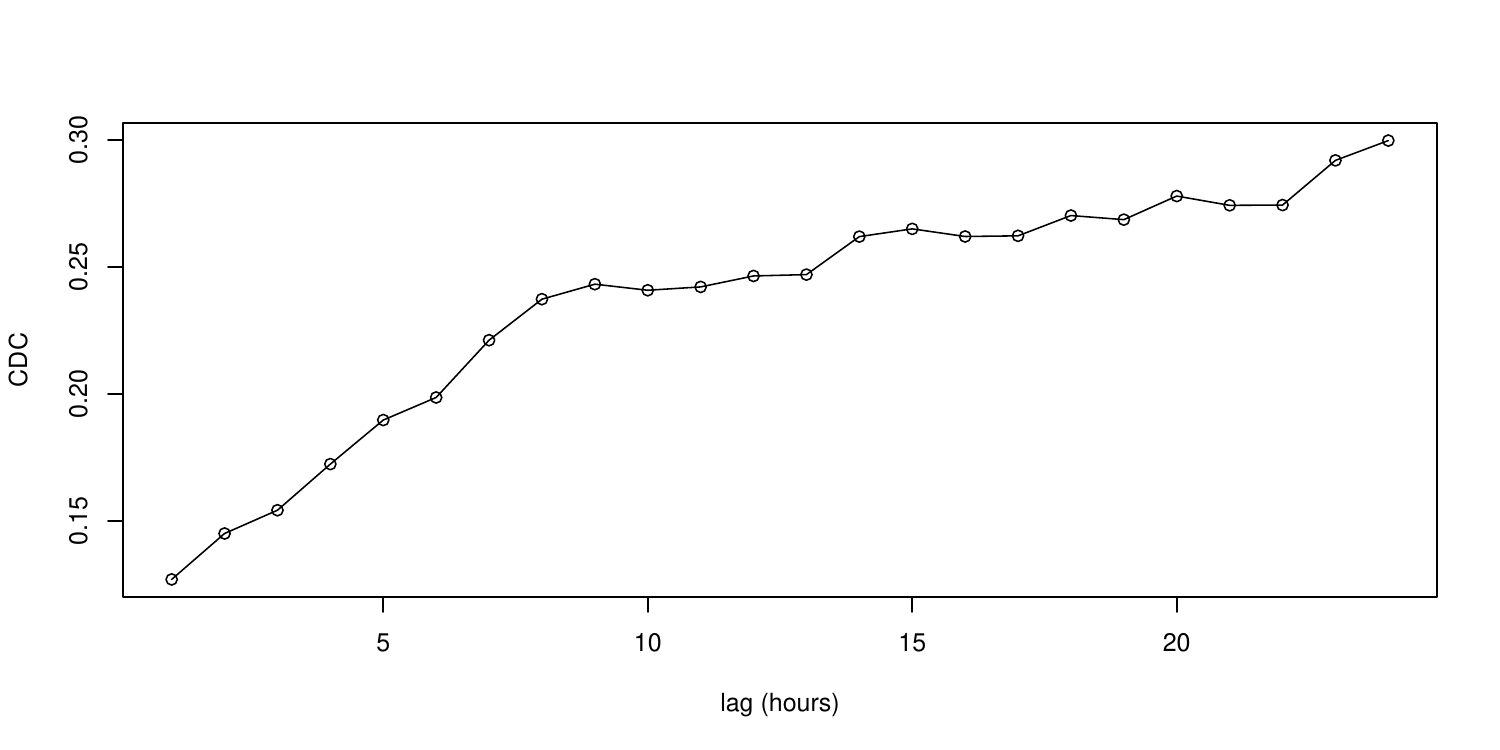}}	
	\caption{Causal strength from pressure to PM2.5 estimated with three measures.}
	\label{fig:causality}
\end{figure}

\section{System identification}
Differential equations are the mathematical tools for modeling dynamical systems and are widely applied to different scientific fields. Discovering differential equations from data is an important problem in this area, also called system identification or equation discovery\cite{Chiuso2019,Camps-Valls2023,North2023}, has gained a lot of attention recently\cite{Kaptanoglu2023a,Champneys2024a}.

Equation discovery can be treated as a regression problem, i.e., building a regression model from system states to derivative. Given a general differential equation:
\begin{equation}
	\frac{dx_i}{dt}=f_i(\mathbf{x},t),
\end{equation}
where $x_i,i=1,\ldots,n$ is system state variables, then equation discovery is to identify $f_i$. Such identification needs to choose variables of equation, which is a typical variable selection problem. There have been many existing regression methods proposed for this problem, such as sparsity based SINDy\cite{Brunton2016} and kernel based method\cite{Pillonetto2014}.

Entropy has a long history in differential equation area\cite{Evans2004}. It is used to measure ``randomness" in chaotic systems, such as the famous Kolmogorov-Sinai entropy\cite{Sinai2009}. Nardone and Sonnino\cite{Nardone2024} proposed an entropy of difference for describing time series complexity. MI has also been applied to system identification. For instance, Chernyshov and Jharko\cite{Chernyshov2018} proposed to use Tsallis MI for identifying the correlation between inputs and outputs of dynamical systems. Stoorvogel and van Schuppen\cite{Stoorvogel1995} proposed to identify linear dynamical systems by estimating MI rate between error and white noise.

Ma\cite{Ma2023a} proposed a CE based method for equation discovery which treats the problem as variable selection and solves it with the CE based method in Section \ref{s:vs}. The proposed method is essentially using entropy to measure the ``randomness" between states and derivative to identify dynamical systems, which composed of two steps:
\begin{enumerate}
	\item estimate derivative with difference operator;
	\item select state variable by estimating the CEs between states and derivative.
\end{enumerate}
The derivative can be estimated in such a nonparametric way:
\begin{equation}
	\frac{dx}{dt}\bigg|_{t=t_0}\approx \frac{x_{t_1}-x_{t_0}}{t_1-t_0}.
\end{equation}
Meanwhile, CE can also be estimated with the nonparametric estimator. So the proposed method is nonparametric without assumptions and can be applied to any system.

We applied the proposed method to two typical dynamical systems: Lorenz system\cite{Lorenz1960} and R\"ossler system\cite{Roessler1976,Rossler1979}, both of which have 3 equations with first and second order state variables as follows:
\paragraph{Lorenz system}
\begin{equation}
	\begin{aligned}
		\frac{dx}{dt} &= \sigma (y-x),\\
		\frac{dy}{dt} &= \rho x - y - xz,\\
		\frac{dz}{dt}&= -\beta z + xy,\\
	\end{aligned}
	\label{eq:lorenz}
\end{equation}
where $\sigma,\rho,\beta$ are Prandtl number,  Rayleigh number, and geometrical factor respectively and are set as $10,28,8/3$ in the experiments.

\paragraph{R\"ossler system}
\begin{equation}
	\begin{aligned}
		\frac{dx}{dt} &= -(y+z),\\
		\frac{dy}{dt} &= x + ay, \\
		\frac{dz}{dt}&= b + z(x-c),\\
	\end{aligned}
	\label{eq:rossler}
\end{equation}
where $a,b,c$ are system parameters and set as $0.38,0.2,5.7$ in simulation experiments.

We generated the simulated data of these two systems and applied the proposed method to the generated data to estimate CEs between state variables and the estimated derivatives. Large absolute CEs mean the variable is selected. Experimental results are shown in Figure \ref{fig:lorenz} and Figure \ref{fig:rossler}, from which one can learn that the variables identified correspond to the original equations\footnote{The code is available at \url{https://github.com/majianthu/sysid}}.

\begin{figure}
	\centering
	\subfigure[Simulated data]{\includegraphics[width=0.6\textwidth]{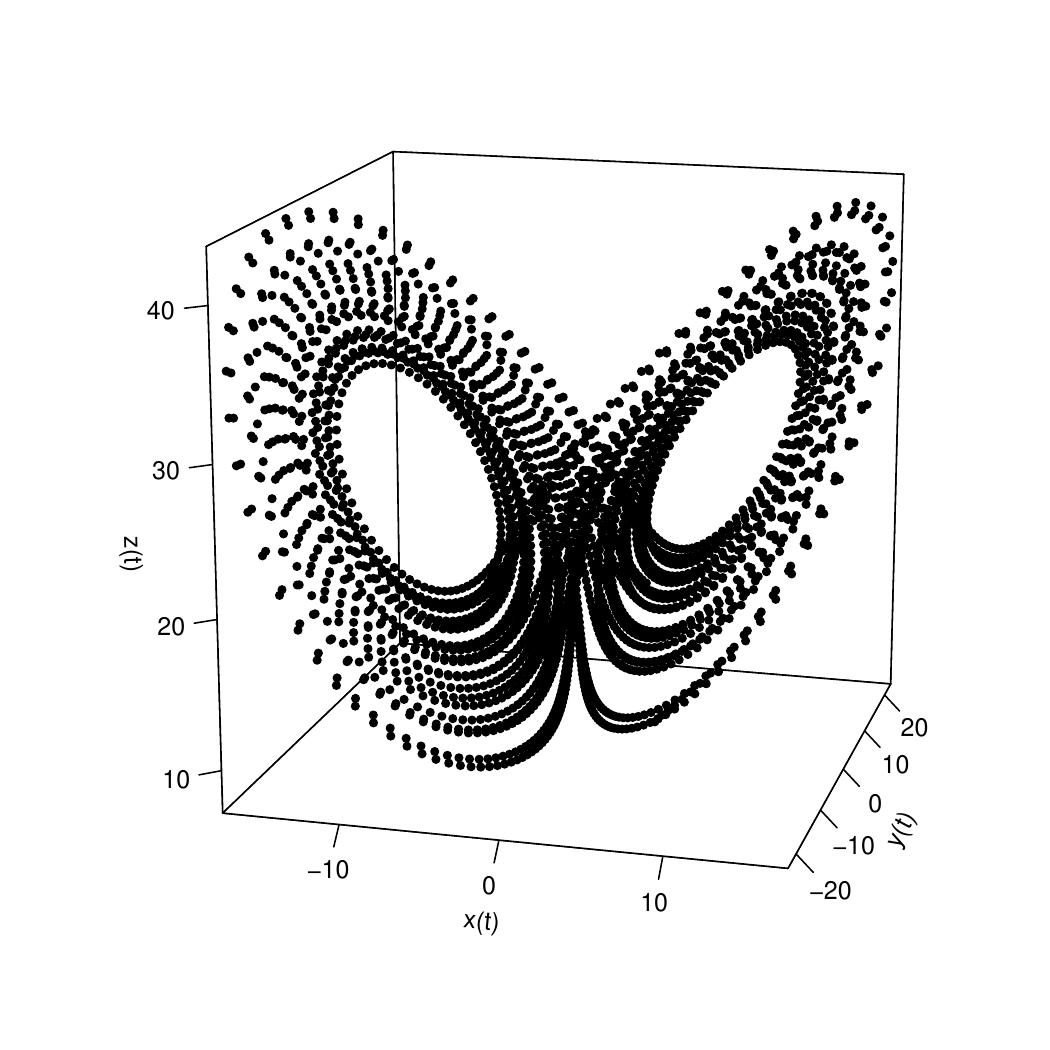}}
	\subfigure[Identification results]{\includegraphics[width=0.75\textwidth]{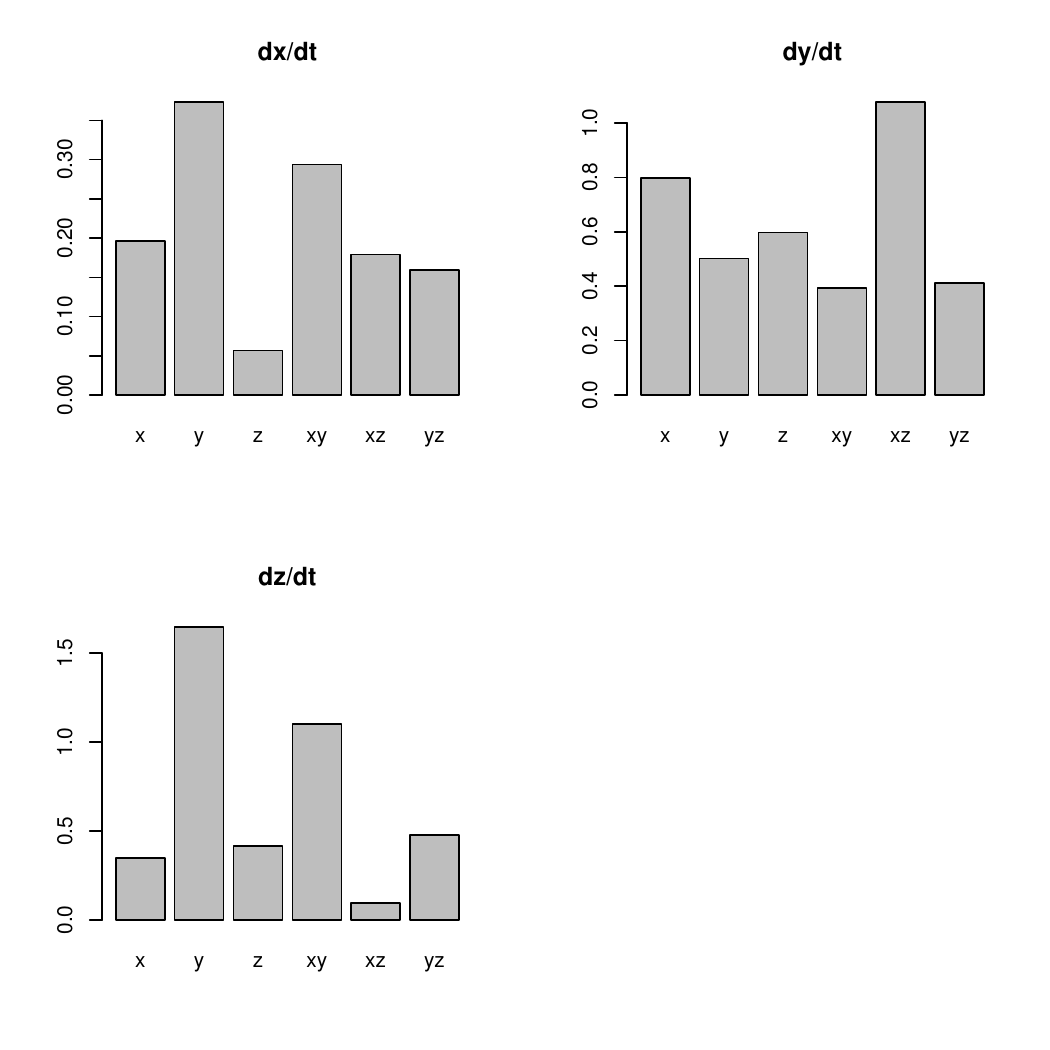}}
	\caption{Results of the experiment for identifying Lorenz system.}
	\label{fig:lorenz}
\end{figure}

\begin{figure}
	\centering
	\subfigure[Simulated data]{\includegraphics[width=0.6\textwidth]{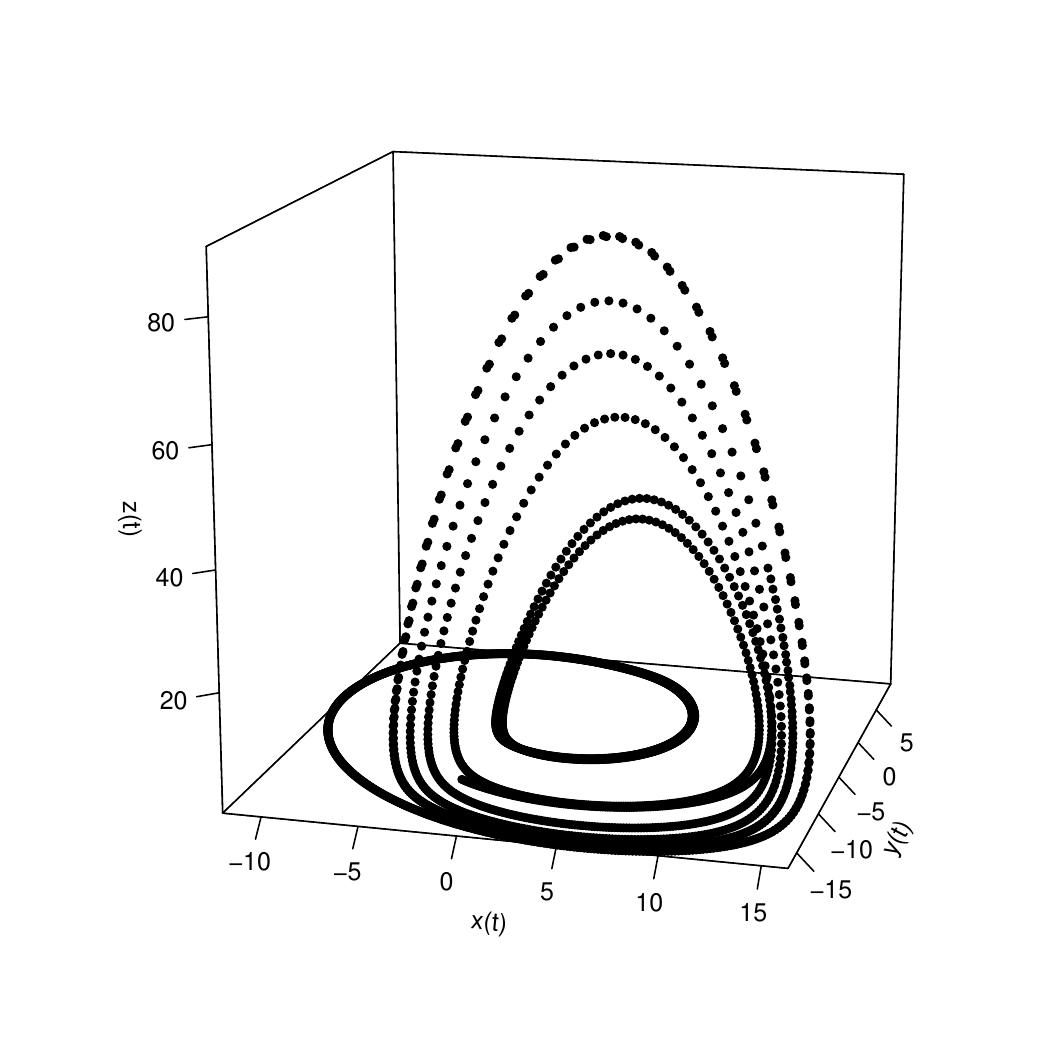}}
	\subfigure[Identification results]{\includegraphics[width=0.75\textwidth]{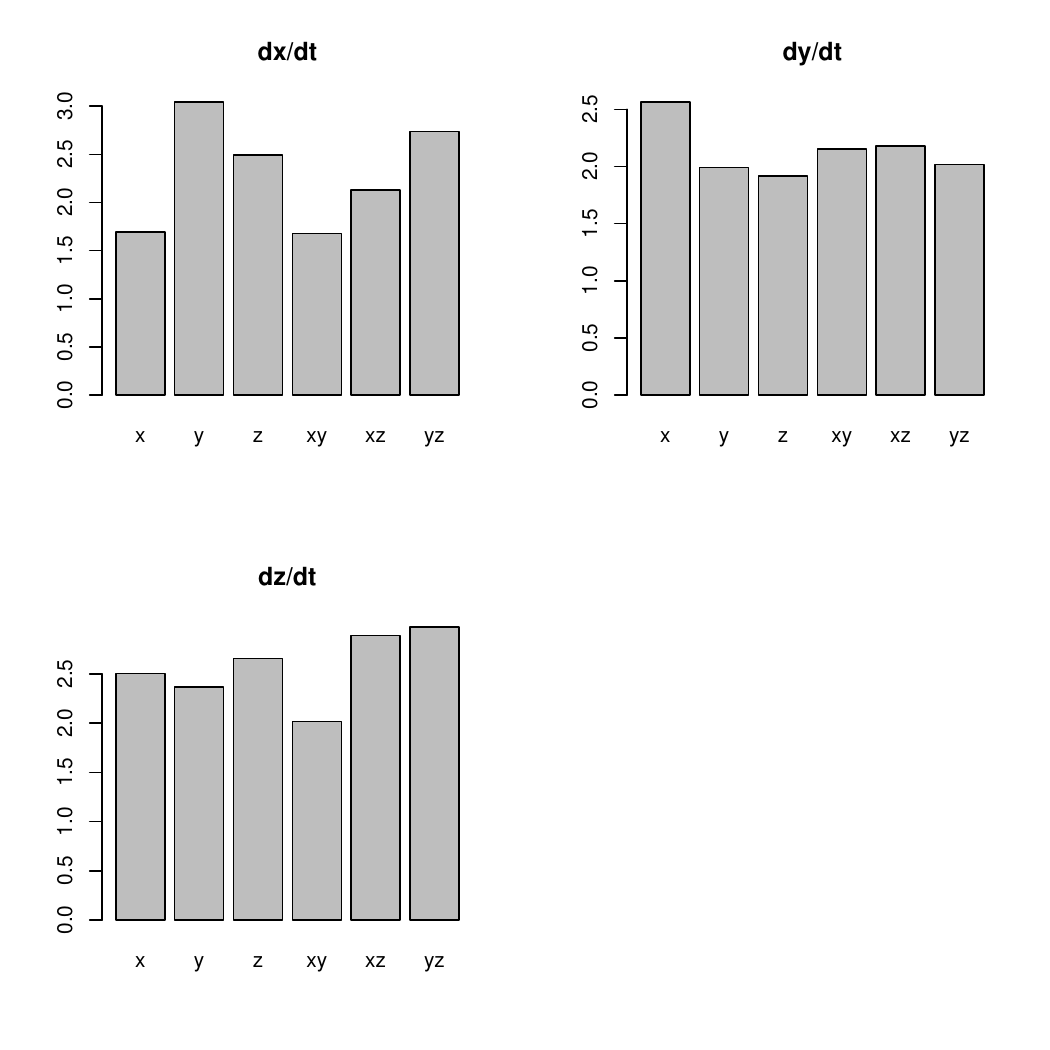}}
	\caption{Results of the experiment for identifying R\"ossler system.}
	\label{fig:rossler}
\end{figure}

\section{Time lag estimation}
Time lag is a property of many dynamical systems due to the time needed to transfer matter, energy, or information in all the physical, social and biological systems. So time lag estimation is a basic problem in time series analysis\cite{Abarbanel1993,Carter1987} and has wide applications in different areas of sciences and engineering, such as traffic in transportation, Solar's effect on earth, or social policy evaluation, etc.

Time lag estimation is to estimate time delay from sources signals to target signals from their measurements. Given source signals $x_t$ and target signals $y_t$, $t=1,\ldots,T$, suppose the functional relation $f(\cdot)$ between them, and time lag $l$ from source $x$ to target $y$, the general model for time lag estimation is
\begin{equation}
	y_t = f_L(x_{t-l},\mathbf{z}_t) + \epsilon,
	\label{eq:lag}
\end{equation}
where $\mathbf{z}_t$ are other variables, $\epsilon$ is noise. The problem is to estimate $l$ from measurement $x_t,y_t$. To do so, some assumptions  for $f_L$ are needed, usually is linearity even though real systems are much more complex than linear.

Traditional methods for time lag estimation is based autocorrelation\cite{Bjorklund2003} which is very limited due to linear assumption. Another common method is time-delayed MI\cite{Mars1982} which is fitful for nonlinear time series. However, these two methods are both symmetric measures while time lag is essentially asymmetric due to causal relationship between source and target.

Time lag is actually the time of causal effect from source to target and therefore can be estimated with causality tools. As a measure of causality, TE can measure asymmetric causal relations so is a right tool for the problem. Given the above model \ref{eq:lag}, one can estimate a groups of TEs $TE_{X\rightarrow Y}(l)$ from $X_{t-l}$ to $Y_t$ within time lag window $l=1,\ldots,L$, and then take the lag corresponding to the maximum of TEs as the estimate:
\begin{equation}
	\hat{l} = \mathop{\arg\max}\limits_{l} TE_{X\rightarrow Y}(l).
\end{equation}

Ma \cite{Ma2023} proposed to estimate time lag with the TE estimator in Section \ref{s:ci}, composed of two steps:
\begin{enumerate}
	\item estimate TEs from source to target in time lag window with the CE based TE estimator;
	\item take the lag corresponding to maximum of TEs as estimate.
\end{enumerate}
Since the CE based TE estimator is nonparametric without assumptions on the underlying system, whether linear or nonlinear, the proposed method is universally applicable.

To verify the effectiveness of the proposed method, we simulate five time-delayed dynamical system with different characteristics of dynamics:
\paragraph{System with random input}
\begin{equation}
	\begin{aligned}
		x_i &= \xi_1,\\
		y_{i+l} &= x_i + \xi_2;
	\end{aligned}
\end{equation}
\paragraph{System with nonlinear random input}
\begin{equation}
	\begin{aligned}
		x_i &= \sin(2\pi i/m) + \xi_1,\\
		y_{i+l} &= x_i + \xi_2;
	\end{aligned}
\end{equation}
\paragraph{Wiener processes}
\begin{equation}
	\begin{aligned}
		x_i &= x_{i-1} + \xi_1,\\
		y_{i+l} &= x_i + \xi_2;
	\end{aligned}
\end{equation}
\paragraph{Second order Wiener processes}
\begin{equation}
	\begin{aligned}
		x_i &= \alpha x_{i-1} + \beta x_{i-l} + \xi_1,\\
		y_i &= x_i + \xi_2;
	\end{aligned}
\end{equation}

\paragraph{Second order nonlinear Wiener processes}
\begin{equation}
	\begin{aligned}
		x_i &= \alpha x_{i-1} + \beta x_{i-l} + \xi_1,\\
		y_i &= x_i^2 + \sin(x_i) + \xi_2.
	\end{aligned}
\end{equation}
where $x_i$ is input variable, $y_i$ is output variable, $\xi_1 \sim N(0,0.1), \xi_2 \sim N(0,0.001)$ are white noise, $\alpha=0.2,\beta=0.8$ are system parameters, $l=1,2,3,4$ are time lags in simulation. Simulated trajectories are shown in Figure \ref{fig:sim}. After applying the proposed method to the simulated trajectories, we estimated the time lags from $X$ to $Y$ in the simulated systems, as shown in Figure \ref{fig:sysidres}, from which one can learn that the pre-specified time lags are estimated correctly. Particularly, in second order Wiener processes, the estimation results not only present time lag estimates but show the memory decay of the simulated systems.

\begin{figure}
	\centering
	\subfigure[System with random input]{
		\includegraphics[width=0.78\textwidth]{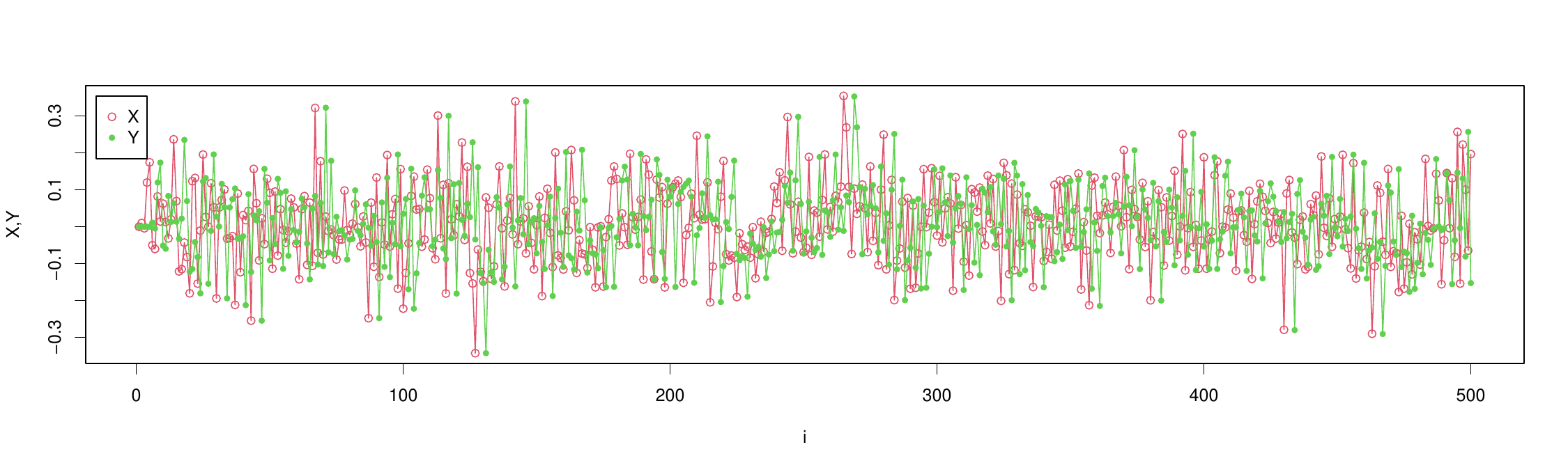}
		\label{fig:sim1data1}}
	\subfigure[System with nonlinear random input]{
		\includegraphics[width=0.78\textwidth]{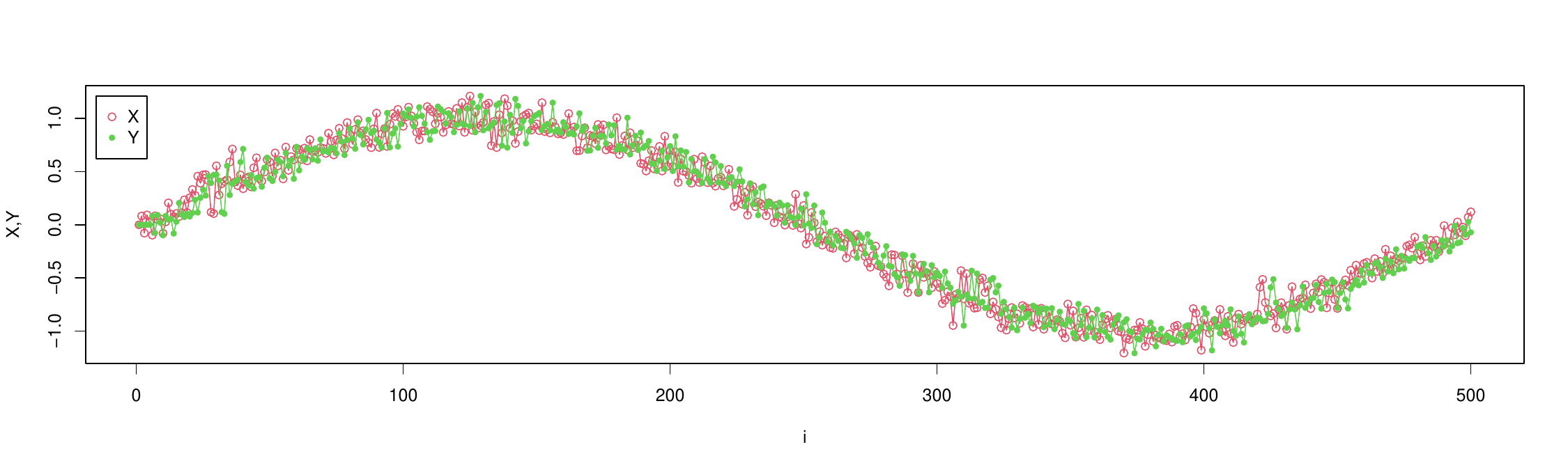}
		\label{fig:sim1data2}}
	\subfigure[Wiener processes]{
		\includegraphics[width=0.78\textwidth]{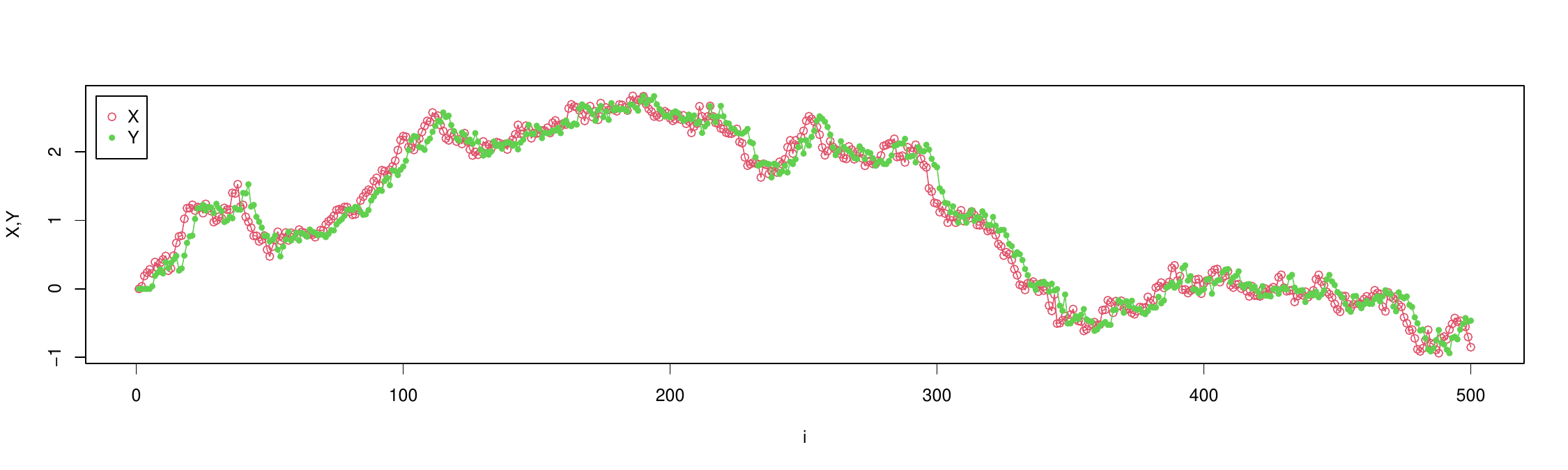}
		\label{fig:sim1data3}}
	\subfigure[Second order Wiener processes]{
		\includegraphics[width=0.78\textwidth]{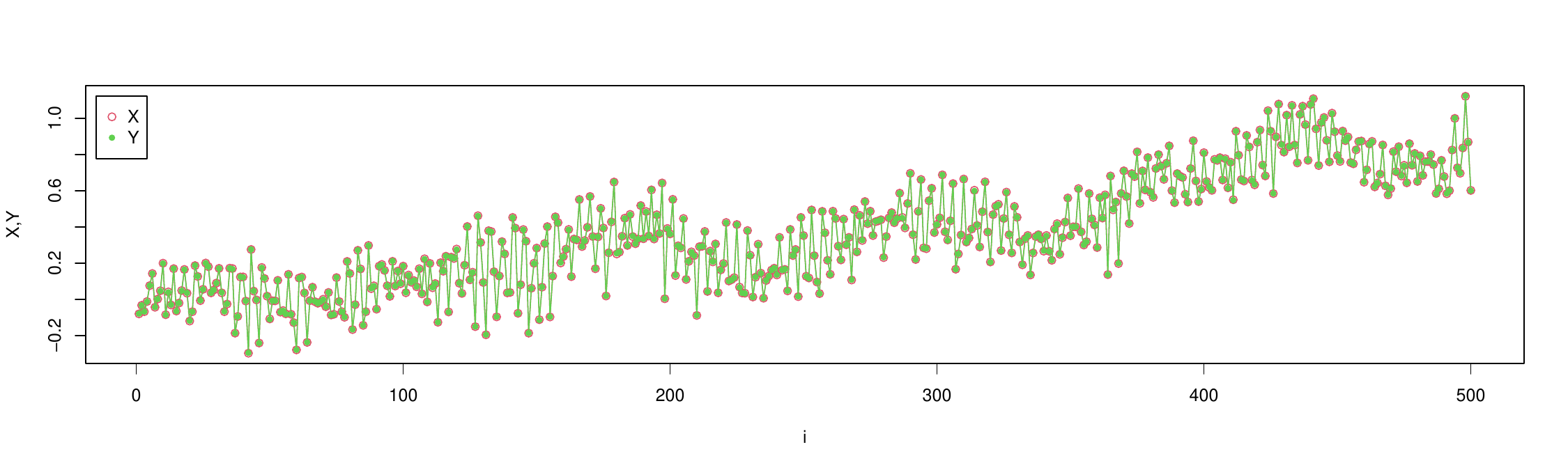}
		\label{fig:sim2}}
	\subfigure[Second order nonlinear Wiener processes]{
		\includegraphics[width=0.78\textwidth]{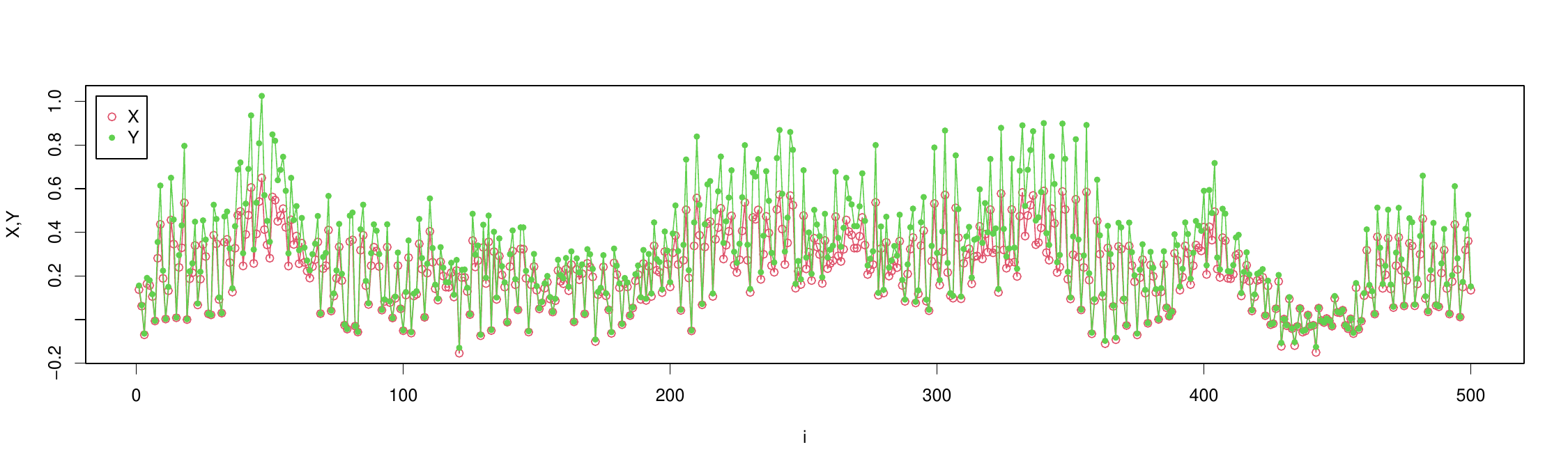}
		\label{fig:sim2n}}	
	\caption{Simulated trajectories of the simulated time-delayed systems ($l=4$).}
	\label{fig:sim}
\end{figure}

\begin{figure}
	\centering
	\subfigure[System with random input]{
		\includegraphics[width=0.4\textwidth]{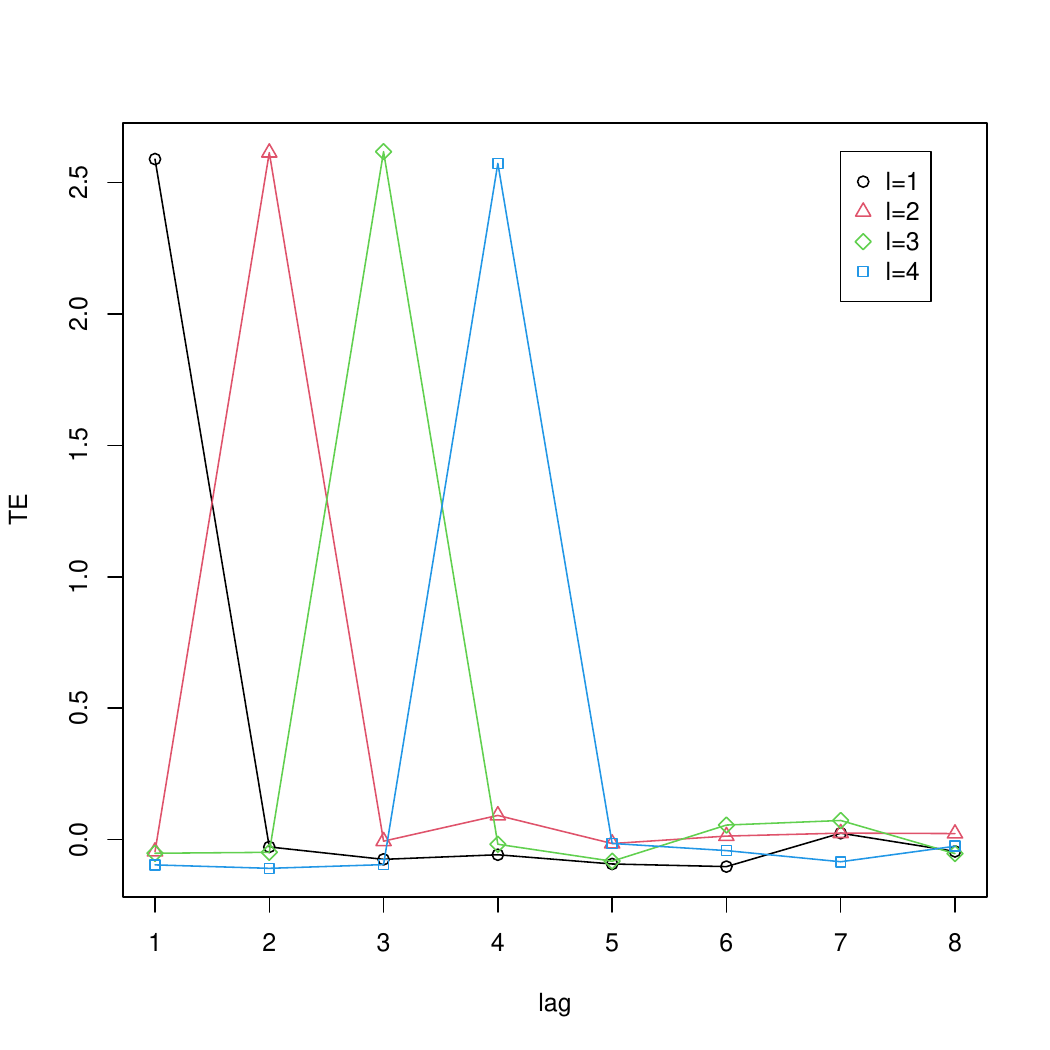}
		\label{fig:sim1res1}}
	\subfigure[System with nonlinear random input]{
		\includegraphics[width=0.4\textwidth]{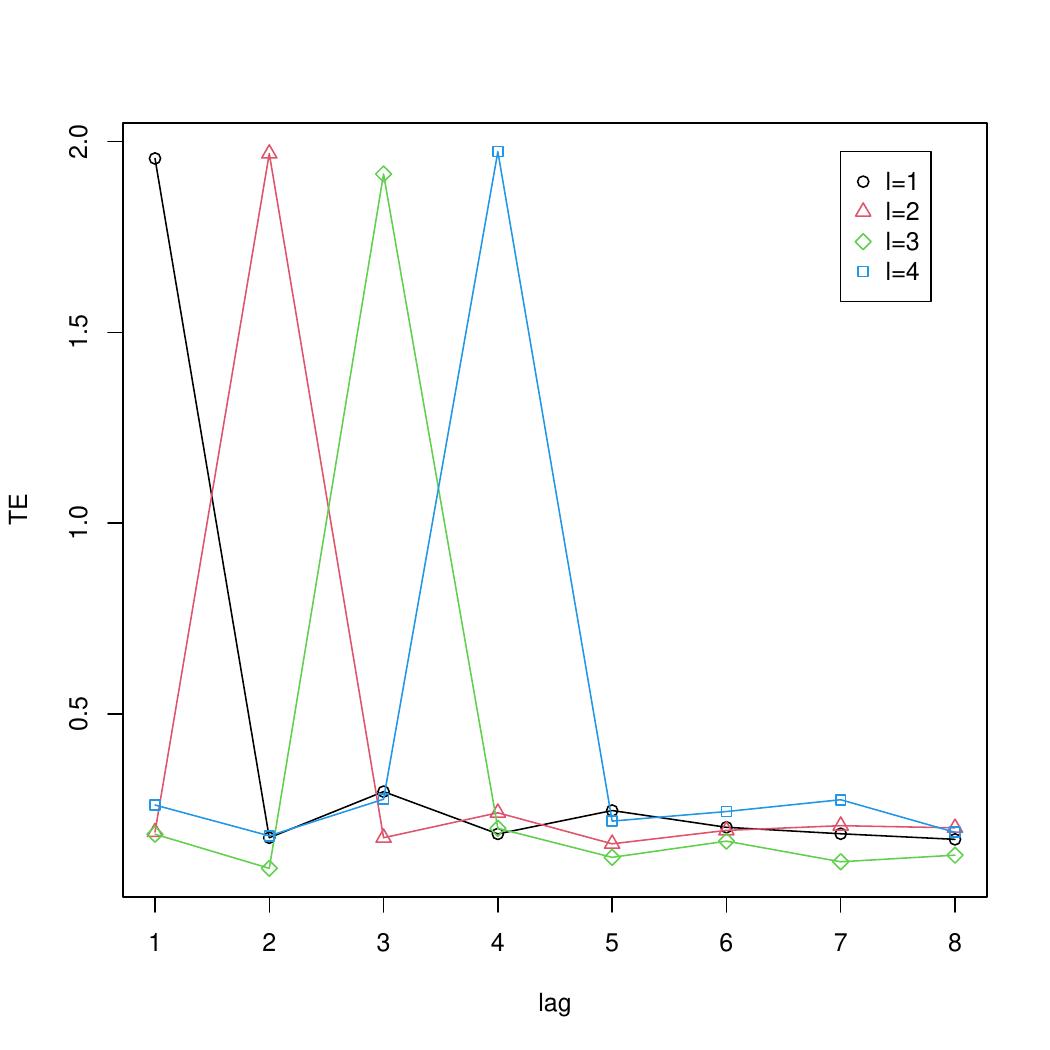}
		\label{fig:sim1res2}}
	\subfigure[Wiener processes]{
		\includegraphics[width=0.4\textwidth]{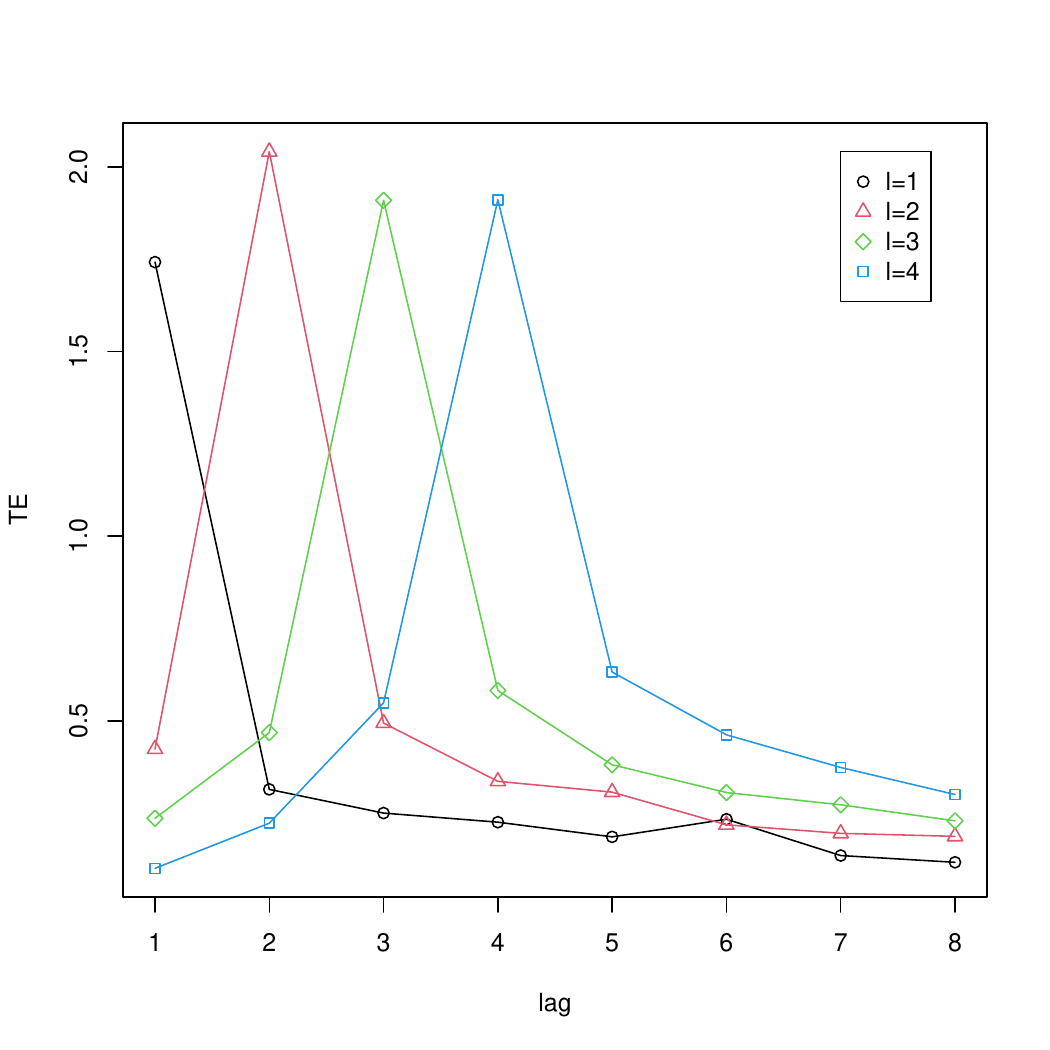}
		\label{fig:sim1res3}}
	\subfigure[Second order Wiener processes]{
		\includegraphics[width=0.4\textwidth]{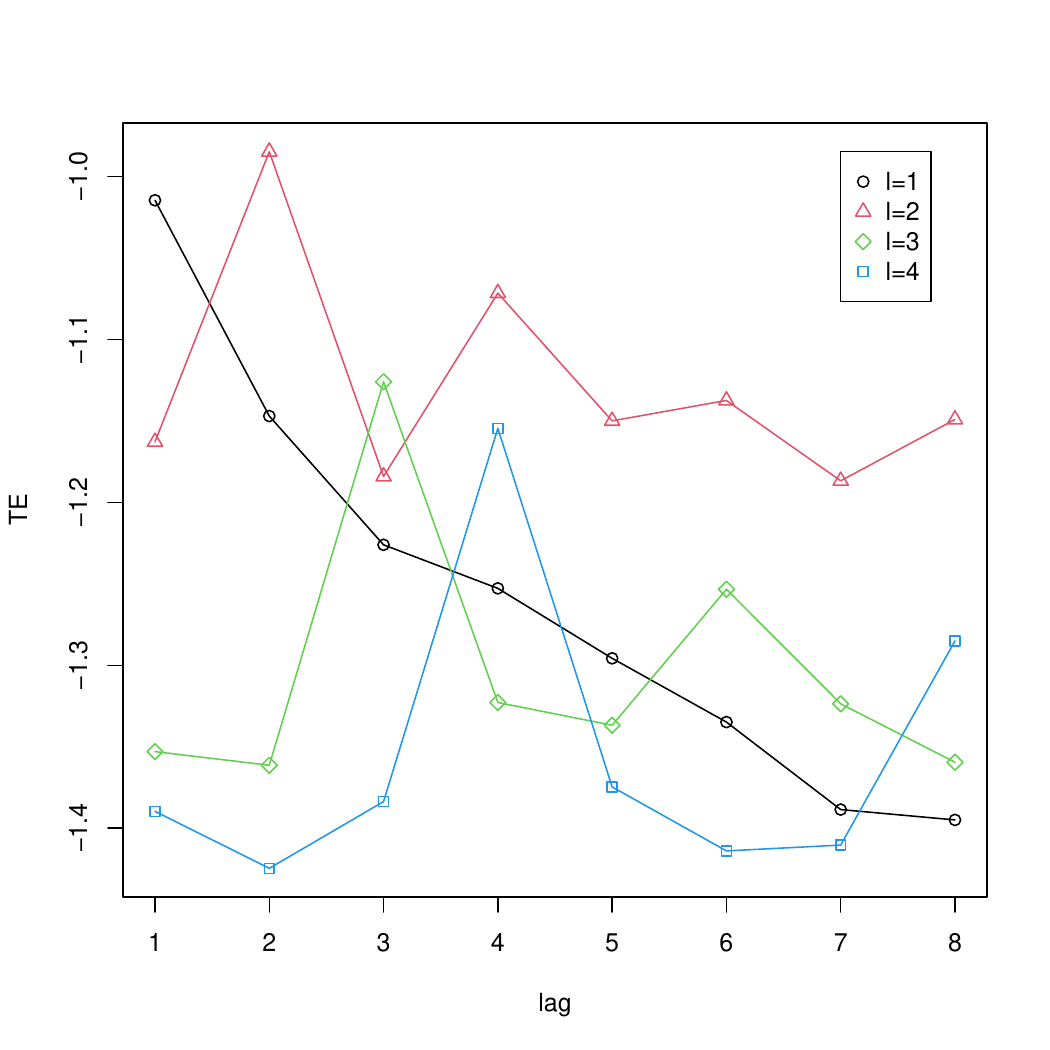}
		\label{fig:sim2res1}}
	\subfigure[Second order nonlinear Wiener processes]{
		\includegraphics[width=0.4\textwidth]{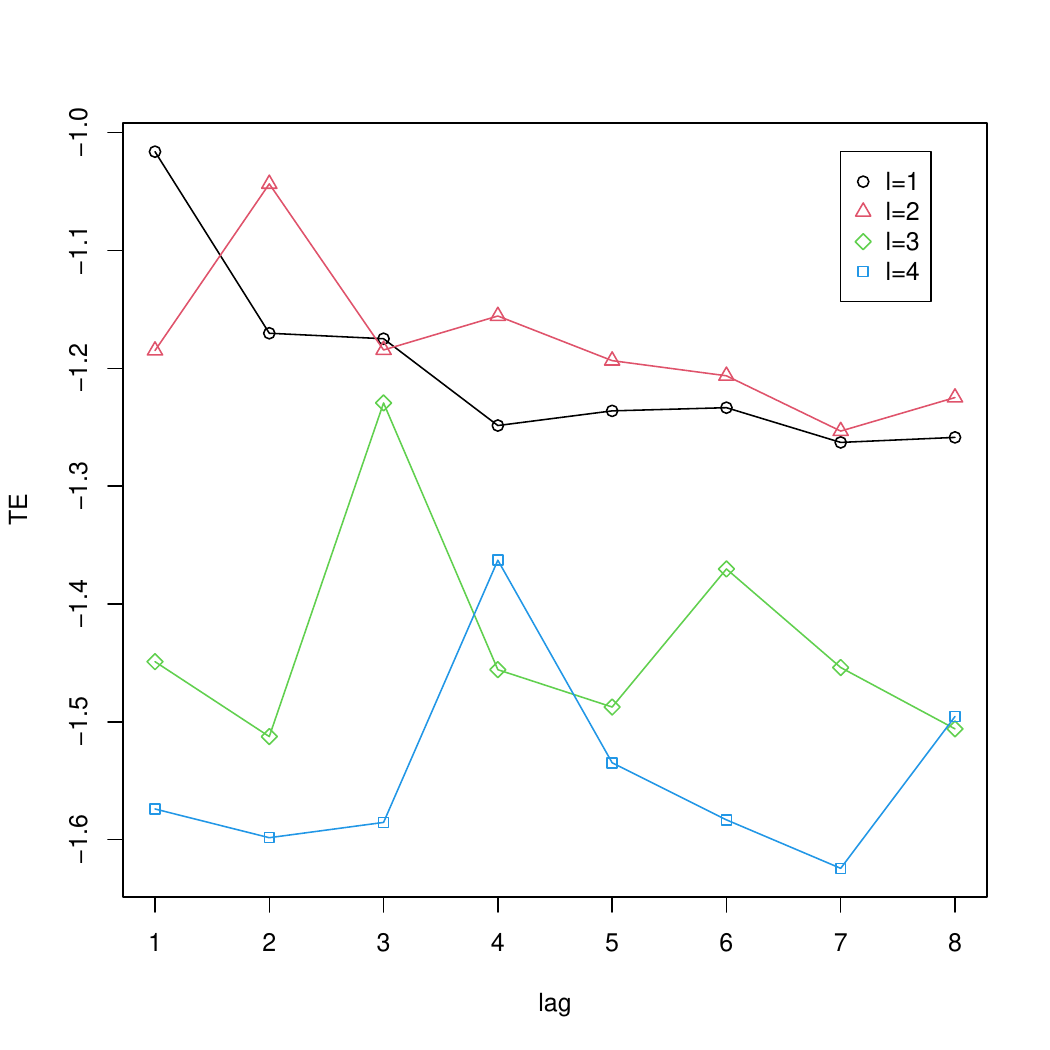}
		\label{fig:sim2res2}}
	\caption{Estimation results of the simulation experiments for time lag estimation.}
	\label{fig:sysidres}
\end{figure}

We also applied the proposed method to the data of electric load of the T\'etouan city to analyze time lag from weather factors to electric loads of three areas of the city. The analyzing shows how weather factors affect electric load with time lags\footnote{The code is available at \url{https://github.com/majianthu/timelag}}.

\section{Domain adaptation}
Domain Adaptation (DA) is a common problem in machine learning which refers to distribution in model deployment shifting away from distribution in model training due to certain outside factors. To make the deployed model work in the right ways as in training, DA tries to do model transfer between different situations. There are many ways of such model transfer proposed \cite{Pan2010}. 

DA is meaningful to real applications. For example, when a medical model was trained on the data from one hospital and wants to be deployed in another hospital, there may be problematic due to different medical equipments used in these hospitals. The same situation happens in social science when using a social model got from one population to another different population.

Tackling DA from the causal viewpoint is an important direction in DA research. It considers distribution shift as an effect of outside cause and learns such a causal model for the DA problem. In this sense, CE based CI measure can be applied to identify such causal relations in DA problem. Ma \cite{ma2022causal} proposed a causal DA method with CE based CI measure. He suppose the invariance of the functional relation from variable $X$ to predictive variables $Y$ on different domains $D_i$ and associate a domain variable $Z$ to different domains and then the DA problem is transformed into one that estimates CI relation between $X$ and $Y$ conditional on $Z$, i.e.,
\begin{equation}
	X \not \perp\!\!\!\perp Y|Z.
	\label{eq:cda}
\end{equation}
The CE based CI test in Section \ref{s:ci} can be used to identify such CI relations for causal DA problems.

We verified the effectiveness with simulation experiments and a real data on gender inequality on income\footnote{The code is available at \url{https://github.com/majianthu/cda}}.

\section{Multivariate normality test}
Normal distribution, also named Gaussian distribution, is a very important probability distribution since it is at the center in probability due to central limit theorem and commonly exists in natural and social worlds \cite{Ross2020,Durrett2019}. Normality is a common assumption of many statistical models and methods and therefore Testing on it is necessary for applications. Normality test is such a methodology in hypothesis testing, including univariate and multivariate ones. There are many traditional tests, including those based moments, characteristic functions, entropy, and optimal transport, etc\cite{Ebner2020,Chen2023,Yap2011,Yazici2007,Shapiro1968}.

According to maximum entropy principle\cite{Jaynes1957}, normal distributions have maximum entropy among all the distributions with same second order moment. Based on this, scholars proposed entropy based univariate normality test by the ratio of entropies of unknown distribution and normal distribution with same variance. Since entropy of normal distribution with variance $\delta^2$ is $\log \sqrt{2\pi e}\delta$, the test statistic for normality is defined as
\begin{equation}
	T_{un}=\frac{H(x)}{\log \sqrt{2\pi e}\delta}.
\end{equation}

As an entropy measure of all order dependence, CE can be used to test multivariate normality. Given $n$ random variables $\mathbf{X}\sim N(\mu,\Sigma)$, the entropy of them is 
\begin{equation}
	H(\mathbf{x})=\frac{1}{2}\log (2\pi e)^n |\Sigma|.
\end{equation}
According to Property \ref{p:cov}, the equivalence between CE and correlation coefficient matrix $\Sigma_\rho$ is
\begin{equation}
	H_c(\mathbf{x})=\frac{1}{2}\log |\Sigma_\rho|.
\end{equation}
For Gaussian distributions, the difference between $\Sigma$ and $\Sigma_\rho$ is that the latter contains no information of individual variances, and hence $H_c(\mathbf{x})$, compared to $H(\mathbf{x})$, has no information of individual variables and only information of second order correlation. For non-Gaussian distribution, there are not only second order correlation, but also higher order correlations, and hence CE measures information of more than second order correlation. The greater non-Gaussianity, the more information in CE. So we can define a test statistic for multivariate normality based on the difference of CE between Gaussian distributions and non-Gaussian distributions. 

Ma \cite{Ma2022} proposed a CE based methodology for multivariate normality test based on the equivalence between CE and correlation coefficient matrix and test multivariate normality by the difference between CE of unknown distribution and entropy of normal distribution with same covariance matrix.

Given random variables $\mathbf{X}$, the null hypothesis of multivariate normality test is
\begin{equation}
	H_0: \mathbf{X}\sim N(\mathbf{\mu},\Sigma);
\end{equation}
the alternative is
\begin{equation}
	H_1: \mathbf{X}\nsim N(\mathbf{\mu},\Sigma),
\end{equation}
where $N(\mathbf{\mu},\Sigma)$ is normal distribution.

Given random variables $\mathbf{X}$ and its sample $\mathbf{X}_T$, the test statistic for multivariate normality test based on CE is defined as
\begin{equation}
	T_{mvn}(\mathbf{X}_T)=H_c(\mathbf{x})-\frac{1}{2}\log |\Sigma|,
	\label{eq:tce}
\end{equation}
where $\Sigma$ is covariance matrix of $\mathbf{X}$. Based on this definition, if $\mathbf{X}$ is governed by normal distribution, $H_0$ is true, then $T_{mvn}$ is small; if $\mathbf{X}$ is non-normal, $H_1$ is true, then $T_{mvn}$ is large.

We proposed the estimation method of the test statistic\footnote{The estimation method is implemented in the \texttt{copent} package in \textsf{R} and \textsf{Python} \cite{ma2021copent}.}, which composed of two part: the first term in \eqref{eq:tce} can be estimated with the nonparametric CE estimator from $\mathbf{X}_T$ and the second term can be computed analytically from covariance matrix $\Sigma$ estimated from $\mathbf{X}_T$$\Sigma$.

We designed two simulation experiments to verify the effectiveness of the proposed test and compare the proposed test with five other tests\footnote{The code is available at \url{https://github.com/majianthu/mvnt}}. For benchmarking on more tests for multivariate normality, please see Section \ref{s:mvntbench}.

\section{Copula hypothesis testing}
Model selection is a common task in scientific research and hypothesis testing is to test the fitness of hypothesis model to data. Copula is a basic method for probabilistic models and is widely used in many scientific disciplines\cite{Genest2009,Berg2013,Patton2012,Fan2014,Tootoonchi2022,NazeriTahroudi2023}. There are many types of copulas, such as Gaussian copula, Archimedean copulas, t copulas\cite{Demarta2005}, Archimax copula\cite{Mesiar2013}, Sibuya copula\cite{Hofert2013}, etc which makes testing copula hypothesis a common problem in applications. 

There are some work on copula hypothesis testing, such as Gaussian copula hypothesis testing\cite{Malevergne2003,Amengual2020}, Archimedeanity test\cite{Jaworski2010,Buecher2012}, Test for copula symmetry\cite{Genest2012}, etc. However, those work are mainly focusing on a particular type of copula. Kole et al.\cite{Kole2007} proposed to select copula with Kolmogorov-Smirmov test or Anderson-Darling test. Genest and R\'emillard\cite{Genest2008} proposed a goodness of fit test for copula based on Cram\'er-von Mises test and Kolmogorov-Smirnov test. Gr\o{}nneberg and Hjort\cite{Groenneberg2014} proposed a AIC like criterion for copula model selection, called copula information criteria.

Given random variables $\mathbf{X}$ and their sample $\mathbf{X}_T$, and their unknown copula density function $c_\mathbf{x}(\mathbf{u})$, suppose the candidate copula density function be $c(\mathbf{u})$, the null hypothesis of copula hypothesis testing is
\begin{equation}
	H_0: c_\mathbf{x}(\mathbf{u}) = c(\mathbf{u});
\end{equation}
the alternative is
\begin{equation}
	H_1: c_\mathbf{x}(\mathbf{u}) \neq c(\mathbf{u}).
\end{equation}

Ma \cite{Ma2025c} proposed a CE based method for copula hypothesis testing. The test statistic is defined from the difference between the CE of the null hypothesis and the CE estimated from data:
\begin{equation}
	T_c(\mathbf{X}_T|c)=H_c(\mathbf{X}_T|c)-H_c(\mathbf{X}_T|c_\mathbf{x}).
	\label{eq:tchstats}
\end{equation}
The first term in \eqref{eq:tchstats} is the estimated CE under the copula hypothesis $c$ and the second term is the CE estimated from data. If $H_0$ is true, then $T_c=0$; otherwise, if $H_1$ is true, then $T_c$ becomes larger.

Since CE is a general theory, the proposed method based on it is then a common methodology for any type of copulas.

We propose the method for estimating the test statistic: the second term can be done with the nonparametric CE estimator and the first term can be estimated with the following parametric estimation method:
\begin{enumerate}
	\item estimate empirical copula density function $\hat{\mathbf{u}}$,
	\item estimate the parameters $\alpha$ of copula hypothesis $c$ based on $\hat{\mathbf{u}}$,
	\item compute the CE of copula hypothesis based on the following equation:
	\begin{equation}
		H_c(\mathbf{X}_T|c)=-E(\log c(\mathbf{\hat{u}};\alpha)).
		\label{eq:celikelihood}
	\end{equation}	
\end{enumerate}

We present the parametric estimation of CE of two common copula hypothesis:
\paragraph{Gaussian Copula}
Gaussian copula density function can be write as the following form with correlation coefficient matrix $\Sigma_\rho$ as parameter\cite{Song2000,Arbenz2013}:
\begin{equation}
	c_n(\mathbf{u})=|\Sigma_\rho|^{-\frac{1}{2}}exp\left\{-\frac{1}{2}\Phi(\mathbf{u})(\Sigma_\rho^{-1}-I)\Phi^T(\mathbf{u})\right\},
	\label{eq:igc}
\end{equation}
where $\Phi$is quintile normal function and $I$ is identity matrix.

Estimating CE of Gaussian copula hypothesis needs to estimate correlation coefficient matrix $\Sigma_\rho$ first, then compute the values of Gaussian copula density function with \eqref{eq:igc}, finally get the result with \eqref{eq:celikelihood}.

\paragraph{Archimedean Copula}
Gumbel copula, Frank copula, and Clayton copula are three typical Archimedean copulas. Bivariate Gumbel copula density function is
\begin{equation}
	c_{g}(\mathbf{u})=exp\left\{-\left[\sum_{i=1}^{2}(-\ln u_i)^\alpha\right]^{\frac{1}{\alpha}} \right\}\left[\left( \left[\sum_{i=1}^{2}(-\ln u_i)^\alpha\right]^{\frac{1}{\alpha}-1} \right) \left(\sum_{i=1}^2 \frac{(-\ln u_i)^{\alpha}}{u_i}\right) \right],
	\label{eq:gumbelcopula}
\end{equation}
bivariate Frank copula density function is
\begin{equation}
	c_f(\mathbf{u})=\frac{\alpha(1-e^{-\alpha})e^{-\alpha(u_1+u_2)}}{\left\{1-e^{-\alpha}-(1-e^{-\alpha u_1})(1-e^{-\alpha u_2})\right\}^2},
	\label{eq:frankcopula}
\end{equation}
bivariate Clayton copula density function is
\begin{equation}
	c_c(\mathbf{u})=(\alpha+1)(u_1 u_2)^{-\alpha-1}(u_1^{-\alpha}+u_2^{-\alpha}+1)^{-2},
	\label{eq:claytoncopula}
\end{equation}
where $\alpha$ is parameter.

Estimating CE of Archimedean copula hypothesis needs to estimate $\alpha$ with the likelihood method first, then compute the values of Archimedean copulas with \eqref{eq:gumbelcopula}, \eqref{eq:frankcopula}, or \eqref{eq:claytoncopula}, finally get the result with \eqref{eq:celikelihood}.

We verified the proposed method for copula hypothesis testing with two simulation experiments\footnote{The code is available at \url{http://github.com/majianthu/tch}}. The first experiment simulates a group of bivariate normal distributions with covariance ranging from 0.1 to 0.9 by step 0.1; the second experiment simulated three groups of bivariate Archimedean copulas with parameter changing from 2 to 10 and margins being a normal distribution and an exponential distribution. The size of dataset is 300.

We then applied the above methods for Gaussian copula hypothesis testing and Archimedean copula hypothesis testing to the simulated datasets and derived four statistics for each dataset.

Experimental results are shown in Figure \ref{fig:tgc1}, Figure \ref{fig:tgc2}, Figure \ref{fig:tgc3}, and Figure \ref{fig:tgc4}. One can learn from them that for the first experiment, the estimated test statistic of Gaussian copula hypothesis is the smallest which means Gaussian copula hypothesis is preferred; for the second experiment, the estimated test statistics of Gumbel copula, Frank copula, and Clayton copula hypothesis are the smallest respectively which means each hypothesis is true respectively.

\begin{figure}
	\centering
	\includegraphics[width=0.62\textwidth]{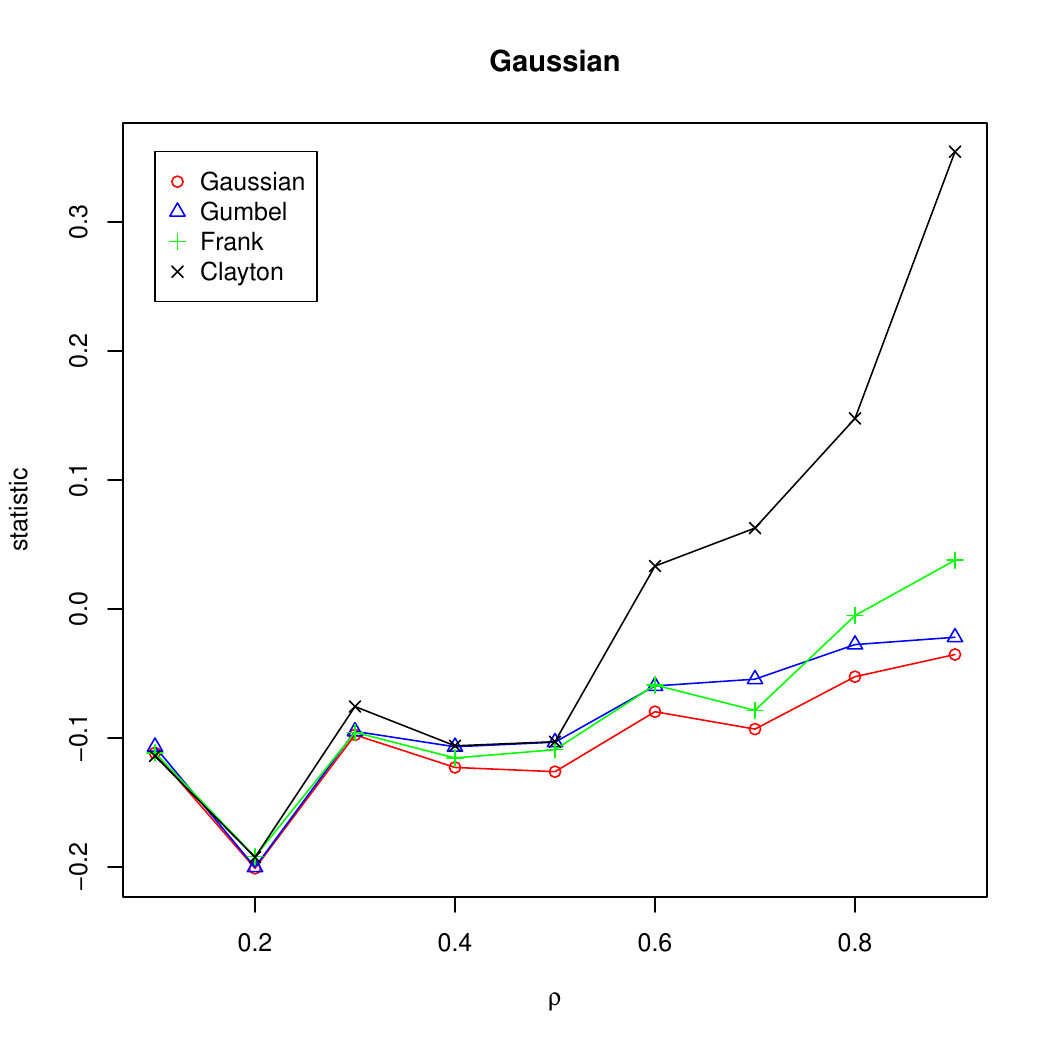}
	\caption{Results of simulation experiment on Gaussian copula hypothesis.}
	\label{fig:tgc1}
\end{figure}
\begin{figure}
	\centering
	\includegraphics[width=0.62\textwidth]{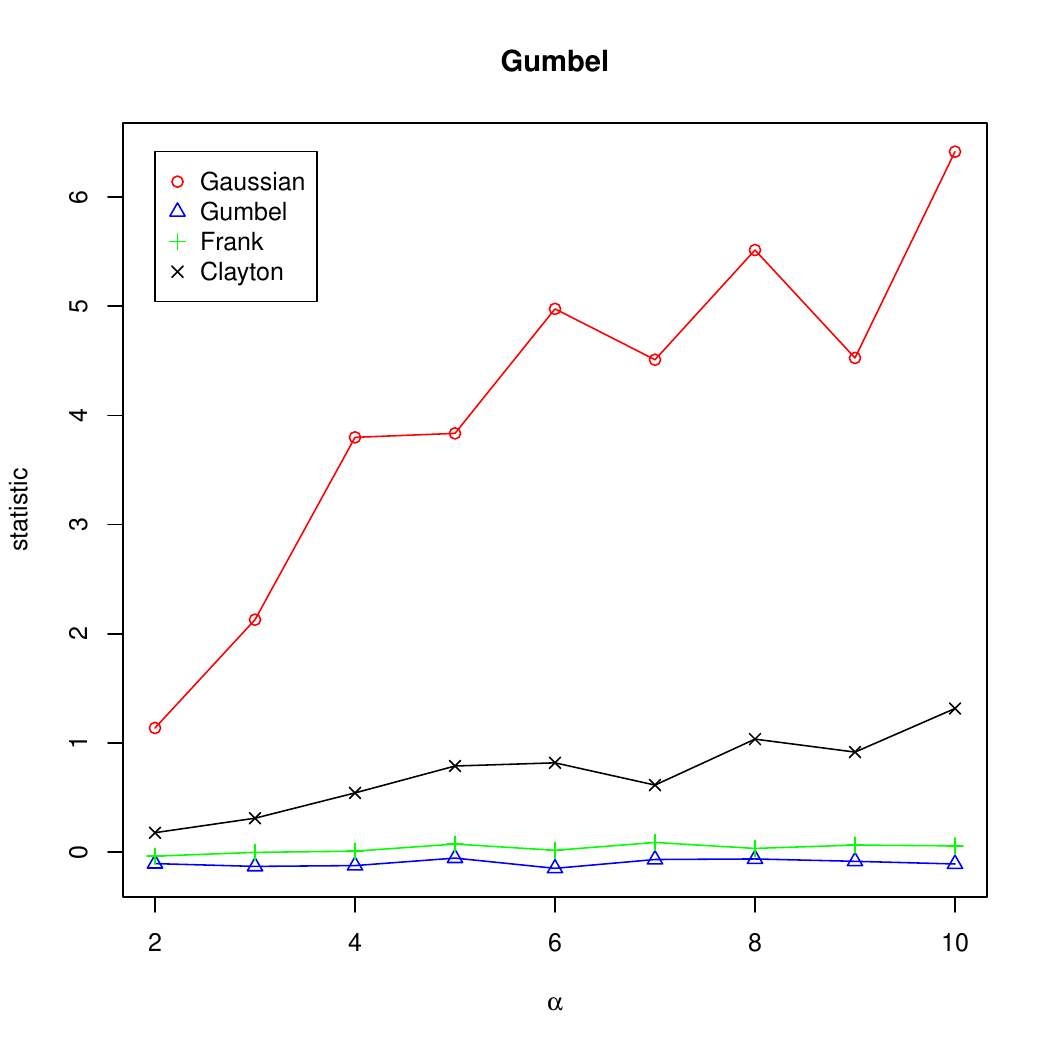}
	\caption{Results of simulation experiment on Gumbel copula hypothesis.}
	\label{fig:tgc2}
\end{figure}
\begin{figure}
	\centering
	\includegraphics[width=0.62\textwidth]{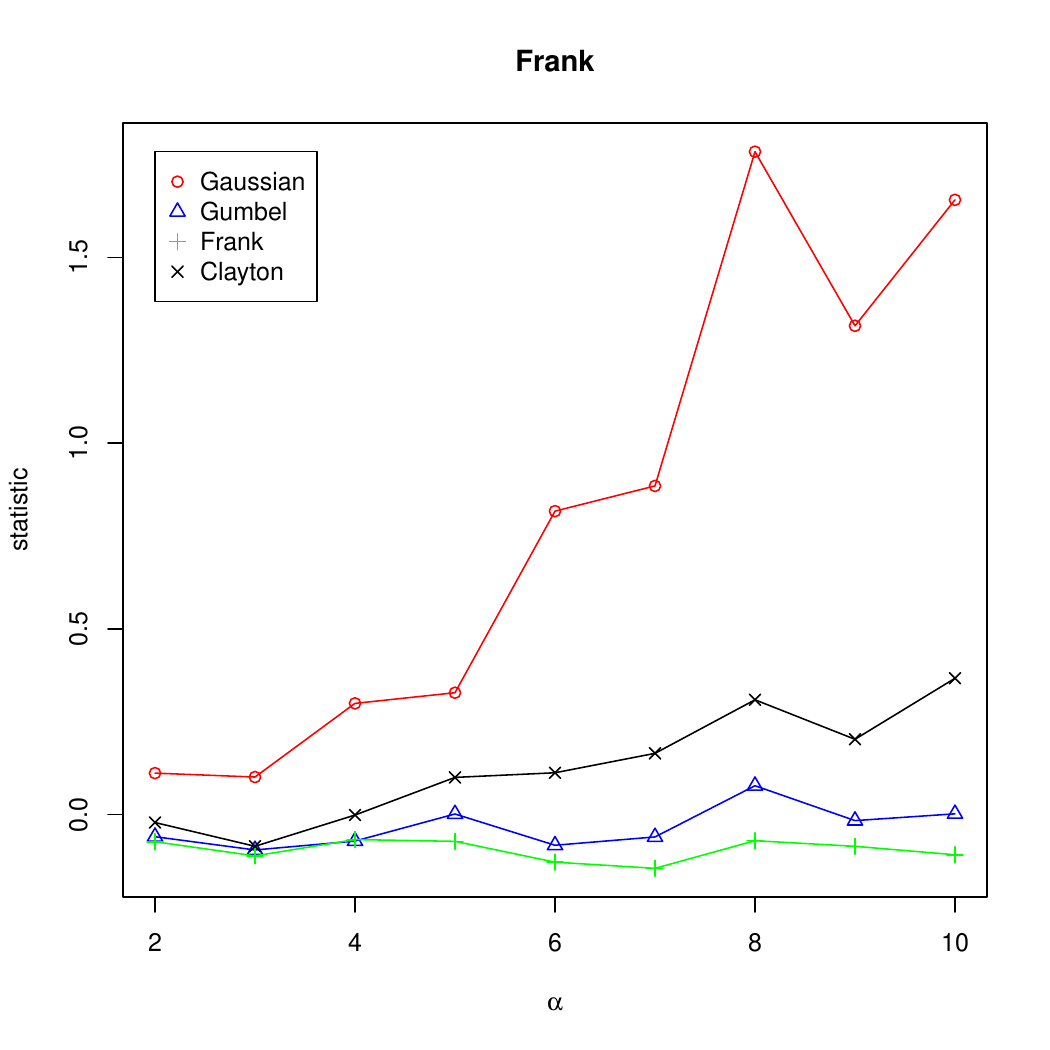}
	\caption{Results of simulation experiment on Frank copula hypothesis.}
	\label{fig:tgc3}
\end{figure}
\begin{figure}
	\centering
	\includegraphics[width=0.62\textwidth]{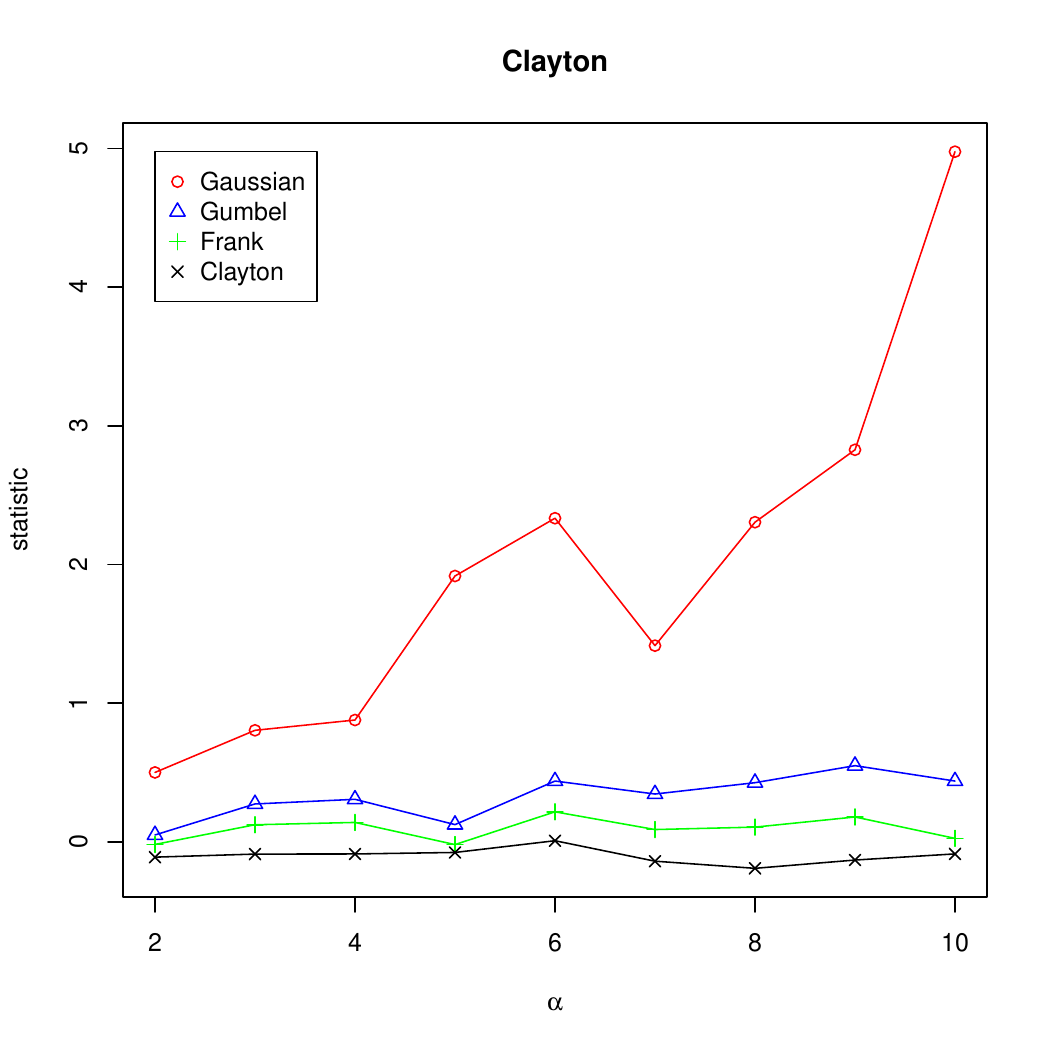}
	\caption{Results of simulation experiment on Clayton copula hypothesis.}
	\label{fig:tgc4}
\end{figure}

\section{Two-sample test}
\label{sec:tst}
Two-sample test is a main methodology for testing whether two samples are from a same distribution. Many problems in statistics can be transformed as two-sample test, such as symmetry test which tests whether a sample and its symmetric sample are from same distribution, change point detection which find a change point where two samples before and after it are different. Two-sample test has also many applications in different fields, including checking the effect of clinical treatment, the effect of a policy, etc.

There are many traditional two-sample tests, such as t-test\cite{Cressie1986}, Kolmogorov–Smirnov test\cite{Berger2014,Kolmogorov1933,Smirnov1948}, kernel base test\cite{Gretton2012}, etc. However, these tests have their own limitations. For example, t-test is with normal assumption; Kolmogorov–Smirnov test cannot be used to multivariate cases; and kernel based one needs hyperparameter tuning.

Given two samples $\mathbf{X}_0=\{X_{01},\cdots,X_{0m}\}\sim P_0$ and $\mathbf{X}_1=\{X_{11},\cdots,X_{1n}\}\sim P_1$, the null hypothesis of two-sample test is
\begin{equation}
	H_0: P_0=P_1,
\end{equation}
the alternative is
\begin{equation}
H_1: P_0\neq P_1.
\end{equation}

Previously, the test statistic was usually defined by a  certain distance between $P_0$ and $P_1$:
\begin{equation}
	T = d(P_0,P_1).
\end{equation}
If $d$ is large, then $H_1$ is true; otherwise, $H_0$ is true. Some typical distances $d$ include MMD\cite{Gretton2012}, dCor\cite{Szekely2004}, Wasserstein distance\cite{Ramdas2017}, etc. Kullback\cite{Kullback1968} proposed a two-sample test based on KL divergence in information theory.

Ma \cite{Ma2023b} proposed a method for two-sample test based on CE\footnote{The method is implemented in the \texttt{copent} package in \textsf{R} and \textsf{Python}\cite{ma2021copent}.}. The idea is based on the dependence between samples and the labels for hypotheses.

Let $\mathbf{X}=(\mathbf{X}_0,\mathbf{X}_1)$, $H_0,H_1$be the labels for $Y_0,Y_1$, the proposed test defines its statistic based on the following criteria:
\begin{equation}
	T=d(\mathbf{X},Y_1)-d(\mathbf{X},Y_0).
\end{equation}
Let the labels for the null hypothesis and the alternative be $Y_0=(1_1,\cdots,1_{m+n})$ and $Y_1=(1_1,\cdots,1_m,2_1,\cdots,2_n)$, which mean two samples from same distribution and different distributions respectively, then the distance $d$ between samples and labels is defined with CE as
\begin{align}
	H_0: &~ H_c(\mathbf{x};y_0)=H_c(\mathbf{x},y_0)-H_c(\mathbf{x}),\\
	H_1: &~ H_c(\mathbf{x};y_1)=H_c(\mathbf{x},y_1)-H_c(\mathbf{x}),
\end{align}
which measure the fitfulness of samples to $H_0$ and $H_1$. The test statistic of the proposed method for two-sample test is defined as the difference between CEs of the null hypothesis and the alternative:
\begin{equation}
	T_{tst}(\mathbf{X}_0,\mathbf{X}_1)=H_c(\mathbf{x},y_0)-H_c(\mathbf{x},y_1).
\end{equation}
It can learn that if $H_0$ is true, then $T_{tst}$ is small; if $H_1$ is true, then $T_{tst}$ is large.

We proposed the estimation method of the test statistic $T_{tst}$ of the proposed method with the nonparametric CE estimator.

The proposed test is nonparametric and distribution-free. It can be used in both univariate cases and multivariate cases. It is also tuning free due to the CE estimator. $T_{tst}$ is based on information theory and hence it is scale-free.

Here, $Y_0,Y_1$ is designed for the general case of two-sample hypothesis. It can be tailored to particular situations.

We verified the effectiveness of the proposed test on three simulated data with normal distribution and Gaussian copula, and compared it with the tests based on MI, kernel function, and dCor\footnote{The code is available at \url{https://github.com/majianthu/tst}}. For benchmarking on more two-sample tests, please see Section \ref{s:tstbench}.

\section{Change point detection}
Change point detection\cite{Basseville1993,Brodsky2013} is a basic task in time series analysis and is to check abrupt points of system states. It can be offline or online, single point or multiple points, univariate or multivariate. It has wide applications in natural, social, biological, and industrial systems.

Since its first proposal in 50s last century\cite{Page1954,Page1955}, it has been intensively studied and there are many existing methods for it\cite{Truong2020,Aminikhanghahi2017}. Particularly, scholars contributed some methods based on entropy and copulas. For example, Gao et al.\cite{Gao2024a} summarized information criteria based change point detection; Sofronov et al.\cite{Sofronov2012} proposed a method based on cross entropy; Wang et al.\cite{Wang2013} proposed a entropy based method for bird song segmentation; Song and Xia\cite{Song2025} proposed a method based on normalized entropy; Zhu et al.\cite{Zhu2013} propose a copula based method. Dehling et al.\cite{Dehling2022} proposed a method based on weighted two-sample U-statistic. Meanwhile, there are work on detecting change points on copulas. Quessy et al.\cite{Quessy2013} proposed a copula structure change detection based on Kendall's $\tau$; Na et al.\cite{Na2012} proposed a CUSUM change point detection based on copula ARMA-GARCH model; Stark and Otto\cite{Stark2020} proposed a CUSUM style structure change detection based on copula based Spearman's $\rho$ and quintile correlation.

Change point detection can be transformed as two-sample test problem, i.e., performing two-sample test on two samples before and after a group of points and taking the point associated with maximum of the test statistics as the estimate. Given a time series $\mathbf{X}_t,t=1,\ldots,T$, to judge the existence of a change point $t_c$ where the series $\mathbf{X}_b,\mathbf{X}_a$ before and after it are different, a two-sample test for this is to verify the null hypothesis
\begin{equation}
	H_0: P_b(\mathbf{x_b}) = P_a(\mathbf{x_a}),
\end{equation}
and the alternative 
\begin{equation}
	H_1: P_b(\mathbf{x_b}) \neq P_a(\mathbf{x_a}),
\end{equation}
where $P_b,P_a$ is the probability distributions before and after the change point.

Based on this idea, Ma \cite{Ma2024a} proposed a nonparametric multivariate single change point detection based on the CE based two-sample test in Section \ref{sec:tst}. The proposed test is to do two-sample test on each point of $X_t$ to derive a group of the corresponding test statistics $T_{tst}(t),t=1,\ldots,T$, and take the point associated with maximum of the statistics as change point:
\begin{equation}
	t_c = \mathop{\arg\max}\limits_{t} T_{tst}(\mathbf{X}_b,\mathbf{X}_a;t).
\end{equation}
If $T_{tst}(t_c)$ is greater than a pre-specified threshold, then $H_1$ is true, $t_c$ is considered as change point; otherwise, $H_0$ is true, no change point exists.

We then proposed a method for multiple change point detection\footnote{The method for change point detection is implemented in the \texttt{copent} package in \textsf{R} and \textsf{Python}\cite{ma2021copent}}, which composed of the following steps:
\begin{enumerate}
	\item add the series to be checked to the waiting list;
	\item do single change point detection to the next series in the waiting list;
	\item if there exists a change point, then add the two series before and after the detected change point to the waiting list;
	\item loop step 2 and 3 till there is no series in the waiting list.
\end{enumerate}

The proposed method judges the existence of change point by a pre-specified threshold which makes the detection automatic. Since the test statistic is a CE based information theoretical measure and problem-free, the threshold can then be set as a value for any situation and therefore makes the proposed method tuning-free.

We verified the proposed method on a group of simulated data and compared it with its competitors. For more on benchmarking the methods for change point detection, please see Section \ref{s:cpdbench}.

We also verified the proposed method on two real data widely used for change point detection: the Nile data and the British coal mine disasters data. The Nile data contains the measurements of the annual flow of the Nile River at Aswan during 1871-1970. After the construction of the Aswan dam in 1898, the annual flow became smaller. The British coal mine disasters data records the dates of British coal mining explosions with fetal consequences during 1851-1962. It is widely believed that due to legislation in 1887, the disasters decreased sharply. Experiments on these two datasets show that the proposed method derived a change point (1898) for the Nile data and a change point (1892) for the British coal mine disasters data.

\begin{figure}
	\centering
	\includegraphics[width=0.7\textwidth]{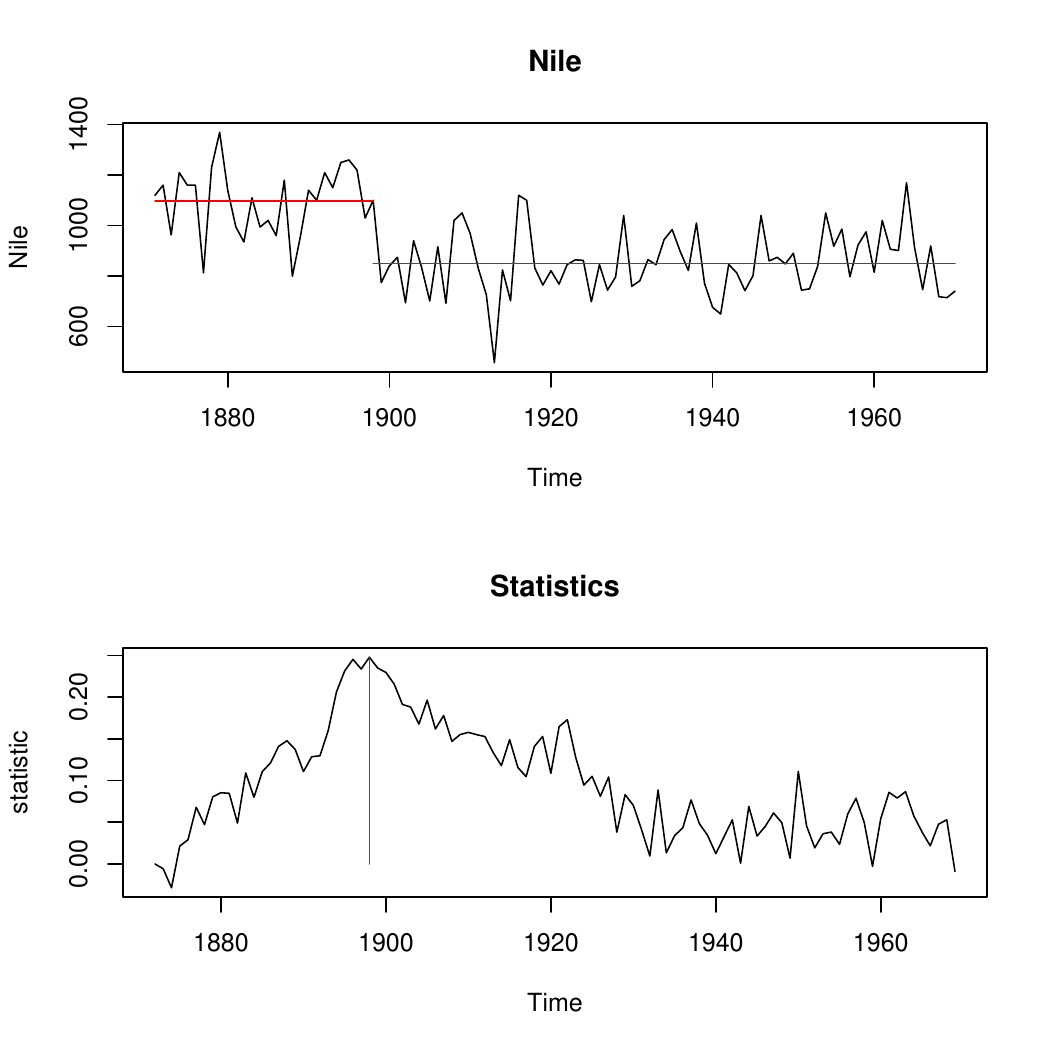}
	\caption{Results of single change point detection on the Nile data.}
	\label{fig:nile}
\end{figure}

\begin{figure}
	\centering
	\includegraphics[width=0.7\textwidth]{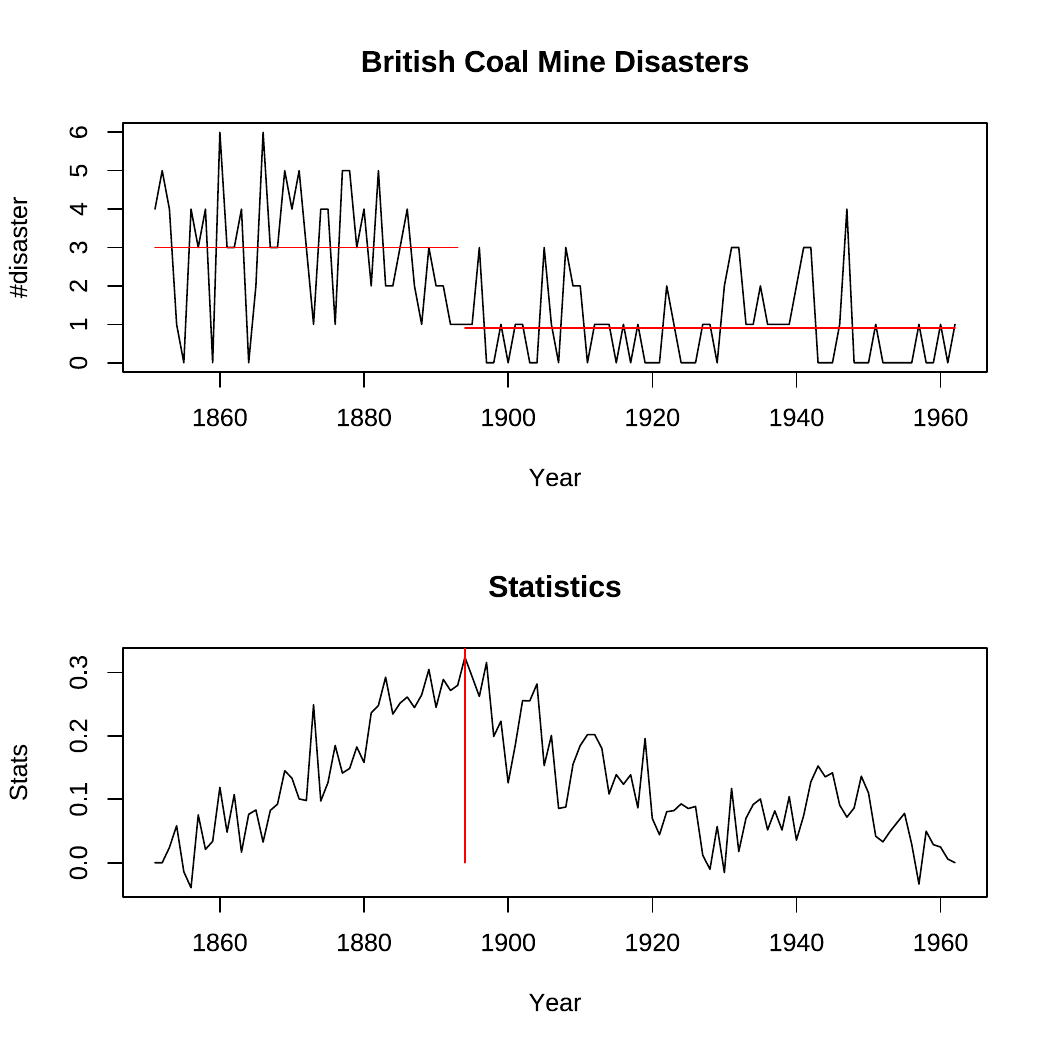}
	\caption{Results of single change point detection on the British coal mining disasters data.}
	\label{fig:coalmine}
\end{figure}

\section{Symmetry test}
Symmetry is one of the fundamental principles in sciences and plays a critical role in theoretical physics\cite{Gross1996,Sundermeyer2014}. Mathematically, symmetry is associated with invariance on groups\cite{Armstrong1997}. Symmetry is also a basic assumption in probability and mathematical statistics\cite{Laplace1820,Bartha2001}. There are four types of probabilistic symmetries: stationarity, contractability, exchangeability, and rotatability\cite{Kallenberg2005}. Symmetry provides a unified perspective for hypothesis testing: normality test is to test normal distribution which is invariant to rotation; two-sample test is to test symmetry of distributions about change point, or invariant to shift transformation. Many traditional models and methods are based on the symmetry assumption\cite{Dawid1985}, such as t-test\cite{Prescott2006}, Wilcoxon signed-rank test\cite{Wilcoxon1945,Randles2006}. So, symmetry test is a basic problem in hypothesis testing.

There are many symmetry tests\cite{Allison2017,Baringhaus1992,Bozin2020,Miao2006,Milosevic2016,Milosevic2019,Mira1999,Cabilio1996,Feuerverger1977,Gaigall2020,Litvinova2001,Nikitin2015}. Particularly, scholars contributed some methods based on copulas.

For example, Bouzeboda and Cherfi\cite{Bouzebda2012a} proposed to test symmetry with copulas; Kato et al.\cite{Kato2020} proposed a copula based measure for asymmetry of the lower and upper tail probabilities. Mokhtari et al.\cite{Mokhtari2023} proposed to measure and test conditional asymmetry with copulas. There are some work on entropy based symmetry tests. For instance, Bormashenko et al.\cite{Bormashenko2021} proposed entropy based measures for symmetry of tiling; Noughabi and jarrehiferiz\cite{Noughabi2022} proposed to test symmetry with extropy of order statistics; Maasoumi and Racine\cite{Maasoumi2008} proposed to test asymmetry of processes based on entropy; Avhad et al.\cite{Avhad2025} proposed to test distribution symmetry based on entropy.

Symmetry of distributions is defined as invariance to certain transformations. Given random variables $X\sim P$, to test its symmetry to transformation $T$ is to judge the equivalence of the distributions of $X$ and $TX$, i.e., to check the following hypothesis:
\begin{equation}
	p(x)=p(Tx).
\end{equation}
So, symmetry test is transformed into a two-sample test problem.

Given a sample $X=\{X_i,i=1,\cdots,N\}$ from probability density function $p(x)\in R$, $u$ is the expectation of $p(x)$, symmetry test is to check the invariance of $p$ with respect to $u$ from $X$. So, the null hypothesis of symmetry test is
\begin{equation}
	H_0:p(u-x)=p(u+x);
\end{equation}
the alternative is
\begin{equation}
	H_1:p(u-x)\neq p(u+x).
\end{equation}

Therefore, symmetry test is transformed as a two-sample test: if $u-x$ and $u+x$ are from same distribution, $H_0$ is true, then $p$ is symmetric; otherwise, $H_1$ is true, then $p$ is asymmetric.

Based on this idea, Ma\cite{Ma2025a} proposed a method for symmetry test based on CE based two-sample test. Suppose $\tilde{X}$ be $X$ reduced its mean $\tilde{u}$, the problem is to judge whether $\tilde{X}$ and $-\tilde{X}$ are from same distribution. The test statistic is then defined as
\begin{equation}
	T_{sym}(X)=T_{tst}(\tilde{X},-\tilde{X}).
	\label{eq:symstats}
\end{equation}
If $p$ is symmetric, then $T_{sym}=0$; otherwise, $T_{sym}>0$, $p$ is asymmetric. The larger $T_{sym}$, the more asymmetric $p$.

The test such proposed is composed of two steps:
\begin{enumerate}
	\item derive $\tilde{X}=X-\tilde{u}$ and $-\tilde{X}$;
	\item estimate $T_{sym}$ according to \eqref{eq:symstats}.
\end{enumerate}

Since CE based $T_{tst}$ can be estimated non-parametrically, $T_{sym}$ can also be estimated so.

We verified the effectiveness of the proposed method with simulation experiments and compared it with the other methods\footnote{The code is available at \url{https://github.com/majianthu/symmetry}}. Please see Section \ref{s:symbench} for more.

The proposed method for symmetry test can be generalized to multivariate test by replacing $p$ with multivariate distributions\cite{Serfling2014}. By replacing mirror symmetry transformation in the proposed method with more complex symmetry transformations\cite{Martin1982}, one can derive tests for other types of symmetries, such as exchangeability test by permutations.

\chapter{Discussion}
\label{chap:discussion}
\section{Relations between theoretical applications}
The first four applications in Chapter \ref{chap:theoapp} have inner relationships with each other. Theoretically, they are all based on the CE based framework on independence and conditional independence with the common goal of discovering intrinsic dependence relations. The difference between them is in the way that dependence relation is represented: associations between pairs of variables are discovered in the application of association discovery; in structure learning, graph is used as the representation of dependence relations between random variables; the aim of variable selection is to build a functional model based on dependence between variables and target variables; causal relations between time series variables are found and can be represented as directed graph for causal variable selection or building functions of causal relationships.

Both independence test and conditional independence test are the theoretical foundations of the other applications and therefore a system of methodologies based on CE is formed (see Figure \ref{fig:ceapprelation}) as a versatile toolbox for various applications. Particularly, independence test can be used for variable selection, whether in static functions, including survival functions, or in dynamical functions, such as differential equations. Conditional independence test can be used for causal discovery by estimating TE via CE; furthermore, time lag can be estimated with TE estimation. CE based hypothesis testings include multivariate normality test, copula hypothesis testing, and two-sample test which can be further used for change point detection and symmetry test.

\tikzstyle{ceapp} = [rectangle, rounded corners, minimum width=1cm, minimum height=0.7cm, text centered, draw=black]
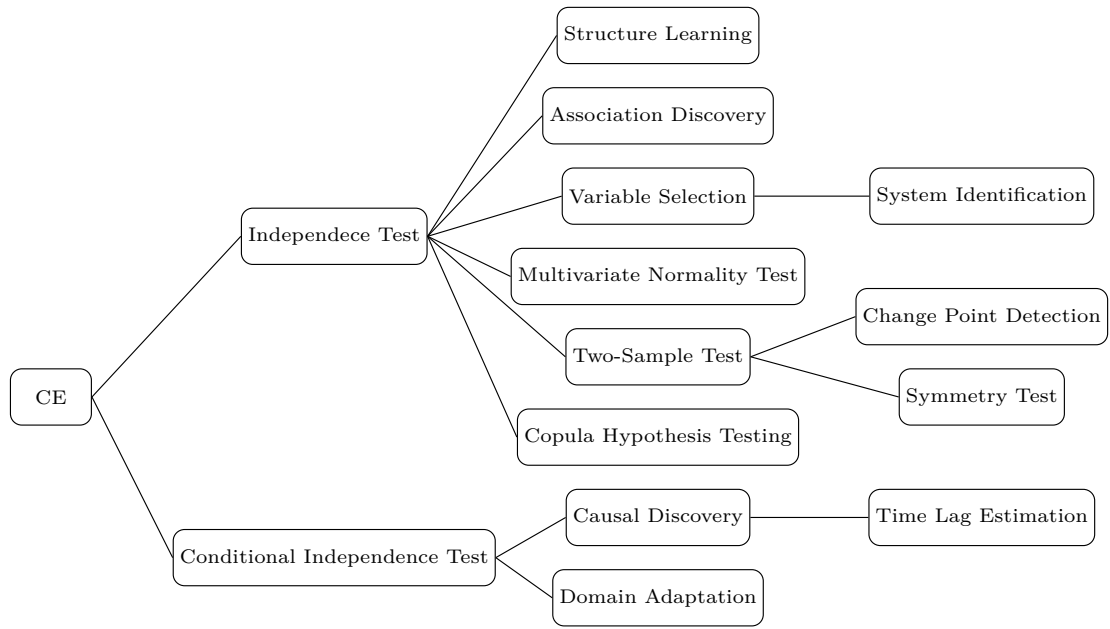
\begin{figure}
\centering
\resizebox{\textwidth}{!}{
\begin{tikzpicture}
[
level 1/.style = {sibling distance = 4cm, level distance = 3.5cm,},
level 2/.style = {sibling distance = 1cm, level distance = 4cm},
grow' = right,
]
\scriptsize
	\node [ceapp]{CE}
		child {node [ceapp] {Independece Test}
			child {node [ceapp]{Structure Learning}}
			child {node [ceapp]{Association Discovery}}
			child {node [ceapp]{Variable Selection}
				child {node [ceapp]{System Identification}}
				}
			child{node[ceapp]{Multivariate Normality Test}}
			child{node(tst)[ceapp]{Two-Sample Test}
				child{node[ceapp]{Change Point Detection}}
				child{node[ceapp]{Symmetry Test}}
			}
			child{node[ceapp]{Copula Hypothesis Testing}}
		}		
		child {node [ceapp] {Conditional Independence Test}
			child{node[ceapp]{Causal Discovery}
				child{node[ceapp]{Time Lag Estimation}}
			}
			child{node[ceapp]{Domain Adaptation}}
			};

\end{tikzpicture}
}
\caption{Relations between theoretical applications of CE.}
\label{fig:ceapprelation}
\end{figure}

\section{Correlation and causality}
Correlation and causality are two closely related concepts in statistics which correspond to independence and conditional independence respectively. The former can be measured with CE directly while the latter can be measured with CE based TE. And hence we built a theoretical framework for both based on CE and proposed the estimators of them.

\section{Comparison to the other frameworks}
Besides our CE based framework for independence and conditional independence, there are two other similar frameworks: one based on kernel functions\cite{pfister2018kernel,zhang2011kernel} and the other based on dCor\cite{Szekely2009,wang2015conditional}. However, our CE based framework has theoretical advantages due to rigorous definition of CE and additionally, the CE estimator has merits, such as simple and elegant, easy to understand and implement, low computational complexity.

Table \ref{t:3measures} compares these three independence measures where one can see the advantages of CE clearly. For example, CE is a natural and unique measure of multivariate independence, while the other two need to be generalized to multivariate cases; CE has the properties including invariance to monotonic transformation and equivalence to second order statistic in Gaussian cases while dCor has similar properties and HSIC has not. The computational complexity of the CE estimator is low while that of the other two is high.

All three frameworks have developed their own system of methodologies\cite{Szekely2023,Muandet2017}, including independence test, conditional independence test, normality test, two-sample test, and change point detection. In the applications on variable selection and causal discovery, we compared independence tests and conditional independence tests of these three frameworks on real data and experimental results have demonstrated the advantages of CE over the other two measures. There are normality test and two-sample test in these three frameworks and CE based ones is much more rigorous theoretically and therefore presents better performance in simulation comparisons. There are change point detections based on both CE and kernels and simulation experiments showed that CE based one presents better performance than kernel based one. Additionally, CE can also be used for copula hypothesis testing.

Note that HSIC and dCor are equivalent on independence test\cite{Sejdinovic2013} and conditional independence test\cite{Sheng2023} and that they both have copula based versions\cite{Lin2017,Boettcher2020,Poczos2012}.

\begin{table}
	\setlength{\abovecaptionskip}{0pt}
	\setlength{\belowcaptionskip}{5pt}
	\centering
	\caption{Comparison between the frameworks based on three independene measures.}
	\begin{tabular}{l|c|c|c} 
		\toprule
\textbf{Framework}&\textbf{CE}&\textbf{dCor}&\textbf{HSIC}\\
\midrule
Definition&copula based&generalized correlation&kernel based\\
\hline
Multivariate&by nature&\cite{Boettcher2019}&\cite{pfister2018kernel}\\
\hline
invariance&monotonic transformation&linear transformation&\\
\hline
CC in Gaussianity&equivalence&equivalence&\\
\hline
Comp. complexity&$O(n^2)$&$O(n^4)$&$O(n^4)$\\
\hline
Independence test&\cite{ma2011mutual}&\cite{Szekely2007,Boettcher2019}&\cite{gretton2007a,pfister2018kernel}\\
\hline
CI test&\cite{jian2019estimating}&\cite{wang2015conditional}&\cite{zhang2011kernel}\\
\hline
Normality test&\cite{Ma2022}&\cite{Szekely2005}&\cite{Kellner2019}\\
\hline
Copula hypothesis test&\cite{Ma2025c}&--&--\\
\hline
Two-sample test&\cite{Ma2023b}&\cite{Szekely2004}&\cite{Gretton2012}\\
\hline
Change point detection&\cite{Ma2024a}&--&\cite{Harchaoui2008}\\
\hline
Symmetry test&\cite{Ma2025a}&\cite{Chen2025a}&--\\
\bottomrule
\end{tabular}
\label{t:3measures}
\end{table}

\chapter{Evaluation}
\label{chap:bench}
\section{Introduction}
We proposed a CE based system of methodologies for hypothesis testing, including independence test, conditional independence test, multivariate normality test, two-sample test, change point detection, and symmetry test. Meanwhile, there are similar methods for each of these methodologies. To benchmark on them, we designed simulation experiments for each type of methodology. We implements the CE based methodologies in the \texttt{copent} package in \textsf{R}\cite{ma2021copent}, and compare them to the similar methods implemented in both \textsf{R} and \textsf{Python}. This chapter presents the results of simulation experiments for benchmarking.

Simulation experiments show that CE based methodologies perform best among all the methods in the experiments. Based on these, we believe that CE based system of methodologies is the most scientific and effective answer to these problems of hypothesis testing.

\section{Independence tests}
\label{s:indbench}
Independence is a fundamental concept in statistics. Since PCC, measure of independence is a core problem of the discipline and there are many measures proposed already \cite{Tjostheim2022}, including the three main measures mentioned in the previous chapter. The problem is which one is best among all? To answer this, R\'enyi\cite{Renyi1959} proposed the famous seven axioms for dependence measures. Schweizer and Wolff\cite{Schweizer1981} revised these axioms later.

Evaluation of these measures is a practical problem. Ma \cite{Ma2022b} designed a group of simulation experiments toward this goal\footnote{The code is available at \url{https://github.com/majianthu/eval}} and compared sixteen measures (see Table \ref{tb:benchindep}) in them, including

\paragraph{Kendall's $\tau$} 
Kendall's $\tau$\cite{Kendall1938} is a commonly used bivariate independence measure based on rank. Given  sign function $s$, Kendall's $\tau$ is defined as
\begin{equation}
	\tau=\frac{1}{n^2}\sum_{i,j=1}^{n}s(x_i-x_j)s(y_i-y_j),
\end{equation}
where $n$ is sample size.

\paragraph{Hoeffding's D}
Hoeffding's D\cite{hoeffding1948a}is another nonparametric bivariate independence measure.To test independence is to test the following equality: 
\begin{equation}
	F_{XY}=F_X F_Y,
\end{equation}
where $F_{XY}$ and $F_X,F_Y$ are joint and margins of $X,Y$ respectively. Hoeffding defined the functional $D$ as test statistic:
\begin{equation}
	D(x,y)=\int \{F_{XY}(x,y)-F_X(x)F_Y(y)\}^2 dF_{XY}(x,y).
	\label{eq:hoeffding}
\end{equation}

\paragraph{Bergsma-Dassios's $\tau^*$}
Inspired by Kendall's $\tau$, Bergsma and Dassios defined a new bivariate independence measure $\tau^*$\cite{bergsma2014a}. Given random variables $X,Y$, $\tau$ is defined as
\begin{equation}
	\tau^*(X,Y)=\frac{1}{n^4}\sum_{i,j,k,l=1}^{n}a(x_i,x_j,x_k,x_l)a(y_i,y_j,y_k,y_l),
\end{equation}
where sign function $a$ is defined as
\begin{equation}
	a(z_1,z_2,z_3,z_4)=s(|z_1-z_2|+|z_3-z_4|-|z_1-z_3|-|z_2-z_4|).
\end{equation}

\paragraph{HHG}
Heller et al.\cite{heller2016consistent} proposed a bivariate independence measure based on Hoeffding's $D$ \eqref{eq:hoeffding}. 

\paragraph{Ball}
Pan et al.\cite{Pan2018} proposed an measure of independence, called Ball Covariance. 

\paragraph{BET}
Zhang\cite{Zhang2019} proposed a framework for independence test, called Binary Expansion Testing (BET). 

\paragraph{QAD}
Trutschnig\cite{Trutschnig2011} proposed a measure of asymmetric dependence (QAD) $\zeta$ which is defined as the distance between copula and independence copula:
\begin{equation}
	\zeta(C,\Pi)=D_1(C,\Pi),
\end{equation}
where $D_1$ is a type of distance between copulas defined with Markov kernels.

\paragraph{CODEC}
If random variables $X,Y$ are independent, then
\begin{equation}
	E(Y|X)=E(Y).
	\label{eq:eyxey}
\end{equation}
Based on this, Chatterjee\cite{Chatterjee2021} defined a simple correlation coefficient (CODEC). Given random variables $(X,Y)$, let $X_1\leq \cdots\leq X_n$, $r_i$ is the rank of $Y_i$, CODEC is defined as
\begin{equation}
	\xi_n(X,Y)=1-\frac{3\sum_{i=1}^{n-1}|r_{i+1}-r_i|}{n^2-1}.
\end{equation}

\paragraph{Mixed}
Genest et al.\cite{Genest2019} proposed a measure of independence based on empirical copula:
\begin{equation}
	G_c(\hat{C},\Pi)=\int_{I^d}n(\hat{C}-\Pi)^2 d\mathbf{u},
\end{equation}
where $\hat{C}$ is empirical copula, $\prod$ is independence copula.

\paragraph{Subcopula}
Erdely\cite{Erdely2017} proposed a subcopula based measure of independence which is defined as
\begin{equation}
	d(S)=\sup\{S-\Pi\}-\sup\{\Pi-S\},
\end{equation}
where $S$ is bivariate subcopula.

\paragraph{dCor}
dCor is an important measure of bivariate dependence proposed by Sz\'ekely et al.\cite{Szekely2007,Szekely2009}. Given random variables $(X,Y)$, dCor is defined as
\begin{equation}
	dCor(X,Y)=\frac{\nu^2(X,Y)}{\sqrt{\nu^(X)\nu^2(Y)}},
\end{equation}
where $\nu^2(X,Y)$ is distance covariance defined as
\begin{equation}
	\nu^2(X,Y;w)=||f_{X,Y}(t,s)-f_X(t)f_Y(s)||_w^2,
	\label{eq:dcornu}
\end{equation}
$||\cdot||_w$ is L2 norm.

\paragraph{MDC}
Shao and Zhang\cite{Shao2014} proposed a variant of dCor, called Martingle Difference Correlation (MDC), which is essentially to test \eqref{eq:eyxey}.

\paragraph{HSIC}
HSIC\cite{gretton2007a} is a widely studied measure of bivariate independence based on kernel function. HSIC has its multivariate version, called dHSIC\cite{pfister2018kernel}.

\paragraph{NNS}
Voile and Nawrocki\cite{Viole2012} proposed a nonlinear correlation coefficient based on partial moments, called  Nonlinear Nonparametric Statistic (NNS).

~\

We designed six simulation experiments, three of whom is for bivariate independence, two for trivariate independence, and one for quadvariate independence, as follows:
\begin{enumerate}
	\item Random variables $\mathbf{X}=(X_1,X_2)$ governed by bivariate normal distribution $\mathbf{X}\sim N(\mathbf{u},\rho)$, whose covariance $\rho$ changes from 0 to 0.9 by step 0.1;
	\item random variables $\mathbf{X}=(X_1,X_2)$ governed by bivariate normal copula $C_N(u,v;\rho)$ with margins as normal distribution $u\sim N(0,1)$ and exponential distribution $v\sim E(\lambda=2)$, the parameter $\rho$ of normal copula changes from 0 to 0.9 by step 0.1;
	\item random variables $\mathbf{X}=(X_1,X_2)$ governed by Archimedean copulas, including Clayton copula, Frank copula, and Gumbel copula as follows:
	\begin{equation}
		C_\alpha^{Clayton}(u,v)=max\left([u^{\alpha} + v^{\alpha} -1]^{-\frac{1}{\alpha}},0\right),
	\end{equation}
	\begin{equation}
		C_\alpha^{Frank}(u,v)=-\frac{1}{\alpha}\ln\left(1+\frac{(e^{-\alpha u}-1)(e^{-\alpha v}-1)}{e^{-\alpha}-1}\right),
	\end{equation}
	\begin{equation}
		C_\alpha^{Gumbel}(u,v)=\exp \left\lbrace  -[(-\ln u)^{\alpha}+(-\ln v)^{\alpha}]^{\frac{1}{\alpha}}\right\rbrace ,
	\end{equation}
	margins are same as above, the parameter $\alpha$ of the copulas change from 1 to 10;
	\item random variables $\mathbf{X}=(X_1,X_2,X_3)$ governed by trivariate normal distribution $\mathbf{X}\sim N(\mathbf{u},\Sigma)$, where covariance matrix $\Sigma$ is
	\begin{equation}
		\begin{vmatrix}
			1 & \rho & \rho \\
			\rho & 1 & \rho \\
			\rho & \rho & 1
		\end{vmatrix},
	\end{equation}
	$\rho$ changes from 0 to 0.9 by step 0.1;
	\item random variables $\mathbf{X}=(X_1,X_2,X_3)$ governed by trivariate Gumbel copulas as follows:	
	\begin{equation}
				C_\alpha^{Gumbel}(u,v,w)=\exp \left\lbrace  -[(-\ln u)^{\alpha}+(-\ln v)^{\alpha}+(-\ln w)^{\alpha}]^{\frac{1}{\alpha}}\right\rbrace ,
	\end{equation}	
	the parameter $\alpha$ change from 1 to 10, margins are a normal distribution $u\sim N(0,2)$ and two exponential distributions $v\sim E(\lambda=2),w\sim E(\lambda=0.5)$;
	\item random variables $\mathbf{X}=(X_1,X_2,X_3,X_4)$ governed by quadvariate normal distribution $\mathbf{X}\sim N(\mathbf{u},\Sigma)$ for simulating dependence relations between two bivariate random vectors, the covariance matrix $\Sigma$ is
		\begin{equation}
		\begin{vmatrix}
			1 & \rho_{12} & \rho & \rho \\
			\rho_{12} & 1 & \rho & \rho \\
			\rho & \rho & 1 & \rho_{34} \\
			\rho & \rho & \rho_{34} & 1
		\end{vmatrix},
	\end{equation}
	$\rho_{12}=0.8,\rho_{34}=0.75$, and $\rho$ changes from 0 to 0.8 by step 0.1.
\end{enumerate}
Experimental results of the six simulations are shown in Figure \ref{fig:sim1bi1normal}, Figure \ref{fig:sim1bi1normalcop1}, Figure \ref{fig:sim1bi1cop}, Figure \ref{fig:sim1multi1}, and Figure \ref{fig:sim1multi2}. Additionally, we also evaluated these measures on two real dataset (heart disease data and wine data), please refer to \cite{Ma2022b} for details. It can be learned from the results that CE presents the best performance among all.

\begin{table}
	\centering
	\caption{Implementations of the measures of independence evaluated.}
	\begin{tabular}{l|c|c}
		\toprule
		\textbf{Package}&\textbf{Measure}&\textbf{Language}\\
		\midrule
		\texttt{copent}&CE\cite{ma2011mutual}&\textsf{R}\\
		\texttt{stats}&Ktau\cite{Kendall1938}&\textsf{R}\\		
		\texttt{energy}&dCor\cite{Szekely2007}&\textsf{R}\\
		\texttt{dHSIC}&dHSIC\cite{pfister2018kernel}&\textsf{R}\\
		\texttt{HHG}&HHG.chisq, HHG.lr\cite{heller2016consistent}&\textsf{R}\\
		\texttt{independence}&Hoeff\cite{hoeffding1948a}, BDtau\cite{bergsma2014a}&\textsf{R}\\
		\texttt{Ball}&Ball\cite{pan2020ball}&\textsf{R}\\
		\texttt{qad}&QAD\cite{Trutschnig2011}&\textsf{R}\\
		\texttt{BET}&BET\cite{Zhang2019}&\textsf{R}\\
		\texttt{MixedIndTests}&Mixed\cite{Genest2019}&\textsf{R}\\
		\texttt{subcopem2D}&subcopula\cite{Erdely2017}&\textsf{R}\\
		\texttt{EDMeasure}&MDM\cite{Shao2014}&\textsf{R}\\
		\texttt{FOCI}&CODEC\cite{Chatterjee2021}&\textsf{R}\\
		\texttt{NNS}&NNS\cite{Viole2012}&\textsf{R}\\		
		\bottomrule
	\end{tabular}
	\label{tb:benchindep}
\end{table}

\begin{figure}
	\centering
	\includegraphics[width=\textwidth]{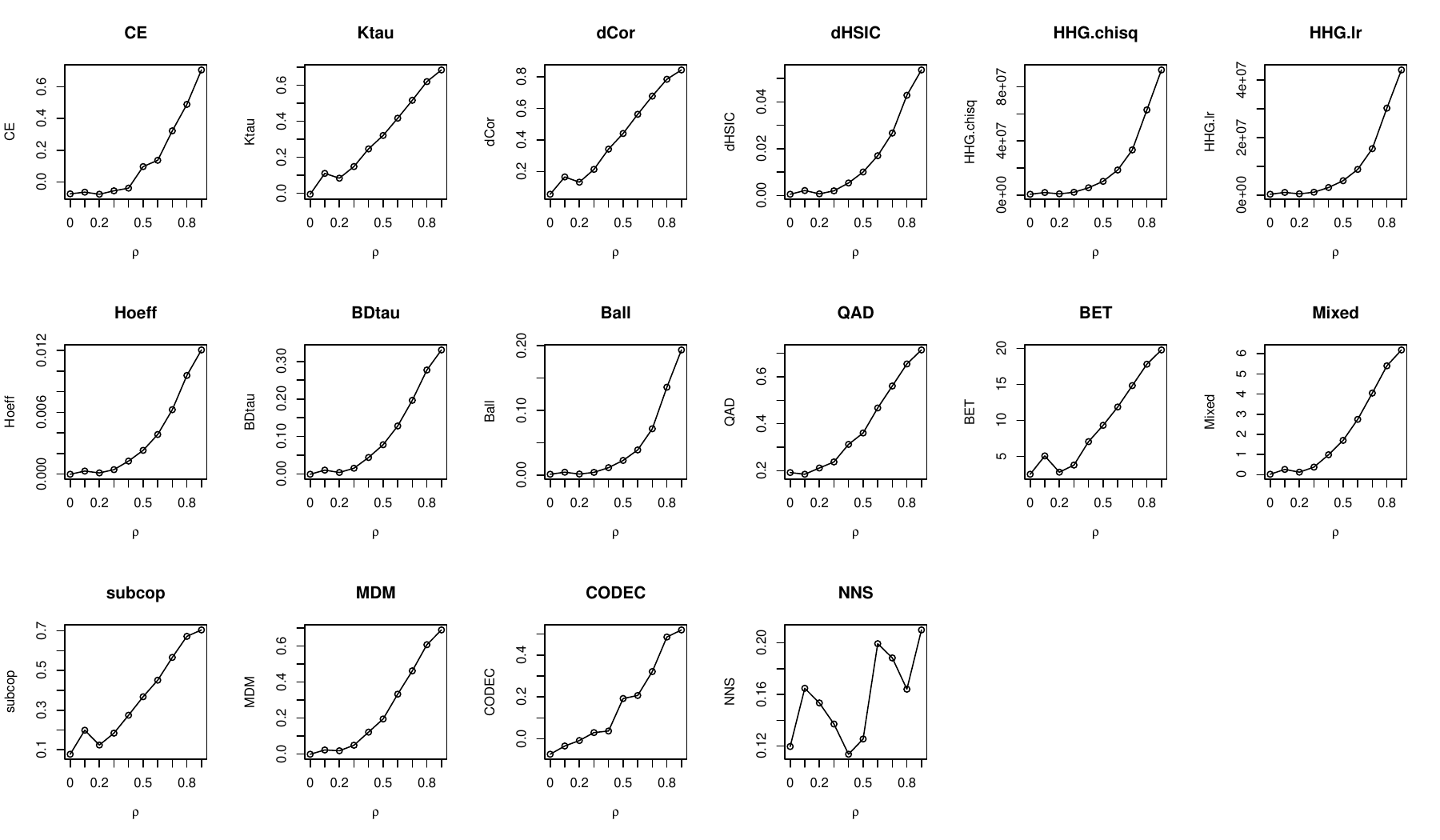}
	\caption{Results of bivariate independence test on bivariate normal distribution.}
	\label{fig:sim1bi1normal}
\end{figure}

\begin{figure}
	\centering
·	\includegraphics[width=\textwidth]{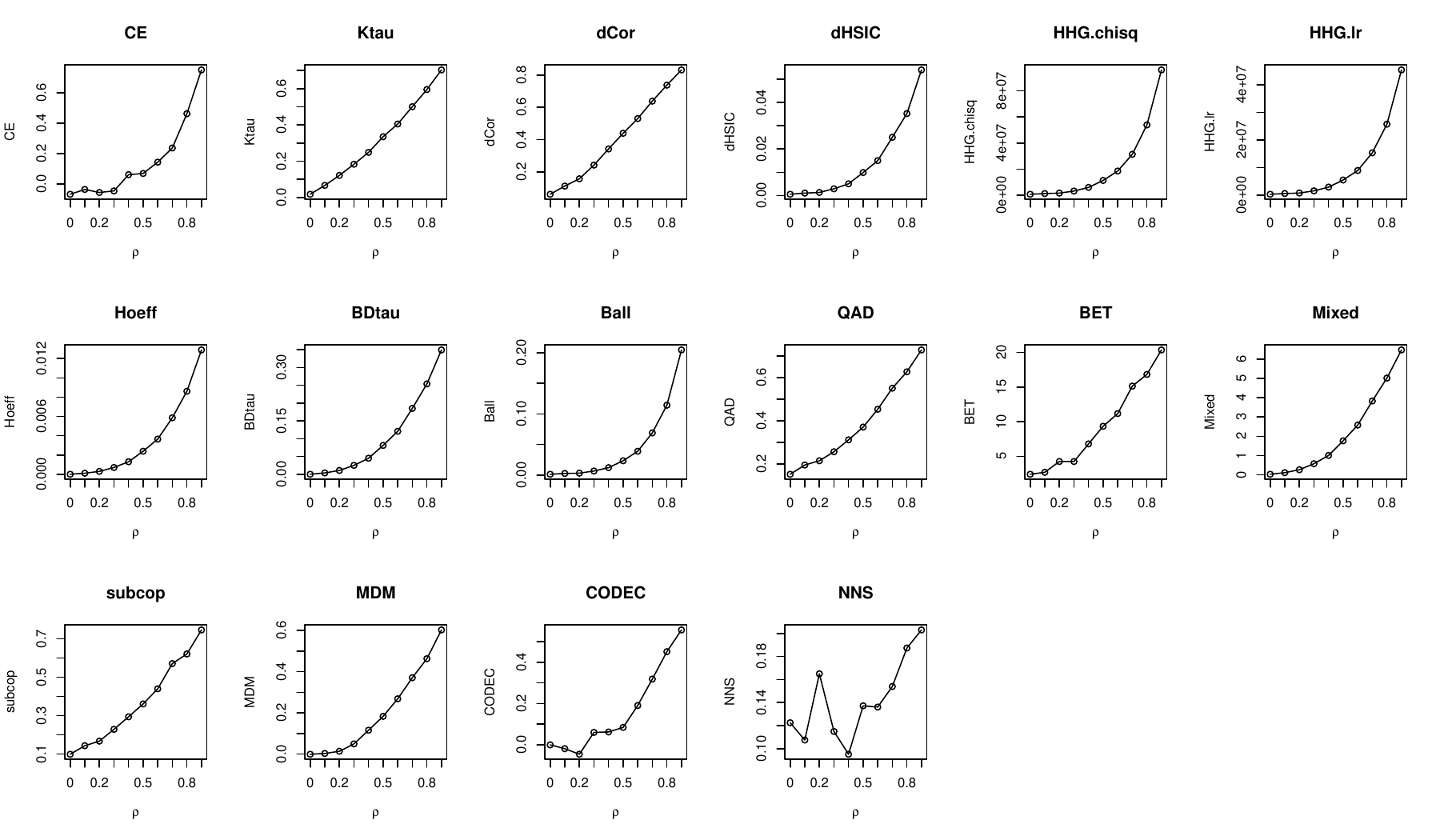}
	\caption{Results of bivariate measures of independence on bivariate normal copulas.}
	\label{fig:sim1bi1normalcop1}
\end{figure}

\begin{figure}
	\centering
	\subfigure[Clayton Copula]{\includegraphics[width=\textwidth]{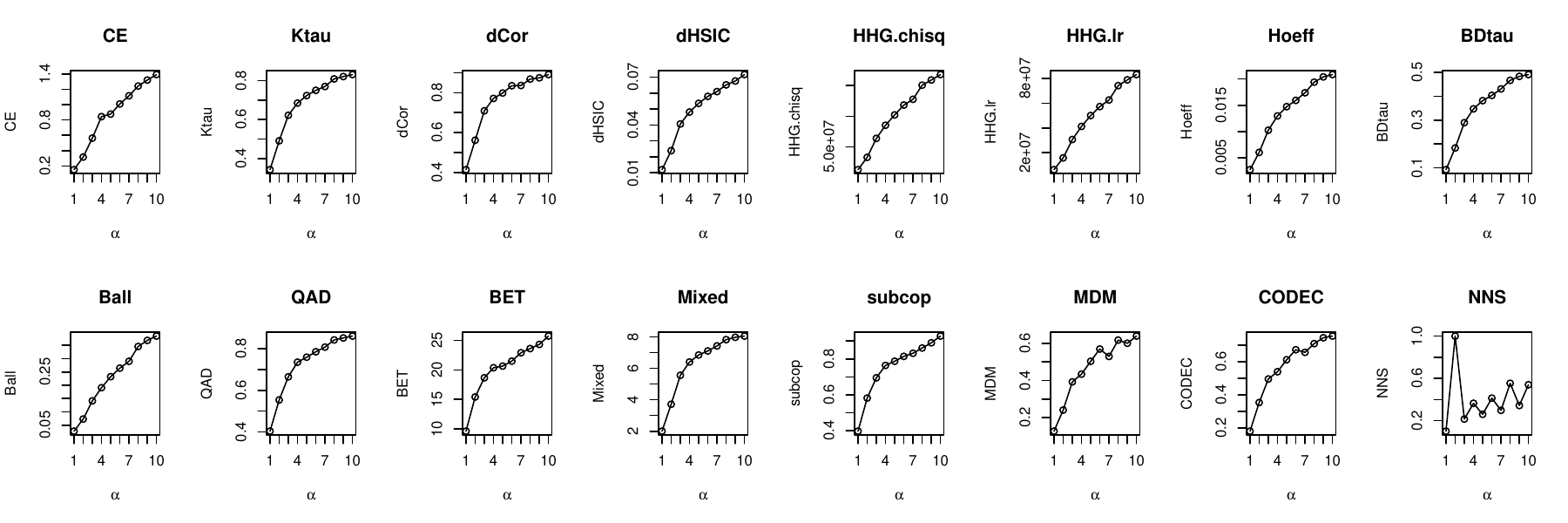}}
	\subfigure[Frank Copula]{\includegraphics[width=\textwidth]{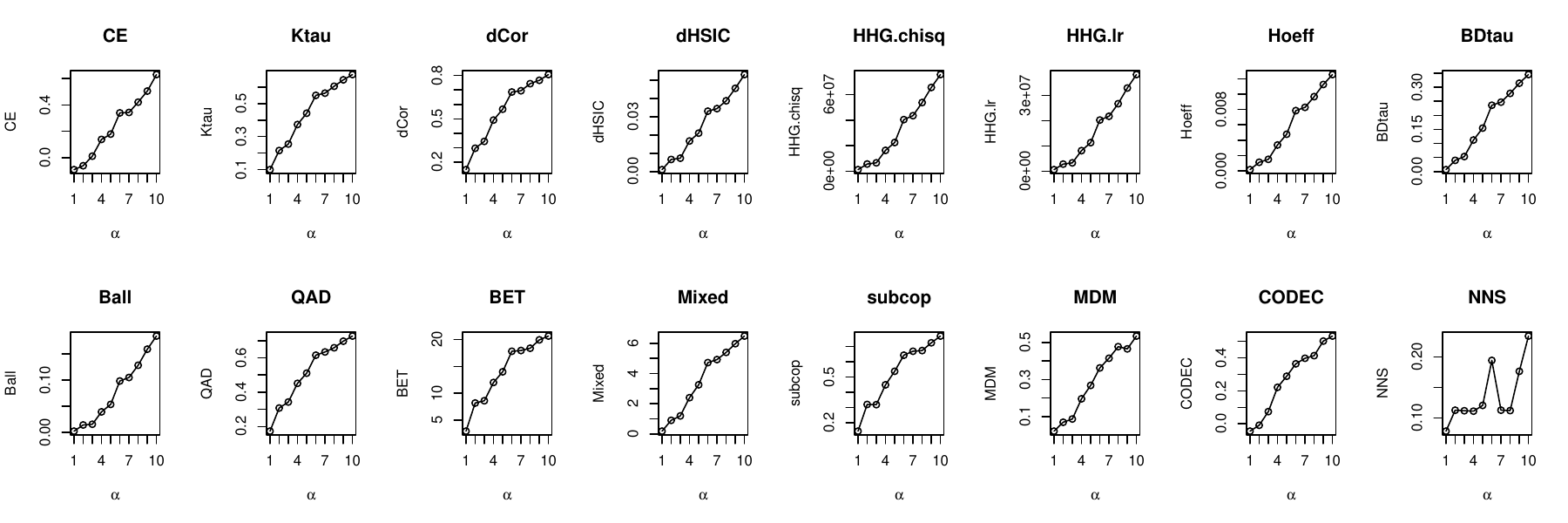}}
	\subfigure[Gumbel Copula]{\includegraphics[width=\textwidth]{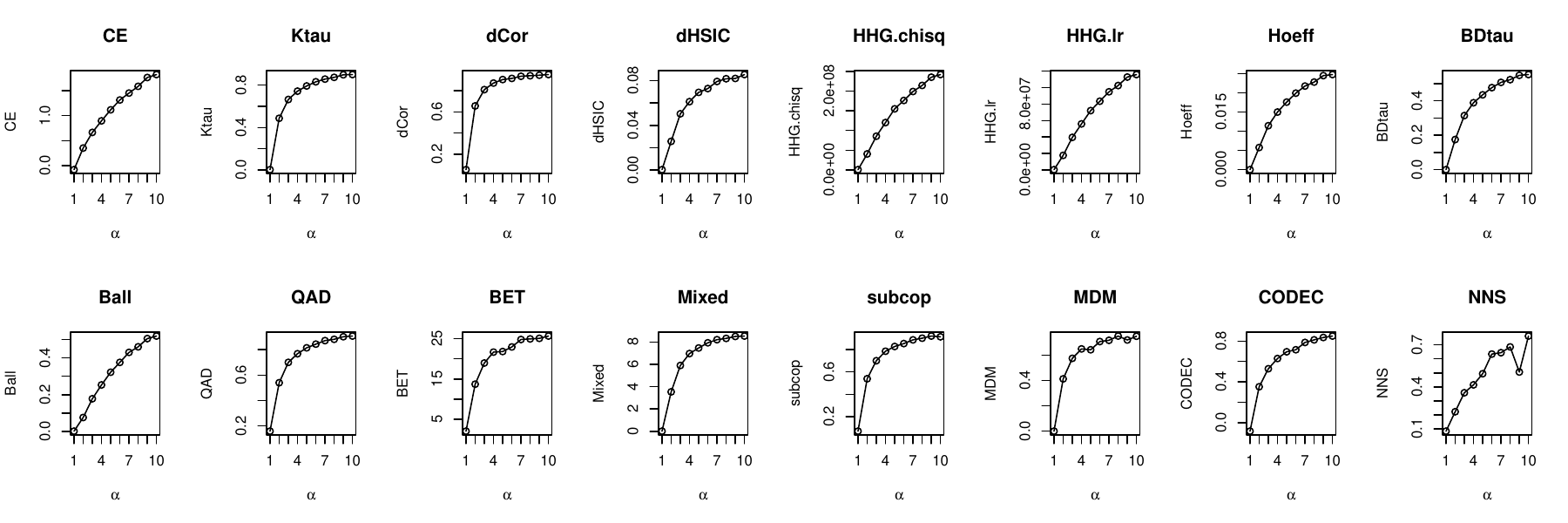}}
	\caption{Results of bivariate independence test on bivariate Archimedean copulas.}
	\label{fig:sim1bi1cop}
\end{figure}

\begin{figure}
	\centering
	\subfigure[trivariate normal distribution]{\includegraphics[width=\textwidth]{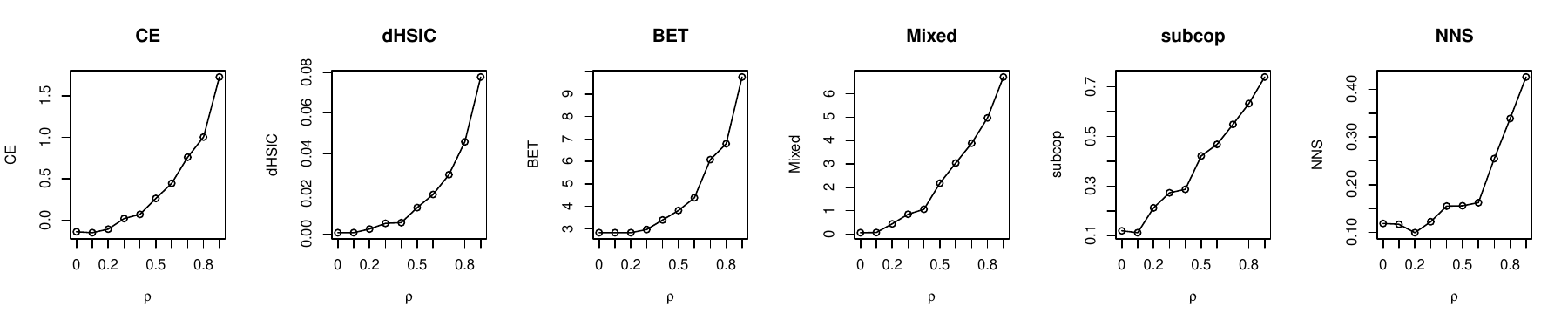}}
	\subfigure[trivariate Clayton copulas]{\includegraphics[width=\textwidth]{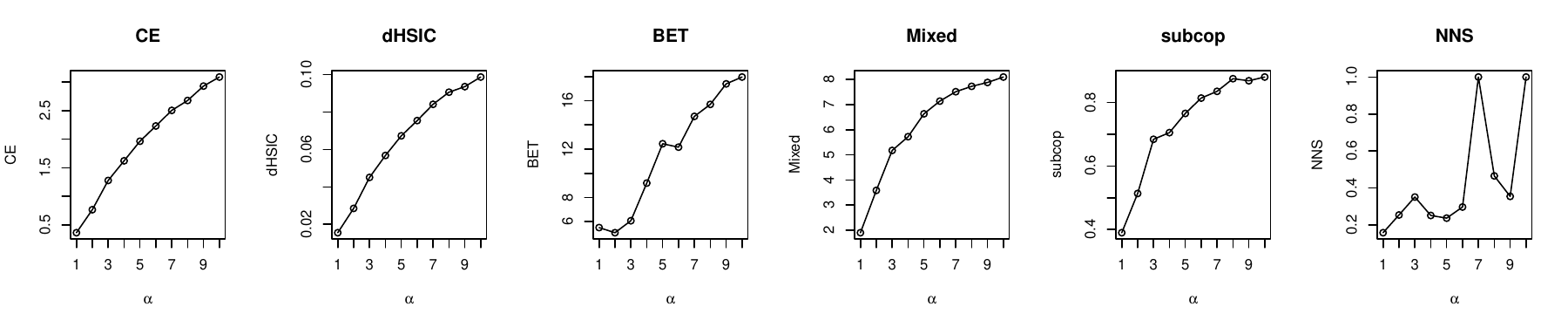}}
	\subfigure[trivariate Frank copulas]{\includegraphics[width=\textwidth]{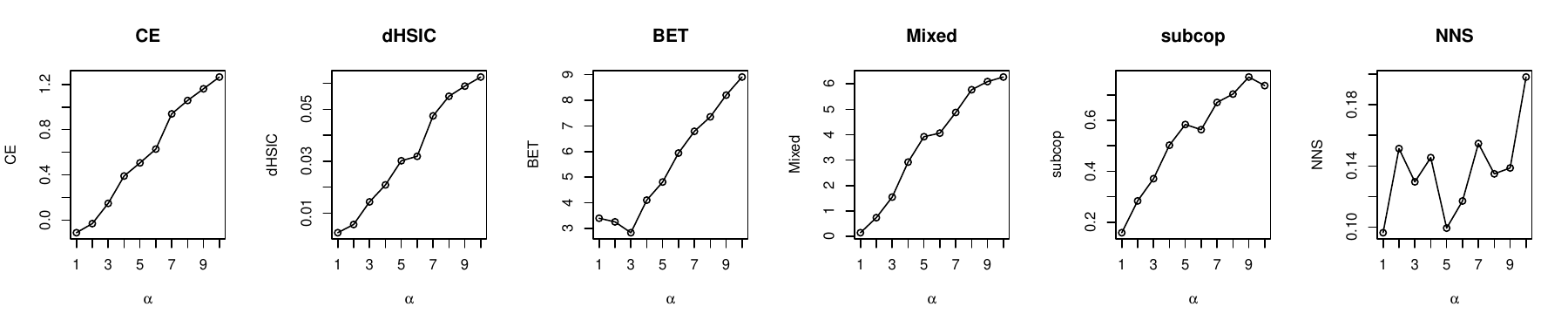}}
	\subfigure[trivariate Gumbel copulas]{\includegraphics[width=\textwidth]{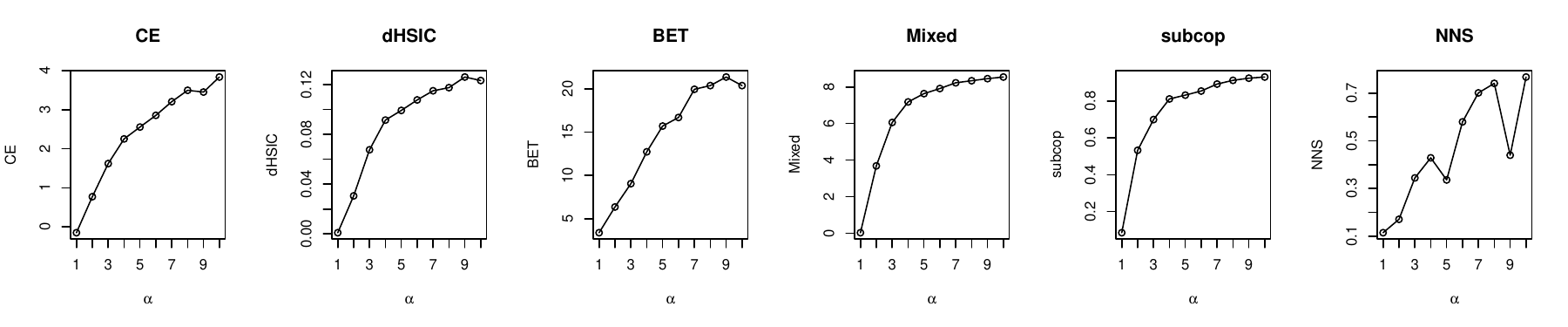}}
	\caption{Experiments on multivariate dependence test.}
	\label{fig:sim1multi1}
\end{figure}

\begin{figure}
	\centering
	\includegraphics[width=\textwidth]{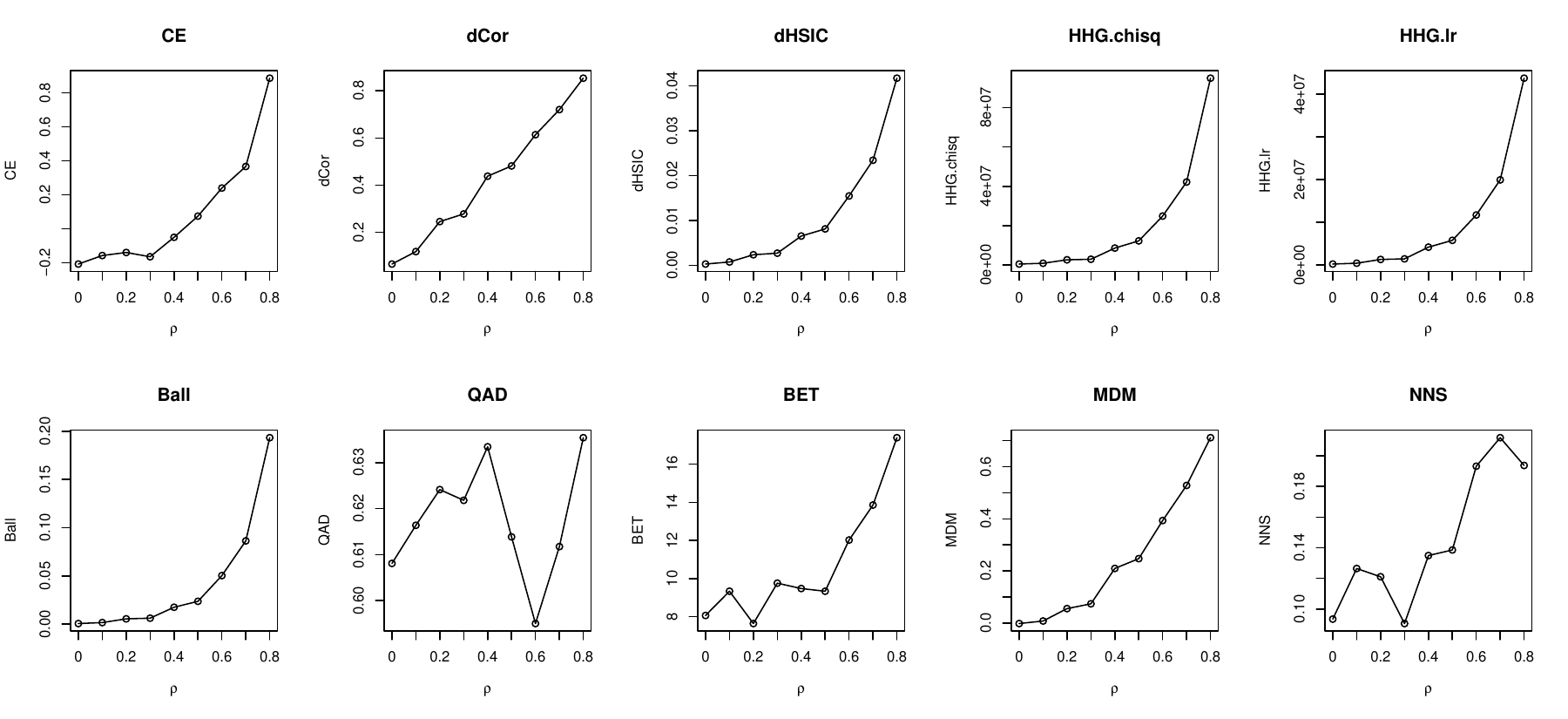}
	\caption{Results of multivariate measures of independence on quadvariate normal distribution.}
	\label{fig:sim1multi2}
\end{figure}

\section{Conditional independence tests}
\label{s:cibench}
Conditional independence is another fundamental concept in statistics and is related to many problems\cite{Dawid1979}. We proved that TE can be represented with only CE and proposed a TE estimator based on this proof\cite{jian2019estimating}. Since TE is essentially CI, we have actually presented a information theoretical measure of CI. There are many other measures of CI currently\cite{Li2020}, (see Table \ref{tb:benchci}), such as conditional dCor\cite{wang2015conditional}, kernel based CI\cite{zhang2011kernel,Strobl2019}, and copula based one\cite{Petersen2021}, etc.

To evaluate these CI measures (see Table \ref{tb:benchci}), we designed experiments with two simulations and a real data \footnote{The code is available at \url{https://github.com/majianthu/eval}}. The measures of CI evaluated include:

\paragraph{CDC}
Wang et al.\cite{wang2015conditional} proposed a measure of CI, CDC, which is a generalization of dCor. CDC is defined by  extending \eqref{eq:dcornu} to the cases of CI, i.e.,
\begin{equation}
	\nu^2(X,Y|Z;w)=||f_{X,Y|Z}(t,s)-f_{X|Z}(t)f_{Y|Z}(s)||_w^2.
\end{equation}

\paragraph{CMD}
Part et al.\cite{Park2015} proposed a dCor based measure of CI by generalizing MDC.

\paragraph{CODEC}
\eqref{eq:eyxey} can be extended to the cases of CI. Given random variables $(X,Y,Z)$, CI of $X,Y$ conditional on $Z$ can be tested by 
\begin{equation}
	E(Y|X,Z)=E(Y,Z).
	\label{eq:eyxzeyz}
\end{equation}
Based on this, Azadkia and Chatterjee\cite{Azadkia2021} proposed a measure of CI that is a generalization of CODEC.

\paragraph{KPC}
Huang et al.\cite{Huang2022a} proposed a kernel based measure of CI, called Kernel Partial Correlation (KPC), to check \eqref{eq:eyxzeyz}. KPC can be considered as the counterpart of CODEC in RKHS.

\paragraph{FCIT}
Chalupka et al.\cite{Chalupka2018} proposed a measure of CI based on testing \eqref{eq:eyxzeyz} through prediction, called Fast Conditional Independence Test (FCIT).

\paragraph{PCIT}
Burkart and Kir\'aly\cite{Burkart2017} proposed a measure of CI by testing \eqref{eq:eyxzeyz} with prediction, called Predictive Conditional Independence Test (PCIT).

\paragraph{CCIT}
Sen et al.\cite{Sen2017} proposed a classifier based measure of CI, called Classifier Conditional Independence Test (CCIT). Given random variables $(X,Y,Z)$, this measure is to check
\begin{equation}
	p_{X,Y,Z}(x,y,z)=p_{X|Z}(x|z)p_{Y|Z}(y|z)p_Z(z),
\end{equation}
where $p$ is pdf.

\paragraph{CMI1}
CMI is a measure of CI in information theory. Runge\cite{Runge2017} proposed a kNN based estimator of CMI which is based on the following expansion of CMI:
\begin{equation}
	I_{X,Y|Z}=H_{XZ}+H_{YZ}-H_Z-H_{XYZ}.
	\label{eq:cmi1}
\end{equation}

\paragraph{CMI2}
Mesner and Shalizi\cite{Mesner2021} proposed another CMI estimator based on \eqref{eq:cmi1} for the cases with discrete and continuous variables.

\paragraph{PCor}
Partial Correlation is a traditional measure of CI\cite{Yule1907,Baba2004}. It can be estimated by the correlation between residuals of regression.

\paragraph{GCM}
Shah and Peters\cite{Shah2020} proposed a measure of CI, called Generalized Covariance Measure (GCM), which is defined as covariance of regression residuals.

\paragraph{wGCM}
Scheidegger et al.\cite{Scheidegger2022} proposed a weighted GCM, called wGCM.

\paragraph{KCIT}
Zhang et al.\cite{zhang2011kernel} proposed a kernel based measure of CI (KCIT). KCIT can be considered as partial correlation in RKHS.

\paragraph{RCoT}
Strobl et al.\cite{Strobl2019} proposed two approximations of KCIT, called RCIT and RCoT, to tackle the computational issue of KCIT.

\paragraph{PCop}
Given random variables $(X,Y,Z)$, Partial Copula (PCop) is joint conditional margins $U_X,U_Y$, where $U_X,U_Y$ are defined as
\begin{equation}
	U_X=F_{X|Z}(X|Z), U_Y=F_{Y|Z}(Y|Z).
\end{equation}
Petersen and Hansen\cite{Petersen2021} proposed to test CI by testing independence of $U_X,U_Y$.

~\

In simulation experiments, data are generated from the following distributions:
\begin{enumerate}
	\item random variables $\mathbf(X,Y,Z)$ governed by trivariate normal distribution $N(\mathbf{u},\Sigma)$ with covariance matrix $\Sigma$ as
		\begin{equation}
		\begin{vmatrix}
			1 & \rho_{xy} & \rho_{xz} \\
			\rho_{xy} & 1 & \rho_{yz} \\
			\rho_{xz} & \rho_{yz} & 1
		\end{vmatrix},
	\end{equation}	
	where $\rho_{xy}=0.7,\rho_{yz}=0.6$, $\rho_{xz}$ changes from 0 to 0.9 by step 0.1;
	\item random variables $\mathbf(X,Y,Z)$ governed by trivariate normal copula $C_N(u_x,u_y,u_z;\Sigma)$ with covariance matrix $\Sigma$ same as above, $\rho_{xz}$ changes from 0 to 0.9 by step 0.1.
\end{enumerate}
In these simulations, the strength of CI increases with parameters. We applied the measures in Table \ref{tb:benchci} to the simulated data. Experimental results (see Figure \ref{fig:trinormci} and Figure \ref{fig:trinormcopci}) shows that CE increases with CI strength and presents better results than others.

\begin{table}
	\centering
	\caption{Implementations of the CI measures evaluated.}
	\begin{tabular}{l|c|c}
		\toprule
		\textbf{Package}&\textbf{CI measure}&\textbf{Language}\\
		\midrule
		\texttt{copent}&CE\cite{jian2019estimating}&\textsf{R}\\
		\texttt{EDMeasure}&CMDM\cite{Park2015}&\textsf{R}\\
		\texttt{FOCI}&CODEC\cite{Azadkia2021}&\textsf{R}\\
		\texttt{RCIT}&RCoT\cite{Strobl2019}&\textsf{R}\\		
		\texttt{cdcsis}&CDC\cite{wang2015conditional}&\textsf{R}\\		
		\texttt{GeneralisedCovarianceMeasure}&GCM\cite{Shah2020}&\textsf{R}\\		
		\texttt{weightedGCM}&wGCM\cite{Scheidegger2022}&\textsf{R}\\
		\texttt{comets}&PCM\cite{Lundborg2024}&\textsf{R}\\	
		\texttt{KPC}&KPC\cite{Huang2022a}&\textsf{R}\\		
		\texttt{ppcor}&PCor\cite{Yule1907}&\textsf{R}\\		
		\texttt{parCopCITest}&PCop\cite{Petersen2021}&\textsf{R}\\		
		\texttt{causallearn}&KCI\cite{zhang2011kernel}&\textsf{Python}\\
		\texttt{pycit}&CMI1\cite{Runge2017}&\textsf{Python}\\		
		\texttt{knncmi}&CMI2\cite{Mesner2021}&\textsf{Python}\\		
		\texttt{fcit}&FCIT\cite{Chalupka2018}&\textsf{Python}\\		
		\texttt{CCIT}&CCIT\cite{Sen2017}&\textsf{Python}\\		
		\texttt{pcit}&PCIT\cite{Burkart2017}&\textsf{Python}\\		
		\bottomrule
	\end{tabular}
	\label{tb:benchci}
\end{table}

\begin{figure}
	\centering
	\includegraphics[width=\textwidth]{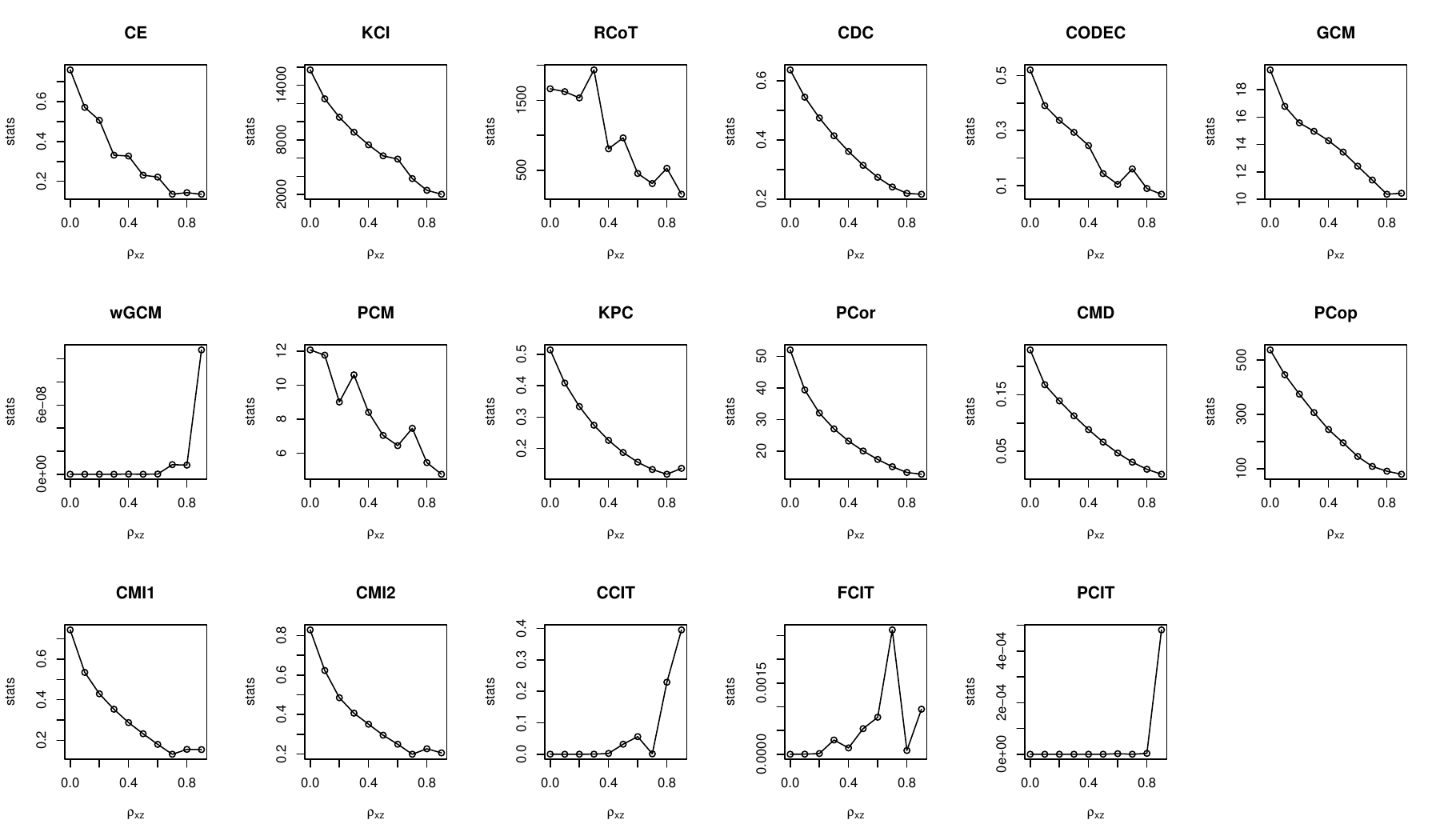}
	\caption{Results of simulation experiments on trivariate normal distributions.}
	\label{fig:trinormci}
\end{figure}

\begin{figure}
	\centering
	\includegraphics[width=\textwidth]{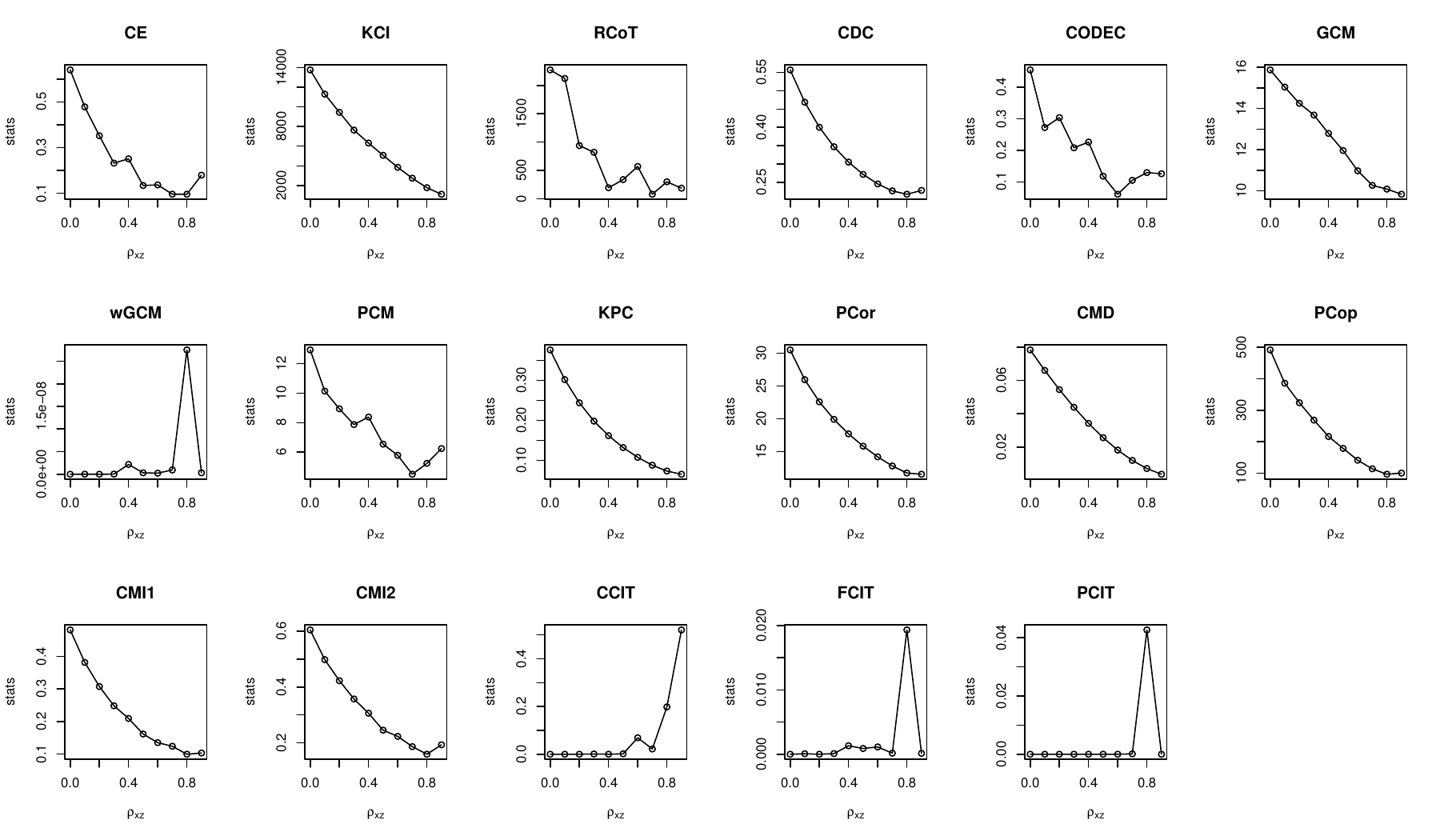}
	\caption{Results of simulation experiments on trivariate normal copulas.}
	\label{fig:trinormcopci}
\end{figure}

\section{Multivariate normality tests}
\label{s:mvntbench}
Normality test is a basic problem of hypothesis testing and has been contributed for decades. There are many tests, such as BHEP, kurtosis and skewness, and dCor based one, etc. \cite{Ebner2020}. We proposed a CE based method for multivariate normality test and its estimator \cite{Ma2022}. To compare it with others, we designed simulation experiments with the implementations of CE and 29 other important measures in \textsf{R} (see Table \ref{tb:benchmvn}). The compared measures include:

\paragraph{Mardia}
Mardia proposed a skewness and kurtosis based method for multivariate normality test (MVNT)\cite{Mardia1970,Mardia1974}. Skewness and kurtosis are defined as
\begin{align}
	S&=\frac{1}{n^2}\sum_{i=1}^{n}\sum_{j=1}^{n}m_{ij}^3,\\
	K&=\frac{1}{n}\sum_{i=1}^{n}m_{ii}^2,
\end{align}
where $m_{ij}$ is the square of Mahalanobis distance.

\paragraph{Henze-Zirkler}
Henze and Zirkler\cite{Henze1990} proposed a method for MVNT based on nonnegative functional distance between pdfs.

\paragraph{Royston}
Royston\cite{Royston1982,Royston1983,Royston1992} proposed a method for MVNT based on Shapiro-Wilk statistic.

\paragraph{Doornik–Hansen}
Doornik and Hansen\cite{Doornik2008} proposed a method for MVNT based on skewness and kurtosis.

\paragraph{Energy}
Sz\'ekely and Rizzo\cite{Szekely2005} proposed a method for MVNT based on energy statistics.

\paragraph{Anderson-Darling}
Anderson-Darling method\cite{Anderson1954} is a method of Goodness of fit that can be used for MVNT\cite{Anderson2014}.

\paragraph{Cram\'er-von Mises}
Koziol\cite{Koziol1982} proposed a method for MVNT based on Cram\'er-von Mises statistic.

\paragraph{Nikulin-Rao-Robson}
Vionov et al.\cite{Voinov2016} proposed a method for MVNT based on Nikulin-Rao-Robson statistic.

\paragraph{McCulloch}
McCulloch\cite{McCulloch1985} proposed a $\chi^2$ goodness of fit statistic for MVNT.

\paragraph{Dzhaparidze-Nikulin}
Dzhaparidze and Nikulin\cite{Dzaparidze1975} proposed also a $\chi^2$ goodness of fit statistic for MVNT.

\paragraph{BHEP}
Henze and Wagner\cite{Henze1997} proposed a distance based statistic for MVNT, called BHEP.

\paragraph{Cox-Small}
Cox and Small\cite{Cox1978} proposed a method for MVNT based on regression.

\paragraph{DEHT and DEHU}
D\"orr et al.\cite{Doerr2021a,Doerr2021} proposed two statistics for MVNT (DEHT and DEHU) inspired by the solution of partial differential equation.

\paragraph{EHS}
Ebner et al.\cite{Ebner2022} proposed a statistic for MVNT based on solving partial differential equation.

\paragraph{HJG}
Henze and Jim\'enez-Gamero\cite{Henze2019} proposed a statistic based on weighted $L^2$ distance.

\paragraph{HV}
Henze and Visagie\cite{Henze2020} proposed a statistic for MVNT based on solving partial differential equation.

\paragraph{HZ}
See Henze-Zirkler.

\paragraph{KKurt}
Kozoil\cite{Koziol1989} proposed a multivariate kurtosis statistic.

\paragraph{MAKurt and MASkew}
Malkovich and Afifi\cite{Malkovich1973} proposed a multivariate skewness and kurtosis statistics.

\paragraph{MKurt and MSkew}
See Mardia.

\paragraph{MQ1 and MQ2}
Manzotti and Quiroz\cite{Manzotti2001} proposed two statistics for MVNT.

\paragraph{MRSSkew}
M\'ori et al.\cite{Mori1994} proposed a multivariate skewness and kurtosis.

\paragraph{PU}
Pudelko\cite{Pudelko2005} proposed a statistic for MVNT based on empirical characteristic function.

\paragraph{SR} 
See Energy.

\paragraph{Shapiro-Wilk}
See Royston.

~\

We degined three simulation experiments \footnote{The code is available at \url{https://github.com/majianthu/mvnt}} to simulate different types of non-normality:
\begin{enumerate}
	\item The first simulation is based on two random variables $\mathbf{X}=(X_1,X_2)$ governed by bivariate normal copula $C_N(u,v)$ with margins as normal distribution $u\sim N(0,1)$ and exponential distribution $v\sim E(\lambda)$. The parameter of exponential distribution $\lambda=1,\dots,10$, when $\lambda=2$, the simulated distribution is close to normal distribution and become non-normal as $\lambda$ increases;
	\item the second simulation is based on two random variables $\mathbf{X}=(X_1,X_2)$ governed by bivariate $t$ distribution $\mathbf{X}\sim t_{\nu}(\mathbf{0},\Sigma)$, the non-diagonal element of $\Sigma$ $\rho=0.5$, non-normality is gradually weakened as $\nu=1,\ldots,10$;
	\item the third simulation is based on two bivariate normal distributions $X_1\sim N_1(\mathbf{\mu}_1,\rho_1)$ and $X_2\sim N_2(\mathbf{\mu}_2,\rho_2)$ with $\mu_1=\mathbf{0},\mu_2=\mathbf{3},\rho_1=0.5,\rho_2=0.8$and simulated data is derived from two samples from these two normal distributions as $\{(\beta-1)X_1 + (10-\beta)X_2\}/9$, where $\beta = 1,\ldots,10$ so non-normality becomes stronger first and then weakened.
\end{enumerate}
We applied CE based test and the other tests to these three simulated data. Experimental results are shown in Figure \ref{fig:benchmvnt1}, Figure \ref{fig:benchmvnt2}, and Figure \ref{fig:benchmvnt3}, from which one can learn that the CE based test can measure the change of simulated non-normality correctly and presents better results than others.

\begin{table}
	\centering
	\caption{Implementations of the MVNT evaluated in \textsf{R}.}
	\begin{tabular}{l|c}
		\toprule
		\textbf{Package}&\textbf{Test}\\
		\midrule
		\texttt{copent}&CE\cite{Ma2022}\\
		\hline
		\multirow{2}{*}{\texttt{MVN}\cite{Korkmaz2014}}&Mardia, Royston, Henze-Zirkler\\
		&Dornik-Haansen, Energy\\
		\hline
		\multirow{3}{*}{\texttt{mvnTest}\cite{Pya2016}}&Anderson-Darling, Cram\'er-von Mises\\
		&Nikulin-Rao-Robson, McCulloch\\
		&Dzhaparidze-Nikulin\\
		\hline
		\multirow{3}{*}{\texttt{mnt}\cite{Ebner2020}}&BHEP, Cox-Small, DEHT, DEHU, EHS, HJG\\
		&HV, HZ, KKurt, MAKurt, MASkew, MKurt\\
		&MQ1, MQ2, MRSSkew, MSkew, PU, SR\\
		\hline
		\texttt{mvnormtest}&Shapiro-Wilk\cite{Royston1982}\\
		\bottomrule
	\end{tabular}
	\label{tb:benchmvn}
\end{table}

\begin{figure}
	\centering
	\includegraphics[width=\textwidth]{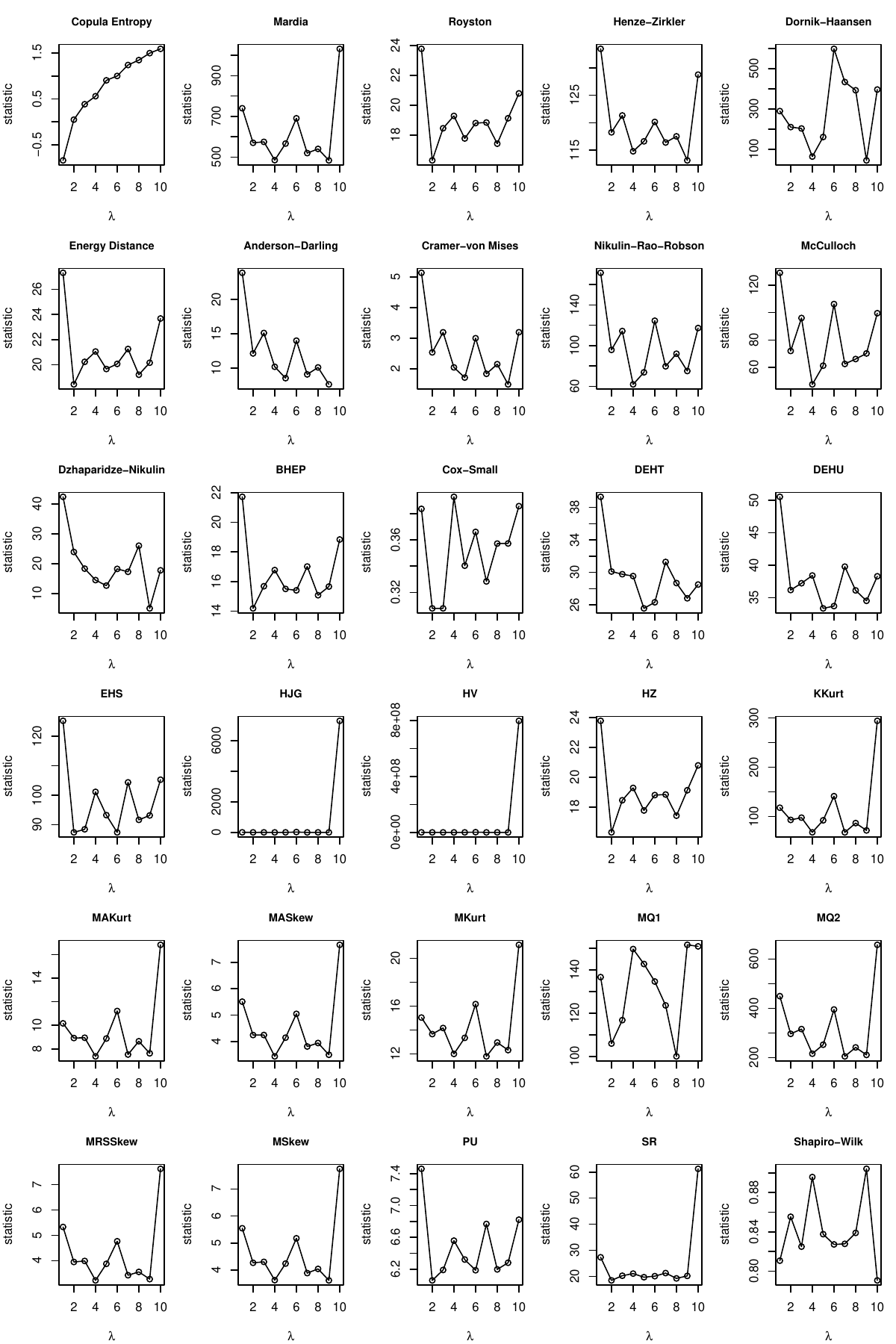}
	\caption{Results of the simulation experiment based on margins.}
	\label{fig:benchmvnt1}
\end{figure}

\begin{figure}
	\centering
	\includegraphics[width=\textwidth]{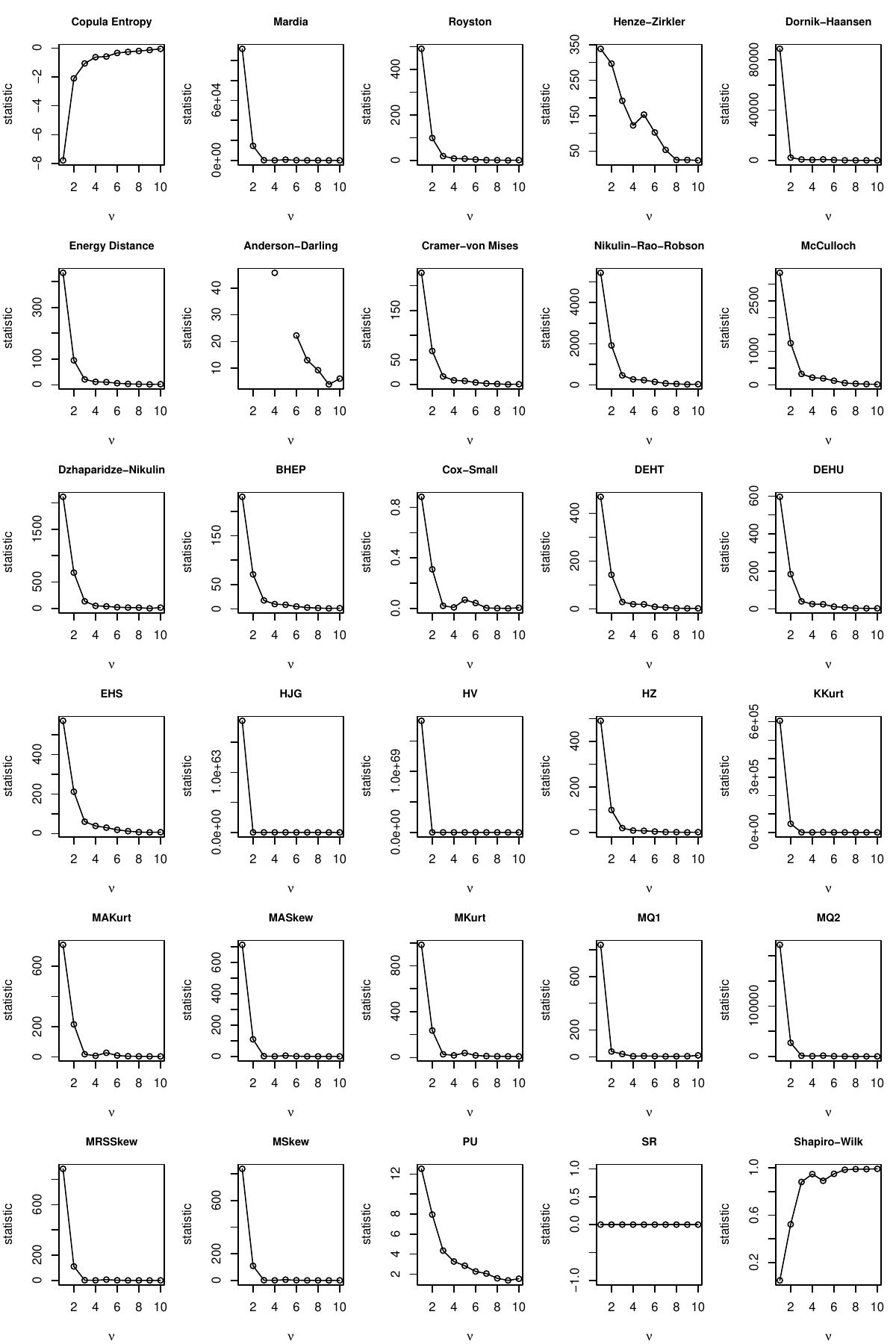}
	\caption{Results of the simulation experiment based on copulas.}
	\label{fig:benchmvnt2}
\end{figure}

\begin{figure}
	\centering
	\includegraphics[width=\textwidth]{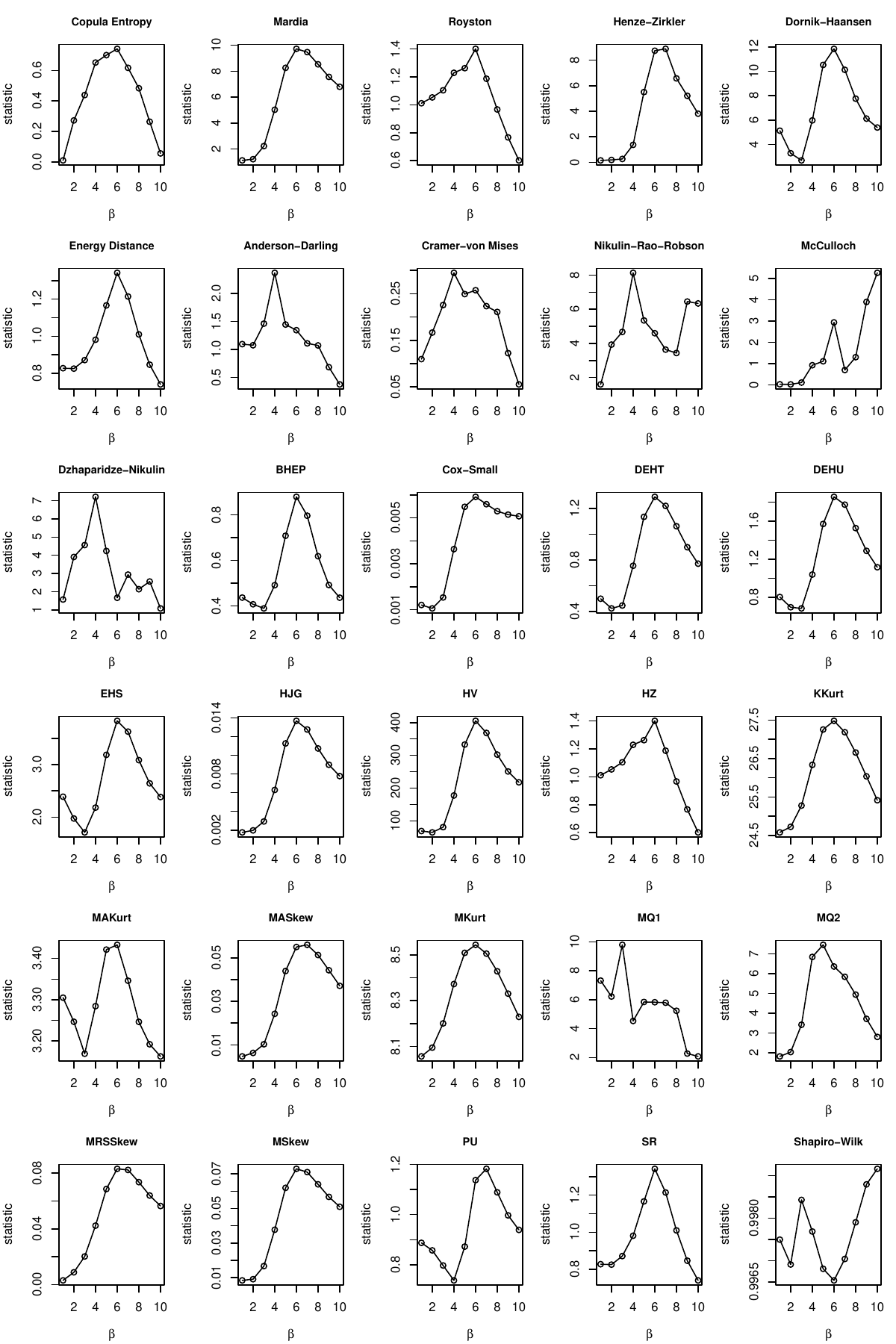}
	\caption{Results of the simulation experiment based on mixtures of distributions.}
	\label{fig:benchmvnt3}
\end{figure}

\section{Two-sample tests}
\label{s:tstbench}
Two-sample test is a basic problem in hypothesis testing and there are many existing methods for it, such as Wilcoxon test, Kruskal-Wallis test, and Kolmogorov-Smirnov test \cite{Hollander1973}, dCor base test\cite{Szekely2004}, kernel based test\cite{Gretton2012}, HHG statistic \cite{heller2016consistent}, Ball statistic\cite{Pan2018}, learning based test \cite{Hediger2022}, etc. We proposed a CE based two-sample test and its estimator \cite{Ma2023b}. To evaluate it, we compared it with the tests implemented in \textsf{R} for univariate cases (see Table \ref{tb:benchtstuni}) and multivariate cases (see Table \ref{tb:benchtstmulti}).

The two-sample tests for univariate cases evaluated include:
\paragraph{Wilcoxon1}
Wilcoxon test\cite{Wilcoxon1945}, also called Mann-Whitney test\cite{Mann1947}, is a commonly used method for univariate two-sample test.

\paragraph{Kruskal-Wallis}
Kruskal-Wallis test is proposed by Kruskal and Wallis in 1952\cite{Kruskal1952a}.

\paragraph{Kolmogorov-Smirnov}
Kolmogorov-Smirnov test\cite{Kolmogorov1933,Smirnov1948} is another common method for univarite two-sample test which is based on empirical distributions.

\paragraph{Cram\'er-von Mises}
Cram\'er-von Mises test\cite{Cramer1927,Mises1931}is proposed by Cram\'er and von Mises in 1927 and 1931 respectively. 

\paragraph{Kuiper}
Kuiper test \cite{Kuiper1960} is proposed by Kuiper in 1960.

\paragraph{WASS}
Wasserstein test is a univariate nonparametric method for two-sample test based on Wasserstein distance\cite{Ramdas2017}.

\paragraph{DTS}
Dowd\cite{Dowd2020} proposed a univariate nonparametric two-sample test.

\paragraph{AD}
Pettitt\cite{Pettitt1976} proposed a two-sample test based on Anderson-Darling rank statistic.

\paragraph{Wilcoxon2}
See Wilcoxon1.

\paragraph{Vartest}
Ammous et al.\cite{Ammous2024} proposed a two-sample test based on James-Welch ANOVA.

\paragraph{LR}
LR test is proposed by Lehmann and Rosenblatt in 1951 and 1952 respectively\cite{Lehmann1951,Rosenblatt1952}.

\paragraph{ZA,ZK and ZC}
Zhang\cite{Zhang2006} proposed two tests based on likelihood ratio.

\paragraph{TNL}
Aliev et al.\cite{Aliev2024} proposed a two-sample test based rank.

~\

The two-sample tests for multivariate cases include:
\paragraph{MI}
MI based two-sample test\cite{Guha2014,Drake2014} is based on the MI between samples and labels.

\paragraph{Kernel}
Gretton et al.\cite{Gretton2012} proposed a kernel based two-sample test.

\paragraph{Energy}
Sz\'ekely and Rizzo\cite{Szekely2004} proposed a two-sample test based energy statistics.

\paragraph{Ball}
Pan et al.\cite{Pan2018} proposed a two-sample test based on Ball divergence.

\paragraph{Random Forest}
Hediger et al.\cite{Hediger2022} proposed a two-sample test based on random forests.

\paragraph{HHG}
Heller et al.\cite{heller2016consistent} proposed a k-sample test based on HHG statistic.

\paragraph{Cramer}
Baringhaus and Franz\cite{Baringhaus2004} proposed a k-sample test.

\paragraph{TST.HD}
Cousido-Rocha et al.\cite{CousidoRocha2019} proposed a k-sample test based on distance between empirical characteristic functions.

\paragraph{F-F}
Fasano and Franceschini\cite{Fasano1987} proposed a multivariate version of Kolmogorov-Smirnov test.

\paragraph{Peacock}
Peacock\cite{Peacock1983} proposed a Kolmogorov-Smirnov test for bivariate cases.

\paragraph{RPT}
Lopes et al.\cite{Lopes2011} proposed a two-sample test based on Hotelling $T^2$ test.

\paragraph{Depth}
Liu and Singh\cite{Liu1993} proposed a depth based two-sample test.

\paragraph{AD}
See AD above.

\paragraph{QN}
Kruskal\cite{Kruskal1952} proposed a nonparametric k-sample test based on rank.

\paragraph{BM}
Neubert and Brunner\cite{Neubert2007} proposed a two-sample test based on a studentized permutation test.

\paragraph{WASS}
See WASS above.

\paragraph{SWD}
Wasserstein distance is proposed for two-sample test\cite{Rabin2012}.

\paragraph{Graph}
Bai and Chu\cite{Bai2023} proposed a graph-based multivariate two-sample test.

\paragraph{KMD}
Huang and Sen\cite{Huang2024} proposed a nonparametric k-sample test based on kernels.

~\

We designed simulation experiments for univariate tests and multivariate tests\footnote{The code is available at \url{https://github.com/majianthu/tst}}:

\paragraph{Univariate tests}
Two simulation experiments for univariate tests include:
\begin{enumerate}
	\item The first simulation includes a random variable governed by normal distribution $x_0\sim N(\mu_0,\delta_0)$ with $\mu_0=0$ and $\delta_0=1$ and another random variable governed by normal distribution $x_2\sim N(\mu_1,\delta_1)$ with $\mu_1$ changes from 0 to 9 and $\delta_1=1$;
	\item The second simulation includes the first random variable $x_0$ same to the first simulation, and the second random variable $x_1\sim N(\mu_1,\delta_1)$ with $\delta_1$ changes from 1 to 10 and $\mu_1=0$.
\end{enumerate}
We applied the selected two-sample tests to the simulated data. Experimental results are shown in Figure \ref{fig:benchtst1} from which one can learn that CE based test performs better than others.

\paragraph{Multivariate tests}
Three simulation experiments for multivariate tests include:
\begin{enumerate}
	\item The first simulation includes two random variables governed by normal distributions $\mathbf{x}_0\sim N(u_0,\rho_0)$ and $\mathbf{x}_1\sim N(u_1,\rho_1)$ with $u_0=\mathbf{0}$ and $u_1=(i,i)$, $i$ changes from 0 to 9 by step 1, and $\rho_0,\rho_1=0.5$;
	\item The second simulation includes a bivariate random variable governed by normal distribution $\mathbf{x}_0\sim N(u_0,\rho_0)$ with $u_0=\mathbf{0},\rho_0=0$ and another random variable $\mathbf{x}_1$ governed by normal distribution $N(u_1,\rho_1)$ with $\rho_1$ changes from 0 to 0.9 by step 0.1 and $u_1=\mathbf{0}$;
	\item The third simulation includes a random variable governed by bivariate normal distribution $\mathbf{x}_0\sim N(u_0,\rho_0)$ with $\rho_0=0$ and another random variable $\mathbf{x}_1$ governed by bivariate normal copula $c(u,v)$ with margins as normal distribution $u\sim N(0,2)$ and exponential distribution $v\sim E(\lambda)$ with $\lambda$ changes from 1 to 10.
\end{enumerate}
We applied the selected multivariate two-sample tests to the simulated data. Experimental results are shown in Figure  \ref{fig:benchtst2a}, Figure \ref{fig:benchtst2b}, and Figure \ref{fig:benchtst2c} from which one can learn that CE based test can measure the relations between two simulated distributions correctly and performs better than others.

\begin{table}
	\centering
	\caption{Implementations of the univariate two-sample tests evaluated in \textsf{R}.}
	\begin{tabular}{l|c}
		\toprule
		\textbf{Package}&\textbf{Test}\\
		\midrule
		\texttt{copent}&CE\cite{Ma2023b}\\
		\hline
		\multirow{2}{*}{\texttt{stat}}&Wilcoxon1\cite{Hollander1973}\\
		&Kruskal-Wallis\cite{Hollander1973}\\
		\hline
		\multirow{2}{*}{\texttt{twosamples}\cite{Dowd2020}}&CVM, KS, Kuiper\\
		&WASS, DTS, AD\\
		\hline
		\texttt{robusTest}\cite{Ammous2024}&Wilcoxon2, Vartest\\
		\hline
		\multirow{2}{*}{\texttt{R2sample}}&LR\cite{Lehmann1951,Rosenblatt1952},\\&ZA,ZK,ZC\cite{Zhang2006}\\
		\hline
		\texttt{tnl.Test}&TNL\cite{Aliev2024}\\
		\bottomrule
	\end{tabular}
	\label{tb:benchtstuni}
\end{table}

\begin{table}
	\centering
	\caption{Implementations of the multivariate two-sample tests evaluated in \textsf{R}.}
	\begin{tabular}{l|c}
		\toprule
		\textbf{Package}&\textbf{Test}\\
		\midrule
		\multirow{2}{*}{\texttt{copent}}&CE\cite{Ma2023b}\\
		&MI\cite{Drake2014,Guha2014}\\
		\hline
		\texttt{kernlab}\cite{Karatzoglou2004}&Kernel\cite{Gretton2012}\\
		\hline
		\texttt{energy}&Energy\cite{Szekely2004}\\
		\hline
		\texttt{Ball}&Ball divergence\cite{Pan2018}\\
		\hline
		\texttt{hypoRF}&Random Forest\cite{Hediger2022}\\
		\hline
		\multirow{4}{*}{\texttt{HHG}\cite{heller2016consistent}}&HHG sum.chisq\\
		&HHG sum.lr\\
		&HHG max.chisq\\
		&HHG max.lr\\
		\hline
		\texttt{cramer}&Cramer\cite{Baringhaus2004}\\
		\hline
		\texttt{TwoSampleTest.HD}\cite{Cousido-Rocha2023}&TST.HD\\
		\hline
		\texttt{fasano.franceschini.test}\cite{Puritz2023}&F-F\\
		\hline
		\texttt{Peacock.test}&Peacock\cite{Peacock1983}\\
		\hline
		\texttt{RandomProjectionTest}&RPT\cite{Lopes2011}\\
		\hline
		\texttt{DepthProc}&Depth\cite{Liu1993}\\
		\hline
		\multirow{2}{*}{\texttt{kSamples}}&AD\cite{Pettitt1976}\\&QN\cite{Kruskal1952}\\
		\hline
		\texttt{lawstat}&BM\cite{Neubert2007}\\
		\hline
		\multirow{2}{*}{\texttt{T4transport}}&WASS\cite{Peyre2019}\\&SWD\cite{Rabin2012}\\
		\hline
		\texttt{rgTest}&Graph\cite{Bai2023}\\
		\hline
		\texttt{KMD}&KMD\cite{Huang2024}\\
		\bottomrule
	\end{tabular}
	\label{tb:benchtstmulti}
\end{table}

\begin{figure}
	\centering
	\subfigure[Simulations based on changing mean]{\includegraphics[width=\textwidth]{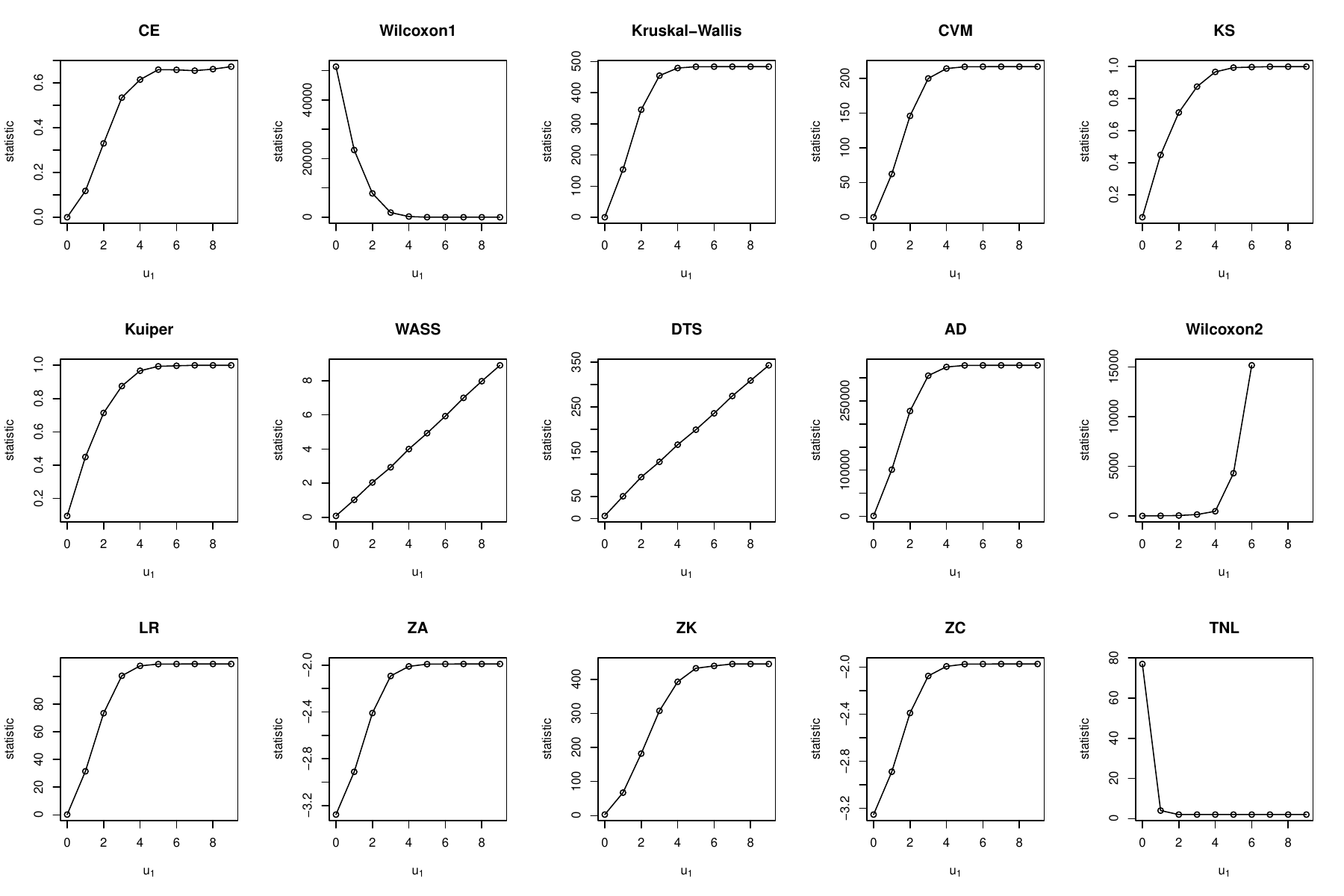}}
	\subfigure[Simulations based on changing variance]{\includegraphics[width=\textwidth]{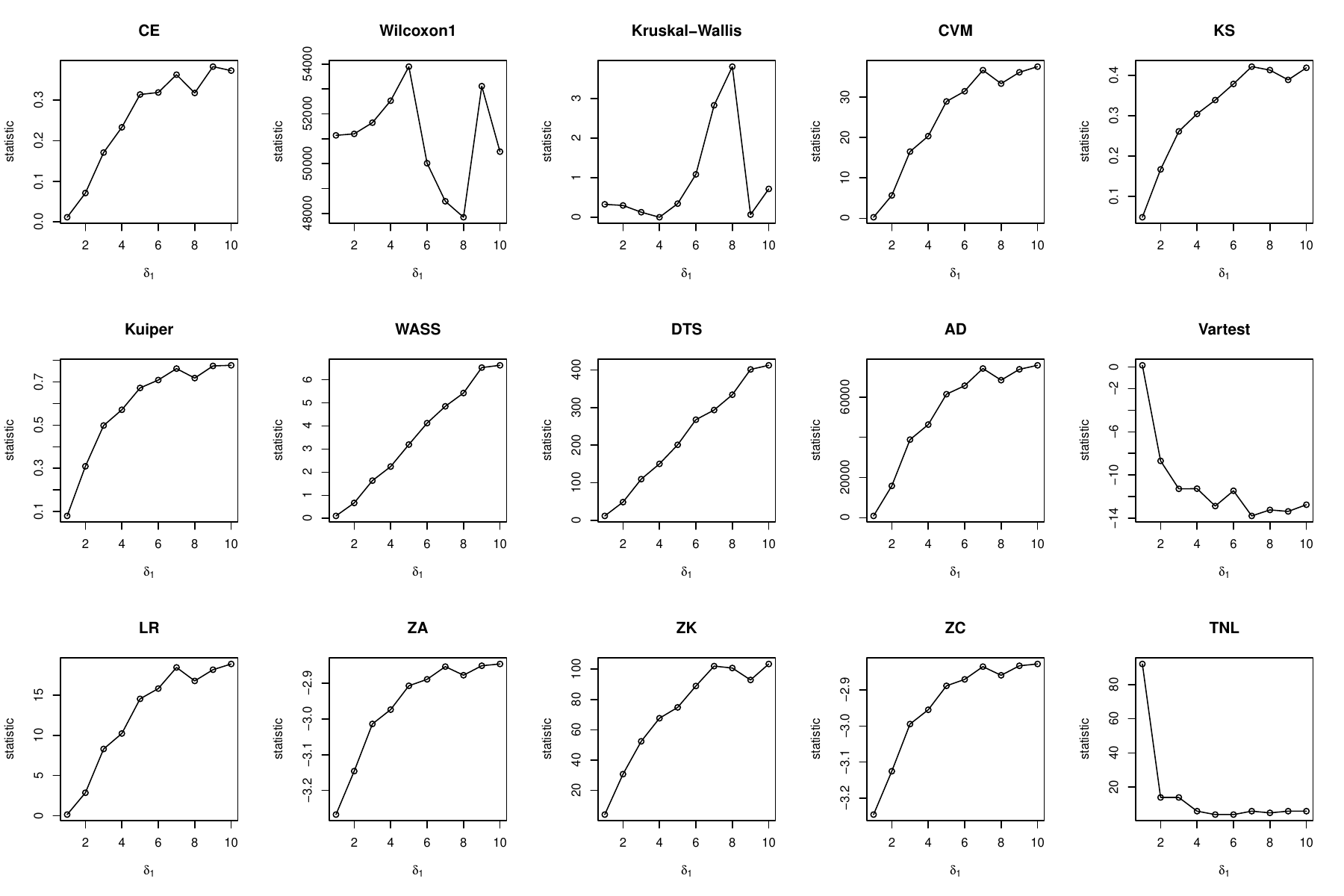}}	
	\caption{Results of the simulation experiments on univariate two-sample tests.}
	\label{fig:benchtst1}
\end{figure}

\begin{figure}
	\centering
	\includegraphics[width=\textwidth]{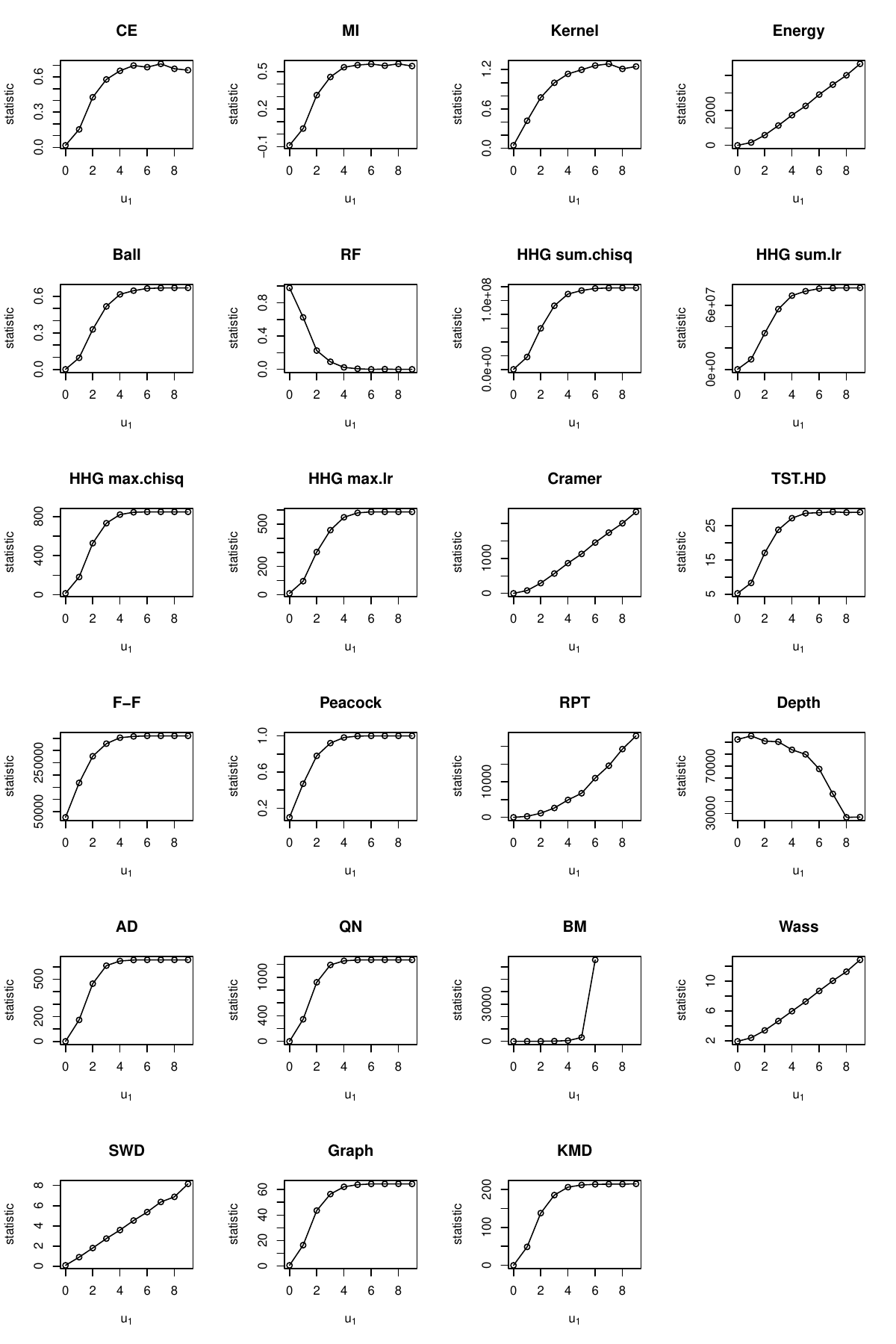}
	\caption{Results of the simulation experiments with changing means for multivariate two-sample tests.}
	\label{fig:benchtst2a}
\end{figure}

\begin{figure}
	\centering
	\includegraphics[width=\textwidth]{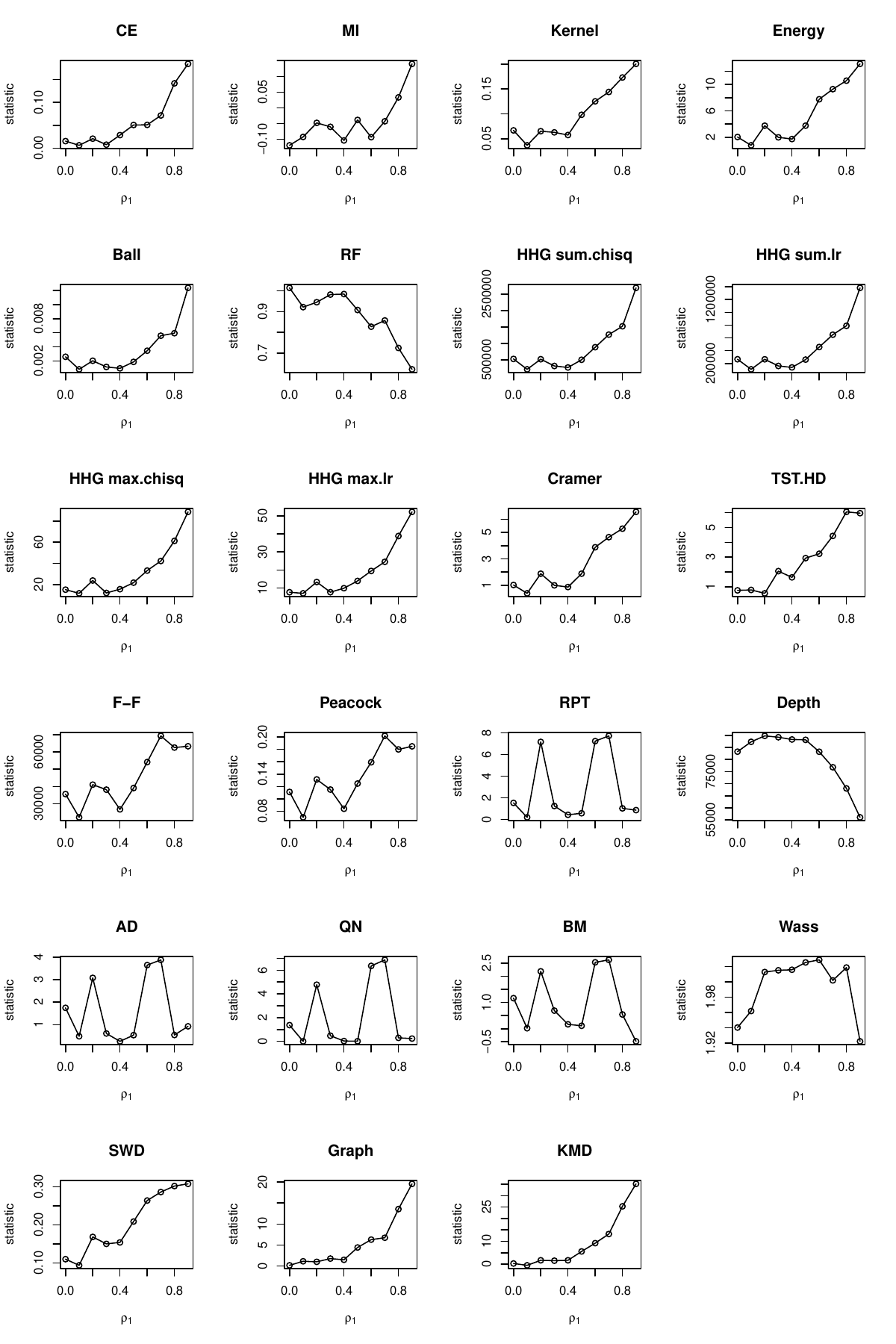}
	\caption{Results of the simulation experiments with changing variance for multivariate two-sample test.}
	\label{fig:benchtst2b}
\end{figure}

\begin{figure}
	\centering
	\includegraphics[width=\textwidth]{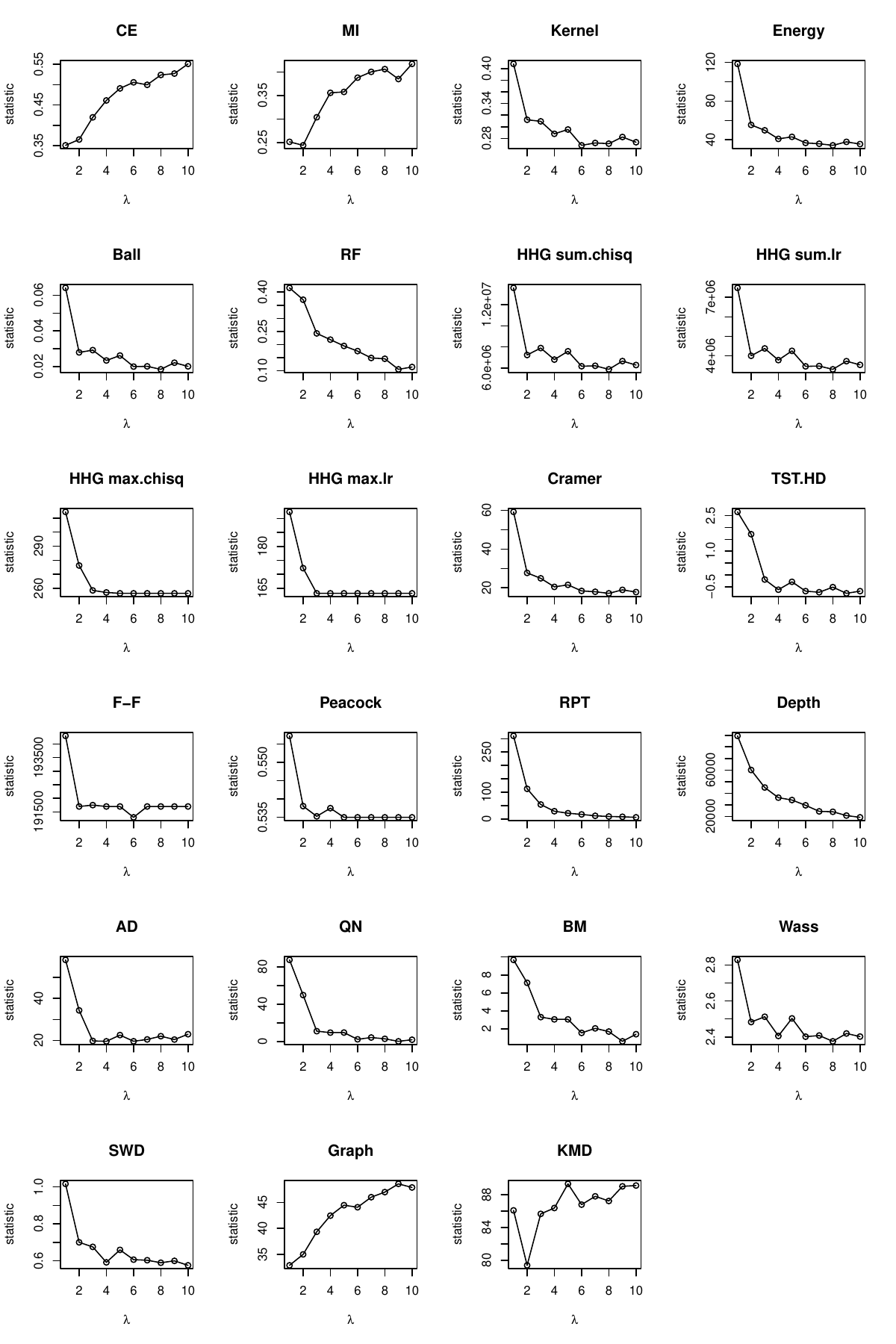}
	\caption{Results of the simulation experiments with changing normal copula for multivariate two-sample tests.}
	\label{fig:benchtst2c}
\end{figure}

\section{Change point detection}
\label{s:cpdbench}
Change point detection is a basic task in time series analysis and there are many existing methods for this \cite{Aminikhanghahi2017,Reeves2007,Truong2020}. We proposed a CE based nonparametric multivariate method for multiple change point detection \cite{Ma2024a}. To evaluate it, we compare it to the methods implemented in \textsf{R} as listed in Table \ref{tb:benchcpd} in simulation experiments\footnote{The code is available at \url{https://github.com/majianthu/cpd}}.

The packages that implements the methods for change point detection in simulation experiments include:
\paragraph{changepoint}
Scott and Knott\cite{Scott1974} proposed a univariate method based on binary segmentation.

\paragraph{ecp}
James and Matteson\cite{James2015} proposed a kernel based multivariate method for change point detection.

\paragraph{rid}
Fan and Wu\cite{Fan2024} proposed a method called random interval distillation (RID).

\paragraph{CptNonPar}
McGonigle and Cho\cite{McGonigle2023} proposed a nonparametric multivariate method for change point detection.

\paragraph{npwbs}
Ross\cite{Ross2021} proposed a univariate method based on Wild Binary Segmentation (WBS).

\paragraph{MFT}
Messer et al.\cite{Messer2018} proposed a univariate method, called multiple filter test (MFT).

\paragraph{jcp}
Messer\cite{Messer2022} proposed a univariate method for change point detection.

\paragraph{InspectChangepoint}
Wang and Samworth\cite{Wang2017} proposed a method for change point detection based sparse projection.

\paragraph{hdbinseg}
Cho and Fryzlewicz\cite{Cho2015} proposed a multivariate method for change point detection based sparsified binary segmentation. 

\paragraph{changepoint.np}
Killick et al.\cite{Killick2012} proposed a univariate method for change point detection with low computational cost.

\paragraph{changepoint.geo}
Grundy et al.\cite{Grundy2020} proposed a method for change point detection based on geometrically inspired mapping.

\paragraph{mosum} 
Eichinger and Kirch\cite{Eichinger2018} proposed a MOSUM method for change point detection.

\paragraph{SNSeg}
Zhao et al.\cite{Zhao2022} proposed a method for change point detection based on self-normalization. 

\paragraph{offlineChange}
Ding et al.\cite{Ding2017} proposed a multivariate method for change point detection. 

\paragraph{IDetect}
Anastasiou and Fryzlewicz\cite{Anastasiou2022} proposed a univariate method for change point detection.

\paragraph{wbs}
Fryzlewicz\cite{Fryzlewicz2014} proposed a univariate method for change point detection based WBS. 

\paragraph{breakfast}
Fryzlewicz\cite{Fryzlewicz2020} proposed a method for change point detection based on WBS2.

\paragraph{mscp}
Levajkovic and Messer\cite{Levajkovic2023} proposed a MOSUM based method for change point detection. 

\paragraph{L2hdchange}
Li et al.\cite{Li2024} proposed a two-way MOSUM based method for change point detection.

\paragraph{gfpop}
Hocking et al.\cite{Hocking2020} proposed a univariate method for change point detection based dynamic programming.

\paragraph{HDCD}
Moen et al.\cite{Moen2023} proposed a sparsity adaptive method for change point detection.

\paragraph{HDDchangepoint}
Drikvandi and Modarres\cite{Drikvandi2025} proposed a nonparametric method for change point detection.

\paragraph{HDcpDetect}
Li et al.\cite{Li2019} proposed a multivariate method for change point detection.

\paragraph{decp}
Ryan and Killick\cite{Ryan2023} proposed a method for covariance change point detection based random matrix theory.

\paragraph{cpss}
Zou et al.\cite{Zou2020} proposed a sample-splitting based univariate method for change point detection.

~\

Simulation experiment for evaluating these methods are designed as follows: first generating samples from 4 univariate or multivariate normal distributions with different means, variances or both, and then combining these samples one after another to simulate a univariate or multivariate sequence with three change points. We applied the methods in Table \ref{tb:benchcpd} to the simulated data.

Six types of change points are simulated, including
\begin{enumerate}
	\item univariate change points on mean, variance, and mean-variance;
	\item multivariate change points on mean, variance, and mean-variance.
\end{enumerate}

In the six simulation experiments, the setting for mean and variance/covariance of the four normal distributions is listed in Table \ref{tb:benchcpdsimpara}. After repeating the simulation for each setting, we derived the mean of the number of change points detected by the methods as their performance measures. Experimental results (see Figure \ref{fig:benchcpd}) show that CE based method can detect the change points in all the six situations and perform better than others. Particularly, CE based method presents much better in the difficult cases with change points on variance/covariance. Note that CE based method is the only one among all that can be applied to all the six situations and also perform well. It is noteworthy that it is tuning-free.

\begin{table}
	\centering
	\caption{Implementations of the methods for change point detection in \textsf{R}.}
	\begin{tabular}{l|c|c|c|c|c|c}
		\toprule
		\multirow{2}{*}{\textbf{Package}}&\multicolumn{3}{c|}{Univariate}&\multicolumn{3}{c}{Multivariate}\\
		\cline{2-7}
		&Mean&Mean-Var&Var&Mean&Mean-Var&Var\\
		\midrule
		\texttt{copent}\cite{Ma2024a}&\checkmark&\checkmark&\checkmark&\checkmark&\checkmark&\checkmark\\
		\texttt{changepoint}\cite{Killick2014}&\checkmark&\checkmark&\checkmark&&&\\
		\texttt{ecp}\cite{James2015}&&&&\checkmark&\checkmark&\checkmark\\
		\texttt{rid}\cite{Fan2024}&\checkmark&\checkmark&\checkmark&\checkmark&\checkmark&\checkmark\\
		\texttt{CptNonPar}\cite{McGonigle2023}&\checkmark&\checkmark&\checkmark&\checkmark&\checkmark&\checkmark\\
		\texttt{npwbs}\cite{Ross2021}&\checkmark&\checkmark&\checkmark&&&\\
		\texttt{MFT}\cite{Messer2018}&\checkmark&&&&&\\
		\texttt{jcp}\cite{Messer2022}&\checkmark&\checkmark&\checkmark&&&\\
		\texttt{InspectChangepoint}\cite{Wang2017}&\checkmark&\checkmark&\checkmark&\checkmark&\checkmark&\checkmark\\
		\texttt{hdbinseg}\cite{Cho2015}&&&&\checkmark&\checkmark&\checkmark\\
		\texttt{changepoint.np}\cite{Killick2012}&\checkmark&\checkmark&\checkmark&&&\\
		\texttt{changepoint.geo}\cite{Grundy2020}&&&&\checkmark&\checkmark&\checkmark\\
		\texttt{mosum}\cite{Meier2021}&\checkmark&\checkmark&\checkmark&&&\\
		\texttt{SNSeg}\cite{Sun2025}&\checkmark&\checkmark&\checkmark&\checkmark&\checkmark&\checkmark\\
		\texttt{offlineChange}\cite{Ding2017}&&&&\checkmark&\checkmark&\checkmark\\
		\texttt{IDetect}\cite{Anastasiou2022}&\checkmark&&&&&\\
		\texttt{wbs}\cite{Fryzlewicz2014}&\checkmark&&&&&\\
		\texttt{breakfast}\cite{breakfast}&\checkmark&&&&&\\
		\texttt{mscp}\cite{Levajkovic2023}&\checkmark&&&&&\\
		\texttt{L2hdchange}\cite{Li2024}&&&&\checkmark&\checkmark&\checkmark\\
		\texttt{gfpop}\cite{Runge2023}&\checkmark&\checkmark&\checkmark&&&\\
		\texttt{HDCD}\cite{Moen2023,Pilliat2023}&\checkmark&\checkmark&\checkmark&\checkmark&\checkmark&\checkmark\\
		\texttt{HDDchangepoint}\cite{Drikvandi2025}&&&&\checkmark&\checkmark&\checkmark\\
		\texttt{HDcpDetect}\cite{Li2019}&&&&\checkmark&\checkmark&\checkmark\\
		\texttt{decp}\cite{Ryan2023}&&&&\checkmark&\checkmark&\checkmark\\
		\texttt{cpss}\cite{Zou2020}&\checkmark&\checkmark&\checkmark&&&\\
		\bottomrule
		
	\end{tabular}
	\label{tb:benchcpd}
\end{table}

\begin{table}
	\centering
	\caption{Setting for the normal distributions in simulation experiments for change point detection.}
	\begin{tabular}{l|c|c|c|c|c|c}
		\toprule
		\multirow{2}{*}{\textbf{Mean, Variance/Covariance}}&\multicolumn{3}{c|}{Univariate}&\multicolumn{3}{c}{Multivariate}\\
		\cline{2-7}
		&Mean&Mean-Var&Var&Mean&Mean-Var&Var\\
		\midrule
		$(\mu_1,\rho_1)$&(0,1)&(0,1)&(0,1)&(\textbf{0},0.2)&(\textbf{0},0.2)&(\textbf{0},0.2)\\
		$(\mu_2,\rho_2)$&(5,1)&(5,3)&(0,10)&(\textbf{10},0.2)&(\textbf{10},0.8)&(\textbf{0},0.8)\\
		$(\mu_3,\rho_3)$&(10,1)&(10,1)&(0,5)&(\textbf{5},0.2)&(\textbf{5},0.1)&(\textbf{0},0.1)\\
		$(\mu_4,\rho_4)$&(3,1)&(3,10)&(0,1)&(\textbf{1},0.2)&(\textbf{1},0.9)&(\textbf{0},0.9)\\
		\bottomrule
		
	\end{tabular}
	\label{tb:benchcpdsimpara}
\end{table}

\begin{figure}
	\centering
	\subfigure[Univariate mean]{\includegraphics[width=\textwidth]{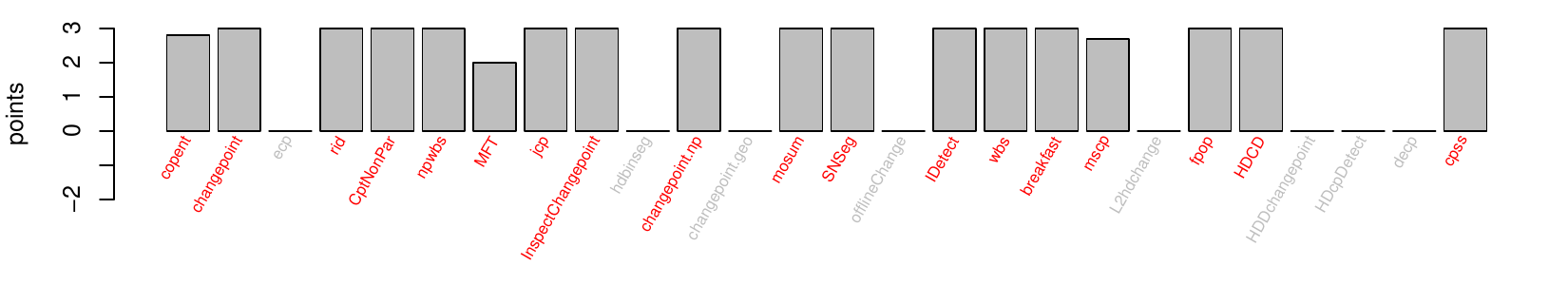}}
	\subfigure[Univariate mean-Var]{\includegraphics[width=\textwidth]{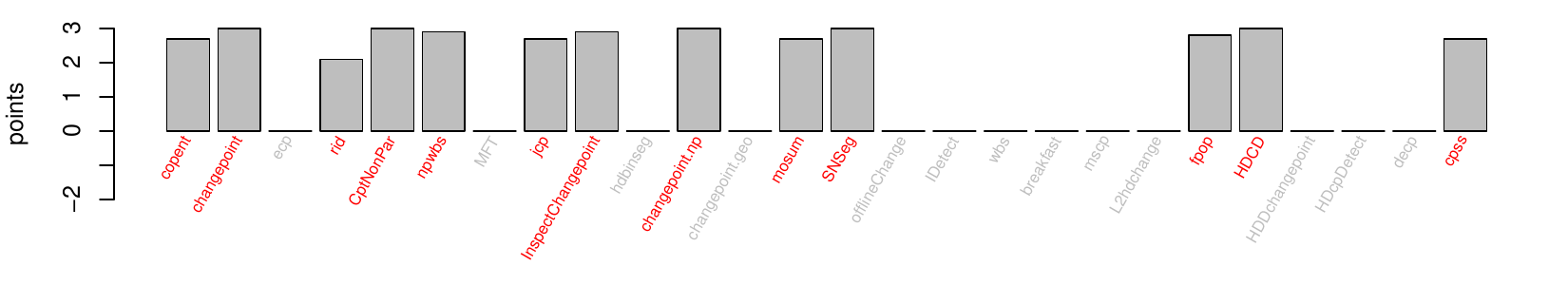}}
	\subfigure[Univariate var]{\includegraphics[width=\textwidth]{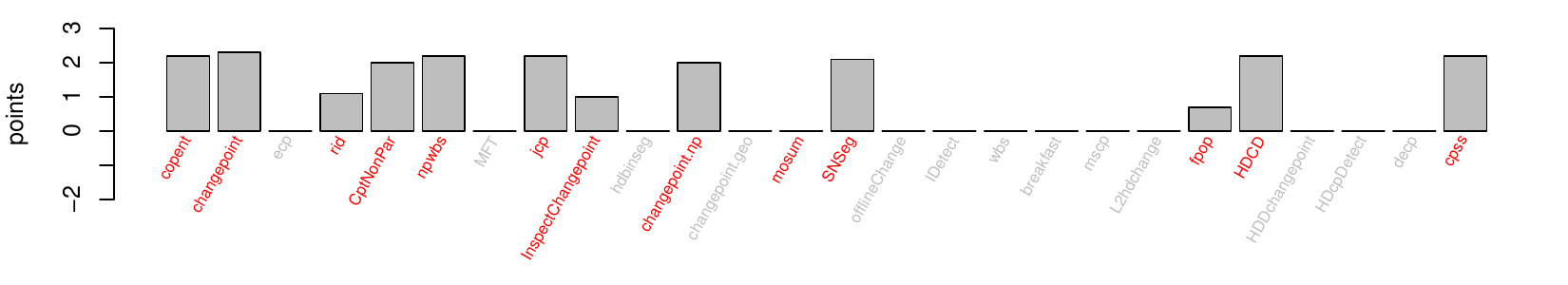}}
	\subfigure[Bivariate mean]{\includegraphics[width=\textwidth]{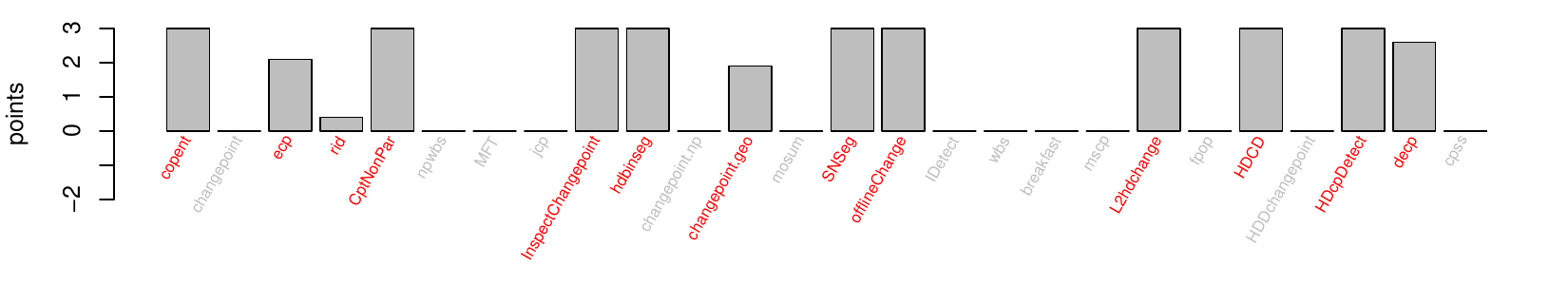}}
	\subfigure[Bivariate mean-var]{\includegraphics[width=\textwidth]{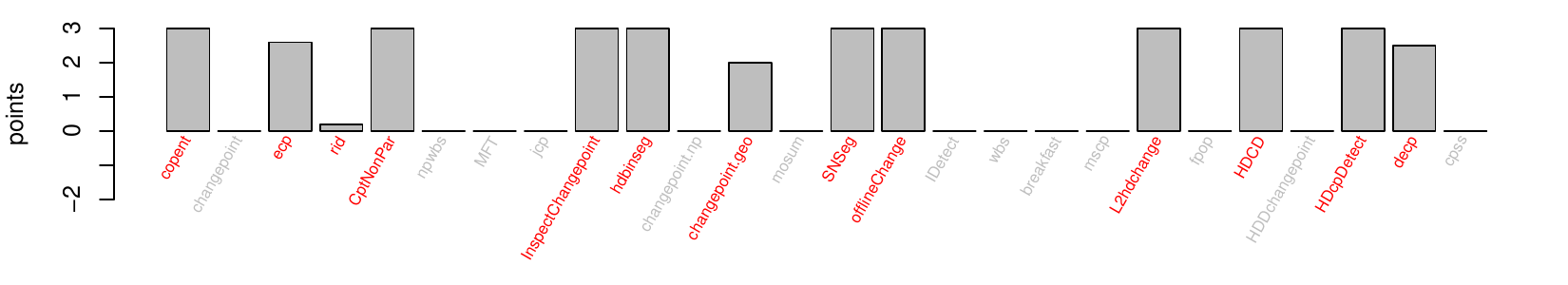}}
	\subfigure[Bivariate Var]{\includegraphics[width=\textwidth]{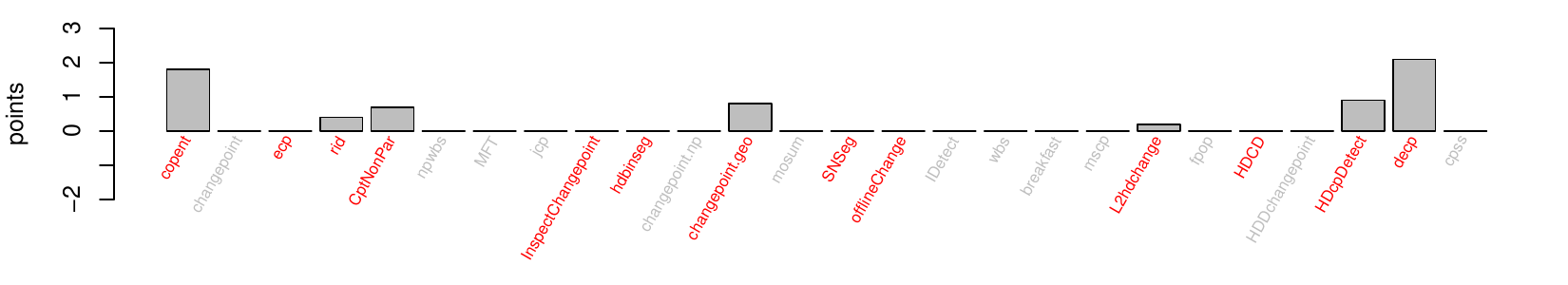}}
	\caption{Average number of detected change points by the methods evaluated in simulation experiments.}
	\label{fig:benchcpd}
\end{figure}

\section{Symmetry tests}
\label{s:symbench}
We designed three simulation experiments to verify the effectiveness of CE base symmetry test and compared it with the other tests\cite{Ma2025a}. The following tests implemented in the \texttt{symmetry} package in \textsf{R} are used for comparison:

\paragraph{MI} Mira test statistic \cite{Mira1999};

\paragraph{CM} Cabilio–Masaro test statistic \cite{Cabilio1996};
	
\paragraph{MGG} Miao-Gel-Gastwirth test statistic \cite{Miao2006};
	
\paragraph{B1} $\sqrt{b_1}$ test statistic \cite{Milosevic2019};
	
\paragraph{KS} Kolmogorov–Smirnov test statistic \cite{Milosevic2019};
	
\paragraph{SGN} sign test statistic\cite{Milosevic2019};
	
\paragraph{WCX} Wilcoxon test statistic \cite{Milosevic2019};
	
\paragraph{FM} the test statistic based on characteristic function\cite{Feuerverger1977};
	
\paragraph{RW} Rothman-Woodroofe test statistic \cite{Gaigall2020};
	
\paragraph{BHI} Litvinova test statistic \cite{Litvinova2001};
	
\paragraph{BHK} Baringhaus-Henze test statistic \cite{Baringhaus1992};
	
\paragraph{BH2} Baringhaus-Henze test statistic \cite{Baringhaus1992};
	
\paragraph{MOI and MOK} Milo\v{s}evi\'c-Obradovi\'c test statistic \cite{Milosevic2016};
	
\paragraph{NAI and NAK} Nikitin-Ahsanullah test statistic \cite{Nikitin2015};
	
\paragraph{K2 and K2U} Bo\v{z}in-Milo\v{s}evi\'c-Nikitin-Obradovi\'c  test statistic \cite{Bozin2020};
	
\paragraph{NAC1, NAC2, BHC1 and BHC2} Allison-Pretorius test statistic \cite{Allison2017}.

~\

Simulation experiments\footnote{The code is available at \url{https://github.com/majianthu/symmetry}} are based on three asymmetric distribution families, including

\paragraph{Beta distribution} Beta distributions are the continuous distribution on $[0,1]$ or $(0,1)$ with two parameters $a,b>0$ for symmetry control, defined as
\begin{equation}
	f(x;a,b)=\frac{\Gamma(a+b)}{\Gamma(a)\Gamma(b)}x^{a-1}(1-x)^{b-1},
	\label{eq:betafun}
\end{equation} 
where $\Gamma$ is Gamma function.

\paragraph{Asymmetric Laplace distribution} Asymmetric Laplace distributions generalize Laplace distribution and are defined as
\begin{equation}
	f(x;\mu,\delta,k)=\frac{1}{\delta(k+k^{-1})}e^{-(x-\mu)k^s s/\delta},
\end{equation}
where $s=sgn(x-\mu)$, $\mu,\delta$ is for location and scale, and $k$ is the parameter for symmetry control.

\paragraph{Bimodal normal distribution} Bimodal normal distribution is derived by mixture of two normal distributions $X_1\sim N(\mu_1,\delta_1)$ and $X_2\sim N(\mu_2,\delta_2)$ with a proportion parameter $p$ for symmetry control.

In the simulation with Beta distribution, nine samples are generated by $a+b=10$ and $b=1,\cdots,9$; in the simulation with asymmetric Laplace distribution, nine samples are generated with $\mu=0,\delta=1$ and $k=0.1,\cdots,0.9$; in the simulation with bimodal normal distribution, nine samples are generated with $\mu_1=0,\mu_2=5,\delta_1=\delta_2=1$ and $p=0.1,\cdots,0.9$ . The sample size is 300 for all. So we simulate the process from asymmetry to symmetry to asymmetry again.

Experimental results are shown in Figure \ref{fig:beta}, Figure \ref{fig:alaplace}, and Figure\ref{fig:binorm} from which one can learn that CE based test measures the change of symmetry of the simulated distributions correctly and performs better than others.

\begin{figure}
	\includegraphics[width=\textwidth]{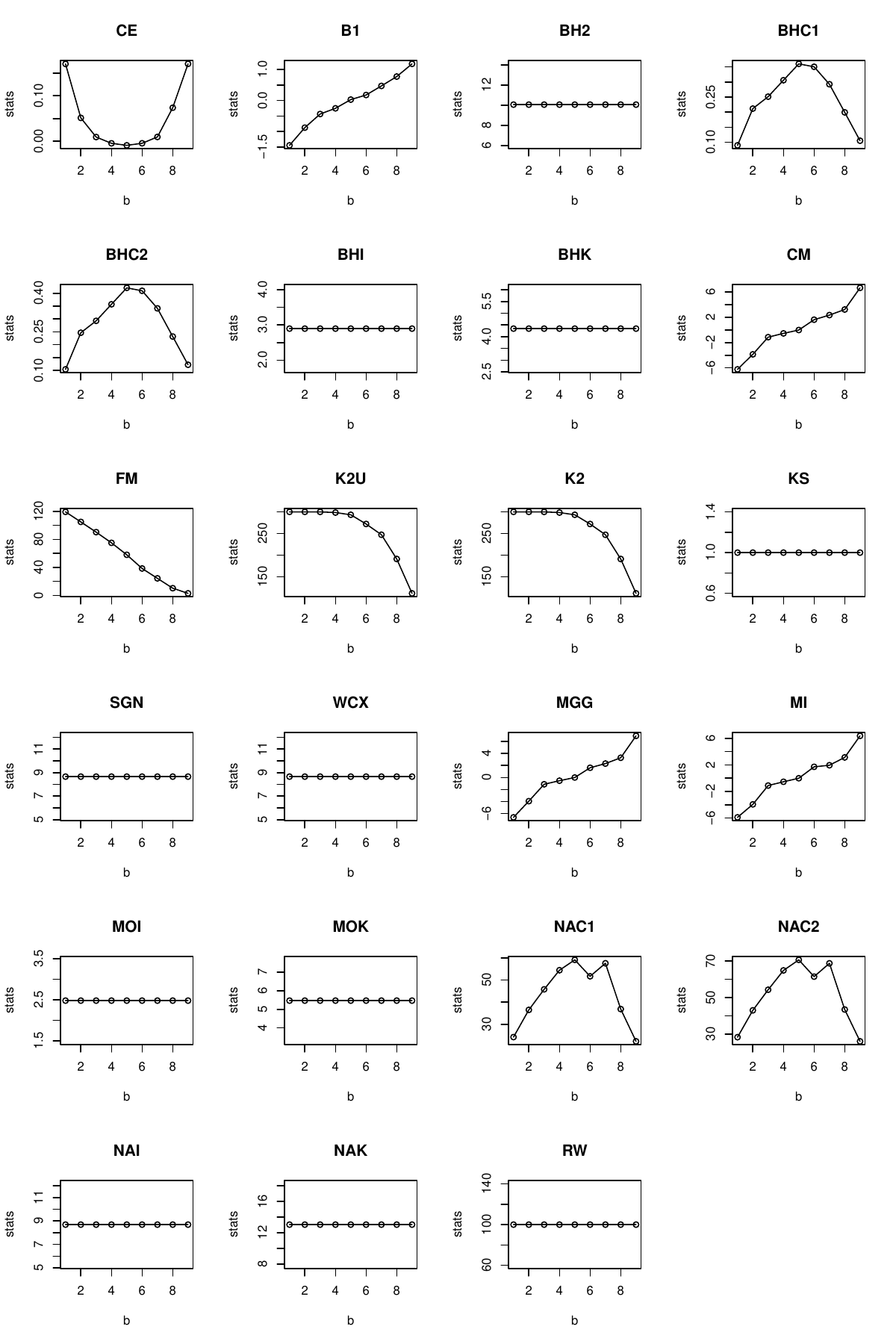}
	\caption{Results of the simulation experiments with Beta distribution.}
	\label{fig:beta}
\end{figure}

\begin{figure}
	\includegraphics[width=\textwidth]{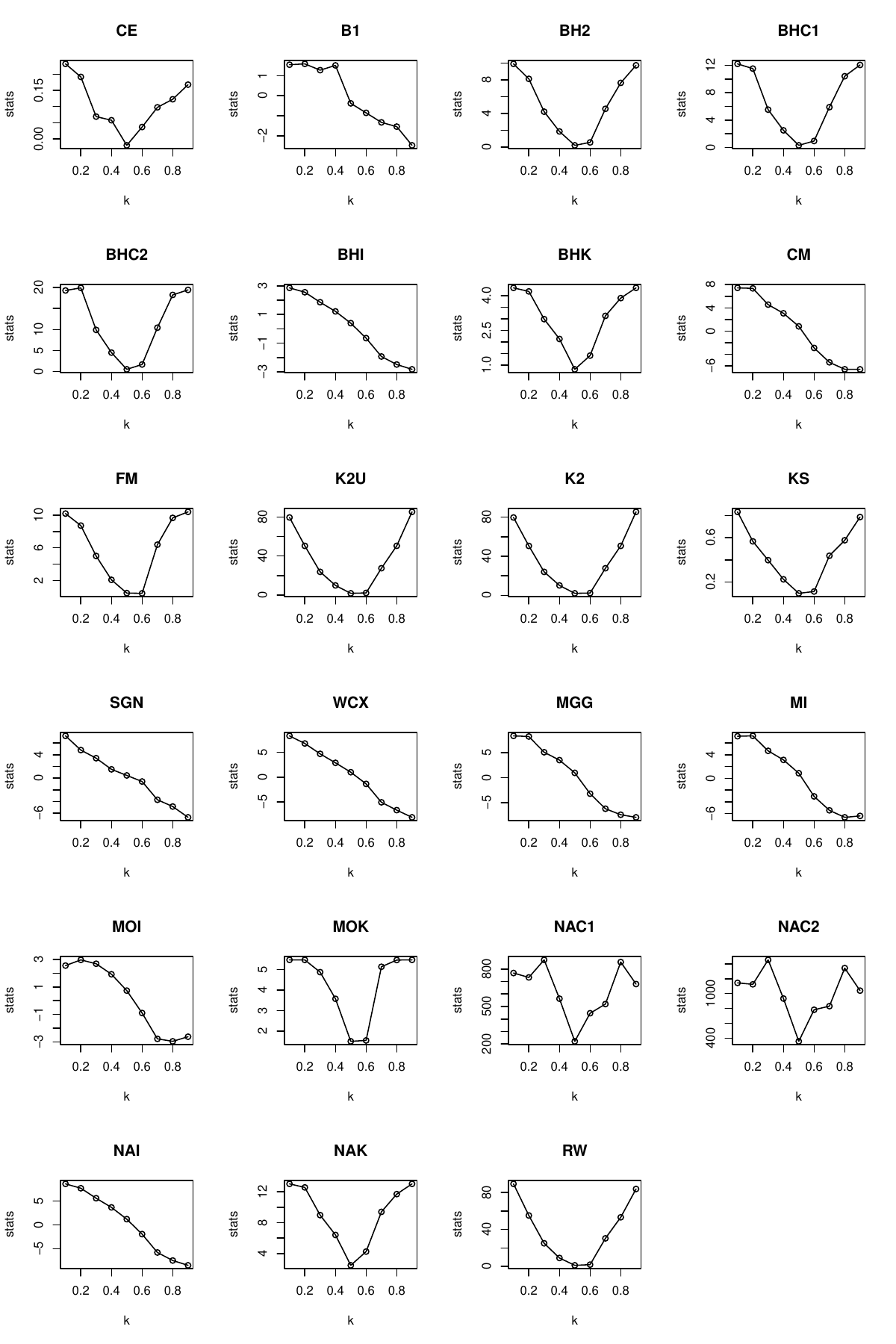}
	\caption{Results of the simulation experiments with asymmetric Laplace distribution.}
	\label{fig:alaplace}
\end{figure}

\begin{figure}
	\includegraphics[width=\textwidth]{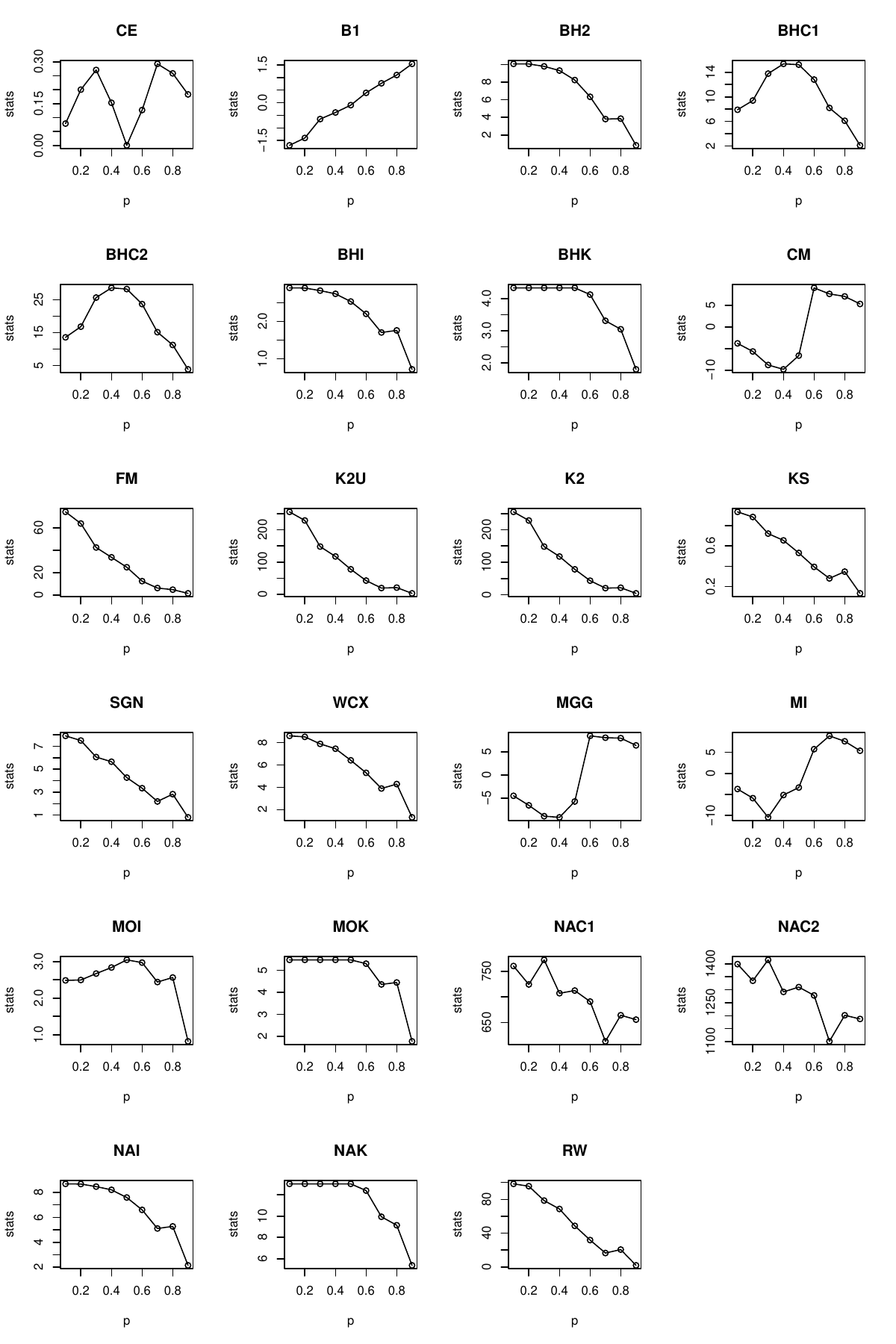}
	\caption{Results of the simulation experiments with bimodal normal distribution.}
	\label{fig:binorm}
\end{figure}

\chapter{Generalizations}
\label{chap:math}

\section{Tsallis copula entropy}
Shannon entropy has the properties, such as continuity, additivity, and symmetry\cite{Khinchin1957,Csiszar2008} and CE, as a special type of entropy, shares these properties as well. There are many generalized entropies, including Tsallis entropy \cite{Tsallis1988,Tsallis2011,Ilic2014} which is derived by extending additivity.

Given random variables $(X,Y)\sim P(x,y)$, Shannon entropy is defined as
\begin{equation}
	H(x,y)=-\int_{x}\int_{y} p(x,y)\log p(x,y)dxdy.
\end{equation}
Accordingly, the definition of Tsallis entropy is:
\begin{definition}[Tsallis Entropy]
	Given random variables $(X,Y)\sim P(x,y)$, Tsallis entropy is defined as
	\begin{equation}
		H_T(x,y)=\frac{1}{q-1}\big(1-\int_{x}\int_{y}p^q(x,y)dxdy\big),
		\label{eq:tsallis}
	\end{equation}
	where $q\in R$.	
\end{definition}

If $q\rightarrow 1$, then Tsallis entropy becomes Shannon entropy:
\begin{equation}
	\lim\limits_{q\rightarrow 1}H_T(x,y)=H(x,y).
\end{equation}

Mortezanejad et al. \cite{Mortezanejad2019} defined Tsallis CE by replacing probability density function $p(x,y)$ with copula density function $c(u,v)$ in \eqref{eq:tsallis}:
\begin{definition}[Tsallis Copula Entropy]
	Given random variables $(X,Y)$ and their copula density function $c(u,v)$, Tsallis copula entropy is defined as
	\begin{equation}
		H_{TC}(x,y)=\frac{1}{q-1}\big(1-\int_{u}\int_{v}c^q(u,v)dudv\big),
		\label{eq:tsallisce}
	\end{equation}
	where $q\in R$.
\end{definition}
If $q\rightarrow 1$, then Tsallis CE becomes CE:
\begin{equation}
	\lim\limits_{q\rightarrow 1}H_{TC}(x,y)=H_C(x,y).
\end{equation}

Based on the definition of Tsallis CE, Mortezanejad et al. proposed maximum Tsallis CE principle\cite{Mortezanejad2019}. Given
moments conditions 
\begin{align}
	\int_{u}\int_{v}c(u,v)dudv&=1,\label{eq:moment0}\\
	\int_{u}\int_{v}u^r c(u,v)dudv&=\frac{1}{r+1},\label{eq:moments1}\\
	\int_{u}\int_{v}v^r c(u,v)dudv&=\frac{1}{r+1}, \label{eq:moments2}
\end{align}
$r=1,\ldots,n$, and conditions of Spearman's $\rho$
\begin{equation}
	\int_{u}\int_{v}uv\ c(u,v)dudv=\frac{\rho+3}{12},
	\label{eq:rhoconstraint}
\end{equation}
Lagrange equation of maximum Tsallis CE is
\begin{align}
	L(c,\Lambda)=&\frac{1}{q-1}\big(1-\int_{u}\int_{v}c^q(u,v)dudv\big)\\
	&-\lambda_0\big(\int_{u}\int_{v}c(u,v)dudv-1\big)\\
	&-\sum_{r=1}^{n}\lambda_r^{u}\big(\int_{u}\int_{v}u^r c(u,v)dudv-\frac{1}{r+1}\big)\\
	&-\sum_{r=1}^{n}\lambda_r^{v}\big(\int_{u}\int_{v}v^r c(u,v)dudv-\frac{1}{r+1}\big)\\
	&-\lambda_{\rho}\big(\int_{u}\int_{v}uv\ c(u,v)dudv-\frac{\rho+3}{12}\big),
	\label{eq:tcelagrange}
\end{align}
where $\Lambda=(\lambda_0,\lambda_1^{u},\ldots,\lambda_n^{u},\lambda_1^{v},\ldots,\lambda_n^{v},\lambda_\rho)$.
By differentiating this equation with respect $c$, one can derive
\begin{equation}
	\frac{\partial L(c,\Lambda)}{\partial c}=-\frac{
	q}{q-1}c^{q-1} - \lambda_0 - \sum_{r=1}^n \lambda_r^u u^r - \sum_{r=1}^n \lambda_r^v v^r - \lambda_{\rho}uv=0.
\end{equation}
Then copula density function $c$ can be found out from it
\begin{equation}
	c(u,v)= \sqrt[q-1]{-\frac{q-1}{q}(\lambda_0+\sum_{r=1}^n(\lambda_r^{u}u^r+\lambda_r^{v}v^r)+\lambda_{\rho}uv)},
	\label{eq:tsalliscopula}
\end{equation}
where the elements of $\Lambda$ can be derived by inserting \eqref{eq:tsalliscopula} into \eqref{eq:moment0}, \eqref{eq:moments1}, \eqref{eq:moments2}, and \eqref{eq:rhoconstraint}.

\section{Survival copula entropy}
Survival function is a special type of distribution corresponding to cumulative distribution function. Given random variables $X,Y$, survival function is defined as
\begin{equation}
	\overline{F}(x,y)=P(X>x,Y>y).
\end{equation}
Accordingly, one can define bivariate survival copula as
\begin{equation}
	\overline{C}(u,v)=C(U>u,V>v).
\end{equation}
where $u,v$ are margins.

With survival function, Rao et al.\cite{Rao2004} defined CUmulative Residual Entropy (CURE):
\begin{definition}[Cumulative Residual Entropy]
	Given random variables $(X,Y)$ and their survival function $\overline{F}(x,y)$, cumulative residual entropy is defined as
	\begin{equation}
		H_{CURE}(x,y)=-\int_{x}\int_{y}\overline{F}(x,y)\log \overline{F}(x,y)dxdy.
	\end{equation}
\end{definition}

In a same way, Nair and Sunoj\cite{Sunoj2023} defined survival CE (SCE) by replacing copula density function in Definition \ref{def:ce} with survival copula:
\begin{definition}[Survival Copula Entropy]
	Given random variables $(X,Y)$ and their survival copula $\overline{C}(u,v)$, survival copula entropy is defined as
	\begin{equation}
		H_{SCE}(x,y)=-\int_{0}^{1}\int_{0}^{1}\overline{C}(u,v)\log \overline{C}(u,v)dudv.
		\label{eq:sce}
	\end{equation}
\end{definition}

\section{Cumulative copula entropy}
Cumulative distribution function and survival function are a pair of related concepts. Given random variable $X$, cumulative distribution function is defined as
\begin{equation}
	F(x)=P(X\leq x).
\end{equation}

With it, Di Crescenzo and Longobardi\cite{DiCrescenzo2009} defined CUmulative Entropy (CUE):
\begin{definition}[Cumulative Entropy]
	Given random variables $\mathbf{X}$ and their cumulative distribution function $F(\mathbf{x})$, cumulative entropy is defined as
	\begin{equation}
	H_{CUE}(\mathbf{x})=-\int_{\mathbf{x}}F(\mathbf{x})\log F(\mathbf{x})d\mathbf{x}.
\end{equation}
\end{definition}

Arshad et al.\cite{Arshad2024} defined Cumulative Copula Entropy (CCE):
\begin{definition}[Cumulative Copula Entropy]
	Given random variables $\mathbf{X}$ and their cumulative copula $C(\mathbf{u})$, cumulative copula entropy is defined as
	\begin{equation}
		H_{CCE}(\mathbf{x})=-\int_{\mathbf{u}}C(\mathbf{u})\log C(\mathbf{u})d\mathbf{u}.
	\end{equation}
\end{definition}

They then defined Fractional Cumulative Copula Entropy (FCCE):
\begin{definition}[Fractional Cumulative Copula Entropy]
	Given random variables $\mathbf{X}$ and their cumulative copula $C(\mathbf{u})$, fractional cumulative copula entropy is defined as
	\begin{equation}
		H_{FCCE}(\mathbf{x})=\int_{\mathbf{u}}C(\mathbf{u})(-\log C(\mathbf{u}))^r d\mathbf{u}, 
	\end{equation}
	where $0<r<1$.
\end{definition}

With empirical copula, they then defined Empirical Beta CCE:
\begin{definition}[Empirical Beta Cumulative Copula Entropy]
	Given random variables $\mathbf{X}$ and their empirical copula $\hat{C}(\mathbf{u})$, empirical beta cumulative copula entropy is defined as
	\begin{equation}
		H_{EBCCE}(\mathbf{x})=-\int_{\mathbf{u}}\hat{C}(\mathbf{u})\log \hat{C}(\mathbf{u}) d\mathbf{u}.
	\end{equation}
\end{definition}
Furthermore, fractional empirical beta CCE can be defined:
\begin{definition}[Fractional Empirical Beta Cumulative Copula Entropy]
	Given random variables $\mathbf{X}$ and their empirical copula $\hat{C}(\mathbf{u})$, fractional empirical beta cumulative copula entropy is defined as
	\begin{equation}
		H_{FEBCCE}(\mathbf{x})=\int_{\mathbf{u}}\hat{C}(\mathbf{u})( -\log \hat{C}(\mathbf{u}) )^r d\mathbf{u}, 
	\end{equation}
	where $0<r<1$.
\end{definition}

They also defined Cumulative Copula Kullback-Leibler Divergence (CCKL) and applied it to goodness of fit of copulas.

\begin{definition}[Cumulative Copula Kullback-Leibler Divergence]
	Given two copulas  $C_1(\mathbf{u}),C_2(\mathbf{u})$, cumulative copula Kullback-Leibler divergence is defined as
	\begin{equation}
		D_{CCKL}(C_1,C_2)=\int_{\mathbf{u}} C_1(\mathbf{u})\log \frac{C_1(\mathbf{u})}{C_2(\mathbf{u})} d\mathbf{u}- \frac{\rho_k(C_1)-\rho_k(C_2)}{2^k n(k)},
	\end{equation}
where $\rho_k(C)=n(k)(2^k\int_{\mathbf{u}}C(\mathbf{u})d\mathbf{u}-1)$ is k dimensional Spearman's $\rho$, $n(k)=\frac{k+1}{2^k-k-1}$.
\end{definition}

\section{Copula extropy}
Extropy is the dual to entropy\cite{Lad2015}. Given random variables $(X,Y)\sim P(x,y)$, Shannon entropy is
\begin{equation}
	H(x,y)=-\int_{x}\int_{y} p(x,y)\log p(x,y)dxdy,
\end{equation}
and its corresponding extropy is defined as\cite{Balakrishnan2022}
\begin{definition}[Extropy]
	Given random variables $(X,Y)\sim P(x,y)$, extropy is defined as
	\begin{equation}
		J(x,y)=\frac{1}{4}\int_{0}^{\infty}\int_{0}^{\infty}p^2(x,y)dxdy.
	\end{equation}
\end{definition}

Saha and Kayal\cite{Saha2023} defined Copula Extropy (CEx):
\begin{definition}[Copula Extropy]
	Given random variables $(X,Y)$ and their copula density function $c(u,v)$, copula extropy is defined as
	\begin{equation}
		J_c(x,y)= \frac{1}{4}\int_{0}^{1}\int_{0}^{1}c^2(u,v)dudv.
	\end{equation}
\end{definition}
Saha and Kayal also defined Cumulative Copula Extropy (CCEx):
\begin{definition}[Cumulative Copula Extropy]
   Given random variables $(X,Y)$ and their copula $C(u,v)$, cumulative copula extropy is defined as
	\begin{equation}
		J_C(x,y)=\frac{1}{4}\int_{0}^{1}\int_{0}^{1}C^2(u,v)dudv.
	\end{equation}
\end{definition}
They also defined Survival Copula  Extropy (SCEx):
\begin{definition}[Survival Copula Extropy]
	Given random variables $(X,Y)$ and their survival copula $\overline{C}(u,v)$, survival copula extropy is defined as
	\begin{equation}
		J_{\overline{C}}(x,y)=\frac{1}{4}\int_{0}^{1}\int_{0}^{1}\overline{C}^2(u,v)dudv.
	\end{equation}
\end{definition}

\section{Cumulative copula Tsallis entropy}
Given random variables $\mathbf{X}$, the definition of Tsallis entropy \eqref{eq:tsallis} can be written as follows\cite{Tsallis1988}:
\begin{equation}
	H_T(\mathbf{x})=\frac{1}{q-1}\int_{\mathbf{x}}\big(p(\mathbf{x})-p(\mathbf{x})^q\big)d\mathbf{x},
	\label{eq:tsallis2}
\end{equation}
where $q\in R$, and as the following form further:
\begin{equation}
	H_T(\mathbf{x})=-\int_{\mathbf{x}}p(\mathbf{x})\log_{q}(p(\mathbf{x}))d\mathbf{x},
\end{equation}
where $\log_q(r)=\frac{r^{q-1}-1}{q-1}, q>0, q\neq 1$.

Zachariah et al.\cite{Zachariah2025} defined Cumulative Copula Tsallis Entropy (CCTE):
\begin{definition}[Cumulative Copula Tsallis Entropy]
	Given random variables $\mathbf{X}$ and their copula $C(\mathbf{u})$, cumulative copula Tsallis entropy is defined as
	\begin{equation}
		H_{CCTE}(\mathbf{x})=-\int_{\mathbf{u}}C(\mathbf{u})\log_q(C(\mathbf{u}))d\mathbf{u},
	\end{equation}
	where $q>0, q\neq 1$.
\end{definition}

They then defined Empirical Cumulative Copula Tsallis Entropy (ECCTE):
\begin{definition}[Empirical Cumulative Copula Tsallis Entropy]
	Given random variables $\mathbf{X}$ and their copula $C(\mathbf{u})$ and empirical cumulative copula $\hat{C}(\mathbf{u})$, cumulative copula Tsallis entropy is defined as
	\begin{equation}
		H_{ECCTE}(\mathbf{x})=-\int_{\mathbf{u}}\hat{C}(\mathbf{u})\log_q(\hat{C}(\mathbf{u}))d\mathbf{u},
	\end{equation}
	where $q>0, q\neq 1$.
\end{definition}

They also defined cumulative copula Tsallis inaccuracy:
\begin{definition}[Cumulative Copula Tsallis Inaccuracy]
	Given random variables $\mathbf{X}_1,\mathbf{X}_2\in R^n$and their copula $C_1(\mathbf{u})$ and $C_2(\mathbf{u})$, cumulative copula Tsallis inaccuracy is defined as
	\begin{equation}
		D_{CCTI}(C_1,C_2)=-\int_{\mathbf{u}}C_1(\mathbf{u})\log_{q}C_2(\mathbf{u})d\mathbf{u}.
	\end{equation}
\end{definition}

Inspired by CCKL\cite{Arshad2024}, they also defined cumulative copula Tsallis divergence:
\begin{definition}[Cumulative Copula Tsallis Divergence]
	Given random variables $\mathbf{X}_1,\mathbf{X}_2\in R^n$ and their copula $C_1(\mathbf{u})$ and $C_2(\mathbf{u})$, cumulative copula Tsallis divergence is defined as
	\begin{equation}
		D_{CCTD}(C_1,C_2)=\int_{\mathbf{u}}C_1(\mathbf{u})\log_{q}\frac{C_1(\mathbf{u})}{C_2(\mathbf{u})} d\mathbf{u}-\frac{\rho_k(C_1)-\rho_k(C_2)}{2^k n(k)},
	\end{equation}
	where $\rho_k(C)=n(k)(2^k\int_{\mathbf{u}}C(\mathbf{u})d\mathbf{u}-1)$ is k dimensional Spearman's $\rho$, $n(k)=\frac{k+1}{2^k-k-1}$.
\end{definition}
It is easy to know that
\begin{equation}
	\lim\limits_{q\rightarrow 1}D_{CCTD}(C_1,C_2)=D_{CCKL}(C_1,C_2).
\end{equation}

Furthermore, they defined cumulative copula Tsallis MI:
\begin{definition}[Cumulative Copula Tsallis Mutual Information]
	Given random variables $\mathbf{X}$ and their copula $C(\mathbf{u})$ and corresponding independence copula $\Pi(\mathbf{u})=\prod_{i=1}^{k}u_i$, cumulative copula Tsallis mutual information is defined as
	\begin{equation}
		I_{CCTMI}(C)=D_{CCTD}(C,\Pi)=\int_{\mathbf{u}}C(\mathbf{u})\log_q \frac{C(\mathbf{u})}{\Pi(\mathbf{u})}d\mathbf{u}-\frac{\rho_k(C)}{2^k n(k)},
	\end{equation}
	where $\rho_k(C)=n(k)(2^k\int_{\mathbf{u}}C(\mathbf{u})d\mathbf{u}-1)$ is k dimensional Spearman's $\rho$, $n(k)=\frac{k+1}{2^k-k-1}$.
\end{definition}

\section{Copula R\'enyi entropy}
R\'enyi Entropy is a typical generalization of Shannon entropy, which was proposed by Alfr\'ed R\'enyi in 1961\cite{Renyi1961}. R\'enyi entropy is the general information measure that satisfies additivity with Hartley entropy Shannon entropy, collision entropy, and min-entropy as its special cases\cite{Ribeiro2021,Ozawa2024}. 

\begin{definition}[R\'enyi Entropy]
	Given random variables $\mathbf{X}$, R\'enyi entropy is defined as
	\begin{equation}
		H_R(\mathbf{x})=\frac{1}{1-\alpha}\log \int_{\mathbf{x}}p^\alpha (\mathbf{x}) d\mathbf{x},
		\label{eq:renyientropy}
	\end{equation}
	where $\alpha>0,\alpha\neq 1$.
\end{definition}

If $\alpha\rightarrow 1$, then R\'enyi entropy degenerates to Shannon entropy:
\begin{equation}
	\lim\limits_{\alpha\rightarrow 1}H_R(\mathbf{x})=H(\mathbf{x}).
\end{equation}

We can define Copula R\'enyi entropy:
\begin{definition}[Copula R\'enyi Entropy]
	Given random variables $\mathbf{X}$ and their copula density function $c(\mathbf{u})$, copula R\'enyi entropy is defined as
	\begin{equation}
		H_{CR}(\mathbf{x})=\frac{1}{1-\alpha}\log\int_{\mathbf{u}}c^\alpha(\mathbf{u})d\mathbf{u},
		\label{eq:copularenyientropy}
	\end{equation}
	where $\alpha>0,\alpha\neq 1$.
\end{definition}

If $\alpha\rightarrow 1$, then copula R\'enyi entropy becomes CE:
\begin{equation}
	\lim\limits_{\alpha\rightarrow 1}H_{CR}(\mathbf{x})=H_c(\mathbf{x}).
\end{equation}

Saha and Kayal\cite{Saha2025} defined Cumulative Copula R\'enyi Entropy (CCRE) and Survival Copula R\'enyi Entropy (SCRE):

\begin{definition}[Cumulative Copula R\'enyi Entropy]
	Given random variables $\mathbf{X}$ and cumulative copula $C(\mathbf{u})$, cumulative copula R\'enyi entropy is defined as
	\begin{equation}
		H_{CCRE}(\mathbf{x})=\frac{1}{1-\alpha}\log \int_{\mathbf{u}}C^\alpha(\mathbf{u})d\mathbf{u},
	\end{equation}
	where $\alpha>0,\alpha\neq 1$.
\end{definition}

\begin{definition}[Survival Copula R\'enyi Entropy]
	Given random variables $\mathbf{X}$ and their survival copula $\overline{C}(\mathbf{u})$, survival copula R\'enyi entropy is defined as
	\begin{equation}
		H_{SCRE}(\mathbf{x})=\frac{1}{1-\alpha}\log \int_{\mathbf{u}}\overline{C}^\alpha(\mathbf{u})d\mathbf{u},
	\end{equation}
	where $\alpha>0,\alpha\neq 1$.
\end{definition}

If $\alpha > 1$, then $H_{CCRE}(\mathbf{x})>0$; if $0<\alpha<1$, then $H_{CCRE}(\mathbf{x})<0$.

If $\alpha > 1$, then $H_{SCRE}(\mathbf{x})\geq 0$.

If $\alpha\rightarrow 1$, then CCRE and SCRE degenerate to CCE and SCE respectively:
\begin{equation}
	\lim\limits_{\alpha\rightarrow 1}H_{CCRE}(\mathbf{x})=H_{CCE}(\mathbf{x})
\end{equation}
and
\begin{equation}
	\lim\limits_{\alpha\rightarrow 1}H_{SCRE}(\mathbf{x})=H_{SCE}(\mathbf{x}).
\end{equation}

They also defined cumulative copula R\'enyi inaccuracy:
\begin{definition}[Cumulative Copula R\'enyi Inaccuracy]
	Given random variables $\mathbf{X},\mathbf{Y}\in R^n$ and their copula $C_\mathbf{X}(\mathbf{u}),C_\mathbf{Y}(\mathbf{u})$ and margins $F_i(x_i),G_i(y_i),i=1,\ldots,n$, cumulative copula R\'enyi inaccuracy is defined as
	\begin{equation}
		D_{CCRI}(C_\mathbf{X},C_\mathbf{Y})=\frac{1}{1-\alpha}\log\int_{\mathbf{u}}C_\mathbf{X}(\mathbf{u})\{C_\mathbf{Y}(G_i(F_i^-(u_i)))\}^{\alpha-1}d\mathbf{u},
	\end{equation}
	where $\alpha>0,\alpha\neq 1$.
\end{definition}
If $C_\mathbf{X}=C_\mathbf{Y}$, then $D_{CCRI}=H_{CCRE}$.

Similarly, they defined survival copula R\'enyi inaccuracy:
\begin{definition}[Survival Copula R\'enyi Inaccuracy]
	Given random variables $\mathbf{X},\mathbf{Y}\in R^n$ and their survival copula $\overline{C}_\mathbf{X}(\mathbf{u}),\overline{C}_\mathbf{Y}(\mathbf{u})$ and margins $\overline{F}_i(x_i),\overline{G}_i(y_i),i=1,\ldots,n$, survival copula R\'enyi inaccuracy is defined as
	\begin{equation}
		D_{SCRI}(\overline{C}_\mathbf{X},\overline{C}_\mathbf{Y})=\frac{1}{1-\alpha}\log\int_{\mathbf{u}}\overline{C}_\mathbf{X}(\mathbf{u})\{\overline{C}_\mathbf{Y}(\overline{G}_i(\overline{F}_i^-(u_i)))\}^{\alpha-1}d\mathbf{u},
	\end{equation}
	where $\alpha>0,\alpha\neq 1$.
\end{definition}
If $\overline{C}_\mathbf{X}=\overline{C}_\mathbf{Y}$, then $D_{SCRI}=H_{SCRE}$.

\section{Copula R\'enyi divergence and copula Tsallis divergence}
Divergence is a fundamental concept in information theory and probability. KL divergence is the most well known one among many divergences\cite{Kullback1951}, which is defined as follows:
\begin{definition}[Kullback-Leibler Divergence]
	Given probability density functions $p_X,p_Y$, KL divergence is defined as
	\begin{equation}
		D_{KL}(p_X,p_Y)=\int_{x}p_X(x)\log \frac{p_X(x)}{p_Y(x)}dx.
	\end{equation}	
\end{definition}
For any $p_X,p_Y$, $D_{KL}(p_X,p_Y)\geq 0$; if $p_X=p_Y$, then $D_{KL}(p_X,p_Y)=0$. $D_{KL}$ is asymmetric which means $D_{KL}(p_X,p_Y)\neq D_{KL}(p_Y,p_X)$.

Scholars have defined many types of divergences, including R\'enyi divergence\cite{Renyi1961} and Tsallis  divergence\cite{Borland1998,Furuichi2004}:
\begin{definition}[R\'enyi Divergence]
	Given probability density functions $p_X,p_Y$, R\'enyi divergence is defined as
	\begin{equation}
		D_R(p_X,p_Y)=\frac{1}{\alpha-1}\log\int_x p_X^\alpha(x)p_Y^{1-\alpha}(x)dx,
	\end{equation}
	where $\alpha>0,\alpha\neq 1$.
\end{definition}

\begin{definition}[Tsallis Divergence]
	Given probability density functions $p_X,p_Y$, Tsallis divergence is defined as
	\begin{equation}
		D_T(p_X,p_Y)=\frac{1}{q-1}\big(\int_x p_X^q(x)p_Y^{1-q}(x)dx-1\big),
	\end{equation}
	where $q>0,q\neq 1$.
\end{definition}

KL divergence is a special case of both R\'enyi divergence and Tsallis divergence\cite{Renyi1961,Borland1998}:
\begin{equation}
	\lim\limits_{\alpha\rightarrow 1}D_R(p_X,p_Y)=\lim\limits_{q\rightarrow 1}D_T(p_X,p_Y)=D_{KL}(p_X,p_Y).
\end{equation}

Given random variables $(X,Y)$, KL divergence is related to MI and CE:
\begin{equation}
	D_{KL}(p_{XY},p_Xp_Y)=I(x,y)=-H_c(x,y).
\end{equation}

Mohammadi and Emadi\cite{Mohammadi2023a} defined copula R\'enyi divergence and copula Tsallis divergence that extend R\'enyi divergence and Tsallis divergence with copulas, which are defined as follows:
\begin{definition}[Copula R\'enyi Divergence]
	Given random variables $(X,Y)$ and their copula density function $c(u,v)$, copula R\'enyi divergence is defined as
	\begin{equation}
		D_{CR}(c_{XY})=\frac{1}{\alpha-1}\log\int_{u}\int_{v}c^\alpha(u,v)dudv,
		\label{eq:crdiv}
	\end{equation}
	where $\alpha>0,\alpha\neq 1$.
\end{definition}
\begin{definition}[Copula Tsallis Divergence]
	Given random variables $(X,Y)$ and their copula density function $c(u,v)$, copula Tsallis divergence is defined as
	\begin{equation}
		D_{CT}(c_{XY})=\frac{1}{q-1}\big(\int_{u}\int_{v}c^q(u,v)dudv-1\big),
		\label{eq:ctdiv}
	\end{equation}
	where $q>0,q\neq 1$.
\end{definition}

They gave the relation between these two divergences:
\begin{equation}
	D_{CT}(c_{XY})=\frac{1}{\alpha-1}\big(e^{(\alpha-1)D_{CR}(c_{XY})}-1\big).
\end{equation}

They also proved that
\begin{enumerate}
	\item $D_{CR}(c_{XY})\geq 0, D_{CT}(c_{XY})\geq 0$;
	\item if $X,Y$ are independent, then $D_{CR}(c_{XY})=D_{CT}(c_{XY})=0$; 
	\item $D_{CR}(c_{XY})$ and $D_{CT}(c_{XY})$ are invariant to monotonic transformations.
\end{enumerate}

These two copula divergences relate to copula R\'enyi entropy and Tsallis copula entropy respectively:
\begin{equation}
	D_{CR}(c_{XY})=-H_{CR}(x,y),
\end{equation}
and 
\begin{equation}
	D_{CT}(c_{XY})=-H_{TC}(x,y).
\end{equation}

\section{Copula Jeffreys divergence and copula Hellinger divergence}
\label{sec:jeffreys-hellinger}
KL divergence is asymmetry while Jeffreys divergence\cite{Jeffreys1946} and Hellinger divergence\cite{Hellinger1909} are two symmetric divergences in information theory, which both belong to $f$-divergence family\cite{Csiszar1963,Csiszar1967,Ali1966}.
\begin{definition}[Jeffreys Divergence]
	Given probability density functions $p_X,p_Y$, Jeffreys divergence is defined as
	\begin{equation}
		D_J(p_X,p_Y)=\int_x (p_X(x)-p_Y(x))(\log p_X(x)-\log p_Y(x))dx.
	\end{equation}	
\end{definition}

\begin{definition}[Hellinger Divergence]
	Given probability density functions $p_X,p_Y$, Hellinger divergence is defined as
	\begin{equation}		
		D_H(p_X,p_Y)=\int_x (\sqrt{p_X(x)}-\sqrt{p_Y(x)})^2 dx.
	\end{equation}	
\end{definition}

Mohammadi et al.\cite{Mohammadi2021} defined copula Jeffreys divergence and copula Hellinger divergence from $D_J(p_{XY},p_X p_Y)$ and $D_H(p_{XY},p_X p_Y)$:
\begin{definition}[Copula Jeffreys Divergence]
	Given random variables $(X,Y)$ and their copula density function $c(u,v)$, copula Jeffreys divergence is defined as
	\begin{equation}
		D_{CJ}(c_{XY})=\int_u \int_v (c(u,v)-1)\log c(u,v)dudv.
	\end{equation}
\end{definition}
\begin{definition}[Copula Hellinger Divergence]
	Given random variables $(X,Y)$ and their copula density function $c(u,v)$, copula Hellinger divergence is defined as
	\begin{equation}
		D_{CH}(c_{XY})=\int_u\int_v (\sqrt{c(u,v)}-1)^2 dudv.
	\end{equation}
\end{definition}

\section{Copula inaccuracy}
Inaccuracy\cite{Kerridge1961} was proposed for measuring the distance between distributions, which is defined as follows:
\begin{definition}[Inaccuracy]
	Given random variables $\mathbf{X},\mathbf{Y}\in R^n$ and their density functions $p_\mathbf{X},p_\mathbf{Y}$, inaccuracy is defined as
	\begin{equation}
		D_I(p_\mathbf{X},p_\mathbf{Y})=-\int_{\mathbf{x}}p_\mathbf{X}(\mathbf{x})\log p_\mathbf{Y}(\mathbf{x})d\mathbf{x}.
		\label{eq:di}
	\end{equation}
\end{definition}
Inaccuracy is related to entropy as $D_I(p_\mathbf{X},p_\mathbf{Y})\geq H(\mathbf{x})$. Obviously, if $p_\mathbf{X}=p_\mathbf{Y}$, then $D_I(p_\mathbf{X},p_\mathbf{Y})= H(\mathbf{x})$.

Inaccuracy is related to KL divergence as
\begin{equation}
	D_{KL}(p_\mathbf{X},p_\mathbf{Y})=D_I(p_\mathbf{X},p_\mathbf{Y})-H(\mathbf{x}).
\end{equation}

By replacing $p_\mathbf{X},p_\mathbf{Y}$ in \eqref{eq:di} with $c_\mathbf{X},c_\mathbf{Y}$, we can define copula inaccuracy:
\begin{definition}[Copula Inaccuracy]
	Given random variables $\mathbf{X},\mathbf{Y}\in R^n$ and their copula density functions $c_\mathbf{X}(\mathbf{u}),c_\mathbf{Y}(\mathbf{u})$, copula inaccuracy is defined as
	\begin{equation}
		D_{CI}(c_\mathbf{X},c_\mathbf{Y})=\int_{\mathbf{u}}c_\mathbf{X}(\mathbf{u})\log c_\mathbf{Y}(\mathbf{u}) d\mathbf{u}.
	\end{equation}
\end{definition}

So, if $c_\mathbf{X}=c_\mathbf{Y}$, then copula inaccuracy becomes CE.

We can further define cumulative copula inaccuracy:
\begin{definition}[Cumulative Copula Inaccuracy]
	Given random variables $\mathbf{X},\mathbf{Y}\in R^n$ and their cumulative copula $C_\mathbf{X}(\mathbf{u}),C_\mathbf{Y}(\mathbf{u})$, cumulative copula inaccuracy is defined as
	\begin{equation}
		D_{CCI}(C_\mathbf{X},C_\mathbf{Y})=\int_{\mathbf{u}}C_\mathbf{X}(\mathbf{u})\log C_\mathbf{Y}(\mathbf{u}) d\mathbf{u}.
	\end{equation}
\end{definition}

So, if $C_\mathbf{X}=C_\mathbf{Y}$, then cumulative copula inaccuracy becomes cumulative CE.

We mentioned CCTI\cite{Zachariah2025}, and CCRI and SCRI\cite{Saha2025} above. It can be easily proved that
\begin{equation}
	\lim\limits_{q\rightarrow 1}D_{CCTI}(C_\mathbf{X},C_\mathbf{Y})=\lim\limits_{\alpha\rightarrow 1}D_{CCRI}(C_\mathbf{X},C_\mathbf{Y})=D_{CCI}(C_\mathbf{X},C_\mathbf{Y}).
\end{equation}

Pandey and Kundu\cite{Pandey2025} defined Cumulative Copula Fractional Inaccuracy (CCFI):
\begin{definition}[Cumulative Copula Fractional Inaccuracy]
	Given random variables $\mathbf{X},\mathbf{Y}\in R^n$ and their cumulative copula $C_\mathbf{X}(\mathbf{u}),C_\mathbf{Y}(\mathbf{u})$ and margins $F_i(x_i),G_i(y_i),i=1,\ldots,n$, cumulative copula fractional inaccuracy is defined as
	\begin{equation}
		D_{CCFI}(C_\mathbf{X},C_\mathbf{Y})=\int_{\mathbf{u}}C_\mathbf{X}(\mathbf{u})\big[-\log C_\mathbf{Y}(G_i(F_i^-(u_i)))\big]^r d\mathbf{u},
	\end{equation}
	where $0<r<1$.
\end{definition}
So, if $C_\mathbf{X}=C_\mathbf{Y}$, then $D_{CCFI}=H_{FCCE}$.

They further defined Survival Copula Fractional Inaccuracy (SCFI):
\begin{definition}[Survival Copula Fractional Inaccuracy]
	Given random variables $\mathbf{X},\mathbf{Y}\in R^n$ and their survival copula $\overline{C}_\mathbf{X}(\mathbf{u}),\overline{C}_\mathbf{Y}(\mathbf{u})$ and margins $\overline{F}_i(x_i),\overline{G}_i(y_i),i=1,\ldots,n$, survival copula fractional inaccuracy is defined as
	\begin{equation}
		D_{SCFI}(\overline{C}_\mathbf{X},\overline{C}_\mathbf{Y})=\int_{\mathbf{u}}\overline{C}_\mathbf{X}(\mathbf{u})\big[-\log \overline{C}_\mathbf{Y}(\overline{G}_i(\overline{F}_i^-(u_i)))\big]^r d\mathbf{u},
	\end{equation}
	where $0<r<1$.
\end{definition}

\section{Copula information measures}
Borzadaran and Amini\cite{Borzadaran2010} extended a groups of information measures with copulas. Besides Jeffreys divergence and Hellinger divergence mentioned in Section \ref{sec:jeffreys-hellinger}, the other measures include:

\begin{definition}[$\chi^2$-Divergence]
	Given probability density function $f(x,y)$ and $g(x,y)$, $\chi^2$ divergence is defined as
	\begin{equation}
		D_{\chi^2}(f,g)=\int_x \int_y \frac{[f(x,y)-g(x,y)]^2}{f(x,y)}dxdy.		
	\end{equation}	
\end{definition}

\begin{definition}[$\alpha$-Divergence]
	Given probability density function $f(x,y)$ and $g(x,y)$, $\alpha$ divergence is defined as
	\begin{equation}
		D_{\alpha}(f,g)=\frac{1}{1-\alpha^2}\int_x \int_y \left\{1-\left[\frac{g(x,y)}{f(x,y)}\right]^{\frac{1+\alpha}{2}}\right\}f(x,y)dxdy.		
	\end{equation}	
\end{definition}

\begin{definition}[Combination of Version of $\alpha$-Divergence]
	Given probability density function $f(x,y)$ and $g(x,y)$, combination of version of $\alpha$-divergence is defined as
	\begin{equation}
		D_{C\alpha}(f,g)=\frac{4}{\beta^2}\int_x \int_y \left\{1-\left[\frac{f(x,y)}{g(x,y)}\right]^{\frac{\beta}{2}}\right\}^2 g(x,y)dxdy.		
	\end{equation}	
\end{definition}

\begin{definition}[Lei-Wang Divergence]
	Given probability density function $f(x,y)$ and $g(x,y)$, Lei-Wang divergence is defined as
	\begin{equation}
		D_{LW}(f,g)=\int_x \int_y f(x,y)\log \frac{2f(x,y)}{f(x,y)+g(x,y)} dxdy.		
	\end{equation}	
\end{definition}

\begin{definition}[Power Divergence]
	Given probability density function $f(x,y)$ and $g(x,y)$, Power divergence is defined as
	\begin{equation}
		D_{Po}(f,g)=\frac{1}{\lambda(\lambda+1)}\int_x \int_y \left\{\left[\frac{f(x,y)}{g(x,y)}\right]^\lambda -1\right\}f(x,y)dxdy.		
	\end{equation}	
\end{definition}

\begin{definition}[Bhattacharyya Distance]
	Given probability density function $f(x,y)$ and $g(x,y)$, Bhattacharyya distance is defined as
	\begin{equation}
		D_{Bh}(f,g)=\int_x \int_y \sqrt{f(x,y)g(x,y)} dxdy.		
	\end{equation}	
\end{definition}

\begin{definition}[Harmonic Distance]
	Given probability density function $f(x,y)$ and $g(x,y)$, harmonic distance is defined as
	\begin{equation}
		D_{Ha}(f,g)=\int_x \int_y \frac{2f(x,y)g(x,y)}{f(x,y)+g(x,y)} dxdy.		
	\end{equation}	
\end{definition}

\begin{definition}[Triangular Discrimination]
	Given probability density function $f(x,y)$ and $g(x,y)$, triangular discrimination distance is defined as
	\begin{equation}
		D_{\Delta}(f,g)=\int_x \int_y \frac{[f(x,y)-g(x,y)]^2}{f(x,y)+g(x,y)} dxdy.		
	\end{equation}	
\end{definition}

\begin{definition}[Relative Information Generating Function]
	Given probability density function $f(x,y)$ and $g(x,y)$, relative information generating function is defined as
	\begin{equation}
		D_{Ri}(f,g)=\int_x \int_y \left[\frac{f(x,y)}{g(x,y)}\right]^{t-1}f(x,y) dxdy,		
	\end{equation}
	where $t\geq 1$.
\end{definition}

Through the divergence between $p_{XY}$ and $p_X p_Y$, they defined the following copula information measures with copula density function $c_{XY}$:

\begin{definition}[Copula $\chi^2$-Divergence]
	Given random variables $(X,Y)$ and their copula density function $c(u,v)$, copula $\chi^2$ divergence is defined as
	\begin{equation}
		D_{\chi^2}(c_{XY})=\int_u \int_v \frac{[c(u,v)-1]^2}{c(u,v)}dudv.		
	\end{equation}	
\end{definition}

\begin{definition}[Copula $\alpha$-Divergence]
	Given random variables $(X,Y)$ and their copula density function $c(u,v)$, copula $\alpha$ divergence is defined as
	\begin{equation}
		D_{\alpha}(c_{XY})=\frac{1}{1-\alpha^2}\int_u \int_v \left\{1-[c(u,v)]^{-\frac{\alpha+1}{2}} \right\}c(u,v)dudv.		
	\end{equation}	
\end{definition}

\begin{definition}[Copula Combination of Version of $\alpha$-Divergence]
	Given random variables $(X,Y)$ and their copula density function $c(u,v)$, copula combination of version of $\alpha$-divergence is defined as
	\begin{equation}
		D_{C\alpha}(c_{XY})=\frac{4}{\beta^2}\int_u \int_v [1-c^\frac{\beta}{2}(u,v)]^2 dudv.			
	\end{equation}	
\end{definition}

\begin{definition}[Copula Lei-Wang Divergence]
	Given random variables $(X,Y)$ and their copula density function $c(u,v)$, copula Lei-Wang divergence is defined as
	\begin{equation}
		D_{LW}(c_{XY})=\int_u \int_v c(u,v)\log\left[\frac{2c(u,v)}{c(u,v)+1}\right]dudv.		
	\end{equation}	
\end{definition}

\begin{definition}[Copula Power Divergence]
	Given random variables $(X,Y)$ and their copula density function $c(u,v)$, copula power divergence is defined as
	\begin{equation}
		D_{Po}(c_{XY})=\frac{1}{\lambda(\lambda+1)}\int_u \int_v [c^\lambda(u,v) -1]c(u,v) dudv.		
	\end{equation}	
\end{definition}

\begin{definition}[Copula Bhattacharyya Distance]
	Given random variables $(X,Y)$ and their copula density function $c(u,v)$, copula Bhattacharyya distance is defined as
	\begin{equation}
		D_{Bh}(c_{XY})=\int_u \int_v \sqrt{c(u,v)} dudv.		
	\end{equation}	
\end{definition}

\begin{definition}[Copula Harmonic Distance]
	Given random variables $(X,Y)$ and their copula density function $c(u,v)$, copula harmonic distance is defined as
	\begin{equation}
		D_{Ha}(c_{XY})=\int_u \int_v \left[\frac{2c(u,v)}{c(u,v)+1} \right] dudv.		
	\end{equation}	
\end{definition}

\begin{definition}[Copula Triangular Discrimination]
	Given random variables $(X,Y)$ and their copula density function $c(u,v)$, copula triangular discrimination distance is defined as
	\begin{equation}
		D_{\Delta}(c_{XY})=\int_u \int_v \frac{[c(u,v)-1]^2}{1+c(u,v)} dudv.		
	\end{equation}	
\end{definition}

\begin{definition}[Copula Relative Information Generating Function]
	Given random variables $(X,Y)$ and their copula density function $c(u,v)$, copula relative information generating function is defined as
	\begin{equation}
		D_{Ri}(c_{XY})=\int_u \int_v c^t(u,v) dudv,		
	\end{equation}
	where $t\geq 1$.
\end{definition}

\chapter{Real Applications}
\label{chap:realapp}
\section{Theoretical physics}
Thermodynamics is an ancient branch of theoretical physics, established in the 19th century by Clausius, Boltzmann, Gibbs, and others. It studies the theoretical connection between the macroscopic state (such as temperature) and the microscopic state of physical systems. Entropy and the second law of thermodynamics are its core theoretical contents. Shannon's information theory was inspired by the concept of entropy in thermodynamics. The theoretical connection between thermodynamics and information theory has always been an important topic in related fields. CE is a mathematical concept proposed from the field of information theory, and its physical meaning and interpretation have remained largely unexplored. Ma \cite{ma2021thermodynamic} applied CE theory to the derivation and calculation of entropy in equilibrium correlated particle systems, providing a thermodynamic interpretation of CE and establishing another theoretical connection between thermodynamics and information theory.

\section{Astrophysics}
Redshift is one of the most important pieces of information about celestial bodies in the universe, reflecting their cosmic distance from Earth and useful for studying galactic evolution and cosmology. Photometric redshift is a method for estimating the redshift of celestial bodies from optical observations. Because optical observations are easier to perform than spectroscopic observations, photometric redshift is one of the main methods in modern astronomical surveys; generally, spectroscopic observations of celestial bodies of interest are performed after obtaining photometric redshift information. Machine learning methods have become one of the main methods for constructing photometric redshift prediction models, but their prediction accuracy still needs improvement. Ma \cite{Ma2023c} proposed using a variable selection method based on CE to construct such estimation models to improve the accuracy of prediction models. This method first estimates the CE between optical observations and redshift as a measure of the importance of the observed variables, and then uses the important observed variables as inputs to the model to predict the redshift. He applied the method to quasar observation data from the Sloan Sky Survey, and the results showed that the model obtained using CE selection was more accurate than the unselected model, especially for high redshift ($z>4$), where the prediction accuracy was significantly improved. \footnote{The code is available at \url{https://github.com/majianthu/quasar}} Simultaneously, the method also selected a set of interpretable optical observation variables, including luminosity magnitude, ultraviolet brightness and standard deviation, and brightness in four other bands, providing scientific evidence for further astrophysical research and the design of optical observation instruments.

\section{Space science}
The ionosphere is the region of the atmosphere from 50 km to 1000 km altitude composed of a large number of ions and free electrons. Its physical properties are influenced by solar radiation activity and also affect microwave propagation through it. Total Electron Content (TEC) is a key physical parameter of the ionosphere, significantly impacting space-based satellite communication systems and GPS. Doppler Frequency Shift (DFS) is a technique for detecting ionospheric variations, estimated from the energy spectrum of transmitted signals, and has been shown to reflect short-term ionospheric changes. Studying the complex relationship between TEC and DFS can deepen our understanding of ionospheric characteristics and is of great significance to the operation of related space systems. However, this relationship is nonlinearly complex, posing a challenge to research. TE, as a mathematical tool for studying nonlinear causal relationships, is an ideal method for investigating such relationships. Akerele et al. \cite{Akerele2024} used a set of analytical tools to study the nonlinear relationship between TEC and DFS in high-frequency microwave signal transmission in the equatorial region, utilizing CE based TE as a causal relationship analysis tool. They analyzed the aforementioned relationship using high-frequency Doppler system measurement data from Abuja and Lagos, Nigeria, in 2020-2021, and NASA's TEC data. The results showed that TE analysis revealed that changes in TEC lead to significant changes in DFS, and the strength of this causal relationship varies with seasonal ionospheric changes, while correlation analysis did not reveal this causal relationship. This conclusion has important reference value for the management of high-frequency microwave communication systems.

\section{Geology}
Lithofacies is a fundamental concept in geology, referring to rock types with similar characteristics formed in a specific sedimentary environment. Lithofacies classification is an important issue in geological surveys and exploration, and is of great significance for tasks such as lithofacies formation evaluation and reserve identification. With the increase in geological data, the use of machine learning methods for lithofacies classification has become an important topic and has received increasing attention. However, most existing studies directly use machine learning methods to construct classification models without selecting the appropriate models, leading to models that need improvement. Furthermore, since the classifiers used are mostly black-box models, they lack interpretability, making the resulting models difficult for geologists to understand. Ma \cite{Ma2025} proposed using CE to construct a lithofacies classification model. First, CE is used to calculate the correlation between geological variables and lithofacies, and then variables are selected based on the correlation strength to obtain the final classification model. He applied this method to well logging lithofacies classification data obtained from the widely studied Kansas Geological Survey in the United States, constructing a lithofacies classification model after selecting geological variables. Experimental results show that this method can select fewer geological variables as input to the classification model without reducing classification performance. In addition, this method selects variables closely related to lithofacies, such as marine facies labeling and well logging depth, as important variables, so that the resulting classification model conforms to geological knowledge and is interpretable.

\section{Geophysics}
Land aridity is an attribute of the dynamic interaction between land surface moisture and energy. Traditional aridity measures mostly use long-term mean values ​​of climatic conditional variables for calculation, which are difficult to reflect short-term surface moisture-energy interactions. Evapotranspiration is a key variable characterizing short-term surface water vapor-energy interactions, including the dissipation of moisture from the land and plant surfaces. Traditionally, it is classified into three conceptual frameworks based on soil moisture and energy supply: water-driven, energy-driven, and transitional. Studies have shown that the evapotranspiration-soil moisture relationship is also influenced by other factors, such as cloud cover, wind speed, and vegetation. Considering how these factors affect evapotranspiration provides the possibility for developing new land aridity classification frameworks. Shan et al. \cite{Shan2024,Shan2025master} proposed a new method to characterize land aridity by considering the short-term coupling effect of land and atmosphere. This method uses conditional mutual information based on CE to calculate the causal relationship strength between evapotranspiration and soil moisture and solar radiation, and then uses the difference between these two causal relationships to classify land aridity into six types, corresponding to three evapotranspiration conceptual frameworks. Based on hourly records of air temperature, dew point temperature, soil moisture, potential heat flow, sensitive heat flow, evapotranspiration, and surface solar radiation in mainland China during the summers of 1990-2020, this method was used to obtain a spatial distribution map of land aridity. The map was compared with the aridity index of the United Nations Environment Programme, revealing that the conditional mutual information distribution map calculated by this method is consistent with the geographical distribution of water and energy. The resulting aridity distribution can more accurately capture short-term surface processes, thus providing valuable supplementary information on short-term land-atmosphere interactions. This method deepens the understanding of climate aridity characteristics and provides a characterization tool sensitive to short-term climate changes such as extreme heat waves and sudden droughts.

Land-atmosphere coupling refers to the turbulent exchange process between the Earth's surface and the atmospheric boundary layer, leading to energy and material cycling at multiple spatiotemporal scales and influencing the occurrence of extreme weather events. Introducing land-atmosphere coupling can improve the predictive power of numerical weather prediction and climate models. Understanding the weakening of land-atmosphere coupling in the stable boundary layer during winter and representing boundary layer subgrid-scale processes in prediction models is a noteworthy issue. The decoupling phenomenon caused by weakened turbulence under low temperatures renders traditional similarity theories for turbulence parameterization in prediction models ineffective, leading to inaccurate near-surface temperature modeling. Two-meter temperature is the most commonly used forecast variable for the near-surface atmosphere, and establishing a two-meter temperature prediction model is considered a key step in the development of surface flux parameterization. Based on meteorological observations in the snow-covered mountainous region of Finse, Norway, Mack et al. \cite{Mack2025} proposed a two-meter temperature interpolation model based on a Copula Bayesian network to replace traditional models in numerical weather prediction systems. They first used CE to examine the relationship between CE and decoupling metric between meteorological observation variables at two Finse stations, and compared the CE values ​​and decoupling metric between actual observations and the traditional 2-meter temperature model for 2-meter and 10-meter temperatures, respectively. The analysis revealed that as the degree of vertical decoupling increases, the CE between the two stations increases, a phenomenon termed ``information decoupling". Furthermore, compared to observations, the CE between 2-meter and 10-meter temperatures in the traditional model increases more significantly with decreasing decoupling metric, indicating that the traditional model fails to fully utilize effective observational information. They then established a coupled surface and atmospheric temperature model using a combination of Vine copula and Copula Bayesian networks to predict near-surface temperatures, achieving superior performance compared to the traditional model and validating the rationale behind the CE analysis. The authors argued that using CE theory for coupling uncertainty quantification is a novel concept, and that calculating the information loss between the model and observations using CE can also serve as a new indicator for evaluating model prediction performance.

\section{Fluid mechanics}
Centrifugal pumps are widely used mechanical devices in various engineering applications, and predicting their performance is a crucial concern during design and use. Traditional methods for predicting centrifugal pump performance based on fluid dynamics numerical simulations require significant computation time, are relatively complex, and suffer from large prediction errors. Utilizing neural networks to predict centrifugal pump performance has become an important new approach, but such methods require large amounts of sample data, which is often limited in practical engineering. Chen et al. \cite{Chen2025} proposed a method combining SMOGN oversampling technology and GAKHO hyperparameter tuning technology for centrifugal pump performance prediction. SMOGN is used for preprocessing small sample resampling, while GAKHO is used to tune the hyperparameters of the neural network. In this method, tools such as CE were used to verify the effectiveness of the SMOGN resampling technology. They validated the method using data from 10 centrifugal pumps at different speeds, finding that it can accurately predict key performance parameters such as head and efficiency of centrifugal pumps even with small sample sizes.

\section{Thermology}
Micro heat pipes are highly efficient phase-change heat transfer devices, characterized by high thermal conductivity, small size, and high heat transfer power, and are widely used as heat dissipation units in modern electronic devices. Micro heat pipes come in various shapes and have diverse core components, including the liquid wick. The device's geometry and manufacturing process parameters are two major factors affecting its heat transfer performance. Therefore, studying the relationship between the geometric parameters and manufacturing process parameters of micro heat pipes and their heat transfer performance is an important issue, as it can guide the structural design and manufacturing process parameter setting of micro heat pipes. Traditional design and manufacturing methods rely on human experience and repeated trials, which are costly and inefficient. Utilizing machine learning models to establish the relationship between design requirements and geometric and process parameters offers a new solution. Li et al. \cite{Li2025sst,Zhang2024master2} proposed a method for predicting micro heat pipe design and manufacturing parameters based on a combination of CE and neural network models. CE measures the correlation strength between micro heat pipe structural design requirements parameters and their heat transfer power, selecting key design parameters as model inputs. A one-dimensional convolutional residual network is then used to establish a prediction model from the selected key design requirements parameters to structural and process parameters. They validated the method based on test data from actual manufactured micro heat pipe samples. The results show that the key design parameters related to heat transfer power selected by CE are verified by actual experimental results. The prediction model based on CE-selected parameters closely matches the actual values ​​on the test design requirements data of actual manufacturing, with a maximum error of only 2.6\%.

Circulating fluidized bed boilers (CFBBs) are clean coal combustion technologies employing fluidized bed combustion. They offer advantages such as adaptability to various fuels and good load regulation capabilities, and are widely used in industrial sectors such as thermal power plants. However, further improving the thermal efficiency of CFBBs and reducing emissions are key industry concerns, as these two objectives are conflicting and cannot be achieved simultaneously. Ma et al.\cite{Ma2025b} categorized this problem as a multi-objective parameter optimization problem and proposed a parameter optimization method combining CE, a weighted fusion tree model, and a multi-objective tree structure estimator. CE is used to analyze the functional relationship between combustion process parameters and optimization objectives (NOx emissions and thermal efficiency). Based on this analysis, a weighted fusion tree model is used to construct an optimization objective prediction model. Finally, the multi-objective tree structure estimator is used to optimize and adjust the input parameters of the prediction model to regulate the optimization objectives. They applied this method to a 330MW CFBB. CE analysis revealed that pollutant emissions and thermal efficiency are mainly related to the supply of coal and air. The final optimization results also showed that reducing coal supply while increasing air supply can balance and regulate boiler thermal efficiency and NOx emission concentration. This technique can be easily extended to other types of industrial boilers, thus providing a consistent solution for combustion control in industrial thermal systems.

\section{Theoretical chemistry}
Allostery, considered the ``second secret of life", is a pervasive biological phenomenon present in almost all proteins. It refers to the regulatory effect where allosteric regulatory molecules bind to proteins, inducing changes at points beyond the binding site. The most common allosteric system model is the two-state allosteric model, which describes the thermodynamic cycle of the allosteric process. This model assumes that receptor activation is a two-state process, which is inconsistent with the multimodal processes revealed by NMR experiments. A deeper understanding of the molecular mechanisms of ligand-induced receptor activation requires constructing new theories to understand the thermodynamic coupling between the ligand binding site and the activation site. Cuendet et al. \cite{Cuendet2016} proposed a new theory called the Allostery Landscape, which defines a thermodynamic coupling function to quantify thermodynamic coupling in biomolecular systems. They pointed out that the new function is closely related to the copula density function and CE, which defines the information transfer property of allosteric systems, i.e., information transfer between the ligand binding site and the activation site. They applied the new theory to the thermodynamic coupling analysis of the N-terminus and C-terminus of the alanine dipeptide.

\section{Cheminformatics}
Cheminformatics is an interdisciplinary field combining chemistry and information science. It uses the characterization of chemical structures as data to solve problems such as molecular design, chemical reaction simulation, and planning. Quantitative structure-activity relationship (QSAR) is a cutting-edge problem in this field, studying the quantitative relationship between molecular structure and its physicochemical properties to guide the design of molecules with specific characteristics. Its applications are wide-ranging. Molecular physicochemical properties can be understood as invariance to certain symmetry transformations of molecular structure, and learning this invariant transformation from data is a key objective of molecular design. Wieser et al. \cite{wieser2020inverse} transformed the symmetry transformation learning problem into an information bottleneck problem, proposing a Symmetry-Transformation Information Bottleneck (STIB) method. This method represents molecular characterization as an implicit representation consisting of two parts, one of which corresponds to the invariant representation. Based on the transformation invariance of MI (CE), a learning algorithm for the problem model is designed. The authors applied the algorithm to the QM9 database\cite{ramakrishnan2014quantum} containing 134,000 organic molecules, using a subset of 6095 molecules with fixed stoichiometry ($\rm C_7 O_2 H_{10}$), and their corresponding band gap energy and polarity as the target invariant properties. Experimental results show that the STIB method provides a symmetric transformation that can learn the invariance of molecular properties, band gap energy, and polarity, validating the effectiveness of the method.

\section{Material science}
Heat-resistant energetic materials refer to special materials with high energy and high thermal stability, which can maintain stable properties in high-temperature environments. Therefore, they are key materials in important fields such as national defense, aerospace, and geological exploration, such as propellant for aerospace and hypersonic weapons, and explosives for deep well drilling. However, such materials are scarce and experimental research is extremely dangerous, so designing such materials has been a challenging problem that materials scientists have been striving to overcome. "Designing" energetic materials "from scratch" requires a process of "design-screening-evaluation", among which using machine learning methods to build material structure-property prediction models to predict the molecular properties of the design is a key step in material molecular screening\cite{Liu2025b}. Traditional energetic molecular property prediction model construction processes only use molecular features that are linearly related to thermal stability, without considering factors that have a nonlinear relationship with the thermal decomposition temperature of energetic materials, such as crystal structure and packing mode. Tian et al.\cite{Tian2023b,Liu2025} proposed a feature selection method combining PCC and CE to select features correlated with thermal decomposition temperature from molecular topological structure and quantum chemical computational features, and then construct a prediction model. The CE method is introduced to screen features with a non-linear relationship to thermal decomposition temperature. They collected 460 energetic compounds and generated a dataset containing 286 features. Using this method, they selected 87 features, which were then used as input to models such as random forest and SVM to predict the thermal decomposition temperature of the compounds. This yielded better prediction results than traditional methods, with the prediction error controlled at 28.5°C in cross-validation experiments. They applied the method to molecules generated by their designed molecular generator, ultimately selecting 16 energetic molecules with good thermal stability potential and strong detonation ability, validating the practical value of the method.

Near \textbeta \ titanium alloys possess characteristics such as low density, high strength, high toughness, and corrosion resistance, making them one of the key materials for manufacturing aerospace components. The properties of alloy components are determined by their microstructure and crystalline material, and the material properties are highly correlated with the deformation mode during processing. Studying the influence of deformation processing on crystalline material is one of the key issues in titanium alloy research. Computational simulation technology is a new method for material processing research besides physical experiments, such as numerical methods for crystal plasticity. Zhang et al. \cite{Zhang2024a} studied the relationship between the deformation mechanism and material evolution of near \textbeta\ titanium alloys in the \textbeta\ region. They experimentally simulated the evolution of eight major titanium alloy components under three sliding deformation modes in the \textbeta\ region and used CE to calculate the nonlinear correlation strength between each deformation mode and the content of each component. The results showed that the three modes were highly correlated with all other components except for the \{001\}<100> component, consistent with previous research findings; \textgamma\ fiber and \{110\}<001> material showed lower correlation; meanwhile, the \{112\} and \{123\} modes were more strongly correlated than the \{110\} mode, indicating that the two play a decisive role in the high-temperature deformation process.

Refractory multi-principal element alloys (RMPEAs) are a class of near-equimolar mixtures of multiple refractory metallic elements. Due to their exceptional high-temperature strength and resistance to high-temperature softening, they are considered highly promising candidate materials for high-temperature applications. Realizing the potential of RMPEAAs in specific applications requires accurate prediction of their physical properties, but their vast combination space presents a challenge. Applying traditional density functional theory (DFT) methods to RMPEAAs requires significant computational resources, increasing the difficulty of material selection. The EMTO (Exact Muffin-Tin Orbitals) method significantly improves computational efficiency. Building machine learning models based on traditional DFT methods to predict alloy properties has become a widely adopted approach, accelerating the material design process. However, analyzing the nonlinear relationship between alloy characteristics and physical properties, and thus determining the appropriate machine learning model, is a crucial issue. Jin et al.\cite{Jin2025} proposed combining EMTO first-principles calculations with a feature selection method based on CE to construct a predictive model for the physical properties of RMPEAAs. They verified the reliability of this method using a V-Nb-Ta system, and the machine learning model predictions showed high consistency with experimental results. They selected eight features as model inputs based on the CE method. The relationship between the melting point and the $\Omega$ criterion obtained by the CE method, as well as the relationship between these two and the V, Nb and Ta contents, are consistent with the theoretical relationship, which shows the rationality of the CE method.

\section{Hydrology}
Floods are one of the major natural disasters, and flood forecasting is an important means of reducing flood losses and managing flood resources. Precipitation-runoff models based on precipitation data can be used to forecast floods a certain period in advance. However, the complexity and nonlinearity of water systems make it very difficult to select the correct model input when building such models. Chen et al.\cite{chen2013measure,chen2014copula,Chen2021} proposed using the CE method to select inputs and build a neural network forecasting model. Compared with traditional methods, the CE based method can build high-dimensional models without making assumptions about the marginal distribution of individual variables, and the error in estimating the quantitative relationship between precipitation and runoff using CE is smaller. Chen et al. applied the method to build a flood forecasting model for the Jinsha River basin, and the results showed that the neural network model using CE to select inputs achieved the best prediction performance. Li et al.\cite{Li2022a} studied the monthly runoff forecasting problem in the upper reaches of the Yangtze River based on CE and machine learning methods. They used 130 global circulation indices, 7 meteorological factors, and monthly runoff data from two hydrological stations, Gaochang and Cuntan, to construct a prediction model by combining three variable selection methods (CE, etc.) and five machine learning models. The results showed that the combination of CE and LSTM achieved the best prediction performance at the Gaochang station, while the combination of random forest and CE achieved satisfactory performance at the Cuntan station. Mo et al. \cite{Mo2023} proposed a long-term runoff forecasting model framework that combines CE, LSTM, and GARCH methods, where CE is used to screen runoff-related forecasting factors. Compared with traditional methods, CE is more suitable for complex situations with interactive correlations between factors. They applied the method to runoff forecasting studies in Hongze Lake and Luoma Lake. The results showed that, compared with traditional methods, the CE method in this framework not only successfully identified the interactive effects between factors but also quantified the contribution of each factor in each forecast period, thus selecting the key driving factors related to forecasting. Ultimately, this framework yielded more accurate, stable, and reliable forecasting results than the comparative methods. Chen et al. \cite{Chen2023b} proposed a runoff forecasting method using spatiotemporal graph convolutional networks. First, they constructed a topological structure map of stations within the basin. Then, they used an adjacency matrix to represent the spatiotemporal dependence between geographically adjacent stations. They then used tools such as CE to analyze the spatiotemporal correlation between adjacency relationships, periodicity, and meteorological elements and runoff volume. Finally, they constructed a corresponding graph convolutional network with an attention mechanism as the runoff forecasting model. They verified the effectiveness of the method using the Jinsha River basin as an example. Wang \cite{Wang2018master} proposed a flood forecasting method based on a combination of CE and wavelet neural networks. He used CE to select the input to the forecasting model and then used a wavelet neural network to construct the forecasting model. He applied this method to the prediction of daily water levels at Dingjiaba Station in Shengze Town, Suzhou City, Jiangsu Province. Experimental results showed that the three forecasting performance indicators of this method were all higher than those of the comparison methods, reaching the level of actual formal forecast releases, providing technical support for the development of regional flood control systems.

Drought is another important type of hydrological event and a major natural disaster. Frequent droughts seriously affect China's economic and social security, especially in the Yellow River Basin, where the threat of drought is particularly severe, urgently requiring research on drought drivers and prediction. Wen et al.\cite{wen2019,Wen2021master} used the CE theory to analyze monthly meteorological data of Henan Province from 1951 to 2014, finding that among many driving factors, precipitation, temperature, water vapor pressure, and relative humidity have the greatest impact on the occurrence of drought in the region. Huang and Zhang\cite{Huang2019} used the CE method to analyze meteorological data of Lanzhou from 1957 to 2010 to construct a drought prediction model for the region, finding that wind speed, temperature, water vapor pressure, and relative humidity are the meteorological factors most relevant to drought in the region. Huang \cite{huang2021phd} studied the relationship between meteorology, hydrology, and drought in the Yellow River Basin, explored the driving mechanism of drought, introduced the concepts of meteorological drought and hydrological drought, and proposed using the CE method to explore the dynamic nonlinear response relationship between the two. By analyzing the meteorological and hydrological drought indices of hydrological stations in different regions of the Yellow River Basin, the lag effect time of hydrological drought on meteorological drought was obtained, providing a reference for responding to drought events. Niu \cite{Niu2023} used tools such as CE to study the spatiotemporal characteristics of drought propagation in nine sub-regions of the Yellow River Basin. Based on meteorological, soil moisture, and runoff data of each sub-region from 1961 to 2020, he used CE to calculate the nonlinear correlation between different types of non-stationary drought indices, and then obtained indicators such as drought response time scale, drought propagation intensity, and drought propagation rate. Finally, he discovered the strength and weakness characteristics of propagation sensitivity and propagation intensity among meteorological drought, agricultural drought, and hydrological drought in each sub-region. Ni et al.\cite{ni2020vine} proposed a vine-copula structure selection method based on MI, utilizing the equivalence relationship between MI and CE, and applied it to the modeling of characteristic variables and the modeling of flow correlation structures at multiple hydrological stations in drought identification in the Yellow River Basin. Kanthavel et al.\cite{Kanthavel2022} proposed a comprehensive drought index using theoretical tools such as CE and vine copula, integrating four indices: standardized precipitation index, drought monitoring index, standardized soil moisture index, and standardized runoff drought index, which can better reflect relevant hydrological and meteorological variables and different types of drought simultaneously. CE theory was used to measure the correlation between the new index and the original index. They applied this index to a monthly and four-month drought study in the Tapti River Basin in central India, validating its effectiveness and revealing the spatiotemporal distribution characteristics of drought in the region. Mohammadi et al.\cite{Mohammadi2021} used three correlation measurement estimation methods based on Copula and CE theories to analyze the dependencies among drought variables (drought intensity, duration, and time interval) in three Iranian cities (Zahedan, Enzeli, and Mashhad) from 1950 to 2017. Xu \cite{Xu2025master} proposed using the TE method based on CE to analyze the direction and time delay of drought propagation. He used this method to analyze actual observation data from meteorological and hydrological stations in the Colorado River Basin, USA, from 1972 to 2019. First, he constructed two non-stationary drought indices (non-stationary standardized precipitation index and non-stationary standardized runoff index). Based on the river system structure, he divided the basin into seven sub-regions. Then, he used GC to test the causal relationship between meteorological and hydrological drought propagation in the same sub-region and the causal relationship between the same drought type propagation in different sub-regions. Finally, he used TE to estimate the corresponding drought propagation time, successfully discovering the characteristics of the direction and time delay of drought propagation within the basin. Liu \cite{Liu2024master} applied the CE method to study the formation and evolution mechanism of drought in the Nemur River Basin of Heilongjiang Province. Based on hydrological and meteorological data of the river basin from June 2003 to December 2014, he first constructed three drought indices: hydrological drought, meteorological drought, and agricultural drought. Then, he used CE and other methods to analyze the driving factors of meteorological and hydrological drought, constructed dynamic response VAR models for the two types of drought, and further analyzed the drought propagation characteristics of the Nemur River Basin with the CE based TE method. He successfully identified the causal relationship direction and time delay between different drought types in the three sub-basins of the river, providing an important reference for drought response in the Songnen Plain downstream of the river basin.

River ice is a unique hydrological phenomenon in mid-to-high latitude rivers, exhibiting distinct seasonal characteristics. River ice phenology studies the seasonal variations in the freezing and melting of river ice, which exhibit spatial heterogeneity and annual variability. River ice phenology is influenced by multiple factors, such as climatic conditions, hydrological conditions, and human activities, with climate change being the primary driving factor. Against the backdrop of global climate change, river ice phenomena have undergone significant changes in the past period. Understanding and quantifying the dynamics and intrinsic driving forces of river ice is crucial for formulating river management strategies. Xing et al. \cite{Xing2024} studied the response of temporal variations in river ice phenology in the Heilongjiang River to climate change, using the CE tool to calculate the nonlinear correlation strength between five key dates in the river ice cycle (new ice day, ice-sealing day, freezing day, thawing day, and ice-melting day) and climatic factors. The results showed that temperature, air pressure, and evaporation pressure are the most important climatic driving factors for river ice phenology. Specifically, the dates of new glaciation and ice cover are correlated with the evaporation pressure of the preceding one or two months, while the correlation between the freezing date and evaporation pressure has a lag of more than four months. Furthermore, these three dates are strongly correlated with the mean maximum temperature, mean temperature, and minimum extreme temperature of the same month, indicating that the freezing date responds to temperature more quickly than other climatic factors. Meanwhile, the thawing date is correlated with the two-month average temperature range, minimum extreme temperature, and maximum extreme temperature, while the ice melt date is correlated with the five-month average temperature range and monthly relative humidity, exhibiting significant lag characteristics. The authors also studied the correlation of 15-year moving averages based on CE, discovering temporal variations in air pressure and temperature with river ice dates. These findings help to incorporate these identified factors into river models, establishing more accurate ice age prediction models, thereby enabling river management to better adapt to the impacts of climate change.

Hydrological and meteorological observation networks are the infrastructure for acquiring hydrological information. Designing and optimizing these networks is a comprehensive scientific and engineering problem. A fundamental design principle is to ensure statistical independence among observation stations to maximize the acquisition of information about the hydrological system. MI is a primary tool for measuring statistical independence, but its computation is challenging. Xu et al.\cite{xu2017a,Xu2018phd,Wang2019,Wang2022book} proposed a multi-objective optimization CE based hydrological observation network design method, comprising two steps: 1) grouping observation stations based on CE; and 2) selecting the optimal station combination for each group. The CE based computational method not only handles the non-Gaussianity of hydrological variables but also demonstrates greater reliability and efficiency in computational performance. The authors applied the method to the design of the I-Luo River hydrological observation network in the Yellow River Basin and the Shanghai rainfall observation network. The results show that the CE method achieves higher computational accuracy and can be applied to high-dimensional multivariate estimation scenarios. Based on the principle of least overlapping information, Li et al.\cite{li2020developing,li2021developing} proposed a network optimization objective consisting of two sub-objectives, one of which is designed based on CE to measure redundant information. The authors applied this method to the design and optimization of the Fen River runoff observation network, the Beijing urban area, and the Taihu Basin rainfall observation network, demonstrating its reliability and effectiveness. Xu et al.\cite{Xu2018phd,Xu2022,Xu2022a,Wang2022book} proposed using the vine copula to construct a station relationship network, then calculating the CE value between stations based on the estimated vine copula. Based on this, they proposed a station optimization objective combining CE and the Kriging index, using a sliding window method to select the optimal stations. Based on daily precipitation observation data from the Huai River Basin from 1992 to 2018, they used this method to optimize the network of 43 rainfall observation stations in the basin. The results showed that the network obtained by this method can more effectively acquire precipitation-related information than networks obtained by traditional similar methods. They also applied this method to the optimization design of the rain gauge network in Shanghai, resulting in a network with low redundancy, high total information volume, and minimal internal estimation error. Yang \cite{Yang2019master} proposed a network optimization criterion combining joint entropy ratio, redundancy ratio, and NSE efficiency coefficient, and based on CE theory, proposed a new MI calculation method, improving the accuracy of the calculation. He applied the method to 14 hydrological stations in the Choctohatch River basin in the United States for station optimization research, ultimately obtaining a network containing only 5 stations, improving the monitoring efficiency of the network.

Analyzing the correlation between the main stream and tributaries of rivers is crucial for hydraulic engineering design, flood prevention, and risk control. As a large-scale hydraulic engineering project in the upper reaches of the Yangtze River, the Three Gorges Dam plays a vital role in flood control. Therefore, studying the correlation between the main rivers in this section is of significant reference value for engineering design and safe operation. Chen and Guo\cite{chen2019copulas,Chen2013book} proposed using CE to calculate the strength of river correlations. They applied this method to the upper reaches of the Yangtze River, which includes five main streams and tributaries, calculating the correlation between rivers based on flood records from 1951 to 2007. They found that the overall correlation between rivers is not high, consistent with the climatic characteristics of the region. The strongest correlation is between the Min and Tuo rivers, due to their proximity and shared precipitation area. The Jinsha River shows some correlation with both the Min River and the Tuo River, posing a certain threat to the flood control function of the Three Gorges Dam. The Jinsha River, Jialing River, Min River, and Tuo River all have a significant impact on flood occurrence in the Yangtze River Basin.

Superimposed flood events in different rivers and regions easily form complex flood events, but the spatial relationships between different flood processes are difficult to accurately describe and assess using existing correlation analysis methods. Wang and Shen\cite{Wang2023a} proposed a framework integrating the vine copula and correlation assessment, which utilizes CE theory to estimate correlation strengths such as MI, CMI, and R statistics from the vine copula. They applied the method to assess the relationship between two extreme runoff series variables (peak flow and flood flow) in 102 identified complex flood events in the upper Yangtze River. The results show that the multidimensional R-vine copula model of this framework can better depict the complex and diverse hydrological correlations, especially the vine structure, which represents the sequence of tributary floods entering the main stream and the spatial relationships between hydrological stations; the three correlation strengths estimated by this framework better reflect the nonlinear relationships in complex spatiotemporal hydrological systems during complex flood events than traditional correlation strengths.

Water and sediment regulation in the Yellow River is crucial to the formulation of Yellow River management strategies. Scientifically understanding and assessing the characteristics of water and sediment flux changes in the Yellow River is a fundamental scientific issue and is of great significance for judging the sediment situation in the Yellow River. Especially in recent decades, due to the combined effects of climate change and human activities, the water and sediment content of the Yellow River has changed significantly, requiring accurate estimation of the distribution changes in runoff and sediment transport. The copula function is a basic mathematical tool for analyzing such distributions; however, such problems often involve small observational samples, making it difficult to accurately estimate the parameters of the copula function. Qian et al. \cite{Qian2022} proposed a copula parameter estimation method based on CE and total correlation to solve the copula parameter estimation problem under conditions of limited sample size. They applied the method to the analysis of runoff and sediment transport data from 1960 to 2016 in the Xiliugou River Basin of the Yellow River. The water-sediment relationship in this basin changed significantly around 1999, but data was scarce. The analysis results show that for both periods around 1999, the new method yields more accurate copula parameter estimates and a better fit to the data than the two traditional methods.

River scouring refers to the continuous erosion of riverbed sediment by flowing water, resulting in a lowering of the riverbed or changes in river channel width. Contraction scouring is a special type of river scouring, caused by a reduction in the area of ​​the flowing water. It typically occurs when artificial bridge foundations or other structures restrict water flow, or when natural factors cause the river channel to narrow, posing a safety risk to bridges and other structures. Accurate estimation of contraction scouring depth is crucial for risk assessment and structural design of bridges and other structures. Traditional empirical formulas often have low prediction accuracy; therefore, using machine learning methods to build scouring depth prediction models is a novel approach. Wang et al.\cite{Wang2024a} proposed a Principal Component Analysis (PCA)-enhanced Support Vector Regression (SVR) method for constructing scouring depth prediction models. Based on experimental data of clear water scouring in converging channels, they first employed five dimensionality reduction methods (including PCA, tSNE, NMF, LDA, and KPCA) to transform the original variables related to scouring, such as fluid parameters, sedimentary characteristics, and geometric features, to obtain new input variables. Then, they used CE to select new variables strongly correlated with scouring depth. The results showed that the PCA method yielded the new principal component vector with the strongest correlation to scouring depth; therefore, PCA was adopted as the method for generating model input variables. Subsequent SVR model prediction experiments showed that the prediction model using the PCA method's principal component variables as input achieved higher prediction accuracy than traditional methods ($R^2$ = 0.971, MAPE = 7.54\%), validating the superiority of this method.

Watershed partitioning is an important method in hydrological research. Dividing watersheds into similar regions based on hydrological similarity characteristics can solve difficult problems such as hydrological calculations in areas without hydrological observations. Runoff response is an important hydrological characteristic of watersheds, and partitioning watersheds based on the similarity between observations from hydrological stations is a fundamental research approach. Traditional watershed partitioning methods, based on correlation evaluation, often fail to reflect the complex relationships inherent in the hydrological system. Liu et al.\cite{liu2022water} proposed using the R-statistic based on CE to measure runoff similarity between nodes, and then using a community detection algorithm to partition the watershed. They applied this method to the Poyang Lake system, partitioning the watershed using hydrological station observations, and compared their method with the traditional K-means clustering method. The results show that this method can effectively capture the regulating effect of lakes and reservoirs on runoff within the watershed, thus obtaining a more reasonable watershed partitioning than traditional methods.

Multi-site streamflow generation is one of the major problems in stochastic hydrology, and the generated flow information is essential for any water resource management. Given the limited number of streamflow records, generating multi-site runoff data is crucial, requiring the design of appropriate data generation models. Porto et al. \cite{porto2021a,Porto2023} proposed a multi-site annual runoff generation model combining a generalized linear model (GLM) and a copula function. The former represents the time-series structure, while the latter models the spatial correlation among multiple stations. In evaluating model performance, the authors used several statistical descriptive indices, including CE, which measures nonlinear full correlation. The authors applied this model to generate multi-site streamflow time-series data for the Jaguaribe-metropolitan reservoir system in Brazil. The results show that the model outperforms the current state-of-the-art in terms of performance, particularly in the CE index, which measures multi-site correlation and is closer to historical observation data than other models.

The South-to-North Water Diversion Project is the world's largest water conservancy project, undertaking the strategic task of diverting water from the Danjiangkou Reservoir in the Han River basin of the Yangtze River to cities in northern China. Accurate inflow forecasting is a prerequisite for scientific and rational water supply scheduling, enabling the project to utilize natural water resources more fully and efficiently. However, forecasting models constructed using traditional methods often fail to meet the accuracy requirements of water diversion forecasts. This is because traditional analytical methods cannot handle the nonlinear characteristics of hydrological systems, leading to unreasonable inflow forecasting models and thus poor predictive performance. Huang et al.\cite{huang2021} constructed a monthly inflow forecasting model for the Danjiangkou Reservoir, using CE to select a set of meteorological and hydrological factors as model inputs. The resulting model showed significantly better forecasting performance than traditional models. The model's success lies in the fact that the forecasting factors selected using CE are closely related to medium- and long-term inflow runoff, confirming the intrinsic connection between the Indian Ocean Dipole event, the activity of the South China Sea subtropical high, and summer heavy rainfall in the Han River basin, which aligns with the operational laws of natural hydrological systems.

Climate change and human activities directly affect the hydrological system cycle, causing varying degrees of spatiotemporal changes in hydrological factors such as runoff, precipitation, and evaporation. Therefore, studying the spatial relationships between hydrological factors such as precipitation and runoff, and further analyzing the underlying causes of these spatiotemporal changes—climate change and human activities—is an important topic in hydrology, attracting attention from scholars worldwide, and providing scientific reference value for economic and social activities such as water resources planning and management. Jiang \cite{Jiang2023} used tools such as CE to analyze watershed raster data of precipitation, evaporation, potential evapotranspiration, runoff, and the vegetation index NDVI in the Yangtze River Basin. From the obtained spatial correlations, he discovered the spatial distribution characteristics of these factors and provided a qualitative geographical explanation. In particular, based on CE estimates, he found that actual evapotranspiration and precipitation have a significant impact on annual runoff, and the spatial correlation between annual runoff and the above factors exhibits spatial heterogeneity.

\section{Climatology}
Climate change is a key topic in climatology research, manifesting not only in changes in the magnitude of hydroclimate variables but also in the distribution of their seasonal and periodic variations. These changes impact the intensity and frequency of precipitation and temperature, leading to an increase in extreme weather events such as floods, droughts, and heat waves. The correlation between precipitation and temperature exacerbates the occurrence and intensity of combined extreme weather events. Studying the impact of climate change on the correlation structure between precipitation and temperature is crucial. Hao and Singh \cite{Hao2015} used the CE metric to investigate the impact of climate change on this correlation structure. Their study used daily precipitation and temperature data from Fort Worth, Dallas, Texas, from 1948 to 2010, calculating the negative CE value between temperature and precipitation every five years as the correlation structure strength. They found that the correlation structure strength (negative CE value) between temperature and precipitation in the region decreased from 0.18 from 1948-1980 to 0.06 from 1948-2005, demonstrating that climate change has impacted the relationships between hydroclimate variables in the region.

Climate assessment is a fundamental task in scientifically addressing climate change. Its goal is to monitor and analyze global and regional climate and its changes, with a particular focus on trends and extreme climate risks. Climate classification refers to categorizing regions based on similar climate characteristics. The most common K\"oppen classification uses temperature patterns and seasonal precipitation as climate characteristics. Condino \cite{Condino2009} proposed a dynamic classification algorithm based on the Jensen-Shannon divergence, where the JS divergence-based classification criterion is estimated using a representation method based on CE theory. He applied the method to the European climate assessment problem, classifying the climate of 25 major European cities based on daily temperature and precipitation data from European meteorological stations from 1951 to 2008. The results show that the proposed algorithm successfully distinguishes urban clusters belonging to the southern and northern climate zones of Europe. When the transition zone between the north and south climate zones is further considered, the algorithm also provides reasonable classification results for cities in central Europe that are consistent with actual climate conditions.

\section{Meteorology}
Environmental pollution is one of the major problems in modern society. Analyzing the causes of air pollution from a meteorological perspective and clarifying its underlying mechanisms helps to better understand pollution problems and thus predict, intervene in, and manage pollution. Understanding the causal relationships in the atmospheric system is key to this problem. Based on observations of meteorological factors and environmental pollutants, the TE method in statistics can be used to analyze the causal relationship between meteorological factors and environmental pollution. Ma \cite{jian2019estimating} used his proposed CE based TE estimation method to analyze continuous meteorological and PM2.5 observation data in Beijing, obtaining a causal intensity variation diagram of the four meteorological factors on PM2.5 concentration over a 24-hour time lag. The variation diagram shows that the causal intensity of the four meteorological factors on PM2.5 concentration roughly goes through two stages: a rapid increase and a slow increase. The authors also specifically discuss and verify the applicability of the stationarity assumption and the Markovianity assumption of this method to this mesoscale numerical analysis problem. The causal variation diagram obtained in this paper reflects the inherent dynamic characteristics of atmospheric system motion, enhancing our understanding of the meteorological causes of PM2.5 pollution. Furthermore, the temporal causal relationships obtained provide a reference for integrating meteorological factors to construct better-performing pollution forecasting models.

Effective air pollution forecasting plays a fundamental role in pollution control and is beneficial for protecting public health. However, current air pollution (such as PM2.5 concentration) forecasts still fall short of requirements in terms of accuracy and stability. Developing higher-performance forecasting models has attracted widespread attention. Considering the shortcomings of traditional methods, Wang et al. \cite{Wang2022} proposed a new air pollution forecasting and early warning method, using a combination of CE and multiple machine learning models. The CE method was used to select factors influencing PM2.5 concentration fluctuations for constructing the final model. They applied their method to actual air pollution forecasting and early warning systems in Shanghai and Guangzhou, demonstrating that the new method achieves better forecast accuracy and stability than other comparative methods. Wu et al.\cite{Wu2022b} proposed a CE based PM2.5 forecasting method, utilizing CE to calculate the correlation between meteorological factors and air pollutant concentrations to select model input features, and establishing a forecasting model based on a combination of LSTM and evolutionary algorithms. This method achieved good forecasting performance on historical data from Beijing in 2016. Chen et al.\cite{Chen2023a} used CE to select factors influencing PM2.5 from multiple factors, and then used a Temporal Convolutional Network (TCNA) with self-attention mechanism enhancement to construct a model for predicting PM2.5 concentration. He applied this method to hourly meteorological and pollution observation data from 12 areas in Beijing from 2013 to 2017, and the resulting prediction model showed high interpretability and prediction accuracy. Guo et al.\cite{Guo2023} proposed a PM2.5 concentration prediction method that combines empirical decomposition model combination with neural networks. This method uses CE and other methods to select empirical model decomposition factors obtained from different methods. He validated this method on hourly PM2.5 observation data from a Beijing observation station from 2020 to 2022, demonstrating its effectiveness and superiority.

Global warming has led to increasingly stronger typhoons in South China, causing severe damage to the region. Predicting the severity of typhoon disasters based on observational data is crucial for assessment and response. However, the numerous influencing factors of typhoon disasters and their non-linear relationships with disaster severity make predictive model construction challenging. Chen et al. \cite{chen2019} proposed a method for constructing typhoon disaster prediction models based on tools such as CE. Using data from 44 typhoons that made landfall or affected Guangxi between 1985 and 2014, along with concurrent disaster statistics related to disaster initiation, impact, and prevention/mitigation, they constructed 21 disaster influencing factors. CE was then used to screen the factors most correlated with the disaster index, revealing that maximum wind speed, minimum air pressure, duration of heavy rainfall, and extreme rainfall values ​​were most correlated with the disaster index, objectively reflecting the actual situation. Experiments also showed that the model constructed using the factors screened by CE had higher prediction accuracy than models constructed using similar comparative methods, providing a reference for typhoon disaster prediction in Guangxi.

\section{Environmental science}
Air pollution is one of the major environmental problems facing modern cities, seriously affecting urban operations and residents' lives. Analyzing the diffusion patterns of air pollution is a crucial issue in environmental science, providing fundamental guidance for environmental regulatory departments to better understand pollution patterns and effectively respond. Data generated from numerous urban meteorological observation networks helps analyze diffusion patterns and predict pollution spread. Wu \cite{Wu2022d} proposed a node-feature-free network link prediction algorithm and applied it to the problem of modeling and predicting urban air pollution propagation paths. He applied the method to PM2.5 observation data from environmental monitoring stations in Lanzhou in 2017, constructed a propagation network using TE based on CE, and then applied the proposed network link prediction algorithm to predict pollution propagation paths. Experimental results show that this method can accurately identify pollution propagation paths, providing theoretical support for the formulation of pollution control strategies in Lanzhou.

Nitrogen oxides (NOx) are one of the main pollutants emitted by thermal power plants, requiring strict control of their emission concentrations through monitoring. Power plants typically use neutralization methods in SCR (Selective Catalytic Reduction) reactors to control NOx emission concentrations, but this method suffers from significant delays and cannot achieve precise control. Therefore, soft sensor models are generally used in conjunction with SCR controllers to achieve control targets. Jin et al.\cite{Jin2023} proposed a soft sensor algorithm framework combining VMD-Bayes-Lasso to predict NOx emission concentrations. This framework first uses CE to screen system variables related to NOx concentrations to predict the decomposed NOx concentration modal variables, then superimposes these variables to obtain the final prediction result. Finally, a model error prediction model based on the Lasso algorithm is designed to correct the prediction results. They validated this algorithm framework on data from a 660MW coal-fired power plant in Ningxia, achieving better prediction accuracy than comparative methods. Specifically, they analyzed the correlation between system variables and target variables using the CE method, achieving the goal of simplifying the model and improving prediction accuracy. Yang \cite{Yang2024master} proposed a method for predicting NOx emission concentrations from power plant boilers. First, the system delays under different operating conditions are identified. Then, an adaptive feature selection method based on CE is used to select appropriate system variables. Finally, an ensemble learning algorithm is used to construct a prediction model. The authors validated the method on a 1000MW supercritical boiler system in a power plant. The results show that the proposed method can predict NOx emission concentrations more accurately than comparative methods. Specifically, the CE based feature selection method can effectively filter out the set of system variables that significantly affect NOx emission concentrations; moreover, compared with similar methods, this feature selection method provides better prediction results.

$SO_2$ is another major pollutant emitted by coal-fired power plants, posing a serious threat to human production and living environments, necessitating its monitoring and control. Limestone-gypsum desulfurization systems are widely used wet desulfurization processes in coal-fired power plants; however, due to data acquisition issues and system latency, the $SO_2$ emission concentration at the system outlet is difficult to measure accurately, making real-time control of the emission concentration impossible by adjusting system operating parameters. Qiao \cite{Qiao2024master} proposed a soft-sensing technique for $SO_2$ prediction to precisely control system emission concentrations. This technique first uses feature selection, CE, and a prediction model to determine auxiliary variables and their time delays and orders, with CE used to determine the time delay of key variables. Then, a regression model is established using a CatBoost-Bayes error compensation model. He validated the method based on historical operating data from a 550MW unit in Shanxi Province, showing that the model established by this method has good soft-sensing accuracy. Simultaneously, comparative experiments also show that optimizing the time delay and order of auxiliary variables using methods such as CE significantly improves model performance, verifying the rationality and effectiveness of the method.

Ammonia ($NH_3$) is an important alkaline gas in the atmosphere, playing a crucial role in the atmospheric nitrogen cycle and thus closely related to numerous environmental problems. Ammonia-containing aerosol particles are a significant source of PM2.5 in the air; changes in ammonia levels in nature can also lead to soil acidification, eutrophication of water bodies, and reduced biodiversity. Therefore, studying the spatiotemporal variations of ammonia concentration and its influencing factors has significant scientific value and practical implications. Atmospheric ammonia levels mainly originate from human agricultural, industrial, and urban transportation activities. As a developed economic region and densely populated area in China, the Yangtze River Delta region faces particularly serious environmental problems related to ammonia. Xue et al.\cite{Xue2025} used ammonia column density data from the European meteorological satellite infrared atmospheric interferometer, NASA $NO_2$ column density data, and European ERA5 meteorological reanalysis data to study the long-term spatiotemporal variations of ammonia column density in the Yangtze River Delta region from 2014 to 2020 and the driving factors behind these variations. Among their findings, they used the CE method to analyze the factors influencing the spatial variation of ammonia concentration. By calculating the correlation strength between ammonia column density and spatial variables such as meteorological factors, pH value, population density, and cultivated land ratio, they found that ammonia concentration is closely related to factors such as surface air pressure, precipitation, pH value, and cultivated land ratio, indicating that the distribution of ammonia in the Yangtze River Delta region is affected by both natural and human activities.

\section{Ecology}
In ecology, the study of animal movement trajectories is a crucial fundamental problem, revealing basic ecological processes such as population activity patterns, inter-population competition, and interactions between populations and environmental resources. The application of information technology in ecology has generated a vast amount of animal trajectory data, requiring appropriate modeling methods for analysis. Circular-linear data is a common time-series data type in ecology, describing discretized animal movement processes, including two variables: direction of movement and distance of movement. These two variables are usually correlated; that is, in straight-line movement, the direction of movement is smaller while the distance is larger, and in turning movement, the direction of movement is larger while the distance is smaller. Furthermore, the distribution of the direction of movement variable is generally symmetrical. Therefore, the angularly symmetric circular copula function is commonly used as a tool to model this type of data, and a copula-based correlation metric is used to measure the correlation between the two variables. Hodel and Fleberg\cite{Hodel2021} implemented the algorithmic toolkit \texttt{Cylcop} for modeling and analyzing circular copulas, which includes a CE based mutual information estimation algorithm as a correlation metric for analyzing animal trajectory data.

Soil is the carrier of terrestrial ecosystems, and changes in the climate environment have a direct impact on the evolution of its internal structure and function. Studying the relationship between climate change and soil ecosystems is one of the main problems in soil ecology. Soil carbon and nitrogen are key factors in its ecosystem, and the soil carbon-to-nitrogen ratio is a key indicator for assessing its system function. Soil carbon and nitrogen transformation is mainly regulated by microbial processes, and microorganisms are highly sensitive to changes in humidity. Existing research mainly focuses on the response of soil carbon and nitrogen dynamics to short-term humidity changes, lacking understanding of the response to long-term humidity changes, especially those caused by climate change. Li et al. \cite{Li2025a} used a microbial-enzyme decomposition (MEND) model to study the long-term effects of soil moisture changes on soil carbon and nitrogen dynamics. They used a broad-leaved forest ecosystem in the Dinghushan National Nature Reserve in Guangdong Province as their research site, collecting climate environment observation data and soil experimental data from 2009 to 2012. The experiment first used 10 global climate models (GCMs) from CMIP6 to generate soil water content data under four future socioeconomic pathway scenarios. Then, interpolation methods and bias corrections were used to obtain future (2021-2100) soil moisture changes in the study area, leading to the standard soil moisture index (SSI) at different time scales (1-72 months). Next, the parameters of the MEND model were corrected using field-collected soil analysis data. Soil moisture data generated by GCMs and the corrected MEND model were then used to simulate soil carbon and nitrogen dynamics. Finally, the correlation between soil carbon and nitrogen variables and SSI at different time scales was estimated using the CE based MIR statistic. The study found that soil carbon and nitrogen dynamics respond to moisture changes with lag and accumulation, particularly the long-term (72-month) correlation between soil organic carbon and total nitrogen variables. Soil organic carbon tends to accumulate under drought conditions, and total nitrogen exhibits a similar characteristic. The experiment compared the MIR with the linear correlation coefficient, revealing a non-linear relationship between the experimental variables; therefore, the CE based MIR is more suitable for the experimental analysis.

Since the Industrial Revolution, human activities have significantly increased their impact on the natural environment, and the resulting ecological and environmental changes, in turn, affect the sustainability of human societal development. Socio-ecological systems are non-linear, dynamically interacting hierarchical systems composed of humans and the environment, and system resilience refers to its ability to withstand internal and external natural and social shocks while maintaining its core functions. The system resilience of ecologically vulnerable areas is of great significance to the sustainable economic and social development of these areas, and understanding the evolution of socio-ecological resilience in these areas is crucial for improving regional ecological preservation capabilities. Karst landforms, with their slow soil formation, thin soil layers, and difficulty in recovery after damage, are typical ecologically vulnerable systems, widely distributed in southwestern China. Li et al.\cite{Li2025b} proposed a method to study the evolution of socio-ecological system vulnerability using tools such as CE based TE and network analysis. Taking ecologically vulnerable Karst areas in Guizhou and Guangxi in southwestern China as their research object, they used this method to study the evolution of ecosystem characteristics in the southwestern Karst region from 1990 to 2022, based on regional economic and social data, meteorological observations, and land remote sensing data. This study first uses data to calculate five dimensions of the "Driving-Stress-State-Impact-Response" (DPSIR) ecological environment system analysis framework. Then, based on the changes in these indicators, the target time period is divided into three distinct periods. Next, TE is used to analyze the dynamic relationships between the five dimensions of DPSIR within each period. Finally, network analysis is employed to identify key factors influencing system function. The TE-based analysis of DPSIR factors shows that the TE from driving force to stress exhibits a long-term downward trend across the three periods, indicating a reduced impact of ecological degradation on economic growth and a shift towards a high-quality development model that balances development and ecology. Simultaneously, the TE from response to driving force gradually increases across the three periods, demonstrating a strengthening of positive feedback from humans to driving factors and enhanced synergy between social development and environmental protection. The authors argue that the analysis reveals the evolutionary process of socio-ecological vulnerability in the Southwest Karst region, demonstrating that this evolution is closely related to the government's consistent and continuous ecological protection policies.

\section{Animal morphology}
Animal morphology, the oldest branch of zoology, studies the morphology and anatomical structure of animals and their changes during development and evolution. As a fundamental discipline of zoology, morphological research forms the basis of animal classification, such as the morphological classification of fish. Because fish often have similar appearances, identification of their species can be biased, necessitating research into the measurement of similarity between fish structural morphologies. Escolano et al. \cite{Escolano2017} proposed an estimation method for graphic similarity measurement, converting graphics into multidimensional manifold embedding vectors and then using the CE estimation method to estimate the MI between vectors as a graphic similarity measure. They applied this method to the \texttt{GatorBait} marine fish image database, which contains 100 triangular mesh images of fish in 30 categories. Since each category corresponds to a genus rather than a species, morphological differences exist within the same category, making classification difficult. They used the new measurement method to classify the fish images in the database, and experiments showed that the new method achieved better classification performance than traditional methods on the dataset.

Abalone is an important marine mollusk with high nutritional and economic value. Morphological studies of abalone, involving the measurement of morphological variables, are crucial for understanding its growth process and population distribution, and are essential for the management of this marine resource. Purkayastha and Song \cite{Purkayastha2022} proposed a novel concept for measuring causality, called Asymmetric MI (AMI), to determine the direction of causal predictivity between variables, and provided a fast and robust estimation method based on CE theory. They applied the AMI method to the UCI abalone dataset, analyzing measurements of morphological parameters such as length, diameter, height, and weight, clarifying the causal relationships between age and these variables during abalone growth.

\section{Botany}
Sap flow refers to the process by which plants fundamentally transport water to their leaves to compensate for water loss through transpiration. It serves as a key indicator for assessing the amount of water lost through transpiration in trees. Because it is closely related to both the plant's intrinsic growth mechanisms and external environmental factors, these factors can be used to predict sap flow. Traditional methods cannot handle the nonlinear relationship between environmental factors and sap flow, resulting in less than ideal prediction results. Wang \cite{Wang2023master2} and Li et al.\cite{Li2025d} proposed a method for predicting sap flow based on historical environmental factors. This method first decomposes sap flow data using the EMD method, then uses CE to select the modal components that best represent the characteristics of the sap flow data to reconstruct the sap flow. Next, CE is used to select the environmental factors most relevant to the sap flow. Finally, based on the selected factors and the reconstructed sap flow, a prediction model is constructed using traditional machine learning and deep learning methods. He validated the method using sap flow data of kauri trees throughout 2012 and data on nine environmental factors from the SAPFLUXNET project's public dataset. The experimental results showed that CE selected five environmental factors (including saturated vapor pressure deficit, wind speed, deep soil moisture content, atmospheric relative humidity, and net radiation). The deep learning model obtained from this selection performed better than the comparison method (MSE=0.0229, MAPE=0.6759\%, R2=0.9755). This indicates that after introducing the CE method, the model obtained conforms to the correlation between environmental factors and sap flow, thereby improving the prediction performance.

\section{Agronomy}
Global warming and its resulting environmental changes directly impact food production, thus exacerbating global food security issues. Rice is one of the most important cereal crops, accounting for about 40\% of cereal production, and is crucial to food security in China. Studying how climate change affects rice yield and proposing countermeasures is a vital issue related to food security. Zhang et al. \cite{Zhang2023,Zhang2023c} used crop models and atmospheric circulation models to study the impacts of climate change on the growth and yield of double-cropping rice in southern China (Jiangnan and South China) and proposed countermeasures. The study employed the CERES-rice submodule of the DSSAT crop model to simulate rice growth and yield, and four atmospheric circulation models (GCMs) from CMIP6. They used CE and random forest analysis to examine the nonlinear relationships between meteorological factors and crop yield in each month. They used 27 sets of data from each GCM to drive rice crop models at 54 locations in southern China to obtain the final yield, and also studied the impact of sowing date. The study found that rising meteorological factors can advance rice maturity and reduce yield. If the effects of CO2 are also considered, early rice yields will increase, while late rice yields will still decrease. According to CE calculations, the relationship between yield and CO2 concentration for both rice seasons is the strongest among meteorological factors. Early planting of early rice and delayed planting of late rice may increase yields to some extent. The findings of this study point to a path for governments and farmers to address future climate change and provide important reference for adopting appropriate adaptation strategies.

As one of the world's three major food crops, rice plays a vital role in agricultural production. Accurately predicting rice yield is crucial for ensuring food security and guiding agricultural production, making it an important issue in the agricultural field. Rice yield is not only related to the characteristics of the rice itself but is also influenced by environmental factors such as weather. This influence is non-linear, posing a challenge to accurate yield prediction. Zhang et al. \cite{Zhang2024} proposed a rice yield prediction method based on deep learning technology. This method utilizes CE to select environmental factor variables with a non-linear relationship to yield and employs CNN and GRU to construct a prediction model. They validated the method using real data from Lin'an District, Zhejiang Province. The results show that CE can capture the non-linear relationship between rice yield and environmental variables, and the combination of CE and CGRU yields the best prediction results.

\section{Immunology}
As a versatile second messenger, intercellular calcium ions are used in studies monitoring immune system activation. The immune system functions through interactions between immune cells, such as lymphocytes, but our understanding of inter-lymphocyte communication mechanisms remains limited. Quantifying these mechanisms is essential for understanding the immune system's operation. Coordination within the lymphatic system occurs at the micrometer scale, requiring definitive empirical studies to elucidate the principles behind this communication mechanism. Abe et al. \cite{Abe2025} proposed using TE to analyze in vivo calcium ion imaging of mouse B cells. By preprocessing autotracking imaging data, they calculated the TE between paired cells using a CE based TE estimation method. They defined the information transfer rate as the ratio of TE to the Euclidean distance between moving cells. Their analysis showed a negative correlation between the information transfer rate and intercellular distance, providing supporting evidence for a possible paracrine communication mechanism in lymphocyte interactions. This CE based TE framework also provides a reliable metric for studying immune cell interactions at different spatiotemporal scales.

\section{Neurology}
Establishing causal relationships between neural signals is crucial for understanding brain connectivity. Causal connections reflect the direction of information transmission between different regions within the brain network during cognitive processes, characterizing the dynamic relationships between brain regions in cognitive processes. Compared to traditional Granger causality tests, model-free TE is more suitable for such causal analysis tasks. Redondo et al. \cite{Redondo2025} proposed a new TE concept based on CE theory, called STE (Spectral Transfer Entropy), to calculate the TE between time-domain signals after frequency domain filtering. Compared to calculating TE directly on the original signal, STE calculated in a specific frequency domain has greater neurological interpretability. They applied the method to the analysis of EEG signals from patients with attention deficit hyperactivity disorder (ADHD), using STE to construct a causal brain connectivity network, and discovered differences in attention-related brain connectivity between ADHD patients and healthy individuals. Experimental results show that healthy individuals exhibit a clear causal relationship in the $\theta$ and $\alpha$ frequency bands associated with attention and controlled memory access, while the brain network connections of ADHD patients are mainly in the $\delta$ oscillation, which can be explained as being related to attention deficits.

During task execution, the information processing activities of various brain regions are coordinated as a whole. Phase synchronization refers to the temporal structural relationship of neural signals in different brain regions during brain activity. Traditional phase synchronization studies focus on describing the relationship between paired brain regions, while global phase synchronization focuses on the synchronization strength between neural signals from multiple brain regions. Li et al.\cite{Li2025} proposed using CE to calculate multivariate MI to estimate GPS and detect brain network structure. They first compared this method with other similar methods using the R\"ossler model, demonstrating that it can better estimate the GPS strength in the network. Then, they applied this method to the analysis of SEEG signal data from patients with temporal lobe epilepsy, finding that the termination of epilepsy was accompanied by an increase in GPS strength and brain region integration, consistent with existing research findings.

Stroke is one of the leading causes of death worldwide, and accurate diagnosis is crucial for effective treatment and management. Magnetic resonance imaging (MRI) is a powerful tool for stroke diagnosis, but its complexity poses a challenge to MRI data analysis. Stroke causes local damage to the nervous system and its functional connectivity, but analysis of changes in MRI functional connectivity cannot reveal the impact of the stroke on dynamic brain function. Effective Connectivity (EC) reflects the dynamic causal relationships between brain regions, providing an effective pathway for analyzing and diagnosing the impact of stroke on dynamic brain function. Ciezobka et al. \cite{Ciezobka2025} proposed a stroke classification and diagnostic method combining directed EC graph analysis and graph neural network classification. They used three methods—Reservoir Computing (RC), Granger Causality (GC), and TE—to generate directed EC graphs, and then used GNNs to classify the EC graphs for diagnosis. This method utilizes the CE based TE estimation algorithm to generate EC graphs. They compared the above methods on MRI data from stroke patients and healthy individuals collected at Washington University in St. Louis. The experimental results showed that EC maps generated based on the RC, GC, and TE methods achieved roughly equivalent classification performance even with small sample heterogeneity. This diagnostic method allows for the visualization and interpretation of EC maps, facilitating the translation of research findings into clinical practice.

\section{Cognitive neuroscience}
Cognitive neuroscience analyzes observational data of various modalities of brain activity to understand the mechanisms by which the brain, as an information processing organ, represents, processes, and communicates with external stimuli. As a nonlinear statistical measure, MI is considered an ideal statistical tool for analyzing the correlations between brain signals. However, the difficulty in estimating MI hinders its widespread application. Ince et al.\cite{ince2017a} proposed an MI estimation method based on the equivalence relationship between MI and CE, called Gaussian Copula Mutual Information (GCMI). The GCMI method utilizes the property that CE is independent of marginal functions. First, the marginal functions of each variable are transformed into Gaussian function, resulting in a joint Gaussian distribution. Then, the MI is calculated based on the relationship between the obtained Gaussian distribution correlation matrix and the MI. This method is simple, convenient, and distribution-independent. However, since the calculation of MI from Gaussian distribution data is biased, this method still requires correction operations. Ince et al. compared GCMI with other MI estimation methods and applied it to analyze EEG data from a face detection task\cite{ince2016the} and MEG data from an auditory-speech stimulation task\cite{kayser2015irregular}. In the face detection task experiment, GCMI was used to calculate the correlation strength between image content and cognitive response, and successfully selected the cognitive response-sensitive region (the eye area in the image). In the auditory stimulation experiment, Ince et al. investigated the effect of rhythmic features in speech on the rhythmic synchronization of the brain's hearing. Through analysis of EEG response data of speech stimulation, the authors found that changing the pauses between syllables and words leads to a decrease in auditory delta synchronization. In this experiment, GCMI was the main tool for data analysis.

Building upon the GCMI algorithm, Combrisson et al. \cite{Combrisson2022} proposed a group-level method for analyzing brain cognitive networks based on information theory. This method combines nonparametric permutation operations with information metrics to analyze fixed-effect or random-effect models, adapting to variations among multiple individuals and across multiple tasks. They applied this method to data from two existing studies: the first analyzed high-gamma activity in MEG data of individuals performing a cognitive-behavioral mapping task, identifying task-related brain networks involving multiple motor, somatosensory, and visual cortex areas; the second analyzed anterior insula SEEG data from a reward-punishment learning task, revealing response delays in reward-punishment tasks and significant differences between reward and punishment responses. Wang et al. \cite{Wang2023} proposed a method for classifying cognitive levels in older adults. This method first constructs a brain cognitive network using GCMI, then uses GCMI for feature selection, and finally uses SVM to classify cognitive levels based on selected brain network connections. They applied the method to resting-state fMRI data from 98 elderly Portuguese individuals and found that the proposed method could capture nonlinear relationships between brain regions in the data and ultimately achieve a higher classification accuracy than similar methods.

Speech comprehension is a primary cognitive function of the human brain, and studying the encoding and parsing of speech information by neural activity in the brain is a crucial issue in cognitive neuroscience. The speech envelope contains low-frequency temporal information in the speech signal, and research shows it can explain most of the changes in neural responses. Speech envelope tracking studies the relationship between the speech envelope and its neural responses using methods such as electroencephalography (EEG). Due to the nonlinear characteristics of the brain, commonly used linear models cannot adequately represent this relationship. MI, as a nonlinear relationship metric, is considered capable of capturing the nonlinear relationship between the speech envelope and neural responses. De Clercq et al. \cite{DeClercq2022} used GCMI to compare the ability of linear models and MI analysis to characterize the nonlinear components of the brain, based on two sets of storytelling speech and corresponding collected EEG data. Experimental results showed that MI analysis detected significant nonlinear components beyond those of linear models, proving that GCMI is a more suitable tool for studying neural envelope tracking than linear models. The authors also experimentally verified that compared with traditional MI estimation methods, the GCMI method based on CE principles has many advantages, including robustness, unbiasedness, and suitability for multivariate analysis.

Neuron specification refers to the property of a neuron to perform a specific function, which can be identified by studying the relationship between external environmental stimuli and neural response signals. GCMI, as a measure of nonlinear correlation, is an ideal tool for studying this problem. Pospelov et al. \cite{Pospelov2023} used the GCMI method to calculate the correlation strength between calcium fluorescence signals and environmental variables and animal behavior. They analyzed calcium signals recorded in the CA1 region of the mouse hippocampus, revealing specialized neurons related to the animal's external environment, such as place neurons, and specialized neurons related to their behavioral activities, such as neurons active during running, standing, and resting. The study also identified neurons that responded to discrete variables, such as the animal's place location (central, near a wall, and in a corner) and its speed (resting, slow, and fast). He detected a total of 781 specializations in 472 neurons across four sets of experiments.

Neurological research indicates a close correlation between the level of consciousness and the complexity of brain activity; changes in brain activity complexity lead to changes in the level of consciousness. Therefore, measuring brain complexity to assess the level of consciousness is an important research direction. Information theory provides various tools for analyzing brain complexity, among which $\Omega$ information is unique in assessing the level of internal brain interactions. It can measure not only overall correlation but also the significance of synergistic or redundant effects; however, its estimation algorithm faces the problem of variable combinatorial explosion. Belloli et al. \cite{Belloli2025,Belloli2024} proposed a method based on GCMI to estimate $\Omega$ information, significantly improving the computational efficiency of the estimation algorithm. They applied this estimation algorithm to human awake and deeply anesthetized fMRI datasets to analyze brain complexity at different levels of consciousness. The experiment divided the activity of 55 brain regions in the data into 11 different brain region networks for analysis, with each network containing 5 brain regions. Experimental results show that deep anesthesia reduces both the maximum and minimum values ​​of$\Omega$ information in brain region interactions, weakens the synergistic interaction between different brain region networks, and also reduces redundancy within the same brain region network.

Pupil dilation reflects cognitive and behavioral processes and is controlled by cortical structures connected to the brainstem pupillary innervation system. Pupil measurement provides a powerful tool for understanding the brain mechanisms of cognitive activity. Modern recording technology can record pulse data of neuronal activity, making the analysis of the relationship between cognitive behavior and neuronal activity an important issue in the field. However, how to analyze the cognitive-neuronal correlation from a large amount of experimental data on neuronal activity has become a bottleneck problem. Waldencite{Walden2024,Walden2024a} proposed a method combining Copula-GP and Gaussian process factor analysis (GPFA) based on CE theory to study the interaction between visual cortical neural processes and pupillary dilation in live mice. He applied this method to neuronal pulse signal data recorded by mice during drift grating stimulation. First, GPFA was used to reduce the dimensionality of the data, then Copula-GP was used to estimate the Copula, and finally, CE was estimated to measure the correlation between cognition and neuronal activity. Experimental results show that the Copula-GPFA method can effectively estimate the CE value from the data, confirming the information interaction process between neuronal activity trajectory and pupillary dilation, which is consistent with existing research findings.

Visually evoked potentials (VEPs) are neurological analysis techniques used to study the brain's visual cognitive function through the electrophysiological response to visual stimuli in the visual cortex. Analyzing information flow during brain activity using information theory methods is an important neurological data analysis approach that can improve our understanding of the brain's dynamic response characteristics. Directed information (DI) and TE based on the GC concept are two main tools for measuring information flow. Kaufmann \cite{Kaufmann2017} proposed using these two tools to analyze VEP EEG recordings. He presented two decomposition forms of DI and TE based on two different causal definitions, and further provided estimation methods for these two information flow decomposition forms under the Gaussian Copula assumption based on CE. He applied these DI and TE information flow estimation methods to a set of VEP EEG data from visual stimulation experiments in healthy individuals, successfully discovering the information flow between the occipital and frontal electrodes.

\section{Motor neuroscience}
Muscle synergy is fundamental to movement, referring to the spatiotemporal coordination between muscle groups when performing various actions. The human motor control system is a system with redundant degrees of freedom; it is generally believed that the nervous system completes an action through the combination and coordination strategies of motor primitives. A crucial fundamental problem in motor control research is identifying simplified basic muscle synergy strategies in motor control. Understanding the underlying basic synergistic mechanisms of motor control by decomposing electromyographic (EMG) signal data during the movement process is a basic research method, but handling the nonlinearity in the signal is one of the major challenges. MI estimation based on CE is a powerful tool for addressing this challenge. Wu et al.\cite{wu2021she,Wu2022} combined multivariate variational mode decomposition with CE based MI to construct a muscle coupling network model. Based on surface EMG data, they analyzed the spatiotemporal synergy between upper limb muscles during the reaching motion of healthy individuals, successfully characterizing the strength of muscle coupling relationships. Wang \cite{Wang2021master} proposed using CE based MI estimation for intermuscular coupling strength analysis and validated the method using sEMG datasets from the Ninapro 4 database. He found that this method can effectively analyze potential neuromotor control mechanisms and describe the differences in intermuscular coupling relationships across different frequency bands. He also proposed a TE estimation method based on CE based TE representation, called GCTE, and used it to construct intermuscular coupling networks. He applied this analysis method to sEMG signal data from motor tasks in healthy individuals and stroke patients, constructing intermuscular coupling networks in different frequency bands. The results show that the GCTE method can describe the asymmetric information flow between muscle nodes in the coupling network, revealing nonlinear intermuscular coupling relationships, and has application value for upper limb motor function rehabilitation evaluation. He also proposed MI and CMI estimation methods based on R-vine copula, called RVCMI and RVCCMI, and demonstrated the advantages of RVCMI and RVCCMI over traditional methods based on simulation experiments, providing new tools for intermuscular coupling network analysis. Zhu et al.\cite{Zhu2022} proposed a representation of TE based on CE, and then used R-vine Copula to estimate CE and thus TE. They applied this method to the study of upper limb muscle coupling networks, constructing a muscle coupling network based on sEMG data of upper limb muscle movements in fatigue/non-fatigue states, and found that the coupling relationship between muscle groups in the fatigue state gradually deepens compared to the non-fatigue state. Jin et al. \cite{Jin2022} proposed a method combining wavelet analysis and CE estimation to analyze sEMG signal data of muscle fatigue state under voluntary movement in healthy individuals, and found that in elbow flexion movement, the intermuscular coupling strength is most significant in the Beta and gamma frequency bands, and the coupling strength of synergistic muscle pairs is greater than that of antagonistic muscle pairs; the coupling strength after fatigue is stronger than that before fatigue. Reilly and Delis \cite{Reilly2022,OReilly2024} proposed using CE based GCMI to measure the spatiotemporal correlation between EMG signals, and then using matrix factorization dimensionality reduction methods to discover the basic muscle synergistic patterns in the spatiotemporal correlation of EMG signals. They collected EMG data of point-to-point movements performed by humans, applied their methods to the data, and obtained physiologically significant spatiotemporal patterns of muscle coordination.

\section{Computational neuroscience}
Computational neuroscience is a discipline that uses computational theories and methods to study and understand the functions and mechanisms of the nervous system, investigating how to describe the individual and group responses of biological neurons to signal stimuli. Neural plasticity refers to the adaptive structural changes of a neuronal network in response to external stimuli; constructing theoretical models of plasticity is one of the main focuses of computational neuroscience. Leugering and Pipa \cite{leugering2018a,Leugering2021phd} proposed a theoretical framework for neuronal population plasticity based on Copula theory, constructing an adaptive network model that can maintain the invariance of model outputs even when the model input changes unknown. In this framework, CE is used to measure the statistical properties of neuronal populations, measuring the amount of information between input and output. Analysis of information transmission between neurons is another important problem in computational neuroscience. Analyzing the information transmission relationships between computational neurons requires the decomposition of the information transfer mechanism (MI) between multiple neurons. Partial information decomposition is the theory that decomposes MI into three parts: Synergy, Redundancy, and Unique Information. Based on CE theory and method, Pakman et al.\cite{pakman2021estimating} proposed a method for estimating unique information and applied it to analyze information processing in multiple neuron models. Coroian et al.\cite{Coroian2024} proposed a method for estimating the coordination of continuous distributions based on CE theory to improve the computational efficiency of existing coordination estimation algorithms and applied it to brain neural activity data to analyze co-emergent phenomena in brain functional connectivity maps and changes in brain region activity patterns.

\section{Psychology}
The brain is a distributed network system. It not only controls the body and alters its internal physiological state but also influences multiple higher-level processes. Simultaneously, visceral information is constantly monitored by the brain, meaning that visceral processes are reflected in cortical activity. Research on brain activity related to visceral events is an important topic. Processes in the autonomic nervous system are interconnected, and information theory provides tools for studying these relationships. Ravijts \cite{Ravijts2019} investigated the approximate estimation of the temporal interaction of heartbeat-evoked potentials (HEP) under four emotional stimulus characteristics (valence, arousal, dominance, and liking). He used the DEAP dataset of physiological signals used for emotion analysis and employed the GCMI method based on CE to estimate statistics such as MI, synergy, and redundancy to measure the temporal interaction on HEP ​​under different emotional stimuli. The experiment revealed the temporal interaction phenomenon on HEP ​​under dominance and liking stimuli, revealing for the first time the temporal characteristics of HEP modulated by emotion perception.

\section{System biology}
A major task of systems biology is to study the interactions between regulatory, signal transduction, and metabolic processes using biochemical kinematic models. Building such models requires selecting appropriate input variables, and MI is one tool for variable selection. However, the commonly used kNN MI estimation is often biased and needs correction. Charzy\'nska and Gambin\cite{charzyńska2015improvement} proposed a bias correction method and found that the correction effect is significant when MI is estimated using the relationship between MI and CE. The authors applied the method to a widely studied negative feedback loop problem model between the p53 protein and the Mdm2 ligase, and the results showed that this method can obtain more accurate analytical results reflecting the model input-output relationship that reflects system behavior than traditional local sensitivity analysis methods.

One of the main objectives of systems biology in analyzing molecular biology data is to establish networks and dynamic mechanisms of complex biological phenomena to analyze the function and behavior of living tissues. MI plays a fundamental role in constructing gene pathway networks. Farhangmehr et al. \cite{charzyńska2015improvement} first proposed using CE to estimate MI during network construction. They applied this method to yeast cell cycle data and compared the resulting dynamic network with the Kyoto Genomics Encyclopedia database. The experimental results showed that using CE to estimate MI improved computational efficiency.

\section{Bioinformatics}
Bioinformatics is an emerging discipline that studies the mechanisms of life and disease by analyzing gene data (including gene expression profiling) using algorithms. Gene expression profiling is data obtained by observing the dynamics of a living organism at the gene molecular level using DNA microarray technology, thus reflecting various phenomena and mechanisms of living systems at the genome level. Wieczorek and Roth\cite{wieczorek2016causal} proposed an analytical method for studying the interactions between time series data, called Causal Compression. Unlike traditional analyses of causal relationships between full time series, this method studies the sparse expression of causal interactions between time series based on Directed Information decomposition, and provides solutions for two types of problems: time-series causal segmentation and causal bipartite graph discovery. Based on the equivalence between CE and MI, the authors proved that this method is only related to the copula density function of the data distribution, and designed a solution method accordingly. The authors applied this method to human hepatitis C virus infection data in the NCBI database (NCBI/GEO query number: GSE7123) to study the time-series gene expression profiles of recombinant hepatitis C virus core protein genotype 1 infection patients treated with pegylated interferon and ribavirin. They focused on two genes playing important interactive roles in interferon signaling: the transcript STAT1 and the interferon-induced antiviral gene IFIT3, generating different interactions between the two genes in effectively treated and untreated patients. The study found that, according to the analysis, interferon therapy eliminated the association between the two genes in most effectively treated patients, while the association was unaffected in untreated patients. Furthermore, the analysis showed a causal interaction between the two genes before and after treatment in both types of patients, but for effectively treated patients, the early effect of IFIT3 on the later effect of STAT1 was more significant, consistent with previous research findings.

Many diseases are related to gene structural variations. Copy number variations (CNVs) refer to variations in DNA segments longer than 1 kb, and are abundant in the human genome. As important gene variations, CNVs contain a large number of DNA sequences, disease points, and functional units, providing clues for disease research. Studies have shown that the formation and development of various cancers are related to different CNVs. Therefore, discovering the relationship between CNVs of different genes and different cancers is helpful for studying cancer etiology and diagnostic methods. Selecting cancer-related features from the gene characteristics of a large number of CNVs is an important problem in bioinformatics. Wu and Li\cite{Wu2022a,Wu2022c} proposed a gene selection method called Correlation Redundancy and Interaction Analysis (CRIA), which selects cancer-related genes based on CNVs for cancer classification. The CRIA method utilizes the multivariate correlation characteristics of CE and designs a gene feature interaction strength measure to screen genes with strong correlation to cancer type. They applied this method to the cBioPortal cancer genomics data, utilizing data from six cancer types to select 200 cancer-related genes. To verify the algorithm's effectiveness, they compared the method with eight other gene selection algorithms based on data from Arizona State University. The results showed that the genes selected by the CRIA method could more accurately predict cancer types. Shang et al.\cite{Shang2024,Shang2024master,Yan2025} proposed a feature selection method based on CE, called CEFS+, for high-dimensional gene data analysis. They applied this method to three publicly available high-dimensional cancer gene datasets and compared the proposed method with similar methods. The results showed that the proposed method gave the best prediction results in the experiments, demonstrating its superiority. Pan et al.\cite{Pan2023b} proposed a CE based improved gray wolf optimization algorithm for feature selection of high-dimensional genomic data. This method uses the CE values ​​corresponding to genes to initialize the gray wolf optimization algorithm population before performing feature selection optimization. They validated the improved algorithm on 10 sets of high-dimensional small-sample gene expression datasets. The results showed that, compared with other feature selection methods, the improved algorithm improved classification performance by selecting fewer features. The CE based improved algorithm achieved the best classification results on 7 out of the 10 datasets, demonstrating the effectiveness and superiority of the algorithm.

In the life process, complex diseases are often caused by multiple genes, which is reflected in genomic data as disease type data being simultaneously correlated with multiple gene features. Therefore, when building bioinformatics models, the interaction between these multiple genes must be considered during feature selection to obtain a better-performing disease classification model. Zhong \cite{Zhong2024master} proposed a feature selection method based on Copula and normalization metrics, called IFSMRMR, which considers both the correlation between features and disease labels and the redundancy between features. CE is used to measure the strength of interactions during feature selection. He compared this method with six common feature selection methods on eight public gene expression datasets. The results show that this method improves classification performance by considering feature interactions and also achieves better feature clustering results, outperforming similar comparative methods and validating the effectiveness and superiority of the proposed method.

Constructing gene regulatory networks based on gene sequencing data is one of the main problems in bioinformatics, aiming to understand gene function and identify the dynamic processes of gene expression. Single-cell sequencing technology can simultaneously measure the whole-genome expression of a large number of individual cells, while time-series single-cell sequencing data reflects the dynamic processes of gene regulation within cells. Therefore, nonlinear time-series causal analysis tools such as TE can be used to discover gene regulatory networks. Zhu \cite{Zhu2023} proposed a gene regulatory network construction method based on TE analysis, called GRN-PAGATE, which employs the CE based TE estimation method. He validated the method on Ecoli data from the DREAM3 challenge and single-cell sequencing data from early mouse embryonic blood development, and compared it with similar methods. Experimental results show that the method has performance equivalent to GRNTSTE on Ecoli data and slightly better than similar methods such as DynGENI3 and SCRIBE; on mouse embryonic data, the method can effectively discover key gene regulatory relationships that other methods have failed to discover, and its performance is superior to similar comparative methods.

Bayesian networks (BNs), as a tool for causal relationship analysis, are increasingly used in omics data analysis in systems biology. However, their linear assumptions make them unsuitable for analyzing the dynamics of complex regulatory networks. CE, as a model-free nonlinear relational measure, provides an efficient algorithmic tool for analyzing and processing complex high-dimensional gene data. Li\cite{Li2024phd} proposed a novel method combining BNs and CEs to construct gene regulatory networks. In this method, CEs are used to further filter and identify the sub-network structures generated by BNs to obtain the final gene regulatory network. Based on six publicly available hepatocellular carcinoma (HCC) sequencing datasets, the authors applied this method to the problem of identifying gene regulatory sub-networks in HCC and compared it with five similar methods. Experiments based on this method yielded a gene sub-network containing 12 genes and 28 edges, which is the smallest sub-network among all methods. The authors applied the sub-networks obtained by each method to three GEO datasets to conduct HCC tumor classification experiments to verify the effectiveness of the sub-networks obtained by each method. Experimental results showed that the subnetwork obtained by this method performed best in all three classification experiments, with an average AUC significantly higher than 90\%. Subsequent differential gene expression analysis further confirmed that the genes in the obtained subnetwork were significantly downregulated at both the mRNA and protein levels, representing potential biomarkers for disease diagnosis. The results obtained by this method are interpretable, can enhance the understanding of the underlying molecular mechanisms of diseases, and help identify potential therapeutic targets and develop corresponding treatments. It is expected to be applied in similar analyses of other diseases.

Human genome sequencing data has opened a door to understanding the biological mechanisms of diseases. However, the dimensionality of genomic data far exceeds the number of people sequenced, leading to the curse of dimensionality in genomic data analysis. Understanding the data and transforming it into biomedical technologies requires more new high-dimensional statistical analysis tools, and information theory provides a theoretical tool for solving this problem. The unique concepts of information, redundancy, and synergy in traditional information theory and partial information decomposition theory can help us analyze high-dimensional data, but estimating these concepts from the data is a practical challenge. Lacalmita\cite{Lacalmita2025} proposed using a CE based GCMI method to estimate MI and synergy concepts, and based on this, proposed a genomic data analysis workflow for disease prediction. This method first uses complex network community detection technology to cluster genes to obtain gene clusters. Then, for each cluster, feature selection technology is used to construct a disease prediction model, thereby obtaining gene clusters with better prediction performance. Then, CE based MI and synergy metrics are used to further cluster the obtained gene clusters to obtain gene subclusters. Finally, the genes in the subclusters are used for disease prediction. He validated the method using publicly available genomic data from NCBI on hepatocellular carcinoma and autism spectrum disorder, finding that the method can effectively screen high-dimensional genes. Furthermore, genes selected using the concept of synergistic clustering were better than those selected using MI clustering, as the former selected genes were more consistent with the conclusions of biomedical disease mechanism research and had better disease prediction performance.

\section{Clinical diagnostic}
Heart disease is one of the most common clinical diseases. Physicians have accumulated rich clinical diagnostic experience in heart disease and can make diagnostic decisions based on various physiological measurements. Developing intelligent clinical diagnostic models based on this experience has been a long-term goal in the field. The key to developing such models lies in selecting a set of physiological measurement variables to construct the predictive diagnostic model. Based on the well-known UCI Heart Disease Dataset\cite{asuncion2007uci}, Ma\cite{jian2019variable} proposed using CE as a variable selection method to select a set of physiological variables to construct a diagnostic model. This dataset contains real clinical heart disease physiological measurements and diagnostic data from four locations around the world, among which 13 physiological measurement variables were identified as clinically relevant by medical experts. Experimental results show that the CE method selected 11 of the 13 clinician-identified variables, the most among the compared methods, thus achieving the best predictive accuracy. Simultaneously, the CE method also discovered other diagnosis-related variables beyond the identified variables, providing new references for further clinical testing.

Diabetes is another common clinical disease. The management of diabetic patients' conditions is closely related to clinical outcomes (morbidity and mortality). Therefore, establishing a strict inpatient management process for diabetic patients is crucial for their safety, which necessitates analysis and research on management standards. To assess the treatment effectiveness of hospitalized patients, the US medical community established the Health Facts dataset\cite{Strack2014}, which includes data on diabetic patients from 130 US hospitals and treatment networks. Based on data from 101,721 hospitalized patients over a 10-year period (1999-2008) using this dataset, Mesiar and Sheikhi\cite{Mesiar2021} used the CE variable selection method to build a predictive model to predict the ``medication status" variable from 49 other variables. This model achieved good predictive results, reaching 97.2\% accuracy with only 20 variables selected, improving our understanding of medication-related variables and constructing a rational drug use evaluation model.

Cancer prognosis refers to the assessment of the future development of cancer based on its clinical manifestations and diagnostic results, in order to aid further clinical decision-making. Prognostic factors considered in clinical assessment are crucial, but often numerous and require analysis and selection. For example, there are as many as a hundred prognostic factors for lung cancer. Prognostic models are patient risk prediction models built upon prognostic factors and are important clinical tools in cancer treatment. Ma\cite{Ma2022a} proposed a survival analysis variable selection method based on CE and applied it to the problem of prognostic factor selection to build a prognostic model predicting patient survival time. He validated the method based on two publicly available lung cancer datasets, finding that it can select prognostic factors that meet clinical criteria and obtain a better predictive model than similar methods, exhibiting better predictive performance while ensuring model interpretability.

Breast cancer is one of the most common malignant tumors in women, and its incidence and mortality rates in China are showing an increasing trend year by year, seriously threatening women's health and family happiness. Utilizing statistical methods to analyze clinical data and construct diagnostic models to assist clinical diagnostic decisions can improve doctors' work efficiency and reduce misdiagnosis rates, thereby promoting patient health improvement. Fu \cite{Fu2023b} proposed using a feature selection method to construct a prognostic model for breast cancer patients. She employed three feature selection methods: Lasso, CE, and RFREF. Clinical diagnostic data of breast cancer patients from 2010 to 2014 in the SEER database were analyzed. Four models—logistic regression, random forest, XGBoost, and Stacking—were constructed using the features selected by the three methods to predict patients' 5-year survival. The results showed that the logistic regression model constructed using features selected by CE achieved the highest predictive accuracy (96.84\%).

Cataracts are a common eye disease and the leading cause of blindness. Phacoemulsification is the preferred surgical treatment for cataracts worldwide. Although this surgery is highly advanced, postoperative complications such as corneal edema can still occur, affecting visual recovery and causing patient discomfort. Therefore, constructing a risk factor-based corneal edema prediction model is essential in clinical practice. Luo et al.\cite{Luo2023,Luo2023master} proposed using the CE method to construct a postoperative corneal edema risk prediction model. They applied this method to data from 178 patients, selecting predictive variables from 17 variables in the data. Ultimately, the four variables used in the clinical prediction model (diabetes, best corrected visual acuity, lens thickness, and cumulative dissipated energy) were reduced to two (best corrected visual acuity and cumulative dissipated energy) without affecting prediction accuracy. The results showed that the prediction model obtained using CE has clinical application value, reducing the amount of clinical information needed for prediction while maintaining predictive performance.

Aortic regurgitation is a common valvular heart disease, primarily characterized by the backflow of blood from the aorta into the left ventricle during diastole. Aortic valve replacement surgery is one of the traditional treatments for aortic regurgitation. Left ventricular ejection fraction (LVEF) is an important indicator of cardiac function, and studying its improvement before and after surgery can provide evidence for the timing and outcome prediction of valve replacement surgery. Sunoj and Nair \cite{Sunoj2023} extended the concept of CE using Survival Copula, proposing a new concept called Survival Copula Entropy (SCE) to measure the dependencies between variables related to the survival function. They applied SCE to clinical data from aortic valve replacement surgery and found a positive correlation between LVEF before and after surgery.

Brain tumors are a high-mortality tumor, accounting for approximately 5\% of all cancers, and their incidence has been on the rise in recent years in China. Brain tumor lesions are characterized by diverse morphologies and unpredictable locations, making diagnosis challenging. Classification and identification based on non-invasive medical images is the primary clinical diagnostic method. Utilizing deep learning methods to extract quantitative features from tumor medical images and construct diagnostic models can assist physicians in clinical diagnosis, thus attracting extensive research. How to extract and select quantitative features from images is a key issue in constructing auxiliary diagnostic models. Pan \cite{Pan2023} proposed such a feature selection method, first initializing the feature set using correlation metrics such as CE, and then optimizing the feature set using the Grey Wolf optimization algorithm with classification performance as the objective. Using imaging data from 102 patients with low-grade gliomas with ATRX mutations from the First Affiliated Hospital of Chongqing Medical University, Southwest Hospital, and Sichuan Cancer Hospital, he extracted 5,530 radiomics features across five categories. The results showed that, compared to the comparative methods, the proposed method achieved the best classification performance with the fewest (13) features selected, and the selected features were correlated with the ATRX mutation status, showing potential as biomarkers.

Pulse wave diagnosis is a primary diagnostic method in traditional Chinese medicine (TCM) because it carries complex and diverse pathological information, reflecting the physiological state of the cardiovascular system to a certain extent. TCM pulse diagnosis relies heavily on the personal experience of renowned physicians. Researching analytical algorithms for pulse wave data is crucial for the non-invasive diagnosis of common diseases such as diabetes and hypertension, contributing to the scientific development of TCM. Tang\cite{Tang2023} proposed a multimodal pulse wave diagnostic algorithm based on graph convolutional neural networks. This algorithm converts the pulse wave into a three-channel image containing complementary pathological information, extracts image features using ResNet, and finally uses correlation metrics such as CE to obtain an adjacency matrix reflecting the temporal correlation between pulse wave signals to construct a graph convolutional neural network for disease classification and diagnosis. Based on actual wrist and fingertip pulse wave data, he classified the health status of patients with hypertension and diabetes, demonstrating a prediction accuracy of over 99\%.

\section{Geriatrics}
Alzheimer's disease is one of the major neurodegenerative diseases affecting the elderly, clinically manifested as excessive cognitive decline. Early screening and diagnosis can help dementia patients and their families intervene and manage disease progression at an early stage, effectively improving patients' quality of life and reducing family and social costs and burdens. The Mini-Mental State Examination (MMSE) is one of the widely used cognitive ability screening tools in clinical practice. Ma \cite{jian2019predicting} analyzed the correlation between finger tapping characteristics and the MMSE using CE, discovering a set of characteristics associated with the MMSE, including tapping frequency (or number of taps or average tapping interval). Based on this correlation, they constructed a predictive model from finger tapping characteristics to the MMSE, achieving good predictive results. This predictive model holds promise for use in cognitive ability screening for diseases such as dementia.

Parkinson's disease (PD) is another common neurodegenerative disease, clinically manifested as bradykinesia and motor dysfunction. Repetitive transcranial magnetic stimulation (rTMS) is a clinical treatment technique that uses pulsed magnetic fields to act on the central nervous system to improve physiological function. It is widely used in the treatment of neurological and psychiatric diseases and has recently been applied to PD rehabilitation research to alleviate patient symptoms and improve motor function. Li et al.\cite{Li2023} studied the neuromodulation mechanism of rTMS in the adjunctive treatment of motor symptoms in PD patients. They analyzed EEG data before and after rTMS treatment using CE based GCMI and other methods, constructed a brain functional network connectivity matrix, and obtained three network characteristic parameters. The experimental results showed that rTMS mainly altered the beta and gamma oscillations in PD patients, and the corresponding changes in the motor cortex may be related to the improvement of motor function.

Falls are a significant health risk for older adults, requiring scientific management and early intervention. Fall prediction is a crucial tool for managing fall risk. The Timed Up and Go (TUG) test is a primary tool for assessing fall risk. Ma \cite{ma2020predicting} proposed a fall risk prediction method combining video analysis and machine learning. This method first analyzes 3D posture information from videos of older adults performing the TUG test. Then, it calculates a set of gait features from a sequence of posture information over a period of time. By using CE to examine the correlation between gait features and fall risk indices, a set of risk-associated gait features (including stride length, gait speed, and gait speed variance) is selected. Finally, these features are used as input to construct a fall risk prediction model. Experiments on real data demonstrate good predictive performance. This analysis also shows an intrinsic link between mobility reflected by gait features and fall risk, making the model clinically interpretable.

\section{Psychiatrics}
Depression is a common mood-related mental disorder affecting approximately 350 million people worldwide, making its research crucial for human health. Electroencephalography (EEG) is a non-invasive method for measuring brain activity and is widely used in brain disease research. Brain functional networks (BCNs) are functional indicators reflecting brain activity constructed based on EEG signals, and these networks can be constructed using various methods such as MI and coherence analysis. Zhang et al.\cite{Zhang2022,zhang2022master} proposed using a brain network connectivity index based on the imaginary part of coherence to study the identification of patients with depression. They used feature selection methods such as CE and relief filtering to select connectivity features from the EEG network, finding that the coherence online feedback indicator feature set obtained by combining CE and relief filtering can effectively distinguish between patients with depression and healthy individuals.

Mental illnesses are brain dysfunctions caused by psychological, physiological, and social environmental factors, such as schizophrenia, bipolar disorder, and attention deficit hyperactivity disorder. Numerous studies have shown that these diseases may share a common psychopathological pattern, making cross-disease pathological analysis crucial. Functional magnetic resonance imaging (fMRI), as a non-invasive method for detecting brain activity, can provide multi-band brain activity signals from different brain regions, allowing for the inference of brain functional connectivity networks for research on disease mechanisms and diagnostic methods. However, traditional brain functional networks are mostly based on linear relationships between brain regions, while brain neural activity exhibits non-linear dynamic characteristics, necessitating the design of brain functional network generation methods capable of detecting non-linear relationships. Han et al.\cite{Han2025} proposed a multi-frequency decomposition entropy (MDE) method based on variational mode decomposition (VMD) and CE to infer non-linear brain functional connectivity networks from fMRI time-series data. In this study, VMD is used to perform frequency mode decomposition on the time-series fMRI signals of brain regions, while CE is used to calculate the nonlinear correlation strength matrix between brain regions in different modalities. Finally, based on this matrix, paired brain regions with strong correlations are selected. They applied this method to the open fMRI brain imaging dataset for mental illnesses, which includes sample data from schizophrenia, bipolar disorder, attention deficit hyperactivity disorder, and healthy individuals. The study found that the network relationships obtained by this method for patients and healthy individuals in the three diseases are significantly different, and the networks for each disease have their own unique characteristics, thus possessing the potential to serve as disease biomarkers. The authors compared the brain functional connectivity networks generated by the MDE method with those generated by linear methods, finding that the MDE method based on the CE nonlinearity metric can detect differences in the networks between patients and healthy individuals, while traditional linear methods cannot detect these differences, validating the superiority of this method.

\section{Forensics}
Forensic DNA analysis is one of the hottest and cutting-edge technologies in the field of forensic medicine. It refers to the analysis of the correspondence between DNA genetic information and human phenotypic characteristics, and the characterization and prediction of the phenotypic characteristics (such as height and age) of the DNA source from the DNA test information of biological samples, providing clues and evidence for case analysis and identity identification. Height is an important biological characteristic of interest to forensic medicine, and inferring human height based on DNA analysis is one of the important issues of interest in forensic medicine. DNA methylation is an important epigenetic phenomenon, and many studies have shown a link between DNA methylation and height. Therefore, predicting height characteristics using DNA methylation sites is a problem worthy of research. Wang \cite{Wang2024phd} studied the problem of predicting height based on DNA methylation sites. He first used publicly available data to confirm the link between DNA methylation and height, screened a set of relevant sites, and then used machine learning methods to construct a height prediction model based on the selected sites based on the sequencing data of the recruited population, obtaining relatively accurate prediction performance (prediction error of about 4.5 cm). In his study, he used methods such as CE to analyze and estimate the correlation between selected loci and height for different sexes. He found that the correlation between methylation features and height differed between sexes, which is consistent with the sex heterogeneity in human growth and development. This difference was also reflected in the model prediction experimental results, thus demonstrating the necessity of constructing different height prediction models based on sex. In contrast, the analysis based on the linear PCC did not find this sex difference, illustrating the superiority of CE as a nonlinear correlation measure.

\section{Pharmacy}
Drug-target interaction (DTI) prediction refers to the computational prediction of the binding between small drug molecules and protein targets. It is one of the most crucial steps in the drug discovery process, improving the efficiency of traditional experimental drug candidate screening. The use of machine learning methods to predict DTIs from drug and protein feature sets has been extensively studied and yielded significant results. Among these, selecting an appropriate feature set to build a DTI prediction model is a key issue, as the selection result is crucial to the final model's predictive performance. Gao et al. \cite{Gao2024} proposed an interactive multi-feature fusion DTI prediction algorithm. They designed a multi-feature fusion algorithm called RCI using tools such as CE, which can select the optimal feature set with high interactivity and low redundancy without sacrificing prediction accuracy. They compared their proposed method with similar DTI baseline algorithms using four benchmark datasets (Enzyme, IC, GPCR, and NR). Experimental results show that the RCI method outperforms similar feature selection methods, achieving the best prediction accuracy when the selected feature set is minimized. Furthermore, the prediction performance of the RCI-based prediction algorithm is better than that of similar DTI baseline algorithms, thus verifying the algorithm's superiority.

\section{Public health}
Epidemics are a crucial topic in public health, and timely diagnosis of patients with epidemics is essential for controlling their spread. Patients infected with epidemic viruses often present with symptoms such as fever, making them difficult to distinguish from patients with normal fever. The currently prevalent novel coronavirus patients exhibit such fever symptoms, making the development of technologies based on clinical data to differentiate between virus-infected individuals and those with normal influenza an urgent issue. However, with over a dozen related symptoms, selecting an appropriate set of variables is key to successful research. Mesiar and Sheikhi\cite{Mesiar2021} analyzed 19 symptom variables related to the diagnosis of COVID-19 patients using real clinical data based on the CE based variable selection method. They found that age, fatigue, and nausea and vomiting were the most important diagnostic variables, achieving a diagnostic accuracy of 85\%. Increasing the number of diagnostic variables to 15 improved the accuracy to 91.4

Hypertension is the leading cause of death worldwide, posing a serious threat to public health. Genome-wide association studies have shown that multiple genes are closely related to hypertension. Several studies have reported that the type I cell membrane calcium transporter gene (ATP2B1) is associated with systolic and diastolic blood pressure. This gene has 21 CpG sites. Investigating the relationship between this gene and its CpG sites and hypertension is a new and important question. Purkayastha and Song \cite{Purkayastha2022} proposed a new concept of asymmetric predictability, called asymmetric MI (AMI), and provided an estimation method using CE theory. They applied this method to the ELEMENT dataset, analyzing data from 525 children aged 10-18 years, and found that ATP2B1 is associated with diastolic blood pressure, confirming previous findings. They also found that the CpG site CG17564205 of this gene is associated with diastolic blood pressure, and based on AMI, diastolic blood pressure is predictive of this site. This new finding indicates that blood pressure can be altered at this site.

\section{Economics}
The evaluation of economic policies requires quantitative analysis, which can scientifically and objectively assess policy effectiveness. Shan and Liu \cite{Shan2020,luo2022} proposed a decision tree construction method for quantitatively analyzing the effects of policy combinations. CE is used to measure nonlinear correlations and construct decision trees. The method utilizes information gain based on the definition of CE to build policy decision trees that distinguish different policy target groups, with leaf nodes representing the group divisions corresponding to different policy combinations. They applied this method to the field of development economics to evaluate the effectiveness of China's poverty reduction policies, analyzing data from Sichuan Province in the 2018 government-led census of impoverished households. The analysis found that employment policies, new sources of income, and whether or not a household has a mortgage are the main policy factors affecting household income, revealing different characteristics of the income structure of different target impoverished groups corresponding to these policy combinations. This method, without historical data, evaluated and verified the effectiveness of poverty reduction policies and identified more effective policy combinations. 

The core objective of economics is to discover causal relationships. Traditional economics relies on inferential modeling and experimental design based on it. Causal discovery is a method of finding causal relationships from data, and combining it with economic theoretical models is a new path for designing economic experiments. Bossemeyer\cite{Bossemeyer2021} proposed a conditional independence test algorithm based on the relationship between CE and MI, and applied it to the PC algorithm for causal structure discovery. The authors used the new PC algorithm to study bargaining theory in economics, investigating the role of reciprocity in bargaining behavior and the role of response time in this process. The authors applied the algorithm to data from eBay's Best Offer platform, finding a correlation between the price concessions of both parties, confirming the reciprocity theory; at the same time, they discovered a causal effect of the opponent's counter-offer response time on the next asking price.

An industrial chain refers to a chain-like relationship formed between industrial sectors based on economic relations. The industrial chain is based on various factors such as resource allocation and specialization, forming upstream and downstream relationships for value exchange. Upstream enterprises provide products and services to downstream enterprises while receiving feedback, thus establishing an interactive relationship. Correlation analysis between the various links of the industrial chain is of significant reference value for industrial layout management and portfolio design. Based on the concept of CE, Wei \cite{wei2021} proposed the concept of Pair-Copula entropy to measure the pairwise correlation within multiple variables. She applied this concept to a study on the correlation between various links in the domestic livestock and poultry farming industrial chain. Based on stock price data of nine major listed companies in the upstream, midstream, and downstream sectors of this field, she used Pair-Copula entropy to measure the correlation between upstream, midstream, and downstream sectors within the industrial chain. The study found that the upstream sector had strong correlations, while the downstream sector had weak correlations; unconditional correlations were strong, while conditional correlations were weak; and there were strong correlations between upstream and midstream sectors.

Investor sentiment has a wide-ranging and multifaceted impact on financial markets, and investor sentiment analysis is one of the important issues in economic research. Due to the integration of social media and market relationships, investor sentiment spreads between groups and countries, forming a transmission network that allows localized sentiment fluctuations to rapidly spread and cause systemic effects. Han and Zhou\cite{Han2022} proposed a method based on a combination of wavelet analysis, TE, and network analysis to study the patterns of investor sentiment transmission among companies, employing the TE estimation method based on CE. They used Baidu search index data of 137 listed new energy vehicle companies in China from 2015 to 2021 to represent investor sentiment, decomposing it into multi-scale information using wavelet analysis, constructing a sentiment transmission network using TE, and finally analyzing short-term and long-term transmission characteristics using network analysis. They found that investor sentiment exhibits short-term localized activity and a continuous and gradually increasing evolutionary pattern.

Inflation expectations directly influence the economic behavior of market participants and are one of the causes of inflation. Studying the relationship between inflation and expectations is an important topic, particularly valuable for central bank policymakers. Ardakani\cite{Ardakani2024} proposed using the CE method to analyze the information content of expectations on inflation, proving that negative Fisher Information (FI) is the lower bound of CE and can serve as a minimum measure of the relationship between inflation and expectations. Using tools such as CE, he analyzed monthly inflation indices (CPI and PPI) and inflation expectation indices (University of Michigan Survey Index, Cleveland Federal Reserve Bank 2-year, 10-year, and 30-year expectation indices) in the United States from 1982 to 2022, finding that the CE is smallest between 30-year expectations and inflation, indicating that it provides more information for predicting inflation. This research provides a powerful tool for central banks to manage expectations to achieve inflation targets, helping to understand the predictive power of different expectations on inflation, thereby enabling more effective inflation control.

\section{Management}
Accurate prediction of agricultural futures prices helps provide a reference for scientific decision-making by relevant government departments, and is therefore of great significance to ensuring national food security. However, price prediction is affected by a variety of complex factors, such as the international situation and market sentiment. Therefore, identifying the influencing factors of prices is crucial for building accurate price prediction models. An et al.\cite{An2023} proposed a hybrid prediction framework based on historical data and text data, which integrates multiple methods. Empirical Mode Decomposition (EMD) is used to preprocess historical data, Dynamic Topic Model (DTM) and sentiment analysis are used to extract information from Weibo text. Then, methods such as CE are used to screen the extracted factors for constructing the prediction model. The authors verified the proposed framework on two real-world datasets: pork price data from the National Bureau of Statistics and soybean futures price data from the Dalian Commodity Exchange, and collected Weibo text data within the corresponding time periods. In the experiment, the authors compared the CE method with similar methods such as dCor and HSIC. The results showed that the CE based prediction model gave the best prediction performance on both datasets.

Brazil is the world's largest producer and exporter of sugar, with a sugarcane cultivation history dating back to 1532. Sugarcane is cultivated almost throughout the entire country, with the central-southern region and the northeast being the main production areas. S\~ao Paulo state leads in both planted area and sugarcane derivative production. Simultaneously, Brazil uses over half of its sugarcane to produce anhydrous and hydrous ethanol for vehicle fuel, aiming to reduce the country's dependence on imported oil. It is the world's second-largest producer and third-largest consumer of fuel ethanol. Therefore, Brazil's fuel market, consisting of gasoline and ethanol, is closely related to the agricultural market of sugarcane production and is influenced by various natural, economic, and social factors. The price relationships among these three commodities are complex, and analyzing these relationships is of significant reference value to Brazilian market managers and participants. Flores\cite{Flores2025master} used the CE based TE method to analyze time-series data on the prices, returns, and volatility of these three commodities from May 2004 to November 2023, analyzing the dynamic changes in the relationship between the three commodities before, during, and after the COVID-19 pandemic. Analysis revealed distinct dynamic relationships across the three time periods: pre-pandemic, the market was stable and predictable, with moderate interaction among commodity prices; during the pandemic, commodity price volatility and interrelationships significantly increased; and post-pandemic, the market stabilized in a new equilibrium, with the dynamics of the relationship influenced by geopolitics and energy policies. The study also showed that gasoline and ethanol influence each other, with gasoline having a greater impact on ethanol, consistent with ethanol's status as a gasoline substitute. Simultaneously, ethanol's impact on sugar was stronger than gasoline's, as gasoline indirectly affects sugar through ethanol. The TE method based on CE successfully revealed the relationship between the prices of sugar, ethanol, and gasoline in Brazil, as well as the impact of the pandemic on this relationship, demonstrating its power as a dynamic economic relationship analysis tool.

Inventory management is a crucial aspect of enterprise operations and management, and a significant issue in management science. The newsboy problem, a typical single-cycle inventory management model, has long been a focus of research in this field. In recent years, research on the newsboy problem using data-driven models and methods has demonstrated superiority over traditional approaches, thus becoming a hot topic. Tian and Zhang\cite{Tian2023} proposed an end-to-end algorithmic framework that uses a deep learning model to predict order quantities from feature data such as online product reviews, employing methods including CE to select input features for the model. They applied their method to the automotive inventory management problem, building a model based on historical sales data of Volkswagen Lavida vehicles from 2016 to 2022, reviews from a certain website, a search engine index, and macroeconomic indices. The results show that this method can significantly reduce the sum of excess costs and shortage costs, reducing costs by 31.8\% compared to similar methods.

Chinese enterprises face both opportunities and challenges in overseas mergers and acquisitions (M\&A). Exploring the various domestic and international factors influencing Chinese enterprises' overseas M\&A and analyzing the short-term and medium-to-long-term performance of M\&A is of significant theoretical and practical importance. Wang et al. \cite{Wang2022b,Wang2025b} proposed using the Copula VECM model to analyze the impact of economic variables strongly correlated with the number of overseas M\&A transactions, specifically considering the dynamic impact of macroeconomic variables neglected by other researchers. Because there are many such economic variables, the complexity of the constructed VAR model can easily increase, leading to inaccuracies in the estimation model. Therefore, they proposed using CE to select economic variables before building the model. They selected quarterly data on the number of overseas M\&A transactions and seven other macroeconomic variables potentially correlated with the number of M\&A transactions from the Wind database. Through CE correlation analysis, they concluded that macroeconomic leverage ratio, GDP, money supply growth rate, and exchange rate are four macroeconomic factors that cannot be ignored in influencing the M\&A activities of Chinese enterprises. They further analyzed and discussed the inherent economic logic of the selected variables' impact on the number of M\&A transactions, enhancing the model's rationality.

\section{Sociology}
Gender inequality is a key issue in sociological research. From a gender perspective, we can identify many inequalities, such as those between the sexes in income, education, and occupation. Analyzing and identifying the sociological factors that lead to inequality is a concern for scholars, and using quantitative methods to analyze relevant sociological data is one research approach. However, the causal chains between various social factors are highly complex, requiring the use of scientific data analysis tools. Ma\cite{ma2022causal} proposed a multi-domain causality identification method, treating gender as an external social variable, transforming the inequality problem into a domain transfer problem in data analysis, and using conditional independence tests based on CE to discover causal relationships between social variables. He applied this method to data from the U.S. National Adult Income Survey, analyzing the causal chain between gender, education, and income, and finding scientific evidence that gender leads to educational inequality, which in turn causes income inequality.

\section{Pedagogy}
There are inherent connections between various subjects in high school education. The curriculum emphasizes the fundamental role of mathematics in subjects such as physics, chemistry, and biology. Mathematical knowledge, mathematical thinking, and methodologies profoundly influence the teaching of other subjects. Therefore, mathematics scores are considered to be correlated with scores in other subjects. Utilizing empirical methods to study the relationship between mathematics and other subjects, and analyzing the correlation between mathematics scores and other scores, is an important fundamental issue with universal reference value for teaching reform and the selection of learning methods. Based on the final exam scores of science students in the first and second years of high school and two mock exam scores in the third year of high school in a certain city in 2013, Liu\cite{liu2018master} studies the correlation between mathematics scores and scores in other subjects. The author compares three correlation measurement methods: classical linear correlation coefficient, rank correlation coefficient, and MI. From the perspective of the theoretical relationship between CE and MI, the paper analyzes and demonstrates the superiority of the MI measure, and experimentally proves that the MI measure can better characterize and reveal the influence mechanism of mathematics on other subjects (Chinese, English, physics, chemistry, and biology, etc.).

\section{Computational Linguistics}
City service hotlines are an important component of the government's public management system, promoting communication between the government and citizens and improving public services. However, traditional manual dispatching methods cannot meet the ever-increasing demand for hotlines. How to efficiently and quickly process a large volume of citizen hotline requests is a crucial issue for improving the service quality of city service hotlines. The accumulation of large amounts of hotline text data makes it possible to quickly filter and process hotline requests. Natural language processing (NLP) methods can be used to process hotline text data, thereby constructing an intelligent dispatching system. Chen et al.\cite{Chen2023patent} proposed a city hotline dispatching method based on knowledge graph technology. This method constructs a hotline knowledge graph based on city hotline data, and then dispatches requests based on the search results of the constructed knowledge graph, significantly improving the efficiency of hotline services. In this intelligent dispatching system, CE is used as a feature selection method to preprocess city hotline data to construct and update the knowledge graph. The results show that CE outperforms other similar methods. The authors applied this method to the Jinan citizen service hotline system, and by continuously updating the knowledge graph, ultimately achieved a dispatching accuracy rate of over 90\%.

Word embedding is a fundamental technique in NLP. It maps words to a semantic vector space, ensuring that semantically similar vectors are also close in distance within the vector space. High-quality word embeddings are the goal of NLP model training and directly impact the quality of downstream NLP tasks. Therefore, measuring word embedding quality is a crucial NLP problem. As a model-independent metric, MI is used to evaluate word embedding quality, calculating how much original semantic input information is preserved in the embedding space. Estimating MI becomes key to solving this problem. Chen et al. \cite{Chen2025b} proposed a Vector Copula-based MI estimation method by extending CE theory. This method first estimates the Vector Copula and then calculates MI based on the equivalence between CE and MI. They applied their method to the problem of evaluating the quality of word embeddings in language models. They composed paired text evaluation data from the IMDB movie review dataset containing both positive and negative reviews, then calculated the word embeddings of two models, Llama-3 and BERT, on this dataset. Next, they used an Autoencoder model to losslessly map the original word embedding vectors to a 16-dimensional vector space. Finally, they used their proposed method to calculate the MI between these 16-dimensional paired vector sets. They compared their proposed MI estimation method with similar methods, and experimental results showed that their method significantly outperformed the comparison methods in MI estimation of word embeddings from both language models, demonstrating the effectiveness and superiority of their proposed method.

\section{Media science}
How public health emergencies affect public sentiment is a crucial question with both theoretical and practical significance, offering valuable insights for government information dissemination and public opinion management. Particularly in the new media environment, the spread and evolution of public sentiment are influenced by multiple factors, making it even more complex. The COVID-19 pandemic has provided an opportunity to study this issue. Zhang et al. \cite{Zhang2022a} investigated the characteristics and mechanisms of the impact of the COVID-19 pandemic on public sentiment during the outbreak in Shanghai. Based on data from the Weibo platform related to the ``Shanghai epidemic", they studied the influencing factors, temporal evolution, and causal relationship between the pandemic and public sentiment. The study utilized the CE based TE method to analyze the causal relationship between the pandemic and public sentiment, empirically finding that the causal effect of the pandemic on negative public sentiment was greater than that on positive sentiment, and that positive sentiment had an inhibitory effect on negative sentiment.

\section{Law}
Communities are basic units of social life, and community security management is closely related to everyone's life. There is an inherent link between community attributes and community crime. Analyzing the relationship between community economic, social, and demographic attributes and various types of crime can deepen our understanding of criminal behavior and provide important reference for law enforcement agencies to rationally allocate and deploy resources. Wieser\cite{Wieser2020} proposed a new information bottleneck estimation method based on the equivalence relationship between CE and MI. Due to the utilization of the transformation invariance of CE, this method has better estimation performance than traditional methods of the same kind. He applied this method to the US community and crime dataset, analyzing the relationship between 125 economic and social factors and 18 crime attributes (including 8 types of criminal behavior, per capita crime rate, and per capita (non-)violent crime rate), and learned a latent variable model that can represent this relationship, providing a reference for constructing crime prediction models.

\section{Political science}
Political security is crucial to national security. Political science research focuses on the relationship between leadership factors and regime crises, and uses this information to allocate resources for intelligence gathering, stabilization, or overthrowing actions. Based on the International Political Leadership Dataset from Syracuse University's Moynihan Institute for Global Affairs, Card\cite{Card2011} studied the nonlinear relationships between 37 leadership factors and political security, using CE (MI) as a nonlinear analysis tool. The study focused on the relationship between two leadership variables (the reasons for regime establishment and the reasons for regime termination) and other factors. The analysis results corroborate existing sociological theories, confirm known relationships, and uncover previously unknown relationships and phenomena.

\section{Military science}
Timely and accurate identification of target intent is a crucial aspect of battlefield situational awareness and a foundation and prerequisite for command and decision-making. Identifying the intent of aerial targets faces numerous uncertainties, such as uncertainties in behavioral and physical characteristics, flight rules, and operational capabilities, making timely and accurate intent identification extremely difficult. Zhang et al.\cite{Zhang2022patent} proposed a target intent identification method based on dynamic Bayesian networks to identify intent from time-series data of targets in complex situations. The method utilizes a CE based MI estimation algorithm to generate a Bayesian network structure from target attributes and intent data, then uses an adaptive genetic algorithm to iteratively optimize the network structure, and finally uses the optimized network to identify the intent of unknown targets. They applied this method to the processing of aerial targets, using the target's position information, flight information, and radar and communication system information to identify six different intents (patrol, early warning/command, electronic reconnaissance, electronic jamming, attack, and strike). This method is not limited to aerial targets and can be easily extended to other types of targets.

\section{Informatics}
Disruptive technologies are original and innovative technologies that can transform existing mainstream technologies and industries, driving transformative progress in the economy and society. Conducting forward-looking identification and prediction research on disruptive technologies is a crucial issue in the field of science and technology intelligence analysis, providing guidance for science and technology policy formulation, science and technology industry layout, and the cultivation of a science and technology innovation ecosystem. Research on science, technology, and industry interaction patterns based on knowledge network analysis is one approach to solving the identification and judgment problem. Xu et al.\cite{Xu2023b} proposed a disruptive technology research process framework. Using incremental technologies as a reference, they acquired textual data from scientific, patent, and industry literature. They constructed knowledge networks for each of the three using natural language processing techniques. Then, using three overall network attributes and network community similarity attributes of the knowledge networks, they divided the knowledge network interaction patterns into five pre-defined modes, including a science-technology-industry linkage mode. Among these, CE was used to measure the correlation between the overall network attributes of the three knowledge networks to characterize the interaction modes. They conducted empirical research using regenerative medicine (stem cells) as a disruptive technology and leukemia treatment as an incremental technology reference. They obtained relevant textual data from authoritative databases up to the end of 2020 and used this process framework to study the commonalities and differences in the science-technology-industry interaction models of the two comparative fields, deepening their understanding of the knowledge flow and diffusion patterns of disruptive technology innovation ecosystem elements.

\section{Energy}
Weather is a crucial influencing factor in energy systems, directly impacting both energy production and consumption. Especially when renewable energy is integrated into the energy system, weather factors such as wind speed and solar radiation determine the production capacity of wind and solar power, while temperature changes affect residents' energy consumption demands. However, natural systems possess significant uncertainty, posing challenges to the stable and efficient operation of renewable energy systems. Therefore, renewable energy network management systems need to establish reasonable models to integrate renewable energy sources into the network. Information theory provides a tool for managing the uncertainty of weather systems. Fu et al.\cite{fu2017uncertainty} studied a method for establishing weather models in integrated energy systems based on information theory. The authors used Copula functions to establish a joint distribution model of weather variables and employed the MI calculated by CE as an evaluation index for model accuracy to guide the modeling process. MI was also used to measure the correlation strength between various energy outputs. The authors used the obtained integrated energy system model to simulate the operation of an energy system in a region of northern China and compared it with actual data. The results show that the simulation of the system model basically matches the actual situation, indicating that the constructed weather model can meet the operational needs of the energy management system.

Photovoltaic power generation technology is highly uncertain due to environmental factors such as weather, which impacts the safe and stable operation of the power grid. Forecasting the active power of photovoltaic power plants based on meteorological conditions and other factors helps grid dispatchers better formulate dispatch strategies and address the impact threats posed by the uncertainty of photovoltaic power generation. Zhu and Zhang \cite{Zhu2022a} proposed a method combining optimization algorithms, mode decomposition, CE, and deep learning models to improve the accuracy of power generation prediction. They compared their method with several similar methods on photovoltaic power plant data from Yulara, Australia, demonstrating that the model obtained by this method better adapts to the impact of weather changes and achieves the best prediction results.

Under conditions of significant weather changes, photovoltaic (PV) power output can fluctuate dramatically, leading to a decrease in the accuracy of day-ahead forecasts. To address this, Yang et al. \cite{Yang2025b} proposed a day-ahead PV power forecasting method based on weather type classification. First, weather is categorized into transitional weather days and various stable weather day types. Then, TE based on CE is used to obtain meteorological factors with causal relationships to PV power under different weather types. Finally, a fusion model of multiple neural networks is used for point and interval forecasts. They conducted a forecasting experiment using this method based on power and meteorological data from a PV power plant in Songjiang District, Shanghai, in 2021. The experiment first categorized weather into transitional weather days and four types of stable weather days (corresponding to sunny winter weather, cloudy and rainy winter weather, sunny summer weather, and cloudy and rainy summer weather, respectively). Then, TE was used to select meteorological factors related to different weather types. Finally, the forecasting model of this method was compared with three similar models. Experimental results show that, based on TE analysis, meteorological factors that have a causal relationship between transitional weather days and photovoltaic power include wind level, precipitation, and surface solar radiation intensity. Meteorological factors related to sunny weather include cloud cover and surface solar radiation intensity, while meteorological factors related to cloudy and rainy weather include wind level, precipitation, surface solar radiation intensity, and air pressure. The model built on this causal relationship performs better than the comparative model under different weather types.

Photovoltaic modules are the core equipment of solar power plants, and their operating status directly affects the plant's efficiency and stability, such as power generation, cleanliness, and fault conditions. Therefore, predicting the status of photovoltaic modules is one of the core issues in photovoltaic power plant management. Traditional status prediction methods mostly rely on information from individual modules, neglecting the connections between modules, leading to low prediction accuracy. Wang and Zou \cite{Wang2024p2} proposed a photovoltaic module status method that considers both the influencing factors of individual module status and the connections between modules to construct a state graph neural network prediction model. Specifically, they used CE to estimate the correlations between different influencing factors and between status and influencing factors, constructing the topological relationships between variables in the prediction model. They validated the effectiveness of the method based on a sample set of photovoltaic modules.

Wind energy, as a major clean energy source, is characterized by intermittency and uncertainty, making power prediction and control of wind turbines highly complex. Analyzing the correlation characteristics between various variables within the wind turbine based on monitoring data helps in monitoring the turbine's health status and predicting wind power, thereby better utilizing wind energy resources. Cui and Sun \cite{Cui2022} proposed using CE to analyze the correlation between state variables of wind turbines, and then performing clustering based on CE correlations to obtain the turbine's operating conditions. They applied the method to data from a SCADA system of an offshore wind farm in Guangdong, finding that the CE method better describes the correlations in the data than traditional methods, and using the K-means method to obtain an accurate classification of operating conditions reflecting the wind turbine's operating characteristics and status, which has significant practical implications.

Electricity load forecasting, which predicts electricity consumption over a future period based on historical data, is crucial for smart grid dispatching and power transmission planning. Electricity load is influenced by various factors, exhibiting periodicity and seasonality, and is particularly significantly affected by weather. Therefore, constructing accurate electricity load forecasting models requires considering multiple factors, including weather, and analyzing the characteristics of weather's impact on load. Ma\cite{Ma2023} proposed using the CE based TE method to analyze the time delay characteristics of dynamic systems. The method is applied to electricity consumption data in T\'etouan, Morocco, analyzing the impact of five weather factors on the load of the city's three power supply networks from a time delay perspective, revealing the daily time delay variation characteristics of these impacts. Yan et al. \cite{Yan2024} proposed a comprehensive short-term energy load forecasting method combining clustering algorithms, prediction algorithms, and ensemble learning. First, the data is clustered based on load characteristics. Then, for each data cluster, the CE based TE algorithm is used to analyze and select external factors (including weather and time) that influence the load. Finally, an ensemble learning algorithm is used to forecast the load. They applied the method to 2018 residential building energy load data from Arizona, USA, to predict four types of loads: electricity, gas, cooling, and heating. Experimental results show that the external factors selected using the CE based TE algorithm achieve the best predictive performance in the forecasting model, significantly outperforming other methods that select relevant variables. This is because TE can accurately measure the time-series nonlinear relationship between external factors and load. Kan \cite{Kan2023} proposed a deep learning-based method for short-term forecasting of multi-element energy loads. First, VMD (Virtual Dynamics Decomposition) is used to decompose the multi-element load. Then, CE is used to calculate the connection strength between the decomposed IMF components and load influencing factors, which is used as the adjacency matrix weights of the graph convolutional network. The resulting temporal coupling features are then input into an LSTM model. The output of this model is then multiplied by the output of another Transformer model to obtain the final prediction result. He validated the effectiveness of the method on data from Arizona State University's Tempe campus, finding that CE can effectively calculate the coupling strength between meteorological and temporal factors and the various components of cold, heat, and electricity loads, increasing the model's interpretability. Hu \cite{Hu2022master} proposed a multi-node coincidence prediction method based on multi-task learning. He first analyzed the correlation and causal relationships between node loads, especially the relationship between node loads and total load. Specifically, he used the TE method based on CE to explore the nonlinear causal relationship between node loads and total load. Based on a 24-hour causal analysis of substation load datasets from nine regions of the New Zealand distribution system, he found that the causal relationship between node loads and total load is a fluctuating causal process, indicating that the total load contains information about changes in node loads. Wu et al.\cite{Wu2024p} proposed a distribution network load prediction method based on hybrid machine learning. This method combines multiple methods such as CE and graph neural networks, where CE is used to calculate the correlation strength between grid loads and their influencing factors to construct the graph neural network structure. Wang \cite{Wang2024master} proposed a parallel CNN-GRU attention model for power load prediction based on multivariate time series data. The method uses multiple methods, including CE, to screen the input factors of the model. He validated the method based on grid data and corresponding meteorological data from a certain region in China, obtaining better prediction performance than the comparison methods. Among them, the CE value not only measures the statistical correlation between input variables and loads, but also reflects information transmission and energy exchange in the underlying system, providing richer information than the traditional correlation coefficient. Tang \cite{Tang2024master} proposed a source-load prediction method based on a combination of feature selection, mode decomposition, and neural networks, utilizing methods such as CE to select nonlinear correlation features. He validated the effectiveness and superiority of the proposed method on the Australian photovoltaic dataset and the Panama load dataset.

Renewable wind and solar energy are increasingly becoming an important component of the power energy supply. Ensuring the economic benefits and reliable safety of wind and solar power integration is a major concern in renewable energy utilization. Rational planning is crucial for addressing this concern, guaranteeing a return on investment and ensuring the system's proper operation, while preventing the curtailment of wind and solar energy. Energy storage systems can mitigate the instability and volatility of wind and solar energy and are an integral part of wind and solar system planning. Dong et al. \cite{Dong2022} proposed a wind-solar-storage collaborative planning and configuration method that considers the temporal similarity between source and load. This method uses CE to measure the similarity between wind and solar energy and the load, thereby improving the system's wind and solar energy utilization efficiency. They applied this method to the planning and configuration of a combined wind, solar, thermal, and storage power generation system in an industrial park. The results show that this method can effectively reduce the installed capacity of the energy storage system, improve the absorption capacity of new energy sources, and achieve significant economic and emission reduction benefits.

Frequency is one of the most important physical quantities in a power system, and frequency stability is a fundamental requirement for ensuring the stability of power supply. The unpredictability of renewable energy sources and their large-scale integration into the grid pose challenges to grid frequency stability. To stabilize and control frequency fluctuations caused by renewable energy sources, accurate and rapid prediction of system frequency stability is necessary to help system operators formulate control strategies in advance. Traditional frequency stability prediction is model-driven, but online prediction is impossible due to the time-consuming solution process. Machine learning-based model methods, by simplifying the model to improve computational efficiency, can meet the needs of online prediction. Liu et al.\cite{Liu2022a,Liu2023m} proposed a frequency stability prediction method combining deep learning and CE. CE is used to select model input variables, reducing redundant information and improving computational efficiency. The authors applied the method to two systems: one is the New England 39-node system, integrating the dynamic wind farm model of the Western Electric Dispatch Council; the other is the ACTIVSg500 system based on the grid system in western Southern California. Experiments show that the model built using this method achieves the best performance compared to similar models, meeting practical requirements. The CE method not only simplifies the model and significantly reduces computation time, but also identifies power grid variables related to frequency stability, making the model interpretable.

Broadband oscillations in power systems are triggered by the dynamic interactions of power electronic devices. Their propagation in the power grid can cause chain reactions, seriously endangering the safe operation of the grid. The excitation mechanism of broadband oscillations is complex, exhibiting significant time-varying, nonlinear, and wide-area propagation characteristics, making effective modeling and analysis difficult. Feng et al.\cite{Feng2022,Feng2023,Lu2024} proposed a method for analyzing the influencing factors and propagation paths of broadband oscillations, utilizing the model-free property of CE. This method uses the system's operating state parameters as random variables and selects key factors influencing oscillations by calculating the CE between these parameters and the oscillation damping in each frequency range. Simultaneously, it uses data from when the system oscillates to calculate the copula transfer entropy network between system variables, used to analyze the oscillation propagation process and source location. This analysis method is data-driven and can obtain corresponding analytical results even when the system model is unknown. The authors simulated a direct-drive wind turbine grid-connected system and a four-unit, two-zone system containing a wind farm, analyzing the causal relationships of oscillations between various components within the controller and between different buses in the complex system. Simulation results show that this method can accurately determine the propagation path and source location of broadband oscillations at both the device and network levels, providing support for the study of oscillation propagation mechanisms and a reference for further oscillation suppression measures. Sun et al.\cite{Sun2023,Wang2024p} also proposed a method for identifying broadband oscillation risks in AC/DC hybrid systems using CE. This method identifies risks by analyzing the CE between oscillation influencing factors and the damping variables of oscillation modes within each sub-frequency range. They used this method to analyze the oscillation risk of a provincial power grid system under small disturbances, and used the LCC model to identify key influencing factors such as rectifier control parameters and DC transmission power, providing an accurate and reliable basis for subsequent design of targeted adjustment schemes to suppress oscillations.

The integration of renewable energy sources introduces dynamism and uncertainty into the power grid, increasing the complexity of voltage coordination control and posing new challenges to grid operation. Voltage control requires distributed coordination control of multiple transformers, but communication networks suffer from information transmission problems such as delay, jitter, and packet loss, adversely affecting this distributed coordination control, leading to performance degradation and even system instability risks. Therefore, studying voltage coordination control under communication uncertainty is an important issue. Due to network channel characteristics and congestion, there is a nonlinear correlation between delay, jitter, and packet loss, which has not been fully considered in previous studies, leading to possible suboptimal control or even system instability. Yang et al.\cite{Yang2024} proposed an event-triggered sliding mode control strategy under multivariate correlated communication uncertainty events for voltage coordination control of the power grid. The authors proposed using CE to measure the uncertainty correlation among delay, jitter, and packet loss events, and then obtaining the joint entropy to measure the total communication event uncertainty. This total uncertainty is then used for the design of the sliding mode control state observer, leading to the control law of the sliding mode controller. They conducted simulations and physical experiments on a 7-bus distribution system and an IEEE 69-bus distribution system in the Jiangsu power grid. The proposed sliding mode control strategy was compared with three similar methods. The results showed that the proposed algorithm exhibited good control performance and robustness, and all indicators were better than the comparison methods, making it more suitable for coordinated voltage control of the power grid in complex communication environments.

Line loss rate is a crucial economic and technical indicator for power energy enterprises, measuring their economic efficiency. Therefore, line loss management and abnormal line loss investigation are essential tasks for the power sector. Line loss analysis utilizes scientific calculation methods to analyze the distribution patterns of line losses in the power grid, providing efficient and accurate decision support for management. Hu et al. \cite{Hu2022} proposed a line loss analysis method based on TE , which estimates each user's contribution to the overall line loss by calculating the TE value based on CE. Based on daily power supply and line loss data, they ranked users according to their line loss contribution, applying this method to practical line loss management to reduce the overall line loss rate.

Distribution network topology identification is a crucial issue in power grid system analysis, providing a foundation for distribution network management functions such as power flow calculation, grid state estimation, reactive power optimization and regulation, and network reconfiguration. With the large-scale integration of distributed energy resources into distribution networks, their volatility and uncertainty lead to more varied system topology reconfigurations, posing new challenges to topology identification. Qin and Pan \cite{Qin2023} proposed a novel distribution network topology identification method, transforming the identification problem into a sub-problem of identifying the states of multiple switching nodes based on spatiotemporal correlation. This method first utilizes CE and Markov chains to extract the spatial and temporal nonlinear correlation features between node voltage sequences, respectively. Based on this, a model capable of identifying individual switch state change sequences is obtained. Finally, the network topology structure identification is completed by combining the results of multiple such switch state identifications over a certain period. They simulated a distribution network with dynamically changing topologies connected to wind turbines and photovoltaics, generating a 120-day simulation of household loads in the distribution network. The proposed method was tested based on the network node measurement data. The results show that CE can effectively analyze the correlation between node voltages, enabling the method to effectively identify the network topology structure in a short time.

Electricity price forecasting is crucial for decision-making in the electricity market, helping participants develop trading strategies and allocate resources effectively. However, the widespread use of renewable energy sources introduces uncertainty into electricity supply, complicating electricity price forecasting and making predictive model construction more challenging. Xiong and Qing\cite{Xiong2022} proposed a hybrid electricity price forecasting framework based on time-series data, combining a feature selection method based on CE with signal decomposition, Bayesian optimization, and an LSTM model to construct the predictive model. They apply the method to 2017 data from the Pennsylvania-New Jersey-Maryland Interconnected (PJM) electricity market, demonstrating its effectiveness and practicality.

Lithium-ion batteries are the most widely used green and clean energy source. However, the capacity of lithium-ion batteries degrades with repeated use, making battery health monitoring one of the main problems in battery management systems. Traditional health monitoring models are mostly derived under single load conditions, making them unsuitable for various real-world scenarios and resulting in models based on raw data failing to adapt to new situations. To address this issue, Hu and Wu \cite{He2023} proposed a battery capacity estimation method based on transfer learning, combining causal analysis, attention mechanisms, and LSTM. CE based TE is used to select health state indicators related to capacity degradation, ensuring the transferability of the model under different conditions. The authors applied the method to lithium-ion battery degradation data under three NASA load conditions. The results show that the model based on causal analysis improves the cross-condition prediction accuracy by 8.6\% and 12.4\% compared to models based on two traditional methods, respectively, enhancing the model's robustness. Zhao\cite{Zhao2025master} proposed a method for predicting battery health status. This method includes a multi-dimensional feature selection framework combining multiple methods such as CE, a model construction and optimization method combining deep learning time-series prediction models and ensemble learning methods, and a model transfer learning method adapted to different batteries and operating conditions. He validated the method based on the publicly available CALCE and NASA lithium battery degradation datasets, and the results show that the prediction error of the model obtained by this method is within 2\%. He specifically conducted experimental analysis on the redundancy and sufficiency of the proposed feature selection method, and the results show that the model given by the feature selection framework has good generalization and robustness to adapt to different operating conditions and different batteries, and still has excellent prediction performance even with limited data.

Energy efficiency is one of the main goals of Industry 4.0, and the digitalization of production systems provides a significant opportunity to improve the energy efficiency of industrial equipment. Energy efficiency anomalies are a key to improving energy efficiency; identifying anomalies and determining their causes is an effective way to improve energy efficiency. However, industrial systems generally have complex structures and operating mechanisms, making it difficult to analyze the root causes of energy efficiency anomalies using traditional modeling methods. Ma \cite{Ma2024} proposed using TE for root cause analysis of energy efficiency anomalies. Addressing the non-stationarity of industrial systems, he presented a diagnostic method for energy efficiency anomalies called TE flow, which employs the TE estimation method based on CE. Since TE is model-free, this method can perform root cause analysis of energy efficiency anomalies for various devices without considering their underlying mechanisms. He applied this method to an air compressor system, successfully describing the causal relationships of the system's operation and identifying the air compressor subsystem that caused the system's abnormal energy efficiency state.

\section{Textile engineering}
Cotton gauze, as a basic textile product and a commonly used medical device, is widely used in medical services such as aseptic isolation and hemostasis during surgery. Its production quality directly affects the effectiveness of medical services and patient safety. Therefore, quality control in the gauze production process is subject to strict supervision by government departments. Sealing force refers to the sealing strength during the packaging process of gauze production. It is crucial for maintaining the sterility and preventing contamination of gauze before opening and use, and is therefore an important quality control parameter in the gauze production process. Mortezanejad et al. \cite{Mortezanejad2025} proposed a nonparametric multivariate quality control chart method based on CE for quality control of multidimensional non-normally distributed variables. This method first uses the maximum CE method to estimate the multidimensional distribution function, and then uses the Hotelling $T^2$ statistic to generate a control chart. They applied this method to the sealing force data of the absorbent cotton gauze production process at Mega Company, generating a Copula-based quality control chart. This chart can detect subtle quality changes that are difficult to detect using traditional quality control chart methods, thus demonstrating that this method can control the quality parameters of cotton gauze production.

\section{Food engineering}
As an agricultural product, wine is increasingly becoming accessible to ordinary consumers. Wine quality assessment is crucial for both production and sales, with the wine industry investing heavily in quality evaluation to improve brewing techniques and promote consumption. Traditional quality assessment relies primarily on physicochemical testing and expert opinions; however, expert taste perception is highly subjective, and its underlying mechanisms are difficult to understand. Therefore, it is necessary to study the intrinsic relationship between wine components and expert evaluations to enhance understanding of wine quality and improve the objectivity of quality assessments. Lasserre et al.\cite{Lasserre2021a,Lasserre2022} proposed a causal relationship network learning algorithm, called CMIIC, using CE based (conditional) independence measurement estimation. They applied this algorithm to the quality evaluation data of the famous Portuguese Vinho Verde, identifying physicochemical components associated with the quality of both red and white wines.

\section{Civil engineering}
Building energy consumption accounts for about 40\% of total energy consumption, and building energy-saving technologies are important green energy technologies, which are of great significance to achieving the United Nations' carbon neutrality goal. Heating, ventilation, and air conditioning (HVAC) systems contribute more than 40\% of the energy consumption of commercial buildings and are one of the main research objects of building energy conservation. The operation of HVAC systems has time delay characteristics, which come from the hysteresis of medium conduction and thermal inertia. Understanding and utilizing this characteristic is beneficial to designing appropriate control strategies to achieve energy saving. Li et al.\cite{Li2022} introduced the TE method based on CE to HVAC, and developed a model-free time delay estimation method based on the information theory framework for time series prediction of HVAC systems. They improved the multivariate TE estimator of kNN and designed a time delay estimation algorithm by combining optimization methods. They applied the algorithm to the heating monitoring system of a four-story teaching building in Dalian, analyzed the data of indoor temperature and weather parameters (such as outdoor temperature, relative humidity, solar radiation, wind speed, etc.) and heating parameters (such as hot water supply and return temperature, etc.), identified time delay parameters, and then used the latter two sets of parameters to predict the room temperature for the next period of time. The results show that the TE method can identify the time delay relationship between parameters, thereby improving the room temperature prediction performance.

Reinforced concrete structures are a widely used building structure, and earthquakes can cause significant damage to them. Therefore, assessing the seismic performance of such structures is crucial to the safety of people's lives and property. Plastic rotation and deflection are generally used for seismic damage assessment and seismic performance classification of reinforced concrete beams. Therefore, evaluating the seismic performance limit of reinforced concrete beams is an important issue. However, different countries have different evaluation methods, making them unsuitable for describing structural leakage caused by earthquakes, especially leakage in underground structures. Ma et al.\cite{Ma2024b,Chi2024master} proposed a method for predicting the seismic performance level of reinforced concrete beams based on machine learning, considering crack development. They collected test results from 452 reinforced concrete beams, analyzed the results using PCC to obtain a set of predicted mechanical parameters, selected a set of mechanical parameters with a nonlinear relationship to the performance limit using a CE based MI method, and finally established a predictive model for the performance limit using seven machine learning methods. Experimental results show that PCC selects 22 out of 27 mechanical and dimensional parameters that have a linear relationship with the performance limit, while the MI method selects 5 parameters that have a nonlinear relationship with the performance limit. Based on the selected parameters, the authors constructed a limit prediction and hierarchical classification model for four performance limits, and obtained good prediction and classification results, verifying the effectiveness of the method.

Shear capacity is a crucial parameter in structural design, referring to a structure's ability to withstand shear forces. Its prediction is essential for assessing structural safety and stability. Existing shear capacity predictions for reinforced concrete columns are mostly based on models combining mechanistic and empirical methods. However, the actual shear performance mechanism is complex, with numerous nonlinear influencing factors, leading to low accuracy in existing models. Utilizing machine learning methods to construct shear performance prediction models for reinforced concrete members is an effective approach to address this problem. Chang et al. \cite{Chang2025} proposed a shear capacity prediction model construction method based on CE based feature selection. CE is used to measure the nonlinear relationship between shear force and its influencing factors, allowing for the selection of appropriate input variables for the machine learning prediction model. They collected shear test data for reinforced concrete columns containing 441 samples and validated the proposed prediction model construction method. Experimental results show that the CE method makes reasonable selections of shear capacity influencing factors, and the prediction accuracy of the model obtained based on this method is better than five traditional semi-empirical and semi-theoretical formulas, verifying the effectiveness and superiority of the proposed method.

As a basic building unit, the room plays a crucial role in realizing the building's functions, which stems from the coordinated operation of the equipment and facilities within the room. Understanding the mechanisms of earthquake damage, especially the damage consequences caused by the synergistic effects of elements within a room during an earthquake, is essential for post-earthquake assessment of building and room functionality. Copula theory, as a representation of correlations between variables, particularly the Vine Copula method, is a powerful tool for modeling the correlations between elements within a room. However, traditional Vine Copula construction methods rely on tools such as correlation coefficients, leading to an accumulation of uncertainties during the structural construction process. Furthermore, the implicit Gaussian assumptions made in this way do not reflect real-world conditions. Liu et al.\cite{Liu2025a,Liu2024phd} proposed a Vine Copula construction method based on CE and model selection, utilizing CE theory, to model the correlations between elements within a room, thereby assessing the post-earthquake functionality of rooms. Specifically, CE is used to construct the basic structure of the Vine Copula, separating structural learning and function estimation, thus increasing the reliability and accuracy of the Vine Copula model. Using hospital operating rooms as the research subject, they constructed a physical simulation system and conducted shaking table tests under four indoor scenarios with different vibration intensities. They collected visual, acceleration, and displacement data of medical instruments and facilities in the operating room during the tests. Using the experimental data, they established a damage state model for room elements, and then derived a functional failure state model for the room based on Vine Copula. They found that the simulation results of room system vulnerability estimated using this method were basically consistent with the shaking table test results, demonstrating the effectiveness of using the Copula function to model the synergistic effects of room elements for room system vulnerability assessment. This method is of great significance for the functional assessment of hospital buildings and can also be extended to room systems with more complex functional settings.

Tunnels are transportation structures built underground, underwater, or in mountains, and shield tunneling is a key technology in tunnel construction. However, shield tunneling is affected by factors such as geological conditions and mechanical equipment, causing deviations between the tunneling path and the designed route, thus affecting the progress and safety of tunnel construction. Current shield control relies mainly on human experience, which is insufficient to effectively cope with complex construction environments. Therefore, it is necessary to study design methods to predict shield axis deviation in advance. Lin \cite{Lin2023master} proposed a shield axis deviation prediction method based on CE, utilizing CE to select effective shield tunneling parameters for prediction. He verified the method based on tunneling data from a section of the Nanchang Metro Line 3 rail transit line, using CE and other methods to select 40 shield tunneling parameters and constructing four prediction models for cut-out horizontal deviation, cut-out elevation deviation, shield tail horizontal deviation, and shield tail elevation deviation. Analysis results show that compared with similar methods, the feature set selected by CE is more concise and consistent with the orientation of the shield deviation. Results on the training, test, and validation datasets all demonstrate that the CE based method delivers optimal prediction performance with the fewest possible features selected.

\section{Transportation}
Oversized cargo transportation refers to the specialized transportation operations of large, non-separable objects using multiple modes of transport. It plays a vital role in the national economy, providing crucial support and guarantees for infrastructure construction in key sectors related to national welfare and people's livelihoods, and is also related to national defense, military affairs, and national security. Oversized cargo transportation largely requires multimodal transport methods such as rail and air transport, necessitating the development of an integrated plan that links various local transportation modules. With the digitalization of transportation systems, a large amount of relevant plan data has accumulated, making data-driven oversized cargo transportation plan development an important issue. Research in this area can help improve the scientific rigor and applicability of plan development. Huang \cite{Huang2021a} proposed a method for developing multimodal transport plans for oversized cargo based on module chain construction, utilizing mathematical tools such as CE. This method first decomposes the transportation plan into multiple local module components, then uses relevance measurement tools such as CE to select a set of module attributes for calculating the similarity between plans, and finally retrieves cases with high similarity to the target transportation task from an existing transportation case database as preliminary transportation plans. Due to the diversity of large-item transportation solutions, some case module attributes may exhibit non-Gaussianity, rendering traditional correlation coefficient tools unsuitable for calculating the correlation between attributes. However, CE remains applicable due to its universality. The authors validated the method on data from over 600 real-world cases and constructed a prototype system for solution development.

Air travel and high-speed rail are the two main modes of passenger transport in China. Compared to air travel, the marketization level of high-speed rail ticket prices is lagging behind, lacking flexibility and dynamism. Therefore, studying the factors influencing ticket prices in order to improve the pricing mechanism of high-speed rail tickets is a matter of great concern to the academic community. Xu et al.\cite{Xu2023,Ji2022phd}, based on data on air and high-speed rail ticket prices between Beijing and Shanghai, used tools such as CE and decision trees to study the impact of four types of factors—travel demand, passenger choice, travel efficiency, and travel route—on air and high-speed rail ticket prices. They found that the advance booking period has different degrees of impact on the two types of ticket prices, but the impact of travel time is relatively similar. These research findings have certain reference value for high-speed rail pricing.

Urban rail transit has become one of the main modes of transportation in major cities in China. Improving the management level and operational efficiency of urban rail transit systems is one of the important issues facing the transportation system. Urban traffic passenger flow analysis and forecasting can provide a basis for guiding normal passenger flow, managing abnormal passenger flow, and scheduling rail trains. Analyzing the interaction between rail transit and other modes of transportation such as buses and taxis based on travel record data helps to improve the effectiveness of rail transit passenger flow forecasting. Wang \cite{Wang2022a} proposed using correlation analysis and causal analysis to analyze passenger flow time series data to enhance the understanding of the relationship between passenger flows of different modes of transportation. Among them, the TE method based on CE was used for causal relationship analysis between passenger flows. He applied the method to the time series data of rail transit, bus, and taxi passenger flows at four stations in the Suzhou rail transit system from August 6 to 12, 2018. The causal analysis results showed that the impact of taxi passenger flow on rail transit entry passenger flow at Sanyuanfang and Donghuanlu stations had a 1-hour lag effect, while the lag effect at Dongfangzhimen station was 5 hours. This analysis provides important guidance for predicting passenger flow at rail transit stations.

Railway passenger flow forecasting is fundamental to railway passenger service management. Accurate forecasting can improve the unified scheduling of railway capacity, coordinate network resources, and enhance economic efficiency. However, passenger flow is influenced by both natural and social factors, making accurate forecasting challenging. As a typical time series forecasting problem, time series models are generally used for prediction. A key issue here is how to handle the nonlinear relationship between passenger flow and its external influencing factors. Chang and Song\cite{Chang2024,Chang2024master} proposed an improved Prophet passenger flow forecasting model, which utilizes CE to analyze the nonlinear relationship between weather and holiday factors and passenger flow. They conducted an experimental study using real railway passenger flow data from January 2015 to March 2016. Correlation analysis using CE revealed that the impact of weather factors on passenger flow is negligible. They further constructed and selected new holiday time series features based on the CE tool to improve forecasting performance. Experimental results show that using this improved Prophet model can improve the accuracy of passenger flow forecasting.

Fatigue driving is one of the main causes of traffic accidents. Fatigue detection technology can effectively improve driving safety by monitoring and warning of driver fatigue, and has important practical significance and social value. Detecting driver fatigue using electroencephalogram (EEG) signals is a major technical direction and has received extensive research. However, EEG signals have dynamic and nonlinear characteristics, and improving the accuracy of fatigue detection is a major challenge in this field. Zhou \cite{Zhou2024master} proposed a fatigue detection method based on topological EEG feature selection and fusion. First, the modal components of the EEG signal are obtained through empirical mode decomposition. Then, the functional brain network is calculated and topological features are extracted. The extracted features are further selected using the CE method, and the selected features are fused and input into a machine learning fatigue detection model. He validated the method based on publicly available driver fatigue EEG data, demonstrating its effectiveness with a fatigue detection recognition rate of 93.98$\pm$3.36\%. In the experiment, the CE feature selection method improved the recognition accuracy of all experimental models, proving the effectiveness and universality of the method.

\section{Manufacturing}
Product quality is the lifeblood of manufacturing. Injection molding is a rapidly developing industrial manufacturing technology with wide applications in aerospace, construction, and communications. The injection molding process involves multiple complex physical and chemical reactions, making it highly susceptible to external factors, thus ensuring the stability of plastic product quality is a challenge. Building product quality prediction models based on historical manufacturing process data is one way to improve product quality. However, building a model requires first selecting relevant process parameters as input to achieve better predictive performance. Sun et al.\cite{Sun2021} proposed using the CE method to select process parameter variables for constructing a quality prediction model and applied the method to real injection molding production process data from Foxconn, significantly improving the performance of quality prediction. Cai and Rong \cite{Cai2023} proposed a method to identify key factors affecting quality. First, they used CE to establish a correlation matrix between factors, and then used a network deconvolution method to eliminate indirect influences between factors, thereby identifying the key factors affecting quality. They applied the method to three datasets in the UCI machine learning library, and the results showed that this method can identify key factors more efficiently and achieve the highest prediction accuracy compared to similar methods. They then applied the method to actual data from the production of a thin-film transistor liquid crystal display, and the results showed that the method selected 154 factors from 1540 factors and achieved the best quality prediction accuracy.

The manufacturing of complex mechanical products comprises three stages: design, manufacturing, and assembly. As the final stage of product production, assembly builds upon the manufacturing processes of individual components to create high-precision products. Assembly quality control, based on the manufacturing quality of the components, ensures the overall quality of the finished product. Complex mechanical products involve numerous and interconnected components, making the assembly process intricate. Errors in the assembly quality of upstream stages can impact the quality of downstream stages. Wang \cite{wang2015phd} considered the correlation between upstream and downstream processes and quality control points in her assembly quality control approach. She used Copula to model the correlation between control points and used CE to measure this correlation, thus proposing an optimization method for assembly quality control point control valves. She applied this method to the assembly process of the cylinder head, a key component of a gasoline engine for JAC Motors, validating its effectiveness.

Modern industrial systems are becoming increasingly complex and automated, making industrial process monitoring more and more difficult. How to monitor system anomalies and discover their causes is a crucial problem with wide applications. Utilizing causal analysis to obtain complex causal relationship graphs within industrial systems helps to accurately identify the propagation paths of anomalies, enabling timely intervention. Dong et al. \cite{Dong2023} proposed a fault analysis framework combining dynamic PCA, TE, and LSTM, where CE based TE is used to analyze causal relationships within the system. The authors applied this method to the analysis of hot-rolled strip steel process data at Ansteel in Liaoning Province, successfully analyzing two faults and their causes in the process. The authors also compared the TE-based causal graph analysis method with similar Granger causal analysis methods, showing that the TE method can more accurately perform root cause analysis of faults. Liu et al.\cite{Liu2023,Wang2024} proposed a CE-DR-SVDD method for dynamic distributed process monitoring. First, the Louvain algorithm based on CE is used to group system variables; then, a dynamic recursive support vector data description algorithm is used to construct local monitoring modules; finally, Bayesian inference is used to fuse the local monitoring results to obtain the global monitoring results. He applied the method to experimental data from the Tennessee Eastman process and compared it with similar methods, finding that the method achieved the best detection results on 19 out of 21 simulated faults.

The sintering process (SP) is crucial in the steel industry, but it also consumes a significant amount of energy. Dynamically predicting the carbon consumption of SP helps conserve energy and reduce carbon emissions. Traditional SP modeling, based on certain assumptions, cannot adapt to the dynamic characteristics of the SP system. Data-driven machine learning models can overcome the shortcomings of traditional models. Hu et al.\cite{Hu2023} proposed a dynamic modeling framework that can automatically identify process operating conditions to predict carbon consumption. This framework combines the AKFCM clustering algorithm, CE based model selection, and a width-based learning model method. The authors validated the effectiveness of the method using real data from a steel company, demonstrating that CE can quickly capture complex correlation patterns in SP under different operating conditions, enabling this method to predict sintering carbon consumption more accurately than traditional methods.

Aero-engines are core components of aircraft, and their manufacturing processes are a concentrated reflection of the technological level of the aviation manufacturing industry. Turbine disks are one of the key core components of aero-engines, playing a decisive role in engine performance. Because they operate under extreme conditions of high pressure, high temperature, and high speed, they are prone to fatigue degradation, requiring high reliability. Therefore, their manufacturing process places extremely high demands on them. Improving the manufacturing quality of turbine disks through optimization of manufacturing process parameters is an effective way to obtain highly reliable components. Die forging is the core process in turbine disk manufacturing, determining the quality and performance of the turbine disk. Therefore, predicting and optimizing die forging process parameters has become an important issue. Li et al.\cite{Li2023master2} proposed a complete set of methods for predicting and optimizing turbine disk manufacturing process parameters, used to analyze manufacturing process data, predict manufacturing quality, and thus optimize and improve the manufacturing process. Among them, CE is used to calculate the correlation between process parameters and to cluster and group the process parameters to facilitate subsequent predictive modeling and parameter optimization. They verified the effectiveness of this method on the die forging process in the manufacturing of GH4169 alloy turbine disks, and the CE based clustering method successfully classified and reduced the process parameters.

\section{Reliability}
Degradation processes are prevalent in various engineering systems, leading to reduced system reliability and even failure, such as fatigue and corrosion of metallic materials and parameter drift in semiconductor devices. Degradation process modeling is one of the main technical means to evaluate the effectiveness and lifespan of systems and products. Due to the complexity of modern systems, many factors influence degradation processes, and these factors themselves have nonlinear characteristics and are interconnected, making degradation process modeling a fundamental challenge in reliability engineering. Ignoring the correlation between factors during modeling leads to model errors and reliability estimation errors. Traditional methods for measuring the correlation between factors mainly use linear correlation coefficients, which are difficult to handle complex relationships. Sun et al. \cite{sun2019a} proposed using Copula to model the relationships between process factors and using CE to measure the correlation between degradation process factors. They presented a parametric CE estimation method and successfully applied it to the degradation process analysis of microwave electronic components. The results show that this method can analyze degradation processes at different stages.

Grinding wheels are a key component of CNC grinding machines, used for grinding workpiece surfaces. Their physical wear directly affects machining quality and efficiency. Therefore, grinding wheel maintenance is crucial, and predictive maintenance is a critical issue. Cheng et al.\cite{Cheng2023,Cheng2025} proposed a feature selection method based on CE and maximum correlation minimum redundancy to construct a grinding wheel remaining life prediction model. Based on SCADA data of 55 parameters from five grinding machines on the Weifu High-Tech CPM2.2 camshaft production line, they compared various correlation feature selection methods and found that the CE based method can effectively calculate nonlinear characteristic relationships that traditional correlation methods cannot discover. The obtained 15 parameters are closely related to the grinding wheel's remaining life, consistent with the grinding machine's operating mechanism.

Mechanical equipment is a core functional module in transportation, manufacturing, and other fields, and rolling bearings are fundamental and critical components of various mechanical equipment. Bearing aging leads to a sharp decline in its precision, causing equipment downtime, significant losses, and even serious accidents. Therefore, predicting the remaining service life of rolling bearings is crucial for the safety and maintenance of mechanical equipment, and using machine learning to build such predictive models is an important method. However, traditional prediction methods suffer from problems such as inability to handle the nonlinear relationship between bearing features and remaining service life, failure to consider temporal information in measurement signals, and inability to handle data distribution migration. Meng et al.\cite{Meng2025} proposed a rolling bearing remaining service life prediction method that combines an RSA-BAFT model with CE based feature selection. CE is used to measure the correlation strength between time-frequency domain measurement features and remaining service life for feature selection. They validated the method on the XJTU-SY rolling bearing dataset, showing that the CE method can select features with a nonlinear relationship to remaining service life, and the model built based on CE-selected features has better predictive performance than models built based on similar feature selection methods. The RSA-BAFT model obtained based on CE feature selection also outperforms comparable methods.

Wind turbines are key equipment for developing and utilizing wind energy. However, due to harsh operating environments, heavy operating loads, and variable operating conditions, they are prone to failure, resulting in high maintenance costs and difficulties. Therefore, condition monitoring and fault early warning have become the main technical means to solve these problems. The SCADA system of wind turbines collects historical operating data, and developing fault early warning technology based on this data is a current research focus. Geng \cite{Geng2024master} proposed a wind turbine fault early warning method combining clustering algorithms, CE, and machine learning techniques. First, the DBSCAN clustering algorithm is used for data cleaning. Then, CE and other methods are used to select SCADA operating parameters that are highly correlated with gearbox oil sump temperature. Finally, nine machine learning algorithms are used to establish a gearbox oil sump temperature prediction model. They validated the method based on historical data from a wind farm in Hebei Province from 2018 to 2019. Experimental results show that the method can achieve high model prediction accuracy and issue accurate fault warnings 2-3 hours in advance, proving its effectiveness.

\section{Petroleum engineering}
Coalbed methane (CBM) is a type of natural gas found in coal seams, typically extracted through drilling. It has been developed in China for over 30 years. CBM production is influenced by various geological and technological factors during extraction, making the identification of key factors closely related to these factors crucial for making decisions regarding coalbed fracturing. However, the relationship between the effectiveness of coalbed fracturing and influencing factors exhibits non-linear characteristics, and these factors interact with each other, posing challenges to key factor analysis and production prediction. Luo and Xi\cite{Luo2024} proposed a method combining a feature selection method based on CE, a hybrid optimization algorithm, and a neural network approach to construct a CBM well production prediction model. They validated this method using well data from a block in the Ordos Basin, identifying four key factors from 36 candidate factors using the CE based method: gas content and gas saturation related to the gas layer, and pre-fracturing fluid volume and sand-containing fluid volume related to the fracturing process. The resulting prediction model was able to predict up to 83\% of the production based on experimental data. The authors applied this method to a typical well in the same block to select high-gas-producing areas in the formation, which significantly increased the well's daily production from 322 cubic meters to 950 cubic meters per day.

Onshore crude oil production includes onshore exploration, crude oil extraction and processing, crude oil storage and transportation. To ensure crude oil quality, the production process consumes a large amount of energy, resulting in significant greenhouse gas emissions. Therefore, the oil production sector needs to accurately quantify the carbon emissions of the production process in order to implement emission reduction and carbon mitigation throughout the entire process. Yuan et al.\cite{Yuan2025} analyzed the carbon footprint of the entire crude oil production process based on actual production data from the Shengli Oilfield, clarifying the carbon emission contribution and emission type of each stage. They also used a combination of XGBoost and CE methods to analyze the main influencing factors of electricity and fuel consumption during the process, and then provided suggestions for emission reduction and carbon mitigation for the key influencing factors. The study compared this analytical method with three other similar methods and found that this method most accurately ranked the importance of the influencing factors.

\section{Mining engineering}
Coal is the primary energy source in China, and by 2030, its share of total consumption is projected to reach approximately 55\%. China's annual coal production is close to 4 billion tons, with 90\% mined underground, facing significant challenges such as frequent geological disasters, deep mining operations, and difficult excavation. Unmanned intelligent mining is a crucial technological direction for coal mining, capable of reducing safety accidents and improving mining efficiency. Coal-rock identification, an international challenge in coal mining, is of great significance for unmanned intelligent mining. However, the complex conditions and diverse geological features of underground coal mine working faces, coupled with low accuracy and quality of on-site signals, increase the difficulty of coal-rock identification. Existing research has focused on multiple signals to accurately perceive the properties of the coal-rock medium at the working face. Gao \cite{Gao2024phd} proposed a multimodal coal-rock identification technology solution that integrates visible light, near-infrared spectroscopy, and physical signals from coal mining machine cutting. He collected coal mining machine cutting signal data, including vibration, torque, and pressure, through simulated tunneling tests, extracted 689 time-series signal features, and then selected 45 features using a feature selection method combining XGBoost and CE to perform coal-rock identification under cutting and feed conditions. He developed a working face coal-rock identification solution for cutting trajectory planning, which utilizes the above-mentioned cutting signal-based identification method for real-time lithology sensing. He conducted field technical verification tests at the roadway tunneling working face of the Dafosi Coal Mine of Shaanxi Binchang Mining Group. The verification accuracy rate of the coal-rock identification method based on cutting signals was 91.14\%, and all identification errors occurred near the coal-rock interface, achieving good engineering application results.

Heavy media coal preparation is a crucial step in coal production, primarily using a medium of a certain density to separate clean coal and gangue. Due to the multiple processes involved and the inherent nonlinearity, dynamism, and multivariate coupling characteristics of the process, particularly the time delays between process variables, real-time monitoring, analysis, and control of these processes are extremely challenging. Traditional methods, such as correlation coefficients, cannot accurately estimate these time delays. Jian and Dai \cite{Jian2024ccc} proposed an evolutionary algorithm combined with CE for estimating time delay parameters in heavy media coal preparation. They applied this method to heavy media coal preparation process data from a coal preparation plant in Shanxi province during 2024, estimating the time delay parameters of four key process variables relative to the ash content of clean coal. Their method was then compared with four similar methods, including CE. The analysis results show that the EVO-CE and CE methods can accurately estimate time delay parameter values ​​that conform to actual production conditions and prior process knowledge, while the comparative methods cannot achieve this. Meanwhile, the calculation time for EVO-CE is much shorter than that for CE, thus better meeting the actual needs of production.

\section{Metallurgy}
High-purity metals are special materials with very high purity, possessing physical properties such as high electrical conductivity and stability, and excellent optical properties. They are essential materials for manufacturing various precision scientific instruments and high-tech products. Preparing high-purity metals requires precise processes to ensure high purity, but traditional methods generally suffer from low purity. Vacuum distillation can purify metals in a green and efficient way, but its process parameters need to be manually adjusted, relying on human experience. Tian et al. \cite{Tian2024} proposed an optimized method for preparing high-purity metals by vacuum distillation. They used machine learning techniques such as CE to screen out a set of process parameters that can guarantee high purity and low impurities, established a predictive model with purity and impurity content as target variables, and then used parameter optimization methods based on this model to obtain the optimal process parameters for high-purity metal preparation. They used this method to conduct process parameter optimization experiments for the vacuum distillation preparation of high-purity selenium and tellurium. Based on feature selection methods such as CE, they found that distillation temperature, holding time, condensation temperature, and vacuum degree are important for the prepared purity, while holding time, distillation temperature, heating rate, and condensation temperature are important for the impurity content. Through continuous iterative experiments, the process parameters obtained by this method can achieve good preparation results, and the process parameters can be automatically controlled and optimized according to different product requirements.

The permeability index is an important parameter reflecting the furnace condition during blast furnace smelting, measuring the furnace's ability to receive blast. Predicting the blast furnace permeability index can better control the smelting process and avoid abnormal furnace conditions. Lin et al.\cite{Lin2024} proposed a permeability index prediction method based on wavelet denoising and a nonlinear Transformer model, which fully considers the multi-scale, nonlinear, and large-volume characteristics of blast furnace smelting process data. This method uses PCC and CE to select key variables affecting the permeability index as model inputs. Experimental results show that this method can provide high prediction accuracy and fast inference speed, providing theoretical support for controlling the permeability index and having practical significance for improving the stability and automation of the smelting process.

\section{Chemical engineering}
Fault diagnosis is crucial for the safe and efficient operation of chemical processes, and data-driven fault diagnosis methods are one of the main approaches in actual production operations. Constructing reasonable process representations of normal and fault states is a key step in building a diagnostic model. Yin et al. \cite{Yin2022} proposed a fault diagnosis method based on the gray-scale correlation space of CE, which characterizes the normal and fault states of a process through the CE correlation matrix between variables, and then uses the matrix as input to a convolutional neural network to construct a fault classification model. They applied the method to fault diagnosis data of the Tennessee Eastman process, and the results showed that the method achieved a diagnostic accuracy of over 95\%, validating its effectiveness. Principal Component Analysis (PCA) is a commonly used multivariate process detection method. Its principle is based on the maximum variance criterion to construct process detection statistics from a set of process variables, but it is only applicable to linear cases. Wei and Wang \cite{Wei2022,Wei2023} proposed a nonlinear PCA method based on CE (CEPCA), which obtains process detection statistics from the CE matrix, which has nonlinear characteristics. They applied their method to data from the Tennessee Eastman process and compared it with the PCA method. The results showed that the CEPCA method achieved a better fault detection rate. Pan et al.\cite{Pan2023a,Pan2024master} proposed a fault propagation and root cause analysis method based on a correlation fault causal graph, called DTMTE. This method constructs a causal relationship graph between sensor variables using time delay estimation based on TE, which is used to identify the root cause and propagation path of faults. CE based TE is used to analyze causal relationships. They validated the method on the Tennessee Eastman process and a real dimethyl carbonate chemical production process, respectively. The results showed that the DTMTE method can accurately find the fault causes and propagation paths, outperforming traditional comparative methods.

Understanding the causal relationships between variables in chemical processes is crucial for process control, contributing to better process monitoring and fault diagnosis. Constructing causal relationship diagrams for chemical processes using causal discovery methods allows for root cause analysis of faults, making it an important method for fault diagnosis. Bi et al. \cite{Bi2023} proposed a deep learning-based causal discovery method, CGTST, and compared it with several other methods, including CE based TE. Experimental results show that on a 5-variable continuous stirred tank reactor, the reaction diagram obtained by the TE method closely approximates the actual situation; on the Tennessee Eastman process data, the TE method also achieved estimation results close to reality, demonstrating strong practicality.

Soft sensing technology is one of the important methods for chemical process modeling, referring to the estimation and inference of process variables that are difficult to measure directly through easily measurable process variables. However, due to the influence of various factors such as equipment failure, environmental interference, and signal transmission in actual production processes, process variable data often contains a large number of missing values, thus requiring missing value completion. Generative Adversarial Imputation Nets (GAIN) are a data completion method based on the GAIN algorithm framework, but when the number of missing values ​​is large, the algorithm's performance is difficult to meet practical needs. Wu \cite{Wu2023} proposed an improved GAIN algorithm framework called Information Augmentation GAIN (IEGAIN), in which CE is used to calculate the weight matrix as input to the generator in the new algorithm. He compared IEGAIN with other classic algorithms such as GAIN on the UCI Spam and Letter datasets, publicly available thermal power plant datasets and butane desulfurization tower process datasets, and actual polypropylene production process data. The results show that IEGAIN can complete missing data values ​​with the lowest error.

Bio-fermentation is a green and energy-saving chemical production method, and has become the preferred choice for many chemical industries (such as food processing and pharmaceuticals), offering significant economic benefits and broad market prospects. Fed-batch fermentation is a typical intermittent production method. Its microbial metabolic processes require high precision in process parameter control, necessitating intelligent optimization and control technologies. This necessitates research into real-time measurement methods for key fermentation process parameters. Soft sensing technology, characterized by its economy, ease of implementation and maintenance, overcomes the challenge of accurately modeling the mechanisms of complex fermentation processes, providing an effective solution for monitoring fermentation process parameters. Establishing parameter prediction models based on data is a key core issue in soft sensing technology. Addressing the characteristics of intermittent fermentation processes, Shi\cite{Shi2025master} proposed a soft sensing modeling method for fermentation processes based on temporal difference neural networks. This method utilizes CE based MI for model input variable selection and then employs a temporal neural network with difference operators to establish a soft sensing model for predicting fermentation quality parameters. He applied this method to the IndPensim dataset, simulating industrial-scale penicillin fermentation, to establish a penicillin concentration prediction model, achieving high prediction accuracy and robustness, thus validating the effectiveness of the proposed method.

\section{Medical engineering}
Brain-computer interfaces (BCIs) are systems that generate control commands by analyzing and processing brain signals. Motion imaging-based BCIs can help people generate movement commands through their brains and have many important applications. Magnetoencephalography (MEG) has the advantages of high signal-to-noise ratio and spatiotemporal resolution, and has potential applications in the field of BCIs. However, improving the accuracy of MEG-based BCI systems is a challenge. Tang et al.\cite{Tang2024,Tang2025a} proposed a method for a motion imaging BCI system based on MEG, which uses the TE estimation method based on CE to select the channel set of the BCI system. They tested the method on a publicly available MEG dataset and found that the method can significantly reduce the number of channels selected while maintaining prediction accuracy. The prediction performance is better than similar channel selection methods based on random forests, providing a technical guarantee for the practical application of MEG-based BCIs.

\section{Aeronautics and astronautics}
As aircraft systems become increasingly complex, aircraft design first requires a deeper understanding of their overall design parameters. Theoretical analysis of the coupling relationships between various design parameters helps in analyzing the feasibility of design schemes or optimizing the overall design. Krishnankutty et al. \cite{Krishnankutty2020} proposed two Copula-based MI estimation methods based on the equivalence relationship between CE and MI, and applied these methods to the analysis of technical parameter data of 22 US jet fighters, estimating the coupling relationship between flight range and tolerable load, thus verifying the effectiveness of the analysis methods.

Satellites are a major type of spacecraft in the space age, with wide-ranging civilian and military applications in the information age. As a complex system operating in extreme environments, on-orbit health monitoring of satellites is crucial. Satellite telemetry data is the encoding of various sensor parameters, containing information on the interaction relationships of physical parameters within the satellite's internal operating system. Anomalies in satellites propagate internally due to these interactions; therefore, analyzing the fault propagation chain caused by these internal interactions helps in the timely detection of satellite anomalies and ensures normal satellite operation. Analyzing the causal relationships between telemetry parameters is one approach to solving this problem. Liu et al.\cite{Liu2022,Liu2023master} proposed directly applying CE based TE to analyze real satellite telemetry data, obtaining fault propagation diagrams between telemetry parameters, with results superior to traditional TE methods. Zeng et al.\cite{Zeng2022} proposed an improved TE metric, called NMCTE, for analyzing causal relationship networks between telemetry parameters. This metric utilizes CE based TE representation and estimation methods. They further proposed a CN-FA-LSTM method for anomaly detection based on the obtained causal network. They applied the NMCTE method to real satellite telemetry data, obtaining causal networks with good interpretability. They then compared the CN-FA-LSTM method with six other methods on NASA's publicly available SMAP and MSL datasets, validating the method's superiority.

Turbofan engines are the most commonly used engines in jet aircraft, characterized by high efficiency, reliability, and energy saving, making them one of the key pieces of equipment in the modern aviation industry. Due to their complex structure and long-term operation in extreme environments, turbofan engines are prone to wear and aging. Therefore, monitoring their health status, and subsequently conducting fault prediction and maintenance, is crucial for ensuring aviation safety and improving the reliability and service life of turbofan engines. Thus, how to assess the health status of engines is a fundamental and critical issue. Jia \cite{Jia2023} proposed a health index for turbofan engines, employing an evidence-based reasoning method to fuse engine sensor monitoring data to measure engine health status. In this method, CE is used to calculate the reliability of engine sensor variables during the reasoning process. He applied the method to an engine performance degradation simulation dataset provided by NASA's Green Center and compared it with two traditional methods. The results show that the new method performs better in assessing engine health status, thanks to the integration of nonlinear correlation information between sensor variables based on CE. He further used the obtained one-dimensional composite health index to establish engine fault prediction models and remaining life prediction models, both achieving more accurate prediction results than the compared methods. Sun \cite{Sun2024master} proposed an engine remaining life prediction method based on CE feature selection. First, CE is used to select sensor variables with a nonlinear correlation to engine health factors. Then, these variables are used to reconstruct the health factors and establish an exponential degradation model for engine failure. Next, similarity distance is used to select a group of engines from the engine history database with high similarity to the exponential degradation model prediction. Finally, the median health factors of these engines are used as the predicted remaining life value. She validated the method based on C-MAPSS data. The results show that, based on the nonlinear measurement capability of CE, this method selects 8 out of 21 sensors to construct the exponential degradation model. At 50\%, 70\%, and 90\% operating cycles, the prediction error of this method is reduced by 39.25\%, 41.69\%, and 50.53\% respectively compared to traditional methods, significantly improving the accuracy of remaining life prediction while reducing model complexity and thus improving computational efficiency. She also proposed a remaining lifetime prediction method based on Copula similarity. First, using CE to select pairs or multiple similar sensor combinations, she then constructs a Copula function or a vine-like Copula structure. Next, she calculates the Copula similarity metric matrix between these combinations. Then, based on these similarity matrices, she selects a group of similar engines from a historical database. Finally, she uses the average remaining lifetime of these engines as the predicted value. Experimental results based on the C-MAPSS dataset show that, compared with traditional methods, this method significantly improves prediction error at 50\%, 70\%, and 90\% of the operating cycle. Her proposed methods have significant engineering application value and are expected to be applied to the health management of military and civilian aero-engines.

Flight delays are one of the major problems affecting the normal and effective operation of the international civil aviation industry, causing not only inconvenience to passengers but also huge economic losses to the aviation industry. The aviation system is an organic whole, with upstream and downstream sharing of flight resources, leading to system coupling. This causes delays in upstream flights to propagate downstream, thus flight delay management first requires analyzing this causal relationship. Wu et al.\cite{Wu2020} proposed a method using the CE based TE estimator to analyze the strength of causal relationships between flight delay time series at airports. This enables the civil aviation information system to analyze whether there is a causal relationship between two flights, thereby allowing for a deeper understanding and utilization of the inherent relationships of flight delays between nodes in the aviation system.

Airport Collaborative Decision Making (A-CDM) is a standard operational framework supported by the International Civil Aviation Organization (ICAO). It optimizes flight support decision-making processes by sharing data in real time among various units of the aviation system, and is a core tool for enhancing airport operational efficiency, predictability, and punctuality. In practice, airports and line support departments allocate resources based on the Estimated Incoming Wheel Time (EIBT), which includes both taxiing time and Actual Landing Time (ALDT). Current A-CDM systems primarily focus on ALDT, with limited consideration for EIBT. Airports like Beijing Capital International Airport and Shanghai Pudong International Airport still rely heavily on air traffic controller experience for taxiing time estimation, lacking accurate predictions based on real-time data, or only providing EIBT after landing, impacting service timeliness and efficiency. Tang et al.\cite{Tang2025} proposed a two-stage flight arrival time prediction method, dividing the time after an aircraft enters the terminal maneuvering area into two stages: air flight and ground taxiing, and constructing an arrival time prediction model based on machine learning. Based on data from the Pudong Airport A-CDM system in October 2022, they extracted 18 features across 5 categories (including aircraft and flight features, airport ground operation features, airport terminal maneuvering area features, arrival/departure flow features, and weather features). They then used CE to select features associated with air travel time, taxiing time, and the total time for both stages. Finally, they trained a LightGBM model for prediction. Experimental results show that the CE method selects reasonable arrival time-related features. The most relevant features for both stages are flight distance and taxiing distance. Features related to flight time also include flight altitude, speed, and angle, while the number of operational hotspots is related to taxiing time. These findings are consistent with existing research. Prediction experiments show that the two-stage arrival time prediction model based on CE-selected features outperforms models using all features, achieving approximately 70\% accuracy within 3 minutes and approximately 90\% accuracy within 5 minutes, providing support for flexible choices in actual airport operations. Experiments also show that the CE method selects features that conform to the operation mode of Pudong Airport, demonstrating the method's generalization ability for airports with different characteristics, thus ensuring the predictive performance of the constructed model.

\section{Arms}
Weapon and equipment effectiveness assessment refers to a comprehensive, systematic, and scientific analysis and evaluation of the technical indicators and combat performance of a weapon. Due to the complexity of weapon and equipment systems and their application, the assessment needs to consider multiple factors, thus requiring a comprehensive indicator system to complete the evaluation. Effectiveness indicator systems often contain a large number of different types of indicators, resulting in correlations between indicators and a high dimensionality of the indicator system. Therefore, it is necessary to reduce the system to facilitate subsequent assessment processes. Traditional reduction methods generally use mathematical tools such as correlation coefficients, but their linear assumptions are often not met in practical problems. Chen et al.\cite{Chen2024p} proposed an indicator system reduction method that uses CE to measure the correlation between indicators and reduces the indicators by comparing the average CE of each indicator with other indicators. They verified the method using simulation data of the assessment object, demonstrating that the method has the advantage of handling nonlinear correlations between indicators and is more scientific and accurate than traditional methods.

\section{Automobile}
Modern automobiles integrate electronic systems via in-vehicle networks, enhancing passenger comfort, safety, and versatility. However, with the development of intelligent vehicle technology, in-vehicle devices have become targets for hackers, posing a threat to vehicle security. The CAN bus, a data communication protocol connecting and controlling various electronic components within intelligent vehicles, has become a de facto mainstream standard in the automotive field. However, due to the lack of encryption and authentication mechanisms, it is highly vulnerable to network attacks. Therefore, researching CAN bus intrusion detection technology has become one of the main technical means to improve its security. Gao et al.\cite{Gao2023} proposed a lightweight neural network design method for detecting CAN bus intrusion events. This method first analyzes the attribute set of abnormal CAN data frames, then uses CE to select a few attributes related to intrusion attacks, and finally uses these attributes to construct a CanNet neural network detector to detect intrusions. They validated the CanNet method using CAN bus data from a Hyundai Sonata YF, demonstrating that this method has advantages over similar methods in terms of high detection rate, high real-time performance, and low memory consumption.

High-speed trains are the main mode of transportation for China's high-speed rail system. With their ever-increasing speeds, online early warning of abnormal operating conditions is becoming increasingly important. Rolling bearings are key components of high-speed trains, characterized by high operating speeds and heavy loads. Ensuring their health and reliability is crucial, and fault monitoring and early warning technology is one of the key technologies for ensuring their safe operation. Machine learning-based health monitoring using vibration signals is a traditional method for assessing the health of rotating mechanisms, but its application in electric multiple units (EMUs) faces several challenges: 1) Directly acquired time-varying vibration signals are characterized by high noise disturbances and non-stationarity; 2) High-speed trains operate under multiple different conditions, making it impossible to directly use a single offline-trained model for monitoring; 3) The limited sample data of actual fault conditions makes model training difficult; 4) Traditional monitoring and early warning technology frameworks based on cloud computing and deep learning cannot meet the requirements of efficiency and real-time performance under conditions of limited computing resources at the edge, due to long computation times and large latency. Xu et al.\cite{Xu2025} proposed an adaptive real-time rolling bearing fault diagnosis technology framework, successfully solving the above problems by utilizing technologies such as CE, transfer learning, wide learning algorithms, and edge-cloud collaboration. They utilized the CE method to discover the correlation between vibration and force signals under fault conditions, making fault diagnosis at the end-point using indirect signals possible. They validated the method using real data from normal high-speed rail operation scenarios, finding that the force signal-based diagnostic model can accurately detect damage faults. Compared to traditional methods, this technical framework significantly improves accuracy and real-time performance, achieving an accuracy rate of over 99.98\%, and can effectively perform real-time fault diagnosis of rolling bearings in high-speed trains.

\section{Control engineering}
Airport terminals are transportation hubs in the aviation system, responsible for the transfer of passengers between air and ground, and play a vital role in enhancing urban sustainability and promoting environmentally friendly aviation. HVAC systems consume over 60\% of the energy in airport terminals, indicating significant room for improvement in terminal energy efficiency. Flight schedules result in dynamic patterns in passenger flow at terminals, making it crucial to consider these dynamic characteristics to improve the energy efficiency of HVAC system control. The key lies in predicting indoor temperature to meet the dual objectives of comfort and energy efficiency desired by the system control. Li et al.\cite{Li2025c} proposed a passenger-centered model predictive control (MPC) framework for energy-efficient terminal indoor temperature control. This framework includes an indoor temperature target dynamic setter, an indoor temperature predictor, and a control command optimizer. The MPC predictor consists of a causal constraint neural network model, with inputs composed of factors causally related to indoor temperature selected using the CE based TE method. They validated this framework using Guangzhou Baiyun International Airport Terminal 1, selecting four international and domestic departure areas as targets. TE analysis experiments based on actual on-site data revealed that historical room temperature, adjacent area room temperature, outdoor temperature, solar radiation intensity, passenger density, supply airflow temperature, and fan frequency are the dependent variables for indoor temperature. The analysis also identified time-delay parameters of causal relationships, while traditional linear causal analysis methods such as GC did not provide equally reasonable results. Based on this, the neural network prediction model derived from the TE analysis results showed more accurate predictive performance than models constructed using comparative methods. The CE based TE method demonstrated a stronger ability to analyze nonlinear causal relationships than traditional linear causal analysis methods, giving the prediction model interpretability. On-site tests in the target area of ​​the terminal building showed that energy consumption based on this MPC framework was 17\% to 46\% lower than that of the comparative methods.

\section{Electronics engineering}
The increasing integration of semiconductor chips places ever higher demands on microelectronic packaging. Microelectronic packaging plays a crucial role in isolating the external environment and dissipating internal heat, ensuring the stable operation of integrated circuits. This requires packaging materials to possess good stability, high strength, and other desirable physical properties. Liu \cite{Liu2022b}, using Cu-based materials, established a CuNi binary alloy system. Utilizing a combination of first-principles calculations and machine learning, based on cluster correlation function characteristics, he predicted the configuration energy and Young's modulus, which are related to material strength and stability, respectively. The authors analyzed the rationality of the prediction model using CE. By calculating the correlations between features, and between features and configuration energy and Young's modulus, they found a higher correlation between model features and Young's modulus, while the correlation between configuration energy and Young's modulus was lower. This improved the model's interpretability and helps in designing more reasonable material property prediction models.

\section{Communication}
Communication security is one of the main concerns in mobile communications, and it is generally addressed through encryption techniques at the communication layer. In resource-constrained emerging networks (such as IoT and WSN), key distribution is a challenge. The reciprocity of wireless channels provides a mechanism for sharing keys between communicating parties, who can obtain the key by measuring the wireless channel. The concept of key capacity provides a theoretical upper limit for key extraction from wireless channels. However, in reality, key capacity is often limited by many practical physical conditions (such as terminal movement and channel noise), requiring quantitative analysis. Wang et al.\cite{wang2016physical,Wang2016} investigated the influence of physical factors on key capacity under uniform scattering conditions, transforming it into a random variable MI calculation problem. They verified the correctness of their theoretical derivation based on a simulated physical environment, and the simulation experiment used a CE based MI estimation algorithm to estimate the key capacity. Simulation results show that the theoretical derivation is verified and can guide practical applications.

One of the main challenges in developing 6G communication network technology is achieving higher data transmission rates to meet the demands of scenarios such as more immersive experiences, 3D vision, and industrial intelligence. Traditional communication theory does not consider semantic information in transmitted data, but 6G technology can leverage AI-based semantic communication to achieve higher network transmission performance. Fu et al.\cite{Fu2023} proposed an end-to-end service framework based on semantic communication for 6G networks, integrating semantic communication with AI's semantic analysis capabilities and utilizing a Transformer-based encoder-decoder to compress semantic information. The semantic encoder's loss function consists of a semantic loss function based on Euclidean distance and an information content loss function based on CE. They validated the service framework using image data, training it on the ImageNet-1K dataset and then performing simulations on the VOC2012 dataset. The results show that compared to traditional communication schemes, this service framework achieves optimal performance in both object detection and image semantic reconstruction, and achieves performance similar to full semantic feature transmission schemes, making it a promising technological component of 6G networks.

\section{High performance computing}
Improving energy efficiency is a key goal of high-performance computing research. Optimal energy efficiency settings, such as processor frequency, can reduce energy consumption during program execution. However, determining the optimal configuration is a time-consuming process, requiring reconfiguration after any program modification. Using machine learning methods to automatically determine the optimal configuration based on performance events is a new research direction, but it requires identifying which events are energy-efficient to determine the optimal configuration. Gocht-Zech \cite{Gocht-Zech2022} proposed using feature selection to select energy-efficient events. He selected six feature selection methods and provided corresponding estimation methods based on CE theory. Real-world experiments show that this Copula-based method can identify energy-efficient performance events, thereby improving program execution energy efficiency and saving 24\% of energy consumption with a 7\% increase in runtime cost.

\section{Information security}
Adversarial attacks and defenses are hot topics in information security, referring to attacks launched by attackers leveraging their understanding of system and algorithm characteristics, as well as corresponding defensive measures. Deep neural networks are an important class of algorithms in machine learning with wide applications; researching their attack and defense algorithms is crucial for the security of such artificial intelligence systems. Liu et al.\cite{Liu2024} proposed a MI estimation algorithm based on CE, called ${CE}^2$, and used this algorithm to propose a neural network adversarial training algorithm. This algorithm fully utilizes the reliability of CE based MI estimation against adversarial attacks, designing a network training algorithm to guide the neural network prediction model to minimize the attack of adversarial examples. The authors first demonstrated the performance advantage of ${CE}^2$ over traditional MI estimation algorithms through simulation experiments, and then verified the superiority of the ${CE}^2$-based neural network defense algorithm over other similar classic defense algorithms in defending against typical deep neural network adversarial attacks on the CIFAR-10 and CIFAR-100 datasets.

The Internet of Things (IoT), as an information technology that connects everything, has been widely applied to various fields of social life, such as industrial production, healthcare, and smart homes, becoming an important type of information infrastructure. IoT security is a prerequisite for ensuring the normal operation of these facilities; therefore, IoT intrusion detection is a significant security issue that has been extensively studied. However, due to data limitations, existing intrusion detection methods cannot be effectively applied across domains. Wang et al.\cite{Wang2024p3,Wang2025} proposed a novel cross-domain IoT intrusion detection method. This method utilizes incremental clustering to generate a graph structure from the intrusion dataset, then uses a graph neural network to extract domain features, and finally uses distance based on CE to align these features across domains. Finally, a conditional domain adversarial neural network is used for intrusion detection. They validated the effectiveness of their method on four publicly available IoT intrusion detection datasets. The CDO-based cross-domain data alignment method effectively reduces inter-domain differences in extracted features, thereby improving detection and classification performance.

The widespread adoption of cloud computing has made cloud environment security a significant challenge for technology users. In particular, malware exploits vulnerabilities in cloud infrastructure, leading to data leaks, unauthorized system access, and identity theft. Utilizing machine learning algorithms to analyze network traffic data and detect malware attacks is a widely adopted cloud security technique, effectively improving the automation of security protection and achieving rapid, accurate, and adaptive responses. To address the shortcomings of traditional machine learning malware detection algorithms, Baawi et al.\cite{Baawi2025}提 proposed a novel malware detection classifier algorithm to improve detection performance. They validated their new algorithm on the publicly available malware detection dataset Meraz'18 and compared it with other machine learning algorithms. The comparative experiments included two classification scenarios: with and without feature selection. Feature selection employed a CE based method. Experimental results show that the new algorithm outperforms similar traditional machine learning algorithms. With CE for feature selection, the new algorithm achieves almost the same detection performance using only 20 selected features as it does with all 53 features. Wang \cite{Wang2025master} proposed a DDoS attack detection method for Software-Defined Networking (SDN) security. He utilizes a feature selection method combining CE and feature importance index to filter features, and then employs an optimization algorithm combined with random forest to build an attack detection classification model. He validated the effectiveness of the proposed method on the CIC-IDS2017 and InSDN datasets published by the Canadian Cyber ​​Security Research Centre, as well as on a simulation platform he built. The results show that this feature selection method reduces the number of features while maintaining model detection accuracy, thus reducing model training and runtime, and effectively lowering the CPU utilization of the network controller, demonstrating good practical value.

\section{Geomatics}
Hyperspectral remote sensing is a widely used cutting-edge mapping technology that can acquire diagnostic spectral information of different ground features through remote sensing spectral imaging. Due to the large number of bands, large data volume, and significant redundancy in hyperspectral images, feature extraction techniques are needed to select effective bands to characterize the imaged objects. Therefore, hyperspectral image band selection is one of the important problems in this field. The main idea is to select a subset of bands that maximizes the imaging evaluation criterion function. Among these, information theory-based criteria are one of the main methods for band selection. Zeng and Durrani \cite{Zeng2009} proposed a band selection method based on CE and applied it to real hyperspectral data collected at Indian Pine in northwestern Indiana, USA. The results show that CE provides a robust MI band selection method.

Deformation refers to the changes in position, size, and morphological characteristics of a deformable body under the influence of external factors. It is often an extremely slow physical process, such as the deformation of mountains or buildings. Deformation reaching a certain level can trigger safety risks, and if not effectively prevented, it can cause serious damage in natural disasters such as earthquakes and landslides. Deformation monitoring involves monitoring and measuring deformable bodies, and using monitoring data to construct deformation models to predict deformation trends. Engineering deformation monitoring is one of the important issues in the field of engineering surveying, requiring guaranteed monitoring accuracy and reliability, and is of great significance to the construction and operation safety of large-scale projects. Common deformation monitoring and analysis methods generally only model and predict individual monitoring points. However, monitoring points within a deformable body are not isolated but have inherent correlations, which can be used to improve the prediction accuracy of single-point monitoring. Zhang et al.\cite{Cao2023,Zhang2022master2} proposed a machine learning deformation analysis and prediction method that considers the correlation between neighboring deformation points, using CE to select neighboring points that are correlated with the prediction point. He validated the method using data collected by a surveying robot on the eighth cofferdam during the construction of the Taihu Tunnel from 10 December 2020 to 8 October 2021. The data was compared with two traditional correlation metrics, showing that the model based on CE-selected neighboring monitoring points achieved higher prediction accuracy, proving that CE is more suitable for nonlinear time series prediction problems in deformation measurement. This method has significant application value for long-term deformation prediction problems such as early warning of cofferdams in practical engineering.

\section{Ocean engineering}
Human exploration of the ocean is fundamental to marine engineering construction, marine resource development and management, and marine military operations. Seabed sediment information detection is a prerequisite for many of these activities, making it a crucial issue in marine surveying. Multibeam sonar systems are among the main surveying equipment in marine surveying, used to acquire and classify seabed sediment information through acoustic detection. Zhao \cite{Zhao2022phd} proposed a comprehensive multibeam sonar seabed sediment classification technique. This technique extracts a set of spatial, frequency, and scale features from multibeam backscattered images, then uses correlation tools such as CE to remove redundant features, and finally constructs a sediment classification model using the selected features. He experimentally validated the proposed feature selection and model construction method on the Oostende Harbor dataset in Belgium. The results show that using tools such as CE can reveal nonlinear correlations between features, and removing redundant features significantly improves the model's classification performance.

\section{Finance}
Quantitative finance is an emerging financial discipline that guides financial decision-making through the analysis of quantitative relationships in financial data. Based on the vast amounts of financial market trading data generated by financial trading systems, mathematical tools are used to analyze the quantitative relationships between financial products, clarifying market patterns and dynamics, and thus managing financial assets. Analyzing the correlations between market financial variables is a crucial problem in financial engineering, helping traders understand their dynamic relationships and adjust portfolios to manage risk. Due to the nonlinear and non-Gaussian characteristics of financial market variables, MI has become an ideal metrics for correlation, and MI estimation algorithms have become an important tool in the quantitative finance toolbox. MI estimation algorithms based on CE has been implemented in the quantitative finance algorithm libraries \texttt{MLFinLab}\cite{mlfinlab} and \texttt{ArbitrageLab}\cite{arbitragelab}, and are widely used in the industry.

Based on real data from the Chinese stock market (Shanghai A-share Index, Shenzhen A-share Index, and CSI 300 Index), Wang \cite{Wang2015} studied a method for optimizing investment portfolios by utilizing the correlation network between stock assets. The method employed linear and nonlinear correlation measures, including CE, to construct a relationship network between stock assets based on correlation strength, and then build an investment portfolio. The study estimated CE for different families of copula parameter functions. Liao \cite{Liao2023} studied the problem of investment target selection. Based on three indicators—return on equity, three-year compound annual growth rate of net profit, and price-to-earnings ratio—he initially screened 10 A-share listed companies from over 4,000 listed companies. He then used tools such as CE to conduct statistical analysis of the target stock price data to assess the portfolio's risk resistance.

Stock market investors always hope to invest in well-performing listed companies, making the ability to distinguish between good and bad stocks crucial. The ST stock system is a stock risk warning mechanism implemented in Chinese A-share market, helping investors choose portfolios and mitigate risks. Stock classification is an important issue in stock analysis and has reference value for financial market investors. Zhu \cite{Zhu2022b,Zhu2024} proposed a machine learning-based ST stock classification method, employing the Boruta algorithm and the CE method for feature selection, then using six regression models for prediction, and optimizing the model's hyperparameters using the Optuna framework. He selected data from 2076 stocks (including 351 ST stocks) on the Shanghai and Shenzhen Stock Exchanges from 2016 onwards in the tushare database, containing 139 stock feature variables, and finally used the Boruta and CE methods to screen out seven interpretable variables. Model prediction results show that this method achieves the best prediction accuracy in feature selection and XGBoost model combination.

The Belt and Road Initiative (BRI) is an international cooperation initiative proposed by the countries along the Silk Road, playing a significant role in promoting the economic and social development of China and related countries. The BRI is a financial market indicator for industries and companies related to this development initiative, reflecting the development trends and changes in the countries and regions involved. It has important reference value for government and investor decision-making; therefore, the predictive analysis of the index is an important issue in this field. Xu \cite{Xu2024} proposed a BRI return prediction method combining the GAS model, CE, and lightGBM, where CE is used to select input features for the prediction model. He validated the method using data from 58 BRI constituent stocks from 2020 to 2023. The results show that, compared with similar methods, the GAS-CE-LGBM method performs best on all four prediction performance evaluation indicators. In particular, using CE for feature selection significantly improves the model's predictive performance, indicating that CE can capture the nonlinear dynamic relationships between variables in the problem.

Analyzing financial data requires modeling it mathematically, but the non-Gaussian nature of financial variables and their joint distribution presents challenges for data modeling. Calsaverini and Vicente\cite{Calsaverini2009,Calsaverini2013} presented a method for selecting copula function models. This method utilizes the marginal distribution independence property of CE (MI) to separate the objective of the Copula discrimination problem from the marginal functions, and then uses the definition of CE to transform the problem into a model selection problem with MI as the upper bound. The authors also defined the concept of information excess. They applied the modeling method to the daily logarithmic return data of 150 stocks in the S\&P 500 index from 1990 to 2008, using information excess to verify the effectiveness of the method when applied to the T-Copula function family.

R-vine Copula is a flexible tool for constructing multivariate Copula distributions. Determining the vine structure is a key step in building such models. Alanazi\cite{Alanazi2021} proposed a method for constructing R-vine Copula based on the relationship between CE, MI, and CMI. This method involves building a minimum spanning tree based on MI, calculating the CMI on each pair of edges of the previous subtree, constructing new subtrees based on the CMI, and determining the hierarchical structure of the vine Copula. He applied this R-vine Copula construction method to the modeling problem of inter-stock correlation structures, constructing an R-vine Copula model of asset relationship structures based on data from 15 major stocks of the German DAX index (January 2005 to August 2009). Compared with traditional methods, the Copula correlation structure model established by this method can better fit the data. Wang \cite{Wang2023b} proposed a similar vine Copula structure selection algorithm based on the relationship between CE, MI, and CMI. The authors used this algorithm to analyze the correlation structure among the five major industry indices of the CSI 2000 Index. Using data from March 1, 2019 to March 1, 2022, they constructed a vine Copula structure based on Kendall correlation coefficient and a vine Copula structure based on MI. The results show that, in terms of goodness of fit, the latter is better than the former; and from the perspective of interpretability, the latter more reasonably describes the dependence between the assets of the five major industries.

The financial crisis brought the issue of systemic risk in the financial system to the attention of regulatory authorities worldwide. The liberalization of Chinese stock market has deepened the integration of the economy and finance, leading to coupling between various industries and thus increasing the degree of systemic risk. Therefore, it is necessary to study the cross-industry risk spillover effects in order to prevent and mitigate them. Entropy, as a mathematical tool for quantifying uncertainty, is well-suited for measuring financial risk portfolios. Xiong \cite{Xiong2020} used tools such as CE to analyze the daily logarithmic return data of 11 industries in Chinese stock market from January 5, 2005 to July 3, 2020, studying the dynamic evolution of individual industry risk and cross-industry risk spillover characteristics, with a particular focus on the risk characteristics of the 2008 financial crisis, the 2013 liquidity crunch, and the 2015 stock market crash. The study found that the dynamic changes in industry-linked CE lag behind the occurrence of accumulated independent entropy, indicating that inter-industry linkages lead to enhanced systemic risk; the 2008 financial crisis had stronger internal market contagion and greater destructive power; and the recent internal correlation among the 11 industries is relatively strong. Ding \cite{Ding2024master} analyzed daily return data of 116 listed financial institutions from October 27, 2006 to December 31, 2023 using CE data to study the characteristics of risk linkages within the financial system. The study found that systemic shocks lead to a sharp increase in the degree of risk linkages within the financial system. Risk linkages within the banking sector are stronger than those between other sectors, while inter-sectoral risk linkages are stronger than intra-sectoral linkages. Diversified financial sectors can amplify the risks caused by systemic shocks.

Financial vulnerability stems from the inherent instability caused by the high leverage of the financial sector. Financial vulnerability measurement tools enable timely national responses and interventions during crises, thus attracting extensive research. Increasingly mature network analysis theory provides methodological tools for measuring financial vulnerability from the perspective of financial networks; however, traditional network construction methods, based on linear relationship metrics such as the PCC, cannot reflect the nonlinear relationship characteristics within financial systems. Chen et al.\cite{Chen2024} proposed a network curvature-based financial vulnerability measurement method using CE. This method first constructs a financial network using CE, then calculates four discrete Ricci curvatures of the network as a measure of market vulnerability. They applied this method to stock data from the CSI 300 Index between April 2006 and April 2022 to analyze market vulnerability before and after the financial crisis. The results show that this method more clearly describes post-financial crisis market vulnerability than methods based on PCC, while possessing the same risk measurement capabilities as traditional risk measures.

Recent studies have shown that financial portfolios are vulnerable to shocks and significant financial risks during extreme financial events. Households and portfolio managers urgently need to understand the impact of financial shocks and extreme events on investments. Traditional risk measurement tools struggle to detect this tail dependence, while Copula-based methods are increasingly demonstrating superiority in this area. Ardakani and Ajina\cite{Ardakani2024a} proposed using a CE based MI to measure tail risk in extreme event zones, clarifying the significance of diversification strategies in mitigating tail risk. They applied this risk measurement tool to data from the 2022 Consumer Finance Survey and found that certain portfolios exhibited strong correlations, thus increasing tail risk. This finding provides important insights for households managing tail risk.

Credit risk is one of the main fundamental risks faced by the financial and banking industry, and effective management of credit risk is essential to ensuring financial security. Credit scoring card models are a method for assessing customer credit risk and serve as a decision-making tool for managing financial risk. These models classify customers into credit ratings based on their historical credit data to determine their financial privileges. Traditional methods for building credit scoring card models rely on expert experience, resulting in low efficiency and imperfect models. Kong et al. \cite{Kong2021} proposed an automated credit risk model construction method based on CE, which significantly improves modeling efficiency while ensuring high predictive performance and interpretability. They compared this method with expert modeling on real credit card data. Experimental results show that the method significantly shortens the modeling time and achieves predictive performance and interpretable customer credit characteristics comparable to expert models.

Peer-to-peer (P2P) lending is a financial model that facilitates fundraising and lending online. The credit risk of this model primarily stems from borrowers' failure to fulfill their repayment obligations, posing a significant risk to the funds of creditors. Therefore, accurately assessing the credit risk of lenders is crucial, and constructing a personal credit risk model using lending data is a primary solution. Peng \cite{Peng2022m} proposed using CE to measure the nonlinear correlation between risk variables and high-dimensional features of personal data, thus selecting input features for a personal credit risk prediction model. He conducted an empirical study using loan data from the US P2P lending platform Lending Club, comparing CE and PCC, two commonly used feature selection methods. He found that the nonlinear features selected by CE achieved better predictive results on the XGBoost model.

Green credit is a financing tool provided by financial institutions to listed companies with the goal of ecological and environmental protection. Researching green credit risk assessment can improve financial institutions' risk control in the use of this tool. Wang \cite{Wang2023master} proposed a green credit risk assessment method based on a combination of CE and machine learning models. Using A-share listed companies in 2021 as a case study, she selected 67 indicators from three aspects: company status, innovation and development capabilities, and green evaluation. Based on CE, she selected 18 of these indicators to form a risk assessment indicator system, and then used four machine learning methods to construct an assessment model. Experimental results show that the accuracy of the obtained model reaches 95.01\%, providing a reliable assessment tool for green credit risk.

Accurately predicting financial product prices can help investors manage risk and make investment decisions; therefore, establishing relevant predictive models is one of the important issues of concern to researchers. Because financial products have inherent market logic, their prices also exhibit corresponding causal linkages. Therefore, this causal relationship between prices can be used to build more accurate price prediction models than traditional methods. Zhang et al. \cite{Zhang2023a} proposed a transfer learning framework based on causal relationships between prices. They used the CE based TE method to calculate the causal relationships between the prices of different financial products, selecting the dependent variable price to predict the result variable price. Based on this selection, they proposed a learning algorithm for training a deep learning model to obtain the prediction model. They applied the algorithm to daily price data of major international financial indices, energy futures prices, and agricultural futures prices from 2010 to 2021. The results showed that the CE based TE method discovered causal relationships between similar prices. Based on this, the model obtained using this transfer learning framework gave better prediction results than similar comparative algorithms on all three types of price data.

Epidemics pose a serious threat to public health, prompting societies and individuals to take countermeasures. These responses, in turn, generate significant economic and social impacts, particularly on financial markets. Studying the impact of epidemics on financial markets is an important topic with practical significance for market stakeholders. Gurgul and Syrek\cite{Gurgul2024} used the CE method to study the correlation characteristics of the Polish stock market index during the 2019 COVID-19 pandemic, specifically examining the correlation between the WIG index and its 14 sector indices on March 13, 2020, the day the outbreak occurred in Poland. They found that this correlation significantly strengthened after the announcement of the epidemic. They\cite{Gurgul2024a} also used the same method to study the stock markets of four countries (France, Germany, UK, and US), using CE to calculate the correlation between the closing prices of stock sectors and the stock market index in each country. They found that this correlation also significantly strengthened after the epidemic. This finding is consistent with the experience gained during the 2008 financial crisis. They also found that the conclusions obtained by the CE method are consistent with experience, while the conclusions obtained by the traditional PCC method are inconsistent with past experience, such as the latter underestimating the post-epidemic correlation of the German stock market. The author argues that this is because CE can measure the nonlinear correlation between financial market variables without making any assumptions, thus validating the superiority of CE.

After more than 40 years of rapid development, China's insurance industry is undergoing a digital transformation. The application of insurance technology is profoundly impacting companies within the industry, and solutions addressing industry pain points are being implemented at an accelerated pace. Therefore, studying the extent to which Chinese insurance companies utilize technology and the impact of these applications on their operations is an important research topic. Li \cite{Li2023master} proposed theoretical hypotheses regarding the impact of insurance technology on four aspects of insurance company performance (including development capabilities, profitability, operational capabilities, and risk management capabilities), and conducted an empirical study based on relevant data from 114 insurance companies nationwide from 2018 to 2020. He used a regression benchmark model to analyze the impact of insurance technology levels on companies, and then used CE calculations to verify the nonlinear correlation strength between insurance technology and model variables. Both studies showed that insurance technology levels have a significant impact on insurance companies' business expense ratio, expedited reinsurance ratio, return on total assets, and comprehensive investment return rate. The empirical findings are consistent with the theoretical analysis. Based on this theory and empirical research, he offered valuable suggestions for insurance companies and industry regulators.

Premium setting is a crucial aspect of insurance services provided by insurance companies, impacting the efficiency and security of insurance operations. Insurance companies typically determine premium amounts based on risk assessments of customers' basic social information. Traditional premium setting relies on experience, lacking scientific rigor and rationality. Predicting premiums using data analysis models represents a new business model. However, customer information items are often numerous, requiring careful filtering to achieve accurate predictive capabilities. Uddin\cite{Uddin2025} proposes using a variable selection method based on CE to build a premium prediction model. This method was applied to publicly available auto insurance business data, selecting 5 items from 16 customer information items to build the prediction model, and comparing the method with similar modeling approaches. Experimental results show that the model built using the CE method delivers optimal predictive performance.

In recent years, research interest in machine learning methods for financial market forecasting has been increasing, mainly due to their nonlinear analysis capabilities and high asset prediction accuracy. However, practical deployment in cryptocurrency market forecasting is limited because traditional machine learning methods cannot select predictive variables correlated with the target financial asset in dynamic market environments and extreme market conditions. The fundamental reason lies in the unreasonable efficient market assumptions underlying these methods. The CE method can analyze nonlinear, non-Gaussian, and asymmetric correlations without distributional assumptions, providing a tool for solving this problem. Based on the adaptive market assumption, Mahmutovic\cite{Mahmutovic2025} proposed a method for effective and interpretable forecasting under real market dynamics. This method uses a CE based approach to select time-varying and tail-correlated indicator variables, while employing the copula divergence error function to guide the predictive model's learning. He validated the method using real-world historical data from four cryptocurrencies (Bitcoin, Ethereum, Ripple, and Dogecoin) over many years. The results showed that the CE based method improved prediction accuracy while increasing model interpretability, and the copula divergence error function reduced accumulated error. The success of this method demonstrates the rationality of the adaptive market assumption in the cryptocurrency market.

\appendix
\chapter{Softwares}
\label{chap:impl}
The methods for estimating CE, TE, the statistics of multivariate normality test, two-sample test, and change point detection in this monograph have been implemented in the \texttt{copent} package in \textsf{R} and \textsf{Python}\cite{ma2021copent} which are avaiable on  CRAN and PyPI respectively:
\begin{itemize}
	\item CRAN  \url{https://cran.r-project.org/package=copent};
	\item PyPI  \url{https://pypi.org/project/copent/}.
\end{itemize}
The source codes of the packages are available in the repositories of the author's GitHub: \url{https://github.com/majianthu/}.

The third-party software implementations of the CE based methods include:
\begin{itemize}
	\item \textsf{R}: \texttt{cylcop}\cite{Hodel2021,Hodel2022};
	\item \textsf{Python}: \texttt{MLFinLab}\cite{mlfinlab}, \texttt{ArbitrageLab}\cite{arbitragelab}, \texttt{gcmi}\cite{ince2017a,Ince2020}, \texttt{pytorch-mighty}\cite{Ulianych2023}, \texttt{HOI}\cite{BraiNets2024}, \texttt{THOI}\cite{Belloli2024}, \texttt{Frites}\cite{Combrisson2022a}, \texttt{Tensorpac}\cite{Combrisson2020}, \texttt{driada}\cite{AdvancedBrainStudies2024}, \texttt{CopulaGP}\cite{Kudryashova2022,CopulaGP}, \texttt{Polars-ds}\cite{Qin2024}, and \texttt{effconnpy}\cite{Crimi2025,Ciezobka2025};
	\item \textsf{Julia}: \texttt{CopEnt.jl}\cite{Xu2021}, \texttt{CausalityTools.jl}\cite{causalitytools}, and \texttt{Copulas.jl}\cite{Laverny2024};
	\item \textsf{Matlab}: \texttt{gcmi}\cite{ince2017a,Ince2020}, and \texttt{FieldTrip}\cite{Oostenveld2011};
	\item \textsf{C++}: \texttt{NPStat} \footnote{The CE estimation method is decomposed into two functions for empirical copula estimation and entropy estimation.}\cite{npstat}.
\end{itemize}

\bibliographystyle{unsrt}
\bibliography{ce-survey}

\end{document}